\title{Fr\'echet Distance for Uncertain Curves}
\author{Kevin Buchin}{Department of Mathematics and Computer Science, TU Eindhoven, Netherlands}{k.a.buchin@tue.nl}{https://orcid.org/0000-0002-3022-7877}{}
\author{Chenglin Fan}{Department of Computer Science, University of Texas at Dallas, Richardson, TX 75080, USA}{cxf160130@utdallas.edu}{}{}
\author{Maarten L\"offler}{Department of Information and Computing Sciences, Utrecht University, Netherlands}{m.loffler@uu.nl}{}{}
\author{Aleksandr Popov}{Department of Mathematics and Computer Science, TU Eindhoven, Netherlands}{a.popov@tue.nl}{https://orcid.org/0000-0002-0158-1746}{Netherlands Organisation for Scientific Research (NWO), project no.\@ 612.001.801.}
\author{Benjamin Raichel}{Department of Computer Science, University of Texas at Dallas, Richardson, TX 75080, USA}{benjamin.raichel@utdallas.edu}{}{}
\author{Marcel Roeloffzen}{Department of Mathematics and Computer Science, TU Eindhoven, Netherlands}{m.j.m.roeloffzen@tue.nl}{}{Netherlands Organisation for Scientific Research (NWO), project no.\@ 628.011.005.}
\authorrunning{K.~Buchin, C.~Fan, M.~L\"offler, A.~Popov, B.~Raichel, and M.~Roeloffzen}
\keywords{Curves, Uncertainty, Fr\'echet Distance, Hardness}
\newcommand*{\dsh}{---}
\newcommand*{\reals}[1][P]{\Subset_{\mathbb{#1}}}
\newcommand*{\Real}[1]{\Realop(\mathcal{#1})}
\newcommand*{\diam}[1]{\diamop(#1)}
\newcommand*{\concat}{\mathbin{\Vert}}
\newcommand*{\fr}{d_\mathrm{F}}
\newcommand*{\dfr}{d_\mathrm{dF}}
\newcommand*{\dfrmax}{\dfr^{\,\max}}
\newcommand*{\dfrmin}{\dfr^{\,\min}}
\newcommand*{\dfrexp}{\dfr^{\,\mathbb{E}}}
\newcommand*{\dfrexpp}[1][P]{\dfr^{\,\mathbb{E}(\mathbb{#1})}}
\newcommand*{\frmax}{\fr^{\,\max}}
\newcommand*{\frmin}{\fr^{\,\min}}
\newcommand*{\frexp}{\fr^{\,\mathbb{E}}}
\newcommand*{\frexpp}[1][P]{\fr^{\,\mathbb{E}(\mathbb{#1})}}
\newcommand*{\cnfsat}{CNF-SAT}
\newcommand*{\np}{NP}
\newcommand*{\nump}{\#P}
\newcommand*{\True}{\textsc{True}}
\newcommand*{\False}{\textsc{False}}
\newcommand*{\lcg}{\mathrm{lcg}}
\newcommand*{\ucg}{\mathrm{ucg}}
\newcommand*{\cg}{\mathrm{cg}}
\newcommand*{\g}{\mathrm{g}}
\newcommand*{\intrv}{\mathrm{int}}
\newcommand*{\midpt}{\mathrm{mid}}
\newcommand*{\gs}{\mathrm{gs}}
\newcommand*{\opt}{\mathrm{opt}}
\DeclareMathOperator*{\argmin}{arg\,min}
\DeclareMathOperator*{\Concat}{\Big\Vert}
\DeclareMathOperator{\Feas}{Feas}
\DeclareMathOperator{\Prop}{Prop}
\DeclareMathOperator{\Reach}{Reach}
\DeclareMathOperator{\Realop}{Real}
\DeclareMathOperator{\diamop}{diam}
\DeclareMathOperator{\Stab}{Stab}
\DeclareMathOperator*{\width}{width}
\DeclareMathOperator{\EG}{EG}
\DeclareMathOperator{\Thick}{Thick}
\newcolumntype{C}{>{$}c<{$}}
\newcommand*{\curveA}{\pi}
\newcommand*{\curveB}{\sigma}
\newcommand*{\R}{\mathbb{R}}
\newcommand*{\U}{\mathcal{U}}
\newcommand*{\W}{\mathcal{W}}
\newcommand*{\ur}{U}
\newcommand*{\eps}{\varepsilon}
\newcommand*{\Frechet}{Fr\'{e}chet\xspace}
\newcommand{\pth}[2][\!]{#1\left({#2}\right)}
\newcommand*{\decider}{\textsc{Decider}\xspace}
\newcommand*{\gridDecider}{\textsc{GridDecider}\xspace}
\patchcmd{\ALG@doentity}{\item[]\nointerlineskip}{}{}{}
\let\OldStatex\Statex
\renewcommand{\Statex}{\OldStatex\hskip\ALG@thistlm}
\algnewcommand{\LineComment}[1]{\Statex \(\triangleright\) #1}
\algnewcommand{\To}{~\textbf{to}}
\newtheorem{problem}[theorem]{Problem}
\newtheorem{observation}[theorem]{Observation}
\begin{document}
\maketitle

\begin{abstract}
In this paper we study a wide range of variants for computing the (discrete and continuous) \Frechet
distance between uncertain curves.
We define an uncertain curve as a sequence of \emph{uncertainty regions}, where each region is a
disk, a line segment, or a set of points.
A \emph{realisation} of a curve is a polyline connecting one point from each region.
Given an uncertain curve and a second (certain or uncertain) curve, we seek to compute the lower and
upper bound \Frechet distance, which are the minimum and maximum \Frechet distance for any
realisations of the curves.

We prove that both the upper and lower bound problems are \np-hard for the continuous Fr\'echet
distance in several uncertainty models, and that the upper bound problem remains hard for the
discrete Fr\'echet distance.
In contrast, the lower bound (discrete~\cite{ahn:2012} and continuous) Fr\'echet distance can be
computed in polynomial time.
Furthermore, we show that computing the expected discrete Fr\'echet distance is \nump-hard when the
uncertainty regions are modelled as point sets or line segments.
The construction also extends to show \nump-hardness for computing the continuous Fr\'echet distance
when regions are modelled as point sets.

On the positive side, we argue that in any constant dimension there is a FPTAS for the lower bound
problem when $\Delta/\delta$ is polynomially bounded, where $\delta$ is the Fr\'echet distance and
$\Delta$ bounds the diameter of the regions.
We then argue there is a near-linear-time $3$-approximation for the decision problem when the
regions are convex and roughly $\delta$-separated.
Finally, we also study the setting with Sakoe--Chiba time bands, where we restrict the alignment
between the two curves, and give polynomial-time algorithms for upper bound and expected discrete
and continuous Fr\'echet distance for uncertainty regions modelled as point sets.
\end{abstract}

\section{Introduction}\label{sec:intro}
In this paper we investigate the well-studied topic of curve similarity in the context of the
burgeoning area of geometric computing under uncertainty. 
While classical algorithms in computational geometry typically assume the input point locations are
known exactly, in recent years there has been a concentrated effort to adapt these algorithms to
uncertain inputs, which can more faithfully model real-world inputs.
The need to model such uncertain inputs is perhaps no more clear than for the location data of a
moving object obtained from physical devices, which is inherently imprecise due to issues such as
measurement error, sampling error, and network latency~\cite{pfoser:1999, prasadsistla:1998}.
Moreover, to ensure location privacy, one may purposely add uncertainty to the data by adding
noise or reporting positions as geometric regions rather than points.
(See the survey by Krumm~\cite{krumm:2009} and the references therein.)

Here we consider both the continuous and discrete Fr\'echet distance for uncertain curves.
Given the applications above, our uncertain input is given as a sequence of compact regions, from
which a polygonal curve is \emph{realised} by selecting one point from each region.
Our goal is to find, for a given pair of uncertain curves, the upper bound, lower bound, and
expected Fr\'echet distance, where the upper (resp.\@ lower) bound Fr\'echet distance is the maximum
(resp.\@ minimum) distance over any realisation.
For the expected Fr\'echet distance we assume a probability distribution is provided that describes
how each vertex on a curve is chosen from the compact region.
Previously, Ahn et al.~\cite{ahn:2012} considered the lower bound problem for the discrete Fr\'echet
distance, giving a polynomial-time algorithm for points in constant dimension.
The authors also gave efficient approximation algorithms for the discrete upper bound Fr\'echet
distance for uncertain inputs, where the approximation factor depends on the spread of the region
diameters or how well-separated they are.
Subsequently, Fan and Zhu showed that the discrete upper bound Fr\'echet distance is \np-hard for
uncertain inputs modelled as thin rectangles~\cite{fan:2018}.
To our knowledge, we are the first to consider either variant for the continuous Fr\'echet case, and
the first to consider the expected Fr\'echet distance.

\subsection{Previous Work}
\subparagraph*{Geometric computing under uncertainty:}
The two most common models of geometric uncertainty are the locational model~\cite{loeffler:2009}
and the existential model~\cite{suri:2013, yiu:2009}.
In the existential model the location of an uncertain point is known, but the point may not be
present; in the locational model we know that each uncertain point exists, but not its exact
location.

In this paper we consider the locational model.
Each uncertain point is a set of potential locations. 
We call an uncertain point \emph{indecisive} if the set of potential locations is finite, or
\emph{imprecise} if the set is not finite but is a convex region.
A \emph{realisation} of a set of uncertain points is a selection of one point from each uncertain
point.
The goal is typically to compute the realisation of a set of uncertain points that minimises or
maximises some quantity (e.g.\@ area, distance, perimeter) of some underlying geometric structure
(e.g.\@ convex hull, MST).
A large number of minimisation and maximisation variants for imprecise points can be found in the
thesis of Maarten L\"offler~\cite{loeffler:2009} and other works~\cite{knauer:2011, loeffler:2014,
loeffler:2006}.
For indecisive points such problems are often called colour-spanning problems, as each indecisive
point can be viewed as a colour and the goal is to select a point of each colour to minimise or
maximise some quantity~\cite{abellanas:2001, arkin:2018, das:2009, fan:2013}.
Besides finding tight upper and lower bounds for various measures, there have also been studies on
visibility~\cite{leizhen:1997}, imprecise terrains~\cite{driemel:2013, gray:2012}, and Voronoi
diagrams~\cite{sember:2008} and Delaunay triangulations~\cite{buchin:2011, loeffler:2010,
kreveld:2010}.

By assigning a probability distribution to uncertain points, one can also consider the expectation
or distribution of various measures~\cite{agarwal:2016, agarwal:2017, buchin:2019-2, jorgensen:2011,
pei:2007}.
Finally, imprecision has also been studied from a movement perspective, with the focus on the
imprecision between measurements~\cite{buchin:2012} and how imprecision grows and shrinks as time
passes and new location information becomes available~\cite{evans:2013}.

\subparagraph*{Fr\'echet distance:}
Computing the Fr\'echet distance between two precise curves can be done in near-quadratic
time~\cite{agarwal:2014, alt:1995, buchin:2017}, and assuming the Strong Exponential Time Hypothesis
(SETH) it cannot be computed or even approximated well in strongly subquadratic
time~\cite{bringmann:2014, buchin:2019}.
However, for several restricted versions the Fr\'echet distance can be calculated more quickly, for
example for $c$-packed curves~\cite{driemel:2012}, when the edges are long~\cite{gudmundsson:2019},
or when the alignment of curves is restricted~\cite{buchin:2010, maheshwari:2011}.
Many variants of the problem have been considered: Fr\'echet distance with
shortcuts~\cite{buchin:2014, driemel:2018}, weak Fr\'echet distance~\cite{alt:1995},
discrete Fr\'echet distance~\cite{agarwal:2014, eiter:1994}, Fr\'echet gap distance~\cite{fan:2017},
Fr\'echet distance under translations~\cite{bringmann:2019, filtser:2018}, and more.

There are also numerous applications of different variants of Fr\'echet distance in common curve and
trajectory analysis tasks, such as clustering~\cite{buchin:2019-3, buchin:2019-4} or curve
simplification~\cite{kerkhof:2019, kreveld:2018}.

\begin{table}
\centering
\caption{Hardness results for the decision problems in this paper.
Ahn et al.~\cite{ahn:2012} solve the lower bound problem for disks, but their algorithm extends to
the indecisive curves as well as line segment imprecision.}
\begin{tabular}{c c | c c c}
\hline
&&\multirow{2}{*}{indecisive} & \multicolumn{2}{c}{imprecise}\\
&&& disks & line segments\\
\hline
\multirow{3}{*}{discrete Fr\'echet distance} 
& LB & Polynomial~\cite{ahn:2012} & Polynomial~\cite{ahn:2012} & Polynomial~\cite{ahn:2012}\\
& UB & \np-complete & \np-complete & \np-complete\\
& Exp & \nump-hard & --- & \nump-hard\\
\hline
\multirow{3}{*}{Fr\'echet distance}
& LB & Polynomial & --- & \np-complete \\
& UB & \np-complete & \np-complete & \np-complete\\
& Exp & \nump-hard & --- & ---\\
\hline
\end{tabular}
\label{tab:hard}
\end{table}

\subsection{Our Contributions}
In this paper we present an extensive study of the Fr\'echet distance for uncertain curves.
We provide a wide range of hardness results and present several approximations and polynomial-time
solutions to restricted versions.
We are the first to consider the continuous Fr\'echet distance in the uncertain setting, as well as
the first to consider the expected Fr\'echet distance.

On the negative side, we present a plethora of hardness results (see Table~\ref{tab:hard}; details
follow in Section~\ref{sec:hardness}).
The hardness of the lower bound case is curious: while the variants discrete Fr\'echet distance on
imprecise inputs~\cite{ahn:2012} and, as we prove, continuous Fr\'echet distance on indecisive
inputs both permit a simple dynamic programming solution, the variant continuous Fr\'echet distance
on imprecise input has just enough (literal) wiggle room to show \np-hardness by reduction from
\textsc{SubsetSum}.

We complement the lower bound hardness result by two approximation algorithms
(Section~\ref{sec:lowerbound}).
The first is a FPTAS for general uncertain curves in constant dimension when the ratio between the
diameter of the uncertain points and the lower bound Fr\'echet distance is polynomially bounded.
The second is a $3$-approximation for separated imprecise curves, but uses a simpler greedy approach
that runs in near-linear time.

The \np-hardness of the upper bound by a reduction from \textsc{CNF-SAT} is less surprising, but
requires a careful set-up and analysis of the geometry to then extend it to a reduction from
\textsc{\#CNF-SAT} to the expected (discrete or continuous) Fr\'echet distance.
However, by adding the common constraint that the alignment between the curves needs to stay within
a Sakoe--Chiba~\cite{sakoe:1978} band of constant width (see Section~\ref{sec:sakoe-chiba} for
definition and results), we can solve these problems in polynomial time for indecisive curves.
Sakoe--Chiba bands are frequently used for time-series data~\cite{berndt:1994, keogh:2005,
sakoe:1978} and trajectories~\cite{buchin:2010, devogele:2017}, when the alignment should (or is
expected to) not vary too much from a certain `natural' alignment.

\section{Preliminaries}\label{sec:notation}
In this section, we introduce the notation relevant to the rest of this paper, as well as recall the
definitions of (discrete) Fr\'echet distance.

\subsection{Curves}
Denote $[n] \equiv \{1, 2, \dots, n\}$.
Consider a \emph{sequence} of $d$-dimensional points $\pi = \langle p_1, p_2, \dots, p_n\rangle$.
A \emph{polygonal curve} $\pi$ is defined by these points by linearly interpolating between the
successive points and can be seen as a continuous function: $\pi(i + \alpha) = (1 - \alpha) p_i +
\alpha p_{i + 1}$ for $i \in [n - 1]$ and $\alpha \in [0, 1]$.
The \emph{length} of such a curve is the number of its vertices, $\lvert \pi\rvert = n$.
Where we deem important to distinguish between points that are a part of the curve and other points,
we denote the polygonal curve by $\pi = \langle\pi_1, \pi_2, \dots, \pi_n\rangle$.
We denote the \emph{concatenation} of two polygonal curves $\pi$ and $\sigma$ of lengths $n$ and $m$
by $\pi \concat \sigma$; the new curve follows $\pi$, then has a segment between $\pi(n)$ and
$\sigma(1)$, and then follows $\sigma$.
Similarly, $p \concat q$ (or simply $pq$) denotes the line segment between points $p$ and $q$.
We can generalise this notation:
\[\Concat_{i \in [n]} p_i \equiv p_1 \concat p_2 \concat \dots \concat p_n \equiv \pi\,.\]
We denote a \emph{subcurve} from vertex $i$ to $j$ of curve $\pi$ as $\pi[i: j] \equiv p_i \concat
p_{i + 1} \concat \dots \concat p_j$.

\subsection{Metrics Definitions}
Given two points $x, y \in \R^d$, denote their Euclidean distance by $\lVert x - y\rVert$.
For two compact sets $X, Y\subset \R^d$, denote their distance by $\lVert X - Y\rVert =
\min_{x \in X, y \in Y} \lVert x - y\rVert$.
Throughout we treat the dimension $d$ as a small constant.

Let $\Phi_n$ denote the set of all \emph{reparametrisations} of length $n$, defined as continuous
non-decreasing functions $\phi: [0, 1] \to [1, n]$ where $\phi(0) = 1$ and $\phi(1) = n$.
Given a pair of curves $\pi$ and $\sigma$ of lengths $n$ and $m$, respectively, and corresponding
reparametrisations $\phi_1\in \Phi_n$ and $\phi_2\in\Phi_m$, define
$\width_{\phi_1,\phi_2}(\pi,\sigma) = \max_{t \in [0, 1]}
\lVert \pi(\phi_1(t)) - \sigma(\phi_2(t))\rVert$.

The width represents the maximum distance between two points traversing the curves from start to end
according to $\phi_1$ and $\phi_2$ (which allow varying speed, but no backtracking).
The \emph{Fr\'echet distance} $\fr(\pi, \sigma)$ is defined as the minimum possible width over all
such traversals:
\[\fr(\pi, \sigma) = 
\inf_{\phi_1 \in \Phi_n, \phi_2 \in \Phi_m} \width_{\phi_1,\phi_2}(\pi,\sigma) 
=\inf_{\phi_1 \in \Phi_n, \phi_2 \in \Phi_m} \max_{t \in [0, 1]}
\lVert \pi(\phi_1(t)) - \sigma(\phi_2(t))\rVert\,.\]

The \emph{discrete Fr\'echet distance} $\dfr(\pi, \sigma)$ is defined similarly, except that we do
not traverse edges of the curves, but must jump from one vertex to the next on either or both
curves.
We define a valid \emph{coupling} as a sequence $c = \langle(p_1, q_1), \dots, (p_r, q_r)\rangle$
of pairs from $[n]\times [m]$ where $(p_1, q_1) = (1, 1)$, $(p_r, q_r) = (n, m)$, and, for any
$i \in [r - 1]$ we have
$(p_{i + 1}, q_{i + 1}) \in \{(p_i + 1, q_i), (p_i, q_i + 1), (p_i + 1, q_i + 1)\}\,.$
Let $\mathcal{C}$ be the set of all valid couplings on curves of lengths $n$ and $m$, then
\[\dfr(\pi, \sigma) = \inf_{c \in \mathcal{C}} \max_{s \in [\lvert c\rvert]} \lVert \pi(p_s) -
\sigma(q_s)\rVert\,,\]
where $c_s = (p_s, q_s)$ for all $s \in [\lvert c\rvert]$.
Both distances are illustrated in Figure~\ref{fig:frechet}.

\begin{figure}
\centering
\begin{tikzpicture}
\begin{scope}
\draw (0, 0) -- (0, 1) -- (0, 2) -- (2, 4);
\draw (2, 0) -- (1, 1) -- (3, 2) -- (4, 4);
\draw[red, thin, dashed] (0, 0) -- (2, 0) (0, 1) -- (1, 1) (0, 2) -- (1, 1) (2, 4) -- (4, 4);
\draw[red, very thick, dashed] (2, 4) -- (3, 2);

\fill[black]  (0, 0) node[left] {$(0, 0)$} circle[radius=2pt]
              (0, 1) node[left] {$(0, 1)$} circle[radius=2pt]
              (0, 2) node[left] {$(0, 2)$} circle[radius=2pt]
              (2, 4) node[left] {$(2, 4)$} circle[radius=2pt];
\fill[black]  (2, 0) node[right] {$(2, 0)$} circle[radius=2pt]
              (1, 1) node[right,xshift=6pt] {$(1, 1)$} circle[radius=2pt]
              (3, 2) node[right] {$(3, 2)$} circle[radius=2pt]
              (4, 4) node[right] {$(4, 4)$} circle[radius=2pt];
\end{scope}
\begin{scope}[xshift=6cm]
\draw (0, 0) -- (0, 1) -- (0, 2) -- (2, 4);
\draw (2, 0) -- (1, 1) -- (3, 2) -- (4, 4);
\fill[green,opacity=.1] (0, 0) -- (0, 1) -- (0, 2) -- (2, 4) -- (4, 4) -- (3, 2) -- (1, 1) -- (2, 0)
                      -- cycle;
\draw[green,thin,dashed] (0, 0) -- (2, 0) (0, 1) -- (1, 1) -- (0, 2) (4, 4) -- (2, 4) -- (3.6, 3.2);
\draw[green, very thick, dashed] (1.5, 3.5) -- (3, 2);

\fill[black]  (0, 0) node[left] {$(0, 0)$} circle[radius=2pt]
              (0, 1) node[left] {$(0, 1)$} circle[radius=2pt]
              (0, 2) node[left] {$(0, 2)$} circle[radius=2pt]
              (2, 4) node[left] {$(2, 4)$} circle[radius=2pt];
\fill[black]  (2, 0) node[right] {$(2, 0)$} circle[radius=2pt]
              (1, 1) node[right,xshift=6pt] {$(1, 1)$} circle[radius=2pt]
              (3, 2) node[right] {$(3, 2)$} circle[radius=2pt]
              (4, 4) node[right] {$(4, 4)$} circle[radius=2pt];
\end{scope}
\end{tikzpicture}
\caption{Left: Discrete Fr\'echet distance, where an optimal coupling is shown in dashed red lines.
Right: Fr\'echet distance, dashed green lines indicate specific values for $\delta$ for optimal
functions $\phi_1$, $\phi_2$.}
\label{fig:frechet}
\end{figure}
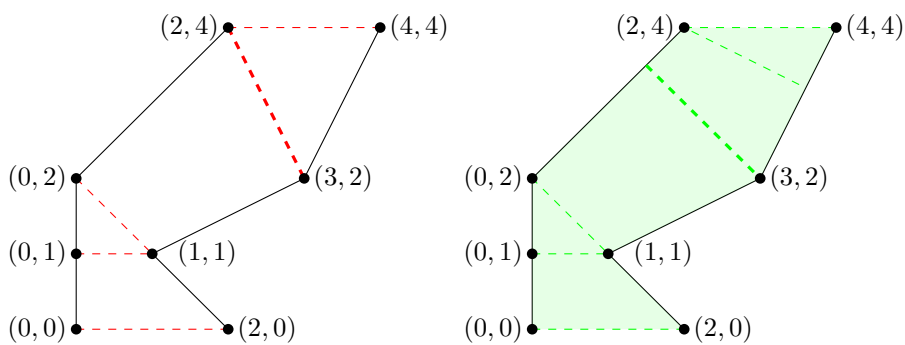

\subparagraph*{Computing discrete Fr\'echet distance:}
We recall the standard dynamic programming approach by Eiter and Mannila~\cite{eiter:1994}.
The algorithm is deduced in a standard manner from the following recursion:
\begin{align*}
\dfr(\pi[1: i + 1], \sigma[1: j + 1]) = &\max(\lVert \pi(i + 1) - \sigma(j + 1)\rVert,\\
&\quad\min(\dfr(\pi[1: i], \sigma[1: j]),\\
&\qquad\dfr(\pi[1: i + 1], \sigma[1: j]),\\
&\qquad\dfr(\pi[1: i], \sigma[1: j + 1])))\,.
\end{align*}
In words, the discrete Fr\'echet distance is the maximum of the distance of the newly added element
in the coupling and the value that was considered best previously.
Due to the coupling restrictions, there are only three possible subproblems that we need to
consider, and we may choose the best of them, thus obtaining the recursion above.
It is straightforward to turn it into a dynamic program.

Table~\ref{tab:dfd} gives the distance matrix and the computation of the discrete Fr\'echet distance
for the example of Figure~\ref{fig:frechet}.
Each cell of the table on the right shows the value of the discrete Fr\'echet distance so far; the
final result can be read out from the top right corner of the table, and the coupling that yields
this result can be read from the sequence of grey cells.
Notice that the table shows the same coupling as Figure~\ref{fig:frechet}.

\begin{table}
\centering
\begin{tabular}{|C C C C|}
\hline
4 & \sqrt{10} & \sqrt{5} & 2\\
2\sqrt{2} & \sqrt{2} & 3 & 2\sqrt{5}\\
\sqrt{5} & 1 & \sqrt{10} & 5\\
2 & \sqrt{2} & \sqrt{13} & 4\sqrt{2}\\
\hline
\end{tabular}
\qquad
\begin{tabular}{|C C C C|}
\hline
4 & \sqrt{10} & \cellcolor{lightgray}\sqrt{5} & \cellcolor{lightgray}\sqrt{5}\\
2\sqrt{2} & \cellcolor{lightgray}2 & 3 & 2\sqrt{5}\\
\sqrt{5} & \cellcolor{lightgray}2 & \sqrt{10} & 5\\
\cellcolor{lightgray}2 & 2 & \sqrt{13} & 4\sqrt{2}\\
\hline
\end{tabular}
\caption{Left: Distance matrix on vertices for example of Figure~\ref{fig:frechet}. Right: Dynamic
program for discrete Fr\'echet distance, filled from bottom left corner. Rows correspond to points
from the left trajectory, columns---to points from the right trajectory. Optimal path is marked in
grey.}
\label{tab:dfd}
\end{table}

Given two trajectories of length $n$ and $m$ in two dimensions, this approach takes $\Theta(mn)$
time to run.
More recently, Agarwal et al.~\cite{agarwal:2014} presented an algorithm that computes discrete
Fr\'echet distance in time $\mathcal{O}\left(\tfrac{mn \log \log n}{\log n}\right)$ in two
dimensions, for $m \leq n$.
However, it is rather complex and does not help the intuition about the problems discussed in this
thesis, so we will not go into further detail.
The decision version of the problem can be solved in a similar fashion, but propagating boolean
values instead.

\subparagraph*{Computing Fr\'echet distance:}
One can use a similar approach to solve the decision version of the Fr\'echet distance problem,
except now we have free and blocked areas within each cell of the table rather than simply having a
boolean value in each cell.
The resulting table is called a \emph{free-space diagram.}
On polygonal curves, each cell becomes an intersection of an ellipse with the cell, with the inside
of the ellipse being free.
The answer to the problem is \True\ if and only if there is a monotone path from the bottom left
corner to the top right corner of the free-space diagram.
A free-space diagram for the example of the two polygonal curves of Figure~\ref{fig:frechet} is
shown in Figure~\ref{fig:freespace}.

Algorithmically this can be checked by keeping the open intervals on the edges of the cells, i.e.\@
the white segments on cell borders shown in Figure~\ref{fig:freespace}.
The algorithm then runs in time $\Theta(mn)$.
For further details the reader is invited to consult the work by Alt and Godau~\cite{alt:1995}
or previous work on the same topic~\cite{godau:1991}.

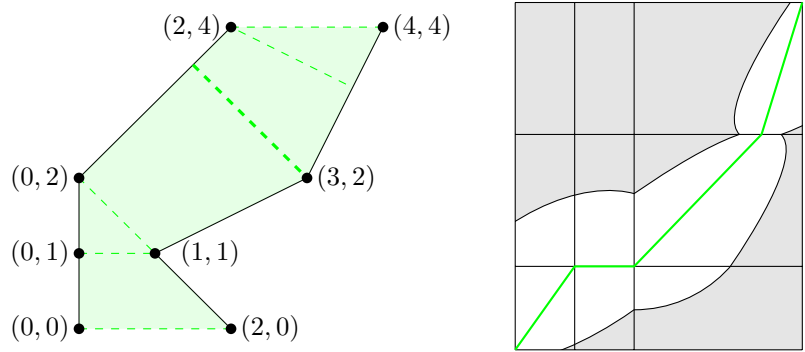
\begin{figure}
\centering
\begin{tikzpicture}
\draw (0, 0) -- (0, 1) -- (0, 2) -- (2, 4);
\draw (2, 0) -- (1, 1) -- (3, 2) -- (4, 4);
\fill[green,opacity=.1] (0, 0) -- (0, 1) -- (0, 2) -- (2, 4) -- (4, 4) -- (3, 2) -- (1, 1) -- (2, 0)
            -- cycle;
\draw[green,thin,dashed] (0, 0) -- (2, 0) (0, 1) -- (1, 1) -- (0, 2) (4, 4) -- (2, 4) -- (3.6, 3.2);
\draw[green,very thick,dashed] (1.5, 3.5) -- (3, 2);

\fill[black]  (0, 0) node[left] {$(0, 0)$} circle[radius=2pt]
        (0, 1) node[left] {$(0, 1)$} circle[radius=2pt]
        (0, 2) node[left] {$(0, 2)$} circle[radius=2pt]
        (2, 4) node[left] {$(2, 4)$} circle[radius=2pt];
\fill[black]  (2, 0) node[right] {$(2, 0)$} circle[radius=2pt]
        (1, 1) node[right,xshift=6pt] {$(1, 1)$} circle[radius=2pt]
        (3, 2) node[right] {$(3, 2)$} circle[radius=2pt]
        (4, 4) node[right] {$(4, 4)$} circle[radius=2pt];
\end{tikzpicture}%
\qquad%
\begin{tikzpicture}[ipe import,scale=0.88]
  \begin{scope}[scale=0.6]
    \fill[gray,fill opacity=.2]
      (224, 518) rectangle (426.7939, 765.2267);
    \begin{scope}[shift={(224, 518)}]
      \begin{scope}
        \clip
          (0, 0) rectangle (42, 59.397);
        \filldraw[fill=white]
          (84, 118.7939) ellipse[x radius=166.8526, y radius=69.1126, rotate=45];
      \end{scope}
      \begin{scope}
        \clip
          (0, 59.397) rectangle (42, 153.3118);
        \filldraw[fill=white]
          (21, 12.4395) ellipse[x radius=121.4532, y radius=75.0622, rotate=45];
      \end{scope}
      \begin{scope}
        \clip
          (42, 0) rectangle (84, 59.397);
        \filldraw[fill=white]
          (84, 118.7939) ellipse[x radius=166.8526, y radius=69.1126, rotate=45];
      \end{scope}
      \begin{scope}
        \clip
          (42, 59.397) rectangle (84, 153.3118);
        \filldraw[fill=white]
          (21, 12.4395) ellipse[x radius=121.4532, y radius=75.0622, rotate=45];
      \end{scope}
      \begin{scope}
        \clip
          (84, 0) rectangle (202.7939, 59.397);
        \filldraw[fill=white]
          (84, 118.7939) circle[radius=90.3];
      \end{scope}
      \begin{scope}
        \clip
          (84, 59.397) rectangle (202.7939, 153.3118);
        \filldraw[fill=white]
          (-94.1909, -128.4327) ellipse[x radius=398.6194, y radius=64.687, rotate=45];
      \end{scope}
      \begin{scope}
        \clip
          (84, 153.3118) rectangle (202.7939, 247.2267);
        \filldraw[fill=white]
          (440.3818, 435.0564) ellipse[x radius=398.6194, y radius=64.687, rotate=45];
      \end{scope}
    \end{scope}
    \draw
      (224, 518)
       -- (224, 765.2267)
      (266, 518)
       -- (266, 765.2267)
      (308, 518)
       -- (308, 765.2267)
      (426.7939, 518)
       -- (426.7939, 765.2267)
      (224, 518)
       -- (426.7939, 518)
      (224, 577.397)
       -- (426.7939, 577.397)
      (224, 671.3118)
       -- (426.7939, 671.3118)
      (224, 765.2267)
       -- (426.7939, 765.2267);
    \draw[green,thick] (224, 518) -- (266, 577.397) -- (308, 577.397) -- (398, 671.3118) -- 
    (426.7939, 765.2267);
  \end{scope}
\end{tikzpicture}
\caption{Left: Visualisation of Fr\'echet distance. Right: Free-space diagram for threshold
$\varepsilon = 2.15$. One can draw a monotonous \textcolor{green}{path} from the lower left corner to
the upper right corner of the diagram, so the Fr\'echet distance between trajectories is below the
threshold.}
\label{fig:freespace}
\end{figure}

\subsection{Uncertainty Model}
An \emph{uncertain} point is commonly represented as a compact region $U \subset \mathbb{R}^d$.
Usually, it is a finite set of points, a disk, a rectangle, or a line segment.
The intuition is that only one point from this region represents the true location of the point;
however, we do not know which one.
A \emph{realisation} $p$ of such a point is one of the points from the region $U$.
When needed we assume the realisations are drawn from $U$ according to a known probability
distribution $\mathbb{P}$.
We denote the diameter of any compact set (e.g.\@ an uncertain point) $U \subset \mathbb{R}^d$ by
$\diam{U} = \max_{p, q\in U} \lVert p - q \rVert$.
An \emph{indecisive} point is a special case of an uncertain point: it is a set of points $U =
\{p^1, \dots, p^k\}$, with each point $p^i \in \mathbb{R}^d$ for $i \in [k]$.
Similarly, an \emph{imprecise} point is a compact convex region $U \subset \mathbb{R}^d$.
We will often use disks or line segments as such regions.
Note that a precise point is a special case of an indecisive point (set of size one) and an
imprecise point (disk of radius zero).

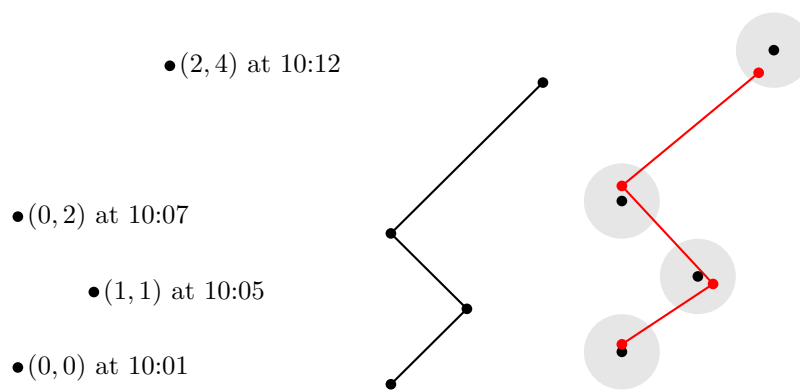
\begin{figure}
\centering
\begin{tikzpicture}
\fill[black]  (0, 0) node[right] {$(0, 0)$ at 10:01} circle[radius=2pt]
        (1, 1) node[right] {$(1, 1)$ at 10:05} circle[radius=2pt]
        (0, 2) node[right] {$(0, 2)$ at 10:07} circle[radius=2pt]
        (2, 4) node[right] {$(2, 4)$ at 10:12} circle[radius=2pt];
\end{tikzpicture}
\quad
\begin{tikzpicture}
\draw[thick] (0, 0) -- (1, 1) -- (0, 2) -- (2, 4);
\fill[black]  (0, 0) circle[radius=2pt]
        (1, 1) circle[radius=2pt]
        (0, 2) circle[radius=2pt]
        (2, 4) circle[radius=2pt];
\end{tikzpicture}
\quad
\begin{tikzpicture}
\fill[black,opacity=.1] (0, 0) circle[radius=.5]
                        (1, 1) circle[radius=.5]
                        (0, 2) circle[radius=.5]
                        (2, 4) circle[radius=.5];
\draw[red,thick] (0, 0.1) -- (1.2, 0.9) -- (0, 2.2) -- (1.8, 3.7);
\fill[black]    (0, 0) circle[radius=2pt]
                (1, 1) circle[radius=2pt]
                (0, 2) circle[radius=2pt]
                (2, 4) circle[radius=2pt];
\fill[red]  (0, 0.1) circle[radius=2pt]
            (1.2, 0.9) circle[radius=2pt]
            (0, 2.2) circle[radius=2pt]
            (1.8, 3.7) circle[radius=2pt];
\end{tikzpicture}
\caption{Left: Trajectory data. Centre: Polygonal curve on the data. Right: Imprecise curve
with disks as imprecision regions and real curve.}
\label{fig:traj}
\end{figure}

\subsection{Uncertain Curves and Distances}
Define an \emph{uncertain curve} as a sequence of uncertain points
$\mathcal{U} = \langle U_1, \dots, U_n\rangle$.
A \emph{realisation} $\pi \Subset \mathcal{U}$ of an uncertain curve is a polygonal curve
$\pi = \langle p_1, \dots, p_n\rangle$, where each $p_i$ is a realisation of the corresponding
uncertain point $U_i$.
We denote the set of all realisations of an uncertain curve $\mathcal{U}$ by $\Real{U}$ (see
Figure~\ref{fig:traj}).
In a probabilistic setting, we write $\pi \reals \mathcal{U}$ to denote that each point of $\pi$
gets drawn from the corresponding uncertainty region independently according to distribution
$\mathbb{P}$.

For uncertain curves $\mathcal{U}$ and $\mathcal{V}$, define the upper bound, lower bound, and
expected discrete Fr\'echet distance (and extend to continuous Fr\'echet distance $\frmax$,
$\frmin$, $\frexpp$ using $\fr$) as:
\begin{align*}
\dfrmax(\mathcal{U}, \mathcal{V}) = \max_{\pi \Subset \mathcal{U}, \sigma \Subset \mathcal{V}}
\dfr(\pi, \sigma)\,,&\qquad
\frmax(\mathcal{U}, \mathcal{V}) = \max_{\pi \Subset \mathcal{U}, \sigma \Subset \mathcal{V}}
\fr(\pi, \sigma)\,,\\
\dfrmin(\mathcal{U}, \mathcal{V}) = \min_{\pi \Subset \mathcal{U}, \sigma \Subset \mathcal{V}}
\dfr(\pi, \sigma)\,,&\qquad
\frmin(\mathcal{U}, \mathcal{V}) = \min_{\pi \Subset \mathcal{U}, \sigma \Subset \mathcal{V}}
\fr(\pi, \sigma)\,,\\
\dfrexpp(\mathcal{U}, \mathcal{V}) = \mathbb{E}_{\pi \reals \mathcal{U}, \sigma \reals \mathcal{V}}
[\dfr(\pi, \sigma)]\,,&\qquad
\frexpp(\mathcal{U}, \mathcal{V}) = \mathbb{E}_{\pi \reals \mathcal{U}, \sigma \reals \mathcal{V}}
[\fr(\pi, \sigma)]\,.
\end{align*}
If the distribution is clear from the context, we write $\frexp$ and $\dfrexp$.
The definitions above also apply if one of the curves is precise, as a precise curve is a
special case of an uncertain curve.

\section{Hardness Results}\label{sec:hardness}
In this section, we first discuss the hardness results for the upper bound and expected value of the
continuous and discrete Fr\'echet distance for indecisive and imprecise curves.
We then show hardness of finding the lower bound continuous Fr\'echet distance on imprecise curves.

\subsection{Upper Bound and Expected Fr\'echet Distance}\label{sec:proofs}
We present proofs of \np-hardness and \nump-hardness for the upper bound and expected Fr\'echet
distance for both indecisive and imprecise curves by showing polynomial-time reductions from
\textsc{CNF-SAT} (satisfiability of a boolean formula) and \textsc{\#CNF-SAT} (counting version).
We consider the upper bound problem for indecisive curves and then illustrate how the
construction can be used to show \nump-hardness for the expected Fr\'echet distance (both discrete
and continuous).
We then illustrate how the construction can be adapted to show hardness for imprecise curves.
All our constructions are in two dimensions.

\subsubsection{Upper Bound Fr\'echet Distance: Basic Construction}
Define the following problem:
\begin{problem}
\textsc{Upper Bound Discrete Fr\'echet:} Given two uncertain curves $\mathcal{U}$ and $\mathcal{V}$
and a threshold $\delta \in \mathbb{R}^+$, decide if $\dfrmax(\mathcal{U}, \mathcal{V}) > \delta$.
\end{problem}
We can similarly define its continuous counterpart, using $\frmax$ instead:
\begin{problem}
\textsc{Upper Bound Continuous Fr\'echet:} Given two uncertain curves $\mathcal{U}$ and $\mathcal{V}$
and a threshold $\delta \in \mathbb{R}^+$, decide if $\frmax(\mathcal{U}, \mathcal{V}) > \delta$.
\end{problem}

We first give some extra definitions to make the proofs clearer.
Suppose we are given a \cnfsat\ \emph{formula} $C$ with
\[C = \bigwedge_{i \in [n]} C_i\,,\qquad C_i = \bigvee_{j \in J \subseteq [m]} x_j \lor \bigvee_{k
\in K \subseteq [m] \setminus J} \neg x_k\quad\text{for all $i \in [n]$.}\]
Here $n$ and $m$ are the number of clauses and variables, respectively, and $x_j$ for any
$j \in [m]$ is a boolean variable.
Such a variable may be assigned `true' or `false'; an \emph{assignment} is a function $a: \{x_1,
\dots, x_m\} \to \{\True, \False\}$ that assigns a value to each variable, $a(x_j) = \True$ or $a
(x_j) = \False$ for any $j \in [m]$.
We denote by $C[a]$ the result of substituting $x_j \mapsto a(x_j)$ in $C$ for all $j \in [m]$.
As an aid to the reader, the problem we reduce from is:
\begin{problem}
\textsc{CNF-SAT:} Given a \cnfsat\ formula $C$, decide if there is an assignment $a$ such that
$C[a] = \True$.
\end{problem}

We pick some value $0 < \varepsilon < 0.25$.\footnote{This range is determined by the
relative distances in the construction.}
Construct a \emph{variable curve}, where each variable corresponds to an indecisive point
with locations $(0, 0.5 + \varepsilon)$ and $(0, -0.5 - \varepsilon)$; the locations are
interpreted as assigning the variable \True\ and \False.
Any realisation of the curve corresponds to a variable assignment.

\subparagraph*{Literal level}
Define a \emph{variable gadget,} where an indecisive point corresponds to a variable and is followed
by a precise point far away, to force synchronisation with the other curve:
\[\mathrm{VG}_j = \{(0, 0.5 + \varepsilon), (0, -0.5 - \varepsilon)\} \concat (2, 0)\,.\]

Consider a specific clause $C_i$ of the formula.
We define an \emph{assignment gadget} $\mathrm{AG}_{i, j}$ for each variable $x_j$ and clause $C_i$
depending on how the variable occurs in the clause.
\[\mathrm{AG}_{i, j} =
\begin{cases}
(0, -0.5) \concat (1, 0) & \text{if $x_j$ is a literal of $C_i$,}\\
(0, 0.5) \concat (1, 0) & \text{if $\neg x_j$ is a literal of $C_i$,}\\
(0, 0) \concat (1, 0) & \text{otherwise.}
\end{cases}
\]
Note that if assignment $x_j = \True$ makes a clause $C_i$ true, then the first precise point of the
corresponding assignment gadget appears at distance $1 + \varepsilon$ from the realisation
corresponding to setting $x_j = \True$ of the indecisive point in $\mathrm{VG}_j$.

We now show the relation between the gadgets.
To do so, we introduce the \emph{one-to-one coupling} as a valid coupling $c = \langle(p_1, q_1),
\dots, (p_r, q_r)\rangle$, where the condition is restricted to $(p_{s + 1}, q_{s + 1}) = (p_s + 1,
q_s + 1)$ for all $s \in [r - 1]$.
Necessarily, such a coupling can only exist for curves of equal length.
\begin{lemma}\label{lemma:distgadgetsvgag}
Suppose we are given a clause $C_i$ and a variable $x_j$ that both occur in the \cnfsat\ formula
$C$, and we restrict the set of valid couplings $\mathcal{C}$ to only contain one-to-one couplings.
We only get the discrete Fr\'echet distance equal to $1 + \varepsilon$ if the realisation of
$\mathrm{VG}_j$ we pick corresponds to the assignment of $x_j$ that ensures the clause $C_i$ is
satisfied; otherwise, the discrete Fr\'echet distance is $1$.
In other words, if we consider $\pi \Subset \mathrm{VG}_j$ that corresponds to setting $a(x_j)$ to
some values, then
\[\dfr(\pi, \mathrm{AG}_{i, j}) =
\begin{cases}
1 + \varepsilon & \text{iff $C_i[a] = \True$,}\\
1 & \text{otherwise.}
\end{cases}
\]
\end{lemma}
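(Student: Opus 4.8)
The plan is to exploit the fact that one-to-one couplings force a vertex-by-vertex matching. Both $\mathrm{VG}_j$ and $\mathrm{AG}_{i, j}$ are curves of length two, so the only one-to-one coupling pairs the first vertex of $\pi \Subset \mathrm{VG}_j$ with the first vertex of $\mathrm{AG}_{i, j}$ and the second with the second. Hence $\dfr(\pi, \mathrm{AG}_{i, j})$ reduces to the maximum of just two point distances, and I would first dispose of the second pair: both gadgets end at a fixed precise point, and $\lVert (2, 0) - (1, 0)\rVert = 1$ contributes exactly $1$ regardless of the realisation. The entire lemma therefore boils down to computing the distance between the first vertices and comparing it to $1$.

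Next I would carry out the case analysis on the three forms of $\mathrm{AG}_{i, j}$ crossed with the two realisations of the indecisive point of $\mathrm{VG}_j$. All first vertices lie on the line $x = 0$, so each distance is simply the absolute difference of $y$-coordinates. If $x_j$ is a (positive) literal of $C_i$, the first vertex of $\mathrm{AG}_{i, j}$ is $(0, -0.5)$: the realisation $(0, 0.5 + \varepsilon)$ encoding $x_j = \True$ gives distance $1 + \varepsilon$, while $(0, -0.5 - \varepsilon)$ encoding $x_j = \False$ gives $\varepsilon$. Symmetrically, if $\neg x_j$ is a literal of $C_i$, the first vertex is $(0, 0.5)$, and now $x_j = \False$ yields $1 + \varepsilon$ whereas $x_j = \True$ yields $\varepsilon$. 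Finally, if $x_j$ does not occur in $C_i$, the first vertex is $(0, 0)$ and both realisations give $0.5 + \varepsilon$.

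To conclude, I would combine these with the fixed second-pair distance of $1$. In each case the first-pair distance equals $1 + \varepsilon > 1$ precisely when the chosen realisation sets $x_j$ to the value that satisfies the $x_j$-literal of $C_i$ (and hence makes $C_i$ true through that literal), and in every other subcase the first-pair distance is either $\varepsilon$ or $0.5 + \varepsilon$, both strictly below $1$ because $\varepsilon < 0.25$. Taking the maximum with $1$ then gives $\dfr(\pi, \mathrm{AG}_{i, j}) = 1 + \varepsilon$ exactly in the satisfying subcase and $\dfr(\pi, \mathrm{AG}_{i, j}) = 1$ otherwise, which is the claim. The only point requiring care\dsh and the closest thing to an obstacle\dsh is reading $C_i[a] = \True$ locally: since $\mathrm{AG}_{i, j}$ sees only the variable $x_j$, the condition should be understood as ``$a(x_j)$ satisfies the occurrence of $x_j$ in $C_i$'', and in the non-occurring case no realisation can reach $1 + \varepsilon$, consistent with the distance staying at $1$.
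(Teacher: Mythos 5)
Your proposal is correct and follows essentially the same route as the paper's own proof: fix the forced pairing of the second (synchronisation) vertices at distance exactly $1$, then do the case analysis of the two realisations against the three forms of $\mathrm{AG}_{i,j}$, observing that the first-pair distance exceeds $1$ (namely equals $1+\varepsilon$) precisely in the satisfying subcase and is $\varepsilon$ or $0.5+\varepsilon < 1$ otherwise. Your closing remark about reading $C_i[a]=\True$ locally as ``$a(x_j)$ satisfies the occurrence of $x_j$ in $C_i$'' is a fair clarification of the lemma's phrasing, but it does not change the argument.
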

\begin{proof}
First of all, observe that as we only consider one-to-one couplings, the second points of both
gadgets must be coupled; the distance between them is $\lVert (2, 0) - (1, 0)\rVert = 1$; thus, the
discrete Fr\'echet distance between the curves must be at least $1$.

Now consider the possible realisations of $\mathrm{VG}_j$.
Say, we pick the realisation $(0, 0.5 + \varepsilon) \concat (2, 0)$, which corresponds to assigning
$a(x_j) = \True$.
If $x_j$ is a literal of $C_i$, so $C_i[a] = \True$, then by construction we know that $\mathrm{AG}_
{i, j}$ is $(0, -0.5) \concat (1, 0)$.
Since we consider only the one-to-one couplings, we must couple the first points together, yielding
the distance $\lVert (0, 0.5 + \varepsilon) - (0, -0.5)\rVert = 1 + \varepsilon > 1$, so the
discrete Fr\'echet distance in this case is $1 + \varepsilon$, and indeed we picked the assignment
that ensures that $C_i$ is satisfied.
If instead $\neg x_j$ is a literal of $C_i$, so $C_i[a] = \False$, then we know that
$\mathrm{AG}_{i, j}$ is $(0, 0.5) \concat (1, 0)$, and it is easy to see that, as $\lVert (0, 0.5 +
\varepsilon) - (0, 0.5)\rVert = \varepsilon < 1$, we get the discrete Fr\'echet distance of $1$, and
that we picked an assignment that does not ensure that $C_i$ is satisfied.

A symmetric argument can be applied when we consider the realisation $(0, -0.5 - \varepsilon)
\concat (2, 0)$ for $\mathrm{VG}_j$: if $\neg x_j$ is a literal of $C_i$, then we get the discrete
Fr\'echet distance of $1 + \varepsilon$ and we picked an assignment that surely satisfies $C_i$.

Finally, consider the case when $\mathrm{AG}_{i, j} = (0, 0) \concat (1, 0)$.
This implies that assigning a value to $x_j$ has no effect on $C_i$, i.e.\@ a literal involving
$x_j$ does not occur in $C_i$, so neither assignment (and neither realisation of $\mathrm{VG}_j$)
would ensure that $C_i$ is satisfied.
Also observe that $\lVert (0, 0.5 + \varepsilon) - (0, 0)\rVert = \lVert (0, -0.5 - \varepsilon) - 
(0, 0)\rVert = 0.5 + \varepsilon < 1$, so both realisations yield the discrete Fr\'echet distance of
$1$.

So, we can conclude that we get the distance $1 + \varepsilon$ if and only if the partial assignment
of a value to $x_j$ ensures that $C_i$ is satisfied; otherwise, we get the distance $1$.
\end{proof}

\subparagraph*{Clause level}
We can repeat the construction, yielding a \emph{variable clause gadget} and an \emph{assignment
clause gadget:}
\[\mathrm{VCG} = (-2, 0) \concat \Concat_{j \in [m]} \mathrm{VG}_j\,,\qquad
\mathrm{ACG}_i = (-1, 0) \concat \Concat_{j \in [m]} \mathrm{AG}_{i, j}\,.\]
Consider the Fr\'echet distance between the two gadgets.
Observe that matching a synchronisation point from one gadget with a non-synchronisation point in
the other yields a distance larger than $1 + \varepsilon$, whereas matching synchronisation points
pairwise and non-synchronisation points pairwise will yield the distance at most $1 + \varepsilon$.
So we only consider one-to-one couplings, i.e.\@ we match point $i$ on one curve to point $i$ on the
other curve, for all $i$.

Now, if a realisation corresponds to a satisfying assignment, then for some $x_j$ we have picked the
realisation that is opposite from the coupled point on the clause curve, yielding the bottleneck
distance of $1 + \varepsilon$.
If the realisation corresponds to a non-satisfying assignment, then the
synchronisation points establish the bottleneck, yielding the distance $1$.
So, we can clearly distinguish between a satisfying and a non-satisfying assignment for a clause.

It is crucial now that we show the following:
\begin{lemma}\label{lemma:onetoonevcgacg}
Given a \cnfsat\ formula $C$ containing some clause $C_i$ and $m$ variables $x_1, \dots, x_m$,
consider curves $\alpha_1 \concat \mathrm{VCG} \concat \alpha_1'$ and $\alpha_2 \concat
\mathrm{ACG}_i \concat \alpha_2'$ for arbitrary precise curves $\alpha_1$, $\alpha_1'$, $\alpha_2$,
$\alpha_2'$ with $\lvert \alpha_1\rvert = k$ and $\lvert \alpha_2\rvert = l$.
If some optimal coupling between $\alpha_1 \concat \mathrm{VCG} \concat \alpha_1'$ and $\alpha_2
\concat \mathrm{ACG}_i \concat \alpha_2'$ for any realisation of $\mathrm{VCG}$ has a pair $(k + 1,
l + 1)$, then there is an optimal coupling that has pairs $(k + s, l + s)$ for all $s \in [2m + 1]$,
i.e.\@ there is an optimal coupling that is one-to-one on the gadgets for any realisation of
$\mathrm{VCG}$.
\end{lemma}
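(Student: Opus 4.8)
The plan is to fix an arbitrary realisation of $\mathrm{VCG}$ (so both curves are now precise) and to exploit the rigid alternating geometry of the gadgets. I first classify the $2m+1$ points of each gadget into \emph{anchors} --- the leading point and the synchronisation points, i.e.\ the copies of $(-2,0)$ and $(2,0)$ on $\mathrm{VCG}$ and of $(-1,0)$ and $(1,0)$ on $\mathrm{ACG}$ --- and \emph{free} points, namely the variable/assignment points, which all lie on the $y$-axis. Both gadgets then carry the \emph{same} length-$(2m+1)$ pattern anchor, free, anchor, free, $\dots$, anchor, and the pinned pair $(k+1,l+1)$ is an anchor--anchor pair. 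A short coordinate check establishes three distance buckets: every index-aligned pair (leading--leading and synchronisation--synchronisation) has distance at most $1+\varepsilon$, and in fact so does every free--free pair, since they all lie on the $y$-axis with gap at most $1+\varepsilon$; by contrast every pair matching an anchor of one gadget with a free point of the other has distance strictly greater than $1+\varepsilon$ (at least $2$ when the $\mathrm{VCG}$ point is the anchor, and $\sqrt{1+(0.5+\varepsilon)^2}>1+\varepsilon$ when the $\mathrm{ACG}$ point is the anchor, using $\varepsilon<0.25$). In particular, the diagonal traversal $(k+s,l+s)$ of the gadget block has bottleneck at most $1+\varepsilon$.

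The heart of the argument is the regime in which the optimal value $\delta^\star=\dfr$ of the two full curves is at most $1+\varepsilon$. Here I would argue by induction that the pinned optimal coupling is \emph{forced} to be diagonal on the gadgets. Starting from the anchor--anchor pair $(k+1,l+1)$, the only two non-diagonal successors $(k+2,l+1)$ and $(k+1,l+2)$ each pair an anchor with a free point, since the pattern alternates, so both have cost exceeding $1+\varepsilon\ge\delta^\star$ and are therefore forbidden in an optimal coupling; the only admissible step is the diagonal one to $(k+2,l+2)$. The same dichotomy recurs at every subsequent cell --- from a free--free cell both off-diagonal successors again mix an anchor with a free point --- so monotonicity leaves the diagonal as the unique option until the far corner $(k+2m+1,l+2m+1)$, proving the claim outright in this regime.

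In the complementary regime $\delta^\star>1+\varepsilon$ the diagonal traversal costs at most $1+\varepsilon<\delta^\star$, so it can be spliced into an optimal coupling without raising the overall bottleneck: one keeps the optimal prefix up to $(k+1,l+1)$, runs the diagonal across the block, and then continues to the end. I expect the main obstacle to be exactly this reconnection: because $\alpha_1,\alpha_1',\alpha_2,\alpha_2'$ are arbitrary, an optimal coupling pinned at the near corner need not pass through the far corner $(k+2m+1,l+2m+1)$, and the costs of pairing the last synchronisation point with $\alpha'$ points are uncontrolled. The plan is to remove this gap with a monotone exchange that pushes the coupling's exit from the gadget block to the far corner, using that all synchronisation points of a gadget coincide (so horizontal and vertical runs along a synchronisation row or column are interchangeable) together with the fact that, in this regime, the already-present bottleneck of at least $\delta^\star>1+\varepsilon$ absorbs the rerouting. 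Combining the two regimes yields an optimal coupling containing $(k+s,l+s)$ for all $s\in[2m+1]$.
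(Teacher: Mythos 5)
Your bucketing of the gadget vertices into \emph{anchors} and \emph{free} points and your regime-A argument are correct: if the optimal cost $\delta^\star$ is at most $1+\varepsilon$, every anchor--free pair costs strictly more than $1+\varepsilon$ (using $\sqrt{1+(0.5+\varepsilon)^2}>1+\varepsilon$ for $\varepsilon<1/4$), so from the pinned anchor--anchor pair $(k+1,l+1)$ both off-diagonal steps are forbidden in an optimal coupling, and induction forces the diagonal across the whole block. This is a genuinely different route from the paper, which instead runs an exchange induction (delete the single offending pair, insert the diagonal pair, argue the new coupling is no worse), and in this regime your argument is cleaner since it needs no surgery on the coupling at all.

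The genuine gap is regime B, and it cannot be closed, because the lemma as stated (with arbitrary $\alpha_1,\alpha_1',\alpha_2,\alpha_2'$) is actually \emph{false} there: a bottleneck $\delta^\star>1+\varepsilon$ does not ``absorb the rerouting'', since pushing the coupling onto the diagonal can force pairs of type ($(\pm 2,0)$, assignment point) of cost at least $2$, while $\delta^\star$ may be as small as $\sqrt{1+(0.5+\varepsilon)^2}<1.25$. Concretely, take $m=1$, realisation $I=(0,0.5+\varepsilon)$, $\mathrm{AG}_{i,1}=(0,-0.5)\concat(1,0)$, and tails $\alpha_1=\alpha_2=\langle(-5,0)\rangle$, $\alpha_1'=\langle(100,0)\rangle$, $\alpha_2'=\langle(0,0.5+\varepsilon),(2,0),(100,0)\rangle$, so $k=l=1$ and the block pairs are $(2,2),(3,3),(4,4)$. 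The coupling $(1,1),(2,2),(3,3),(3,4),(3,5),(4,6),(5,7)$ costs $\sqrt{1+(0.5+\varepsilon)^2}$, and it is optimal: any coupling of cost below $2$ must match the second-curve vertex $(0,0.5+\varepsilon)$ to $I$ (every other partner costs more than $2$), and then monotonicity forces $(1,0)$ to match a first-curve vertex of index at most $3$, the cheapest being $I$ at cost $\sqrt{1+(0.5+\varepsilon)^2}$. So the hypothesis of the lemma holds, yet any coupling containing $(4,4)$ must match $(0,0.5+\varepsilon)$ to a first-curve vertex of index at least $4$, at cost above $2$; hence no optimal coupling is one-to-one on the block (the other realisation behaves identically, so this defeats either reading of ``for any realisation''). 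You are in good company: the paper's own proof breaks on the same example --- deleting the pair $(3,4)$ and inserting $(4,4)$ leaves the pair $(3,5)$, violating monotonicity, so its claim ``we will still have a valid coupling'' is wrong --- and the lemma survives downstream only because it is invoked, via Lemma~\ref{lemma:distvcgacg} inside Lemma~\ref{lemma:disttraj}, exclusively after it is known that the optimum is at most $1+\varepsilon$, i.e.\ exactly your regime A. The correct fix, for your write-up and the paper's, is to add the hypothesis $\delta^\star\le 1+\varepsilon$ (which the application supplies) and keep only your regime-A argument.
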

\begin{proof}
Observe that both gadgets have exactly $2m + 1$ points.
Suppose the optimal coupling $\mathrm{Opt}$ has a pair $(k + 1, l + 1)$, so it matches the first
points of $\mathrm{VCG}$ and $\mathrm{ACG}_i$.
If $\mathrm{Opt}$ is already one-to-one for all $s \in [2m + 1]$, there is nothing to be done.
Suppose now that it is one-to-one until some $1 \leq r < 2m + 1$, so it has pairs $(k + s, l +
s)$ for all $s \in [r]$, but it does not have a pair $(k + (r + 1), l + (r + 1))$.
Then one of the following cases occurs.
\begin{itemize}
    \item $r = 2q + 2$ is even; then we know that the point $(2, 0)$ in $\mathrm{VG}_{q + 1}$ is
    not coupled with the point $(1, 0)$ in $\mathrm{AG}_{i, q + 1}$, but the preceding indecisive
    point is coupled with the assignment point.
    Then either $(2, 0)$ is coupled to an assignment point, with the distance at least $2$, or $(1,
    0)$ is coupled to an indecisive point, yielding the distance of $\sqrt{1 + (0.5 +
    \varepsilon)^2} > 1$.
    If we eliminate that pair and instead couple $(2, 0)$ to $(1, 0)$, we will still have a valid
    coupling and obtain the distance of $1$ on this pair; thus, the new coupling is not worse that
    the original one, and so it is also an optimal coupling that is one-to-one for all
    $s \in [r + 1]$.
    \item $r = 2q + 1$ is odd; then we know that the indecisive point in $\mathrm{VG}_{q + 1}$ is
    not coupled with the assignment point in $\mathrm{AG}_{i, q + 1}$, but the preceding $(2, 0)$
    and $(1, 0)$ (or $(-2, 0)$ and $(-1, 0)$) are coupled.
    Then either $\mathrm{Opt}$ has a pair of the indecisive point and $(1, 0)$, or it has a pair of
    the assignment point and $(2, 0)$.
    (The cases for $(-1, 0)$ and $(-2, 0)$ are symmetrical.)
    In either case, we want to eliminate that pair from the coupling and instead add the pair of the
    indecisive point and the assignment point, yielding a valid coupling that is one-to-one for all
    $s \in [r + 1]$.
    To complete the proof for this case, we need to show that such coupling is optimal.

    Consider the first possible coupling.
    The distance between the indecisive point and $(1, 0)$ is $\sqrt{1 + (0.5 + \varepsilon)^2}$,
    whereas the distance between the indecisive and the assignment point is $\varepsilon$, $0.5 +
    \varepsilon$, or $1 + \varepsilon$.
    As $\varepsilon < 0.25$, note that
    \begin{align*}
    0.25 + \varepsilon &> 2\varepsilon\\
    1 + 0.25 + \varepsilon + \varepsilon^2 &> 1 + 2\varepsilon + \varepsilon^2\\
    1 + (0.5 + \varepsilon)^2 &> (1 + \varepsilon)^2\\
    \sqrt{1 + (0.5 + \varepsilon)^2} &> 1 + \varepsilon\,,
    \end{align*}
    so our change to the optimal coupling will replace a pair with a pair with lower distance, so
    the new coupling is at least as good as the original one, and thus optimal.

    Now consider the second coupling.
    The distance between the assignment point and $(2, 0)$ is at least $2$, and $2 > 1 +
    \varepsilon > 0.5 + \varepsilon > \varepsilon$, so again our change yields an optimal coupling.
\end{itemize}
By induction, we conclude that the statement of the lemma holds.
\end{proof}

\noindent We can now use the two previous results to show the following.
\begin{lemma}\label{lemma:distvcgacg}
Given a \cnfsat\ formula $C$ containing some clause $C_i$ and $m$ variables $x_1, \dots, x_m$,
construct curves $\alpha_1 \concat \mathrm{VCG} \concat \alpha_1'$ and $\alpha_2 \concat
\mathrm{ACG}_i \concat \alpha_2'$ for arbitrary precise curves $\alpha_1$, $\alpha_1'$, $\alpha_2$,
$\alpha_2'$ with $\lvert \alpha_1\rvert = k$ and $\lvert \alpha_2\rvert = l$.
If some optimal coupling between $\alpha_1 \concat \mathrm{VCG} \concat \alpha_1'$ and $\alpha_2
\concat \mathrm{ACG}_i \concat \alpha_2'$ for any realisation of $\mathrm{VCG}$ has a pair $(k + 1,
l + 1)$ and $\dfr(\alpha_1, \alpha_2) \leq 1$ and $\dfr(\alpha_1', \alpha_2') \leq 1$, then the
discrete Fr\'echet distance between the curves is $1 + \varepsilon$ for realisations of
$\mathrm{VCG}$ that correspond to satisfying assignments for $C_i$, and $1$ for realisations that
do not.
In other words, if $\pi \Subset \mathrm{VCG}$ corresponds to assignment $a$ and we only consider
the restricted couplings, then \[\dfr(\alpha_1 \concat \pi \concat \alpha_1', \alpha_2 \concat
\mathrm{ACG}_i \concat \alpha_2') =
\begin{cases}
1 + \varepsilon & \text{iff $C_i[a] = \True$,}\\
1 & \text{otherwise.}
\end{cases}
\]
\end{lemma}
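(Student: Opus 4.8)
The plan is to lean on the two preceding lemmas to reduce everything to the gadget and then sandwich $\dfr$ between matching upper and lower bounds. First I would invoke Lemma~\ref{lemma:onetoonevcgacg}: since by hypothesis some optimal coupling contains the pair $(k+1, l+1)$, there is an optimal coupling $c^*$ that is one-to-one on the gadgets, i.e.\ it contains $(k+s, l+s)$ for every $s \in [2m+1]$. This $c^*$ realises exactly the discrete Fr\'echet distance of the two full curves, and its restriction to the $2m+1$ gadget indices couples $\mathrm{VG}_j$ one-to-one with $\mathrm{AG}_{i, j}$ for each $j$, so Lemma~\ref{lemma:distgadgetsvgag} applies per variable.

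For the upper bound I would exhibit an explicit valid coupling. Using $\dfr(\alpha_1, \alpha_2) \leq 1$ and $\dfr(\alpha_1', \alpha_2') \leq 1$, pick couplings of the prefixes and of the suffixes with bottleneck at most $1$, then concatenate the prefix coupling (ending at $(k, l)$), the one-to-one gadget coupling $(k + s, l + s)$, and the index-shifted suffix coupling (starting at $(k + 2m + 2, l + 2m + 2)$). The two joins $(k, l) \to (k + 1, l + 1)$ and $(k + 2m + 1, l + 2m + 1) \to (k + 2m + 2, l + 2m + 2)$ are diagonal steps, so the result is a valid coupling. Its bottleneck is the maximum of the prefix value ($\leq 1$), the suffix value ($\leq 1$), and the gadget value, where by Lemma~\ref{lemma:distgadgetsvgag} the gadget value is the largest of the $m$ per-variable distances and the synchronisation-point distances. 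That value is $1 + \eps$ exactly when some variable's partial assignment satisfies $C_i$, i.e.\ when $C_i[a] = \True$, and $1$ otherwise. Hence $\dfr \leq 1 + \eps$ when $C_i[a] = \True$ and $\dfr \leq 1$ when $C_i[a] = \False$.

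For the matching lower bound I would return to the optimal coupling $c^*$, whose bottleneck equals $\dfr$. It contains the synchronisation pair $(2, 0)$--$(1, 0)$ of distance exactly $1$, giving $\dfr \geq 1$ unconditionally; and when $C_i[a] = \True$, some literal of $C_i$ is satisfied, so by Lemma~\ref{lemma:distgadgetsvgag} the corresponding variable gadget contributes a coupled indecisive/assignment pair at distance $1 + \eps$, which lies in $c^*$ because $c^*$ is one-to-one on the gadgets, whence $\dfr \geq 1 + \eps$. Combining the two directions gives $\dfr = 1 + \eps$ when $C_i[a] = \True$ and $\dfr = 1$ otherwise. The main obstacle is this lower-bound step: one must translate ``$C_i[a] = \True$'' into ``some single variable's partial assignment satisfies $C_i$'' in order to read off the $1 + \eps$ distance from the right gadget via Lemma~\ref{lemma:distgadgetsvgag}; a secondary point to check is that the concatenation in the upper-bound step is a legal monotone coupling at both gadget boundaries.
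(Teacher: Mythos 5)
Your proof is correct and follows essentially the same route as the paper's: both invoke Lemma~\ref{lemma:onetoonevcgacg} to obtain an optimal coupling that is one-to-one on the gadgets, then reduce to the per-variable distances of Lemma~\ref{lemma:distgadgetsvgag}, using $\dfr(\alpha_1,\alpha_2)\leq 1$ and $\dfr(\alpha_1',\alpha_2')\leq 1$ together with the synchronisation pair to pin the value at $\max(1,\dfr(\pi,\mathrm{ACG}_i))$. The only difference is presentational: where the paper asserts the split $\dfr = \max(\dfr(\alpha_1,\alpha_2), \dfr(\pi,\mathrm{ACG}_i), \dfr(\alpha_1',\alpha_2'))$ directly for the restricted couplings, you unpack it into an explicit upper bound (concatenated coupling) and lower bound (pairs forced into the optimal one-to-one coupling), and the ``obstacle'' you flag---that $C_i[a]=\True$ means some single variable's assignment satisfies the clause---is immediate since a clause is a disjunction of literals.
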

\begin{proof}
First of all, since some optimal coupling between $\alpha_1 \concat \mathrm{VCG} \concat \alpha_1'$
and $\alpha_2 \concat \mathrm{ACG}_i \concat \alpha_2'$ for any realisation of $\mathrm{VCG}$ has a
pair $(k + 1, l + 1)$, we can use Lemma~\ref{lemma:onetoonevcgacg} to find an optimal coupling
$\mathrm{Opt}$ that is one-to-one on the subcurves corresponding to the gadgets.
That means that we can, essentially, split the curves, if we consider only such restricted
couplings:
\begin{align*}
\dfr(\alpha_1 \concat \pi \concat \alpha_1', \alpha_2 \concat \mathrm{ACG}_i \concat \alpha_2')
&= \max(\dfr (\alpha_1, \alpha_2), \dfr(\pi, \mathrm{ACG}_i), \dfr(\alpha_1', \alpha_2'))\\
&= \max(1, \dfr(\pi, \mathrm{ACG}_i))\,,
\end{align*}
where the last equality follows from the fact that $\dfr(\pi, \mathrm{ACG}_i) \geq 1$, since the
first points are in a coupling and have the distance $1$, and from the assumption that
$\dfr(\alpha_1, \alpha_2) \leq 1$ and $\dfr(\alpha_1', \alpha_2') \leq 1$.
Note that here we do not restrict the coupling on $\alpha_1, \alpha_2$ and $\alpha_1', \alpha_2'$.

To obtain the end result, we need to consider the distance between $\pi$ and $\mathrm{ACG}_i$ under
a one-to-one coupling.
Using Lemma~\ref{lemma:distgadgetsvgag}, it is easy to see that if we have $a(x_j) = \True$ for
some variable $x_j$ and $x_j$ is a literal in $C_i$, then $C_i[a] = \True$, and $\dfr(\pi,
\mathrm{ACG}_i) = 1 + \varepsilon$; similarly, if $a(x_j) = \False$ for some variable $x_j$ and
$\neg x_j$ is a literal in $C_i$, then $C_i[a] = \True$, and $\dfr(\pi, \mathrm{ACG}_i) = 1 +
\varepsilon$.
If there is no such $x_j$, then $C_i[a] = \False$ and $\dfr(\pi, \mathrm{ACG}_i) = 1$.
We can thus conclude that the lemma holds.
\end{proof}

\subparagraph*{Formula level}
Next, we define the \emph{variable curve} and the \emph{clause curve} as follows:
\[\mathrm{VC} = (0, 0) \concat \mathrm{VCG} \concat (0, 0)\,,\qquad
\mathrm{CC} = \Concat_{i \in [n]} \mathrm{ACG}_i\,.\]
Observe that the synchronisation points at $(-2, 0)$ and $(-1, 0)$ ensure that for any optimal
coupling we match up $\mathrm{VCG}$ with some $\mathrm{ACG}_i$ as described before.
Also note that all the points on $\mathrm{CC}$ are within distance $1$ from $(0, 0)$.
Therefore, we can always pick any one of $n$ clauses to align with $\mathrm{VCG}$, and couple the
remaining points to $(0, 0)$; the bottleneck distance will then be determined by the distance
between $\mathrm{VCG}$ and the chosen $\mathrm{ACG}_i$.

Now consider a specific realisation of $\mathrm{VCG}$.
If the corresponding assignment does not satisfy $C$, then we can synchronise $\mathrm{VCG}$ with a
clause that is false to obtain the distance of~$1$. If the assignment corresponding to the
realisation satisfies all clauses, we must synchronise $\mathrm{VCG}$ with a satisfied clause, which
yields a distance of $1 + \varepsilon$.

We show the following important property of our construction.
\begin{lemma}\label{lemma:disttraj}
Given a \cnfsat\ formula $C$ with $n$ clauses and $m$ variables, construct the curves $\mathrm{VC}$
and $\mathrm{CC}$ as defined above and consider a realisation $(0, 0) \concat \pi \concat (0, 0)$ of
curve $\mathrm{VC}$, corresponding to some assignment $a$.
Then, under no restrictions on the couplings except those imposed by the definition,
\[\dfr((0, 0) \concat \pi \concat (0, 0), \mathrm{CC}) =
\begin{cases}
1 + \varepsilon & \text{iff $C[a] = \True$,}\\
1 & \text{iff $C[a] = \False$.}
\end{cases}
\]
In other words, the discrete Fr\'echet distance is $1 + \varepsilon$ if and only if the realisation
corresponds to a satisfying assignment, and is $1$ otherwise.
\end{lemma}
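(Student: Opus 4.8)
The plan is to reduce the formula-level distance to a single clause gadget and then read off the satisfiability semantics by minimising over clauses. Concretely, I would show that any optimal coupling aligns the realised variable-clause gadget $\mathrm{VCG}$ against exactly one assignment-clause gadget $\mathrm{ACG}_{i^*}$, while the two padding vertices $(0,0)$ of $\mathrm{VC}$ absorb the remaining clause gadgets; Lemma~\ref{lemma:distvcgacg} then evaluates the distance to $1+\eps$ or $1$ according to whether $C_{i^*}[a] = \True$. The unconditional bound $\dfr \ge 1$ is immediate: every coupling must pair the last vertices $(0,0)$ of $\mathrm{VC}$ and $(1,0)$ of $\mathrm{AG}_{n,m}$, which are at distance $1$.

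For the upper bound I would exhibit, for an arbitrary clause $C_i$, an explicit coupling: pair the leading $(0,0)$ of $\mathrm{VC}$ with the prefix $\mathrm{ACG}_1 \concat \dots \concat \mathrm{ACG}_{i-1}$, align $\mathrm{VCG}$ one-to-one with $\mathrm{ACG}_i$, and pair the trailing $(0,0)$ with the suffix $\mathrm{ACG}_{i+1} \concat \dots \concat \mathrm{ACG}_n$. Since every vertex of $\mathrm{CC}$ lies within distance $1$ of $(0,0)$, the prefix and suffix pairs never exceed $1$, so the bottleneck of this coupling equals the one-to-one gadget distance, which a per-variable application of Lemma~\ref{lemma:distgadgetsvgag} (together with the head pair at distance $1$) evaluates to $1+\eps$ when $C_i[a] = \True$ and to $1$ otherwise. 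Taking $i$ to be a falsified clause when $C[a] = \False$ yields $\dfr \le 1$, and any choice gives $\dfr \le 1+\eps$ in general. As a by-product this establishes the two padding conditions $\dfr(\alpha_1, \alpha_2) \le 1$ and $\dfr(\alpha_1', \alpha_2') \le 1$ needed to invoke Lemma~\ref{lemma:distvcgacg}.

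Next I would pin down the alignment. Because the upper bound guarantees the optimal bottleneck is at most $1+\eps < 2$, and the vertex $(-2,0)$ at the head of $\mathrm{VCG}$ is at distance $1$ from each clause head $(-1,0)$ but at distance at least $2$ from every other vertex of $\mathrm{CC}$, any optimal coupling must pair $(-2,0)$ only with $(-1,0)$ vertices. Since these clause heads are pairwise non-adjacent in $\mathrm{CC}$, monotonicity forces $(-2,0)$ to be coupled with exactly one of them, the head of some $\mathrm{ACG}_{i^*}$; writing $\alpha_1 = \alpha_1' = (0,0)$ and splitting $\mathrm{CC}$ as $\alpha_2 \concat \mathrm{ACG}_{i^*} \concat \alpha_2'$, this is precisely a pair $(k+1, l+1)$ with $k = 1$ and $l = (i^*-1)(2m+1)$. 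With the padding conditions already verified, Lemma~\ref{lemma:distvcgacg} applies and gives $\dfr = 1+\eps$ if $C_{i^*}[a] = \True$ and $\dfr = 1$ otherwise.

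It remains to combine the two cases. If $C[a] = \True$, then every clause, in particular $C_{i^*}$, is satisfied, so Lemma~\ref{lemma:distvcgacg} forces $\dfr = 1+\eps$, matching the upper bound. If $C[a] = \False$, the upper bound already gives $\dfr \le 1$ and the endpoint pairing gives $\dfr \ge 1$, so $\dfr = 1$ (equivalently, the optimal coupling selects a falsified clause as $i^*$). The main obstacle, I expect, is the pinning step: one must argue carefully from the proximity of the anchor vertex $(-2,0)$, together with the monotonicity of couplings and the non-adjacency of the clause heads in $\mathrm{CC}$, that an optimal coupling genuinely aligns $\mathrm{VCG}$ against a single clause gadget, so that Lemma~\ref{lemma:distvcgacg} can be invoked; once the pair $(k+1, l+1)$ is secured, the remaining bookkeeping is routine.
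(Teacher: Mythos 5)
Your proposal is correct and follows essentially the same route as the paper's proof: an explicit coupling (exploiting that every vertex of $\mathrm{CC}$ lies within distance $1$ of the padding points $(0,0)$) gives the upper bounds, the anchor argument that $(-2,0)$ is only within distance $1+\varepsilon$ of the clause heads $(-1,0)$ pins any optimal coupling to a single $\mathrm{ACG}_{i^*}$, and Lemma~\ref{lemma:distvcgacg} then finishes the case analysis. Your explicit derivation of the lower bound $\dfr \geq 1$ from the final endpoint pair is a slightly cleaner packaging of a step the paper leaves implicit, but it is not a different approach.
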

\begin{proof}
We can show this by proving that the premises of Lemma~\ref{lemma:distvcgacg} are satisfied.

First of all, note that all the points of $\mathrm{CC}$ are within distance $1$ from $(0, 0)$.
Furthermore, note that we can always give a coupling with the distance at most $1 + \varepsilon$:
couple $(0, 0)$ to $(-1, 0)$ from $\mathrm{ACG}_1$, then walk along realisation of $\mathrm{VCG}$
and $\mathrm{ACG}_1$ in a one-to-one coupling, and then couple the remaining points in $\mathrm{CC}$
to $(0, 0)$.
As all the points of $\mathrm{CC}$ are within distance $1$ from $(0, 0)$ and as this is otherwise
the construction of Lemma~\ref{lemma:distvcgacg}, this coupling yields the discrete Fr\'echet
distance of at most $1 + \varepsilon$ for any realisation of $\mathrm{VC}$.
Therefore, any coupling that has pairs further away than $1 + \varepsilon$ cannot be optimal.
Observe that the only point within that distance from $(-2, 0)$ is $(-1, 0)$.
Therefore, we only need to consider couplings that couple the first point of realisation of
$\mathrm{VCG}$ with the first point of some $\mathrm{ACG}_i$ as possibly optimal.
Thus, for each of the $n$ couplings we get, we can apply Lemma~\ref{lemma:distvcgacg}.
There are two cases to consider.
\begin{itemize}
    \item There is some gadget $\mathrm{ACG}_i$ with the distance $1$ to $\pi$ under the one-to-one
    coupling.
    Then we can choose that gadget to align with $\pi$ and couple all the other points in
    $\mathrm{CC}$ to $(0, 0)$ at the beginning or at the end of $\mathrm{VC}$ as suitable.
    As all the points of $\mathrm{CC}$ are within distance $1$ from $(0, 0)$, this coupling will
    yield distance $1$; as lower distance is impossible, this coupling is optimal, so then
    $\dfr((0, 0) \concat \pi \concat (0, 0), \mathrm{CC}) = 1$.
    Observe that by our construction this situation corresponds to the case when $C_i[a] = \False$,
    by Lemma~\ref{lemma:distvcgacg}, and so indeed $C[a] = \False$.
    \item The distance between any gadget $\mathrm{ACG}_i$ and $\pi$ under the one-to-one coupling
    is $1 + \varepsilon$.
    Then, no matter which gadget we choose to align with $\pi$, we will get the distance of $1 +
    \varepsilon$, so in this case $\dfr((0, 0) \concat \pi \concat (0, 0), \mathrm{CC}) = 1 +
    \varepsilon$.
    Note that, by our construction, this means that $C_i[a] = \True$ for all $i \in [n]$; therefore,
    indeed $C[a] = \True.$
\end{itemize}
As we have covered all the possible cases, we conclude that the lemma holds.
\end{proof}
We illustrate the gadgets of the construction in Figure~\ref{fig:gadgets}.
We also show an example of the correspondence between a boolean formula and our construction in
Figure~\ref{fig:real}.

{\def\epsln{0.15}
\begin{figure}
\centering
\begin{tikzpicture}[scale=2]
\draw[black,thin] (-1.7, -0.65) rectangle (2, 0.65) node[right,yshift=-12pt]{$\mathrm{ACG}$};
\draw[black,thin] (-0.8, -0.55) rectangle (1.5, 0.55) node[right,yshift=-12pt]{$\mathrm{AG}$};
\fill[red]   (0, 0) node[right] {$(0, 0)$} circle[radius=1pt]
             (0, 0.5) node[below left] {$(0, 0.5)$} circle[radius=1pt]
             (0, -0.5) node[above left] {$(0, -0.5)$} circle[radius=1pt]
             (1, 0) node[right] {$(1, 0)$} circle[radius=1pt]
             (-1, 0) node[left] {$(-1, 0)$} circle[radius=1pt];
\end{tikzpicture}

\vspace{\floatsep}

\begin{tikzpicture}[scale=2]
\draw[black,thin] (-2.3, -0.8) rectangle (2.8, 0.8) node[right,yshift=-12pt]{$\mathrm{VCG}$};
\draw[black,thin] (-1.2, -0.7) rectangle (2.3, 0.7) node[right,yshift=-12pt]{$\mathrm{VG}$};
\fill[olive] (0, 0.5+\epsln) node[below left] {$(0, 0.5 + \varepsilon)$} circle[radius=1pt]
             (0, -0.5-\epsln) node[above left] {$(0, -0.5 - \varepsilon)$} circle[radius=1pt]
             (2, 0) node[below] {$(2, 0)$} circle[radius=1pt]
             (-2, 0) node[below] {$(-2, 0)$} circle[radius=1pt];
\end{tikzpicture}
\caption{Illustration of the gadgets used in the basic construction.}
\label{fig:gadgets}
\end{figure}
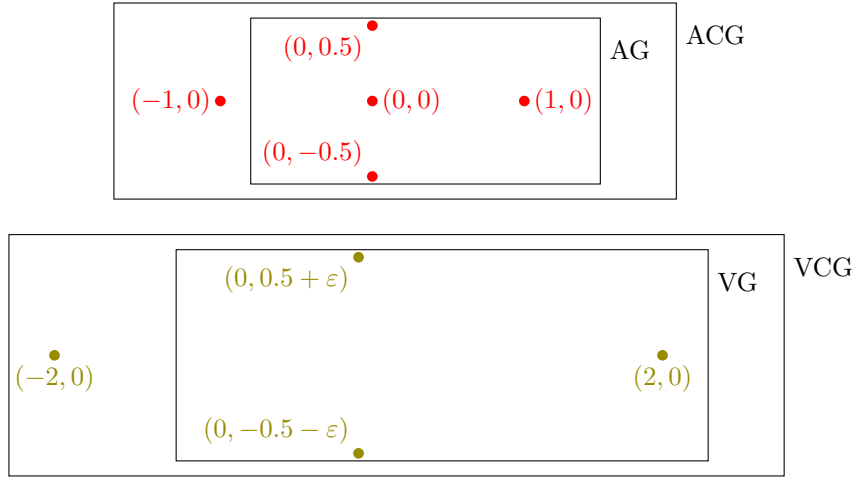}

{\def\rds{2pt}
\begin{figure}
\centering
\begin{tikzpicture}
\draw[black,thin] (0, 10) -- (1, 10) -- (2, 11) -- (3, 10) -- (4, 11) --
                  (5, 10) -- (6, 9) -- (7, 10) -- (8, 10);
\fill[black]    (0, 10) node[above] {$(0, 0)$} circle[radius=\rds]
                (1, 10) node[below] {$(-2, 0)$} circle[radius=\rds]
                (2, 11) node[above] {$(0, 0.5 + \varepsilon)$} circle[radius=\rds]
                (3, 10) node[below] {$(2, 0)$} circle[radius=\rds]
                (4, 11) node[above] {$(0, 0.5 + \varepsilon)$} circle[radius=\rds]
                (5, 10) node[right] {$(2, 0)$} circle[radius=\rds]
                (6, 9) node[right] {$(0, -0.5 - \varepsilon)$} circle[radius=\rds]
                (7, 10) node[above] {$(2, 0)$} circle[radius=\rds]
                (8, 10) node[above] {$(0, 0)$} circle[radius=\rds];

\draw[red,thin] (1, 8) -- (2, 7) -- (3, 8) -- (5, 8) -- (6, 7) -- (7, 8);
\draw[orange,thin] (7, 8) -- (7.8, 8);
\draw[blue,thin] (7.8, 8) -- (8.2, 8);
\fill[red]      (1, 8) node[above] {$(-1, 0)$} circle[radius=\rds]
                (2, 7) node[right] {$(0, -0.5)$} circle[radius=\rds]
                (3, 8) node[above] {$(1, 0)$} circle[radius=\rds]
                (4, 8) node[below] {$(0, 0)$} circle[radius=\rds]
                (5, 8) node[above] {$(1, 0)$} circle[radius=\rds]
                (6, 7) node[right] {$(0, -0.5)$} circle[radius=\rds]
                (7, 8) node[above] {$(1, 0)$} circle[radius=\rds];

\draw[red,thin] (0, 5) -- (1, 5);
\draw[orange,thin] (1, 5) -- (2, 6) -- (4, 4) -- (6, 6) -- (7, 5);
\draw[blue,thin] (7, 5) -- (8, 5);
\fill[orange]   (1, 5) node[below] {$(-1, 0)$} circle[radius=\rds]
                (2, 6) node[above] {$(0, 0.5)$} circle[radius=\rds]
                (3, 5) node[right] {$(1, 0)$} circle[radius=\rds]
                (4, 4) node[below] {$(0, -0.5)$} circle[radius=\rds]
                (5, 5) node[left] {$(1, 0)$} circle[radius=\rds]
                (6, 6) node[above] {$(0, 0.5)$} circle[radius=\rds]
                (7, 5) node[below] {$(1, 0)$} circle[radius=\rds];

\draw[red,thin] (-0.2, 2) -- (0.2, 2);
\draw[orange,thin] (0.2, 2) -- (1, 2);
\draw[blue,thin] (1, 2) -- (2, 1) -- (4, 3) -- (5, 2) -- (7, 2);
\fill[blue]     (1, 2) node[above] {$(-1, 0)$} circle[radius=\rds]
                (2, 1) node[below] {$(0, -0.5)$} circle[radius=\rds]
                (3, 2) node[left] {$(1, 0)$} circle[radius=\rds]
                (4, 3) node[above] {$(0, 0.5)$} circle[radius=\rds]
                (5, 2) node[below] {$(1, 0)$} circle[radius=\rds]
                (6, 2) node[above] {$(0, 0)$} circle[radius=\rds]
                (7, 2) node[below] {$(1, 0)$} circle[radius=\rds];

\fill[red]      (0, 5) node[below] {$C_1$} circle[radius=\rds]
                (-0.2, 2) node[below] {$C_1$} circle[radius=\rds]
                (-1.5, 8) node[right] {$C_1$};
\fill[orange]   (7.8, 8) node[below] {$C_2$} circle[radius=\rds]
                (0.2, 2) node[below] {$C_2$} circle[radius=\rds]
                (-1.5, 5) node[right] {$C_2$};
\fill[blue]     (8.2, 8) node[below] {$C_3$} circle[radius=\rds]
                (8, 5) node[below] {$C_3$} circle[radius=\rds]
                (-1.5, 2) node[right] {$C_3$};
\fill[black]    (-1.5, 10) node[right] {$\mathrm{VC}$};

\draw[dashed,thin,blue] (2, 1) to[bend right=5] (2, 11);
\draw[dashed,thin,orange] (4, 4) -- (4, 11);
\draw[dashed,thin,red] (2, 7) to[bend left=5] (2, 11);
\end{tikzpicture}
\caption{Realisation of $\mathrm{VC}$ for assignment $x_1 = \True$, $x_2 = \True$, $x_3 = \False$
and the $\mathrm{CC}$ for formula $C = (x_1 \lor x_3) \land (\neg x_1 \lor x_2 \lor \neg x_3) \land
(x_1 \lor \neg x_2)$.
Note that $C = \True$ with the given variable assignment.
Also note that we can choose any of $C_1$, $C_2$, $C_3$ to align with $\mathrm{VC}$; we always get
the bottleneck distance of $1 + \varepsilon$, as all three are satisfied, so here
$\dfr(\mathrm{VC}, \mathrm{CC}) = 1 + \varepsilon$.}
\label{fig:real}
\end{figure}}

\subsubsection{Upper Bound Discrete Fr\'echet Distance on Indecisive Points}
\begin{theorem}\label{thm:ub_ind_np}
The problem \textsc{Upper Bound Discrete Fr\'echet} for indecisive curves is \np-complete.
\end{theorem}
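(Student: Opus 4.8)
The plan is to prove containment in \np\ and then reduce from \textsc{CNF-SAT}, letting the chain of lemmas built up through Lemma~\ref{lemma:disttraj} carry essentially all of the geometric work. The only genuinely new observation needed is that the upper bound, being a maximum over realisations, is governed entirely by whether \emph{some} realisation encodes a satisfying assignment.

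For membership in \np, I would note that a ``yes'' instance admits a short, polynomial-time-verifiable certificate: a pair of realisations $\pi \Subset \mathcal{U}$ and $\sigma \Subset \mathcal{V}$. Each realisation is a precise curve with polynomially many vertices, so $\dfr(\pi, \sigma)$ can be computed exactly in polynomial time by the Eiter--Mannila dynamic program (comparing squared distances to avoid square roots), and one then checks whether this value exceeds $\delta$. Hence the problem lies in \np.

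For hardness, given a \cnfsat\ formula $C$ with $n$ clauses and $m$ variables, I would construct the curves $\mathrm{VC}$ and $\mathrm{CC}$ exactly as defined at the formula level. Here $\mathrm{VC} = (0,0) \concat \mathrm{VCG} \concat (0,0)$ is an indecisive curve with $2m + 3$ vertices and $\mathrm{CC} = \Concat_{i \in [n]} \mathrm{ACG}_i$ is a \emph{precise} curve with $n(2m+1)$ vertices; both are clearly built in time polynomial in $n$ and $m$. Because $\mathrm{CC}$ is precise, the maximisation over its realisations is vacuous, while the realisations of $\mathrm{VC}$ are in bijection with variable assignments $a$ (the sole indecisive point of each $\mathrm{VG}_j$ selects the truth value of $x_j$). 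Consequently
\[\dfrmax(\mathrm{VC}, \mathrm{CC}) = \max_{\pi \Subset \mathrm{VCG}} \dfr\bigl((0,0) \concat \pi \concat (0,0),\, \mathrm{CC}\bigr),\]
and, identifying each realisation $\pi$ with the assignment $a$ it encodes, Lemma~\ref{lemma:disttraj} tells us this maximand equals $1 + \varepsilon$ when $C[a] = \True$ and equals $1$ otherwise.

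Setting the threshold $\delta = 1$, it follows that $\dfrmax(\mathrm{VC}, \mathrm{CC}) > 1$ exactly when some assignment satisfies $C$; if $C$ is unsatisfiable then every realisation attains distance precisely $1$ and $\dfrmax(\mathrm{VC},\mathrm{CC}) = 1$. This gives a polynomial-time reduction from \textsc{CNF-SAT} and hence \np-hardness, which together with membership yields \np-completeness. I expect no real obstacle to remain here: the delicate geometry---constraining which couplings can be optimal and forcing the one-to-one structure on the gadgets---has already been dispatched by Lemmas~\ref{lemma:onetoonevcgacg} and~\ref{lemma:distvcgacg}, so the only care required is to confirm that the realisations of $\mathrm{VC}$ range over all assignments and that the maximum is therefore controlled solely by the existence of a satisfying assignment.
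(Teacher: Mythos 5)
Your proposal is correct and follows essentially the same route as the paper's proof: \np-membership via a pair of realisations verified by the Eiter--Mannila dynamic program, and \np-hardness by the same reduction from \textsc{CNF-SAT} using the curves $\mathrm{VC}$ and $\mathrm{CC}$ with threshold $\delta = 1$ and Lemma~\ref{lemma:disttraj}. The minor extras (squared distances for the verifier, explicit vertex counts, the observation that $\mathrm{CC}$ is precise so maximisation over its realisations is vacuous) are consistent with, and if anything slightly more careful than, the paper's own write-up.
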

\begin{proof}
First of all, observe that if two realisations of length $n$ and $m$ are given as a certificate
for a `Yes'-instance of the problem, then one can verify the solution by computing discrete
Fr\'echet distance between the realisations and checking that it is indeed larger than some
threshold $\delta$.
The computation can be done in time $\Theta(mn)$, using the algorithm proposed by Eiter and
Mannila~\cite{eiter:1994}.
Therefore, the problem is in \np.

Now suppose we are given an instance of \textsc{CNF-SAT}, i.e.\@ a \cnfsat\ formula $C$ with $n$
clauses and $m$ variables.
We construct the curves $\mathrm{VC}$ and $\mathrm{CC}$, as described previously, and get an
instance of \textsc{Upper Bound Discrete Fr\'echet} on curves $\mathrm{VC}$ and $\mathrm{CC}$
and threshold $\delta = 1$.
If the answer is `Yes', then we also output `Yes' as an answer to \textsc{CNF-SAT}; otherwise, we
output `No'.

Using Lemma~\ref{lemma:disttraj}, we can see that if there is some assignment $a$ such that $C[a]
= \True$, then for the corresponding realisation the discrete Fr\'echet distance is $1 +
\varepsilon$; the other way around, if for some realisation we get the distance $1 + \varepsilon$,
then by our construction all the clauses are satisfied and $C[a] = \True$; and so
$\dfrmax(\mathrm{VC}, \mathrm{CC}) = 1 + \varepsilon$.
On the other hand, if there is no such assignment $a$, then for any assignment $a$ there is some
$C_i$ with $C_i[a] = \False$, yielding $C[a] = \False$, and also for any realisation of
$\mathrm{VC}$ there is some gadget $\mathrm{ACG}_i$ that yields the discrete Fr\'echet distance of
$1$; and so $\dfrmax(\mathrm{VC}, \mathrm{CC}) = 1$.
Therefore, the formula $C$ is satisfiable if and only if $\dfrmax(\mathrm{VC}, \mathrm{CC}) > 1$,
and so our answer is correct.

Furthermore, observe that the curves have $2m + 2$ and $2mn + n$ points, respectively, and so the
instance of \textsc{Upper Bound Discrete Fr\'echet} that gives the answer to \textsc{CNF-SAT} can be
constructed in polynomial time.
Thus, we conclude that \textsc{Upper Bound Discrete Fr\'echet} for indecisive curves is \np-hard;
combining it with the first part of the proof shows that it is \np-complete.
\end{proof}

\subsubsection{Upper Bound Fr\'echet Distance on Indecisive Points}
We use the same construction as for the discrete Fr\'echet distance.
To do the same proof, we need to present arguments for the continuous case that lead up to an
alternative to Lemma~\ref{lemma:disttraj}.
For the arguments to work, we need to further restrict the range of $\varepsilon$ to be $[0.12,
0.25)$.

{\def\epsln{0.15}
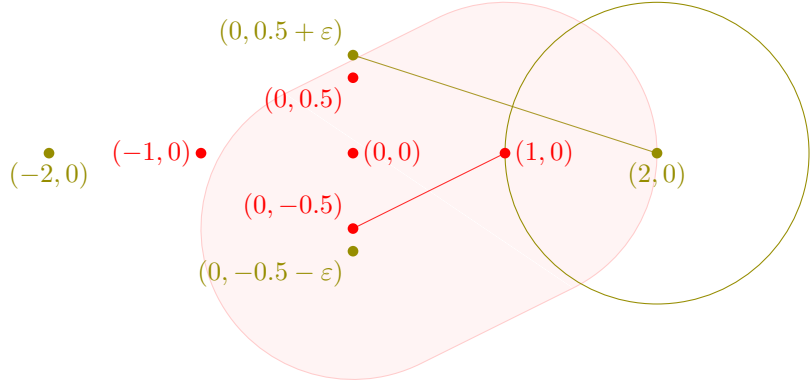
\begin{figure}
\centering
\begin{tikzpicture}[scale=2]
\draw[olive] (2, 0) -- (0, 0.5+\epsln);
\draw[red] (1, 0) -- (0, -0.5);

\path (0, -0.5) +(116.57:1) coordinate (a)
(1, 0) +(116.57:1) coordinate (b);
\draw[red,fill=red!20,opacity=0.2]  (a) -- (b) arc(116.57:-63.43:1) coordinate (c)
                                    (a) arc(116.57:296.57:1) -- (c);
\draw[olive,thin] (2, 0) circle[radius=1];

\fill[red]   (0, 0) node[right] {$(0, 0)$} circle[radius=1pt]
             (0, 0.5) node[below left] {$(0, 0.5)$} circle[radius=1pt]
             (0, -0.5) node[above left] {$(0, -0.5)$} circle[radius=1pt]
             (1, 0) node[right] {$(1, 0)$} circle[radius=1pt]
             (-1, 0) node[left] {$(-1, 0)$} circle[radius=1pt];
\fill[olive] (0, 0.5+\epsln) node[above left] {$(0, 0.5 + \varepsilon)$} circle[radius=1pt]
             (0, -0.5-\epsln) node[below left] {$(0, -0.5 - \varepsilon)$} circle[radius=1pt]
             (2, 0) node[below] {$(2, 0)$} circle[radius=1pt]
             (-2, 0) node[below] {$(-2, 0)$} circle[radius=1pt];
\end{tikzpicture}
\caption{Construction for $\varepsilon = \epsln$. Shaded red area shows the points within distance
$1$ from the segment $(0, -0.5) \concat (1, 0)$. Observe that $(0, 0.5 + \varepsilon)$ is outside
that region, and that $(1, 0)$ is the only red point within distance $1$ from $(2, 0)$.}
\label{fig:constr}
\end{figure}}

Consider the construction drawn in Figure~\ref{fig:constr}.
The key points here are that $(0, 0.5 + \varepsilon)$ is far from any point on the clause
curve, and that $(2, 0)$ is only close enough to $(1, 0)$.
We can present a lemma similar to Lemma~\ref{lemma:distgadgetsvgag}.
\begin{lemma}\label{lemma:cont_disatgadgetsvgag}
Given a clause $C_i$ and a variable $x_j$ that both occur in the \cnfsat\ formula $C$, we only get
the Fr\'echet distance equal to $(1 + \varepsilon) \cdot \frac{2}{\sqrt{5}}$ if the realisation of
$\mathrm{VG}_j$ we pick corresponds to the assignment of $x_j$ that ensures the clause $C_i$ is
satisfied; otherwise, the Fr\'echet distance is $1$.
In other words, if we consider $\pi \Subset \mathrm{VG}_j$ that corresponds to setting $a(x_j)$ to
some values, then
\[\fr(\pi, \mathrm{AG}_{i, j}) =
\begin{cases}
(1 + \varepsilon) \cdot \frac{2}{\sqrt{5}} & \text{iff $C_i[a] = \True$,}\\
1 & \text{otherwise.}
\end{cases}
\]
\end{lemma}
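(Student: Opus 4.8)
The plan is to read off $\fr(\pi, \mathrm{AG}_{i,j})$ from its free-space diagram, reducing everything to two geometric quantities: the distance $\lVert (2,0)-(1,0)\rVert = 1$ between the synchronisation points, and the distance from the realised indecisive vertex to the edge of $\mathrm{AG}_{i,j}$. The crucial difference from the discrete case (Lemma~\ref{lemma:distgadgetsvgag}) is that the continuous distance lets the indecisive vertex be matched against an interior point of the assignment edge rather than only against its endpoint, so the governing quantity is a point-to-segment distance rather than a point-to-point distance; this is exactly why the satisfying value drops from $1+\varepsilon$ to $(1+\varepsilon)\cdot\tfrac{2}{\sqrt5}$.

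First I would set up the free-space cell for the two edges and recall that, since the squared Euclidean distance between points on two segments is convex in the two parameters, the free space inside the cell is convex; hence the distance is controlled by when the indecisive vertex first enters the distance-$\delta$ tube of the assignment segment, together with the synchronisation endpoints. I would then compute the perpendicular distance from the realised vertex to the line carrying $\mathrm{AG}_{i,j}$ and check that the foot lies on the segment. In the satisfying case \dsh{} $a(x_j)=\True$ with $x_j$ a literal of $C_i$, so $\mathrm{AG}_{i,j}=(0,-0.5)\concat(1,0)$, and symmetrically $a(x_j)=\False$ with $\neg x_j$ a literal \dsh{} the edge direction is $(2,1)$ (resp.\ $(2,-1)$), with normal $(1,-2)$ (resp.\ $(1,2)$), and a short computation gives perpendicular distance $\tfrac{2(1+\varepsilon)}{\sqrt5}=(1+\varepsilon)\cdot\tfrac{2}{\sqrt5}$, with the foot interior to the edge.

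Next I would dispatch the non-satisfying cases. For $\mathrm{AG}_{i,j}=(0,0.5)\concat(1,0)$ paired with $(0,0.5+\varepsilon)$, and the symmetric positive-literal/\False pairing, the same perpendicular computation yields $\tfrac{2\varepsilon}{\sqrt5}$; for the ``absent'' gadget $(0,0)\concat(1,0)$ it yields $0.5+\varepsilon$. All of these are strictly below $1$, so the indecisive vertex stays inside the distance-$1$ tube of the assignment edge and imposes no obstruction; the bottleneck of the optimal reparametrisation is then the forced synchronisation pair, giving exactly $1$. Combining the two cases yields the claimed dichotomy, and I would finally invoke the restriction $\varepsilon\in[0.12,0.25)$: it is precisely what makes $(1+\varepsilon)\tfrac{2}{\sqrt5}>1$ (the inequality holds iff $1+\varepsilon>\tfrac{\sqrt5}{2}\approx 1.118$), so that the satisfying value strictly exceeds the non-satisfying value and the two cases remain separated at threshold $1$.

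The main obstacle I expect is justifying that the indecisive vertex is genuinely matched to the interior of the assignment edge rather than only to its endpoint, so that the point-to-segment distance \dsh{} and not the larger discrete value $1+\varepsilon$ \dsh{} is the true continuous Fr\'echet value. Here I would use that $(2,0)$ lies within distance $1$ of $(1,0)$ but of no other relevant point (as emphasised in Figure~\ref{fig:constr}), which anchors the endpoints, together with the convexity of the free space, to argue that an optimal reparametrisation can (and in the satisfying case must) route the vertex through the perpendicular foot; verifying that this routing remains monotone and does not create any larger-distance pair elsewhere is the delicate part of the argument.
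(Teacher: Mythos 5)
Your proposal follows essentially the same route as the paper's own proof \dsh{} the same perpendicular-distance computation giving $(1+\varepsilon)\cdot\frac{2}{\sqrt{5}}$ in the satisfying case, and the same observation that in the non-satisfying cases everything stays within distance $1$ so that the synchronisation pair is the bottleneck. But the step you flag as the ``delicate part'' is not delicate: it is impossible, and this is a genuine gap. A realisation $\pi \Subset \mathrm{VG}_j$ is the \emph{single segment} $(0, \pm(0.5+\varepsilon)) \concat (2,0)$, and $\mathrm{AG}_{i,j}$ is the single segment whose \emph{first} vertex is the assignment point. So the indecisive vertex and the assignment point are the start points of the two curves, and every pair of reparametrisations satisfies $\phi_1(0)=\phi_2(0)=1$, forcing these two points to be coupled at $t=0$. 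Hence in the satisfying case
\[\fr(\pi, \mathrm{AG}_{i,j}) \geq \lVert (0, 0.5+\varepsilon) - (0,-0.5)\rVert = 1+\varepsilon\,,\]
and the linear matching attains this bound (for two segments the width under linear matching is convex in $t$, hence maximised at an endpoint, so the Fr\'echet distance of two segments is exactly the larger of the two endpoint distances). The standalone value is therefore $1+\varepsilon$, not $(1+\varepsilon)\cdot\frac{2}{\sqrt{5}}$; no anchoring or free-space-convexity argument can route the start vertex to the perpendicular foot, and your own free-space setup exposes this: the bottom-left corner of the (single-cell) diagram is blocked for every $\delta < 1+\varepsilon$.

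You should know that this is equally a flaw in the paper's proof, not something the authors circumvent: they lower-bound by the point-to-line distance $d$ and then assert equality because $(0,0.5+\varepsilon)$ is ``the furthest point'' from $\mathrm{AG}_{i,j}$ \dsh{} a Hausdorff-style argument that ignores the endpoint-coupling constraint, exactly the constraint you were worried about. The value $(1+\varepsilon)\cdot\frac{2}{\sqrt{5}}$ only becomes meaningful once the gadgets are embedded as subcurves of longer curves (as in Lemma~\ref{lemma:cont_distvcgacg}), where the indecisive vertex is an interior vertex and genuinely can be matched to the interior of an incident edge of the clause curve; the lemma ought to have been stated about that sub-traversal. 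Even then the constant is off: between consecutive coupled synchronisation points, the assignment vertex $(0,-0.5)$ must itself be matched to a point of the segment from $(0,0.5+\varepsilon)$ to $(2,0)$, at distance $\frac{2(1+\varepsilon)}{\sqrt{4+(0.5+\varepsilon)^2}} > \frac{2(1+\varepsilon)}{\sqrt{5}}$, so the true embedded bottleneck is not the quantity that you (and the paper) compute. Both quantities do exceed $1$ for $\varepsilon \geq 0.12$, so the separation at threshold $1$ and hence Theorem~\ref{thm:cont_ub_ind_np} survive, but the exact value claimed here \dsh{} and the constant it feeds into the continuous \nump-hardness formula of Theorem~\ref{thm:exp_ind_nump_full} \dsh{} cannot be proved because it is not correct; a sound write-up has to restate the lemma for interior-embedded gadgets with the corrected critical value.
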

\begin{proof}
Consider the possible realisations of $\mathrm{VG}_j$.
Suppose we pick the realisation $(0, 0.5 + \varepsilon) \concat (2, 0)$, which corresponds to
assigning $a(x_j) = \True$.
If $x_j$ is a literal in $C_i$, so $C_i = \True$, then by construction we know that
$\mathrm{AG}_{i, j}$ is $(0, -0.5) \concat (1, 0)$.
As noted in Figure~\ref{fig:constr}, the distance between $(0, 0.5 + \varepsilon)$ and any point on
$(0, -0.5) \concat (1, 0)$ is larger than $1$.
To be more specific, the distance between the point $(x, y)$ and the line defined by $(x_1, y_1)
\concat (x_2, y_2)$ can be determined using a standard formula as
\[d = \frac{\lvert x(y_2 - y_1) - y(x_2 - x_1) + x_2y_1 - x_1y_2\rvert}{\sqrt{(x_2 - x_1)^2 + (y_2 -
y_1)^2}}\,.\]
In our case, we get
\[d = \frac{\lvert 0 - (0.5 + \varepsilon) \cdot (1 - 0) - 1 \cdot 0.5 - 0 \rvert}{\sqrt{(1 - 0)^2 +
(0 + 0.5)^2}} = \frac{2 \cdot (1 + \varepsilon)}{\sqrt{5}}\,.\]
As the point $(0, 0.5 + \varepsilon)$ must be coupled to some point on $\mathrm{AG}_{i, j}$, the
Fr\'echet distance we get in this case cannot be smaller than $d$.
Furthermore, it is easy to see that the point $(0, 0.5 + \varepsilon)$ is the furthest point from
$\mathrm{AG}_{i, j}$; thus, we get that Fr\'echet distance is exactly $d$.

On the other hand, if $\neg x_j$ is a literal in $C_i$, then by construction we know that
$\mathrm{AG}_{i, j}$ is $(0, 0.5) \concat (1, 0)$.
As noted in Figure~\ref{fig:constr}, the distance between $(2, 0)$ and any point on $(0, 0.5)
\concat (1, 0)$ is at least $1$, with the smallest distance achieved at $(1, 0)$.
It is clear that this is the furthest pair of points on the two gadgets in this case; thus, we get
the Fr\'echet distance of $1$.

A symmetric argument can be applied when we consider the realisation $(0, -0.5 - \varepsilon)
\concat (2, 0)$ for $\mathrm{VG}_j$: if $\neg x_j$ is a literal in $C_i$, then we get the
Fr\'echet distance of $d$ and we picked an assignment that satisfies $C_i$; and in the other case,
we get that $C_i[a]$ is not necessarily satisfied and the Fr\'echet distance is $1$.

Finally, consider the case when $\mathrm{AG}_{i, j} = (0, 0) \concat (1, 0)$.
Again, this implies that assigning a value to $x_j$ has no effect on $C_i$, so neither assignment
(and neither realisation of $\mathrm{VG}_j$) would ensure that $C_i$ is satisfied.
Also observe that both realisations give rise to curves that are entirely within distance $1$
of $(0, 0) \concat (1, 0)$, yielding the Fr\'echet distance of $1$.
\end{proof}

\noindent We can now naturally get a lemma similar to Lemma~\ref{lemma:distvcgacg}.
\begin{lemma}\label{lemma:cont_distvcgacg}
Given a \cnfsat\ formula $C$ containing some clause $C_i$ and $m$ variables $x_1, \dots, x_m$,
construct curves $\alpha_1 \concat \mathrm{VCG} \concat \alpha_1'$ and $\alpha_2 \concat
\mathrm{ACG}_i \concat \alpha_2'$ for arbitrary precise curves $\alpha_1$, $\alpha_1'$, $\alpha_2$,
$\alpha_2'$ with $\lvert \alpha_1\rvert = k$ and $\lvert \alpha_2\rvert = l$.
If some optimal coupling $\phi_1, \phi_2$ between $\alpha_1 \concat \mathrm{VCG} \concat \alpha_1'$
and $\alpha_2 \concat \mathrm{ACG}_i \concat \alpha_2'$ for any realisation of $\mathrm{VCG}$ has
some value $t$ such that $\phi_1(t) = k + 1$ and $\phi_2(t) = l + 1$ and $\fr(\alpha_1, \alpha_2)
\leq 1$ and $\fr(\alpha_1', \alpha_2') \leq 1$, then the Fr\'echet distance between the curves is
$(1 + \varepsilon) \cdot \frac{2}{\sqrt{5}}$ for realisations of $\mathrm{VCG}$ that correspond to
satisfying assignments for $C_i$, and $1$ for realisations that do not.
In other words, if $\pi \Subset \mathrm{VCG}$ corresponds to assignment $a$ and we only consider
the restricted couplings, then
\[\fr(\alpha_1 \concat \pi \concat \alpha_1', \alpha_2 \concat \mathrm{ACG}_i \concat \alpha_2') =
\begin{cases}
(1 + \varepsilon) \cdot \frac{2}{\sqrt{5}} & \text{iff $C_i[a] = \True$,}\\
1 & \text{otherwise.}
\end{cases}
\]
\end{lemma}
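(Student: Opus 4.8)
The plan is to mirror the proof of the discrete counterpart, Lemma~\ref{lemma:distvcgacg}, replacing couplings by reparametrisations and the per-gadget value $1+\varepsilon$ by the value $(1+\varepsilon)\cdot\frac{2}{\sqrt5}$ supplied by Lemma~\ref{lemma:cont_disatgadgetsvgag}. The hypothesis gives a time $t$ at which the optimal reparametrisation sits at vertex $k+1$ of the first curve and vertex $l+1$ of the second, i.e.\@ at the leading synchronisation points $(-2,0)$ and $(-1,0)$ of the two gadgets. Because reparametrisations are monotone, everything before $t$ stays in the prefixes $\alpha_1\concat(-2,0)$ and $\alpha_2\concat(-1,0)$, and everything after $t$ stays in the suffixes; this lets me split the computation into the $\alpha$-parts and the gadget part, exactly as in the discrete proof.

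For the upper bound I would exhibit an explicit reparametrisation and appeal to the standard concatenation inequality $\fr(A_1\concat A_2, B_1\concat B_2)\le\max(\fr(A_1,B_1),\fr(A_2,B_2))$, which holds because the connecting edges can be traversed simultaneously and the distance between a moving pair of points on two segments is convex in the parameter, hence maximised at an endpoint. Matching $\alpha_1$ with $\alpha_2$ and $\alpha_1'$ with $\alpha_2'$ by their optimal reparametrisations (each of width at most $1$), aligning the gadgets $\mathrm{VG}_j$ with $\mathrm{AG}_{i,j}$ pairwise by the optimal per-gadget reparametrisations of Lemma~\ref{lemma:cont_disatgadgetsvgag}, and matching every synchronisation pair $(2,0)\leftrightarrow(1,0)$ and $(-2,0)\leftrightarrow(-1,0)$ (each at distance $1$), the overall width is $\max(1,\max_j\fr(\pi_j,\mathrm{AG}_{i,j}))$. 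By Lemma~\ref{lemma:cont_disatgadgetsvgag} this equals $(1+\varepsilon)\frac{2}{\sqrt5}$ exactly when some variable makes $C_i$ true, and $1$ otherwise, which already matches the claimed value from above.

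For the matching lower bound I would establish a continuous analogue of Lemma~\ref{lemma:onetoonevcgacg}: any reparametrisation of width below the threshold must route the gadgets through the synchronisation points in order. The geometry of Figure~\ref{fig:constr} is the lever: every point $(2,0)$ of $\mathrm{VCG}$ lies at distance $3$ from $(-1,0)$ and at distance at least $\sqrt{4.25}$ from the indecisive and assignment points, so it is within distance $(1+\varepsilon)\frac{2}{\sqrt5}$ of the clause curve only in a neighbourhood of a vertex $(1,0)$. Since both gadgets contain exactly $m$ such synchronisation points and the reparametrisation is monotone, a counting argument forces the $(2,0)$ from $\mathrm{VG}_j$ to be matched near the $(1,0)$ from $\mathrm{AG}_{i,j}$; this pins the alignment gadget-by-gadget, and the trailing synchronisation splits off $\alpha_1'$ versus $\alpha_2'$. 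The per-gadget lower bounds of Lemma~\ref{lemma:cont_disatgadgetsvgag} then force the width to be at least $1$ always, and at least $(1+\varepsilon)\frac{2}{\sqrt5}$ in the satisfying case.

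The main obstacle is precisely this forced-alignment step, which in the continuous setting is more delicate than the discrete Lemma~\ref{lemma:onetoonevcgacg}: a reparametrisation may match a synchronisation vertex against the interior of an edge rather than against a vertex, so I must also verify that the \emph{edges} of the clause curve stay far from $(2,0)$ outside the neighbourhoods of the $(1,0)$ vertices. This is where the restricted range $\varepsilon\in[0.12,0.25)$ is essential: it guarantees $(1+\varepsilon)\frac{2}{\sqrt5}>1$, so the satisfying value strictly exceeds the non-satisfying value, while staying well below the $\sqrt{4.25}$ separation of $(2,0)$ from every non-synchronisation point, keeping the synchronisation vertices isolated enough for the counting argument to close.
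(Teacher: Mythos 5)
Your proposal follows essentially the same route as the paper's proof: split the traversal at the forced $(2,0)$--$(1,0)$ synchronisation couplings, apply Lemma~\ref{lemma:cont_disatgadgetsvgag} gadget by gadget, exhibit a width-$1$ interpolating traversal in the non-satisfying case, and note the distance can never drop below $1$; the paper's own write-up is deliberately informal on the forced-alignment step, which you flesh out (correctly identifying why $\varepsilon \geq 0.12$, i.e.\@ $(1+\varepsilon)\cdot\frac{2}{\sqrt{5}} > 1$, is needed). One small slip: the assignment point $(0,0)$ for variables absent from $C_i$ lies at distance $2$ from $(2,0)$, not at least $\sqrt{4.25}$, but since $2 > (1+\varepsilon)\cdot\frac{2}{\sqrt{5}}$ your isolation argument is unaffected.
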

\begin{proof}
First of all, observe that as we traverse $\mathrm{VCG}$, we need to couple $(2, 0)$ to $(1, 0)$ to
obtain an optimal coupling.
Therefore, essentially, the traversal can be split into $m$ parts, each of which corresponds to
traversing $\mathrm{VG}_j$ and $\mathrm{AG}_{i, j}$ at the same time for all $j \in [m]$.
We can use Lemma~\ref{lemma:cont_disatgadgetsvgag} to note that if some variable $x_j$ is
assigned a value that makes clause $C_i$ satisfied, then the Fr\'echet distance becomes $(1 +
\varepsilon) \cdot \frac{2}{\sqrt{5}}$; if that is not the case for any variables, then we can
traverse the entire curve, as well as $\alpha_1$ and $\alpha_1'$ by linearly interpolating our
position between the vertices of the curves and otherwise using the coupling of the discrete case,
while staying within distance $1$ of the other curve, yielding the Fr\'echet distance of~$1$.
The distance also cannot be smaller than $1$ due to coupling of $(2, 0)$ and $(1, 0)$.
\end{proof}
While this proof is a bit less formal than that of Lemma~\ref{lemma:distvcgacg}, its validity
should be sufficiently clear from geometric considerations described earlier in this section.

Now we can provide a lemma that mirrors Lemma~\ref{lemma:disttraj}.
\begin{lemma}\label{lemma:cont_disttraj}
Given a \cnfsat\ formula $C$ with $n$ clauses and $m$ variables, construct the curves
$\mathrm{VC}$ and $\mathrm{CC}$ as defined above and consider a realisation $(0, 0) \concat \pi
\concat (0, 0)$ of curve $\mathrm{VC}$, corresponding to some assignment $a$.
Then
\[\fr((0, 0) \concat \pi \concat (0, 0), \mathrm{CC}) =
\begin{cases}
(1 + \varepsilon) \cdot \frac{2}{\sqrt{5}} & \text{iff $C[a] = \True$,}\\
1 & \text{iff $C[a] = \False$.}
\end{cases}
\]
In other words, the Fr\'echet distance is $(1 + \varepsilon) \cdot \frac{2}{\sqrt{5}}$ if and only
if the realisation $\pi$ corresponds to a satisfying assignment, and is $1$ otherwise.
\end{lemma}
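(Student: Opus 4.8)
The plan is to mirror the proof of Lemma~\ref{lemma:disttraj} at the structural level, replacing each discrete ingredient with its continuous counterpart: couplings become reparametrisations and Lemma~\ref{lemma:distvcgacg} becomes Lemma~\ref{lemma:cont_distvcgacg}. Two geometric facts drive everything. First, every point of $\mathrm{CC}$ (vertex or interior of an edge) lies within distance $1$ of $(0, 0)$: all vertices of $\mathrm{CC}$ are among $(-1, 0)$, $(0, \pm 0.5)$, $(0, 0)$, $(1, 0)$, each at distance at most $1$ from the origin, and since the closed unit disk is convex it also contains every edge. Hence the single point $(0, 0)$ is within Fr\'echet distance $1$ of any subcurve of $\mathrm{CC}$, which lets me discharge the premises $\fr(\alpha_1, \alpha_2) \le 1$ and $\fr(\alpha_1', \alpha_2') \le 1$ of Lemma~\ref{lemma:cont_distvcgacg}. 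Second, I will pin down exactly which points of $\mathrm{CC}$ are reachable from the synchronisation vertices $(-2, 0)$ and $(2, 0)$ of $\mathrm{VCG}$.

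For the upper bound I would exhibit, for any chosen clause gadget $\mathrm{ACG}_i$, an explicit pair of reparametrisations realising width at most $(1 + \varepsilon) \cdot \frac{2}{\sqrt{5}}$: hold the $\mathrm{VC}$-curve at its leading $(0, 0)$ while $\mathrm{CC}$ traverses the prefix $\Concat_{i' < i} \mathrm{ACG}_{i'}$ (width $\le 1$ by the first fact), then traverse $\mathrm{VCG}$ and $\mathrm{ACG}_i$ synchronously (width $\le (1 + \varepsilon) \cdot \frac{2}{\sqrt{5}}$ by Lemma~\ref{lemma:cont_distvcgacg}, whose hypotheses hold with $\alpha_1 = \alpha_1' = (0, 0)$, $\alpha_2 = \Concat_{i' < i} \mathrm{ACG}_{i'}$, $\alpha_2' = \Concat_{i' > i} \mathrm{ACG}_{i'}$), and finally hold at the trailing $(0, 0)$ while $\mathrm{CC}$ finishes on the suffix (width $\le 1$). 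Thus the optimum never exceeds $(1 + \varepsilon) \cdot \frac{2}{\sqrt{5}}$, so no optimal reparametrisation can match a pair of points farther apart than this.

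For the lower bound I would use the synchronisation vertices. A direct computation gives $\lVert (2, 0) - q \rVert \ge 1$ for every point $q$ on $\mathrm{CC}$, with the minimum $1$ attained only at the vertices $(1, 0)$; since every reparametrisation passes through $(2, 0)$, the width is always at least $1$. Symmetrically, $(-2, 0)$ is at distance at least $1$ from all of $\mathrm{CC}$, with the minimum attained only at the vertices $(-1, 0)$ (the starts of the $\mathrm{ACG}_i$), and moving away from such a vertex along $\mathrm{CC}$ strictly increases the distance. Because $(1 + \varepsilon) \cdot \frac{2}{\sqrt{5}}$ exceeds $1$ only by a minuscule margin when $\varepsilon \in [0.12, 0.25)$, the part of $\mathrm{CC}$ within distance $(1 + \varepsilon) \cdot \frac{2}{\sqrt{5}}$ of $(-2, 0)$ is a union of short, pairwise-disjoint arcs, one around each $(-1, 0)$. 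Consequently, when the $\mathrm{VC}$-curve sits at $(-2, 0)$ in an optimal reparametrisation, its partner must lie in one such arc, which commits the alignment to a single gadget $\mathrm{ACG}_i$ and supplies exactly the synchronisation premise of Lemma~\ref{lemma:cont_distvcgacg}. A case split then finishes the argument as in Lemma~\ref{lemma:disttraj}: if $C[a] = \False$ some clause $C_i$ is falsified, synchronising $\mathrm{VCG}$ with that $\mathrm{ACG}_i$ attains width $1$, and the global lower bound $1$ makes this optimal; if $C[a] = \True$ every clause is satisfied, so every admissible committed gadget forces width $(1 + \varepsilon) \cdot \frac{2}{\sqrt{5}}$, matching the upper bound.

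I expect the delicate step to be the commitment argument in the continuous setting. Unlike the discrete case, a reparametrisation may slide continuously along edges, so I must rule out an alignment that lets the $\mathrm{CC}$-curve drift monotonically from one gadget into the next while $\mathrm{VC}$ lingers near $(-2, 0)$ or $(2, 0)$. This is precisely what the disjointness of the small free-space arcs around the $(-1, 0)$ vertices prevents, and it is the reason the admissible range of $\varepsilon$ must be narrowed to $[0.12, 0.25)$ so that $(1 + \varepsilon) \cdot \frac{2}{\sqrt{5}} > 1$; verifying the quantitative separation of these arcs is the only genuinely new work beyond transcribing the discrete proof.
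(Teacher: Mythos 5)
Your proposal is correct and follows essentially the same route as the paper: both rest on the observations that every point of $\mathrm{CC}$ lies within distance $1$ of $(0,0)$ and that the synchronisation vertex $(-2,0)$ forces an optimal matching to commit to the start of a single $\mathrm{ACG}_i$, after which Lemma~\ref{lemma:cont_distvcgacg} and the satisfied/unsatisfied case split finish the argument. If anything, your explicit upper-bound reparametrisation and the disjoint-arc justification of the commitment step are more detailed than the paper's proof, which simply asserts that $(-2,0)$ must be coupled to $(-1,0)$ and appeals to ``reasoning similar to'' the discrete case.
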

\begin{proof}
First of all, observe that any point of $\mathrm{CC}$ is within distance $1$ of $(0, 0)$;
furthermore, when starting to traverse $\pi$, we must couple $(-2, 0)$ to $(-1, 0)$ in an optimal
coupling.
Thus, the premise of Lemma~\ref{lemma:cont_distvcgacg} is satisfied, and, using reasoning similar to
that of Lemma~\ref{lemma:disttraj}, we observe that an optimal coupling chooses one of the clauses
to traverse in parallel with the variable curve, and so if there is a clause that is not satisfied,
then we get the Fr\'echet distance of $1$, and if all of them are satisfied, then all of them yield
the Fr\'echet distance of $(1 + \varepsilon) \cdot \frac{2}{\sqrt{5}}$.
Thus, we conclude that the lemma holds.
\end{proof}

\noindent Finally, we can show the main result.
\begin{theorem}\label{thm:cont_ub_ind_np}
The problem \textsc{Upper Bound Continuous Fr\'echet} for indecisive curves is \np-complete.
\end{theorem}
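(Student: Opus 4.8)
The plan is to follow the same two-part template as the proof of Theorem~\ref{thm:ub_ind_np} for the discrete case, swapping in the continuous lemma chain that culminates in Lemma~\ref{lemma:cont_disttraj}. First I would argue membership in \np, and then give a polynomial-time reduction from \textsc{CNF-SAT} built on the curves $\mathrm{VC}$ and $\mathrm{CC}$ already constructed, placing the threshold in the gap between the two possible distance values.

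For membership in \np, a `Yes'-certificate is a pair of realisations $\pi \Subset \mathcal{U}$ and $\sigma \Subset \mathcal{V}$. Since each indecisive point is a finite set, such a realisation has polynomial size, and I can verify it by running the decision version of the Alt--Godau free-space algorithm~\cite{alt:1995} to test whether $\fr(\pi, \sigma) > \delta$ in time $\Theta(mn)$. Hence the problem lies in \np.

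For hardness, given a \cnfsat\ formula $C$ with $n$ clauses and $m$ variables I would build $\mathrm{VC}$ and $\mathrm{CC}$ exactly as in the basic construction and output the instance with threshold $\delta = 1$. Because $\mathrm{CC}$ is a precise curve, $\frmax(\mathrm{VC}, \mathrm{CC})$ is the maximum over realisations of $\mathrm{VC}$, and Lemma~\ref{lemma:cont_disttraj} says each realisation yields Fr\'echet distance $(1 + \varepsilon) \cdot \frac{2}{\sqrt{5}}$ when its assignment satisfies $C$ and $1$ otherwise. Thus the maximum strictly exceeds $1$ exactly when some realisation corresponds to a satisfying assignment, so $\frmax(\mathrm{VC}, \mathrm{CC}) > 1$ iff $C$ is satisfiable. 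Finally I would note that $\mathrm{VC}$ and $\mathrm{CC}$ have $2m + 2$ and $2mn + n$ points, so the reduction is polynomial, and combining this with membership gives \np-completeness.

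The lemma chain does almost all of the geometric work, so the only point needing care --- and the reason the admissible range of $\varepsilon$ was tightened to $[0.12, 0.25)$ --- is that the `satisfied' distance must lie strictly above the threshold: one checks that $(1 + \varepsilon) \cdot \frac{2}{\sqrt{5}} > 1$ precisely when $\varepsilon > \frac{\sqrt{5}}{2} - 1 \approx 0.118$, which the chosen range guarantees. I expect this threshold-gap check, together with confirming that the continuous-case membership argument transcribes unchanged, to be the main (and only mild) obstacle; everything else follows directly from the discrete proof.
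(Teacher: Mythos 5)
Your proposal is correct and follows essentially the same route as the paper's own proof: the identical \np-membership argument via the Alt--Godau decision procedure, the identical reduction from \textsc{CNF-SAT} using $\mathrm{VC}$ and $\mathrm{CC}$ with threshold $\delta = 1$, and the identical invocation of Lemma~\ref{lemma:cont_disttraj} to separate the satisfiable case, with distance $(1 + \varepsilon) \cdot \frac{2}{\sqrt{5}}$, from the unsatisfiable case, with distance $1$. Your explicit verification that $(1 + \varepsilon) \cdot \frac{2}{\sqrt{5}} > 1$ exactly when $\varepsilon > \frac{\sqrt{5}}{2} - 1 \approx 0.118$ is a useful detail that the paper leaves implicit in its restriction of $\varepsilon$ to $[0.12, 0.25)$.
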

\begin{proof}
First of all, observe that if two realisations of length $n$ and $m$ are given as a certificate
for a `Yes'-instance of the problem, then one can verify the solution by checking that the
Fr\'echet distance between the realisations is indeed larger than some threshold $\delta$.
The computation can be done in time $\Theta(mn)$, using the algorithm proposed by Alt and
Godau~\cite{alt:1995, godau:1991}.
Therefore, the problem is in \np.

Now suppose we are given an instance of \textsc{CNF-SAT}, i.e.\@ a \cnfsat\ formula $C$ with $n$
clauses and $m$ variables.
We construct the curves $\mathrm{VC}$ and $\mathrm{CC}$, as described previously, and get an
instance of \textsc{Upper Bound Continuous Fr\'echet} on curves $\mathrm{VC}$ and $\mathrm{CC}$ and
threshold $\delta = 1$.
If the answer is `Yes', then we also output `Yes' as an answer to \textsc{CNF-SAT}; otherwise,
we output `No'.

Using Lemma~\ref{lemma:cont_disttraj}, we can see that if there is some assignment $a$ such that
$C[a] = \True$, then for the corresponding realisation the Fr\'echet distance is $(1 + \varepsilon)
\cdot \frac{2}{\sqrt{5}}$; the other way around, if for some realisation we get the distance $(1 +
\varepsilon) \cdot \frac{2}{\sqrt{5}}$, then by our construction all the clauses are satisfied and
$C[a] = \True$; and so $\frmax(\mathrm{VC}, \mathrm{CC}) = (1 + \varepsilon) \cdot
\frac{2}{\sqrt{5}}$.
On the other hand, if there is no such assignment $a$, then for any assignment $a$ there is some
$C_i$ with $C_i[a] = \False$, yielding $C[a] = \False$, and also for any realisation of
$\mathrm{VC}$ there is some gadget $\mathrm{ACG}_i$ that yields the Fr\'echet distance of $1$; and
so $\frmax(\mathrm{VC}, \mathrm{CC}) = 1$.
Therefore, the formula $C$ is satisfiable if and only if $\frmax(\mathrm{VC}, \mathrm{CC}) > 1$, and
so our answer to the \textsc{CNF-SAT} instance is correct.

Furthermore, as before, the instance of \textsc{Upper Bound Discrete Fr\'echet} that gives the
answer to \textsc{CNF-SAT} can be constructed in polynomial time.
Thus, we conclude that \textsc{Upper Bound Continuous Fr\'echet} for indecisive curves is \np-hard;
combining it with the first part of the proof shows that it is \np-complete.
\end{proof}

\subsubsection{Expected Fr\'echet Distance on Indecisive Points}
We show that finding expected discrete Fr\'echet distance is \nump-hard by providing a
polynomial-time reduction from \textsc{\#CNF-SAT}, i.e.\@ the problem of finding the number of
satisfying assignments to a \cnfsat\ formula.
Define the following problem and its continuous counterpart:
\begin{problem}
\textsc{Expected Discrete Fr\'echet:} Find $\dfrexpp[U](\mathcal{U}, \mathcal{V})$ for uncertain
curves $\mathcal{U}$ and $\mathcal{V}$.
\end{problem}
\begin{problem}
\textsc{Expected Continuous Fr\'echet:} Find $\frexpp[U](\mathcal{U}, \mathcal{V})$ for uncertain
curves $\mathcal{U}$ and $\mathcal{V}$.
\end{problem}
The main idea is to derive an expression for the number of satisfying assignments in
terms of $\dfrexpp[U](\mathrm{VC}, \mathrm{CC})$.
This works, since there is a one-to-one correspondence between boolean variable assignment and a
choice of realisation of $\mathrm{VC}$, so counting the number of satisfying assignments corresponds
to finding the proportion of realisations yielding large Fr\'echet distance.
We can establish the result for \textsc{Expected Continuous Fr\'echet} similarly.
\begin{theorem}\label{thm:exp_ind_nump_full}
The problems \textsc{Expected Discrete Fr\'echet} and \textsc{Expected Continuous Fr\'echet} for
indecisive curves are \nump-hard.
\end{theorem}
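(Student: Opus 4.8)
The plan is to leverage the sharp two-valued dichotomy already established in Lemma~\ref{lemma:disttraj} (discrete) and Lemma~\ref{lemma:cont_disttraj} (continuous) for the curves $\mathrm{VC}$ and $\mathrm{CC}$. The construction gives a bijection between variable assignments $a$ and realisations of $\mathrm{VC}$, and the Fr\'echet distance to the precise curve $\mathrm{CC}$ depends only on whether $C[a] = \True$. Since $\mathrm{CC}$ carries no uncertainty, all randomness comes from the $m$ indecisive points of $\mathrm{VC}$. I would equip each of these with the uniform distribution (each of its two locations chosen independently with probability $\tfrac12$), so that every assignment, and hence every realisation, occurs with probability $2^{-m}$.

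With this distribution fixed, I would compute the expectation directly as a weighted sum over the $2^m$ realisations. Writing $N$ for the number of satisfying assignments of $C$, Lemma~\ref{lemma:disttraj} gives
\[
\dfrexpp[U](\mathrm{VC}, \mathrm{CC})
= 2^{-m}\bigl(N(1 + \varepsilon) + (2^m - N)\cdot 1\bigr)
= 1 + \frac{N\varepsilon}{2^m}\,,
\]
and Lemma~\ref{lemma:cont_disttraj} analogously yields
\[
\frexpp[U](\mathrm{VC}, \mathrm{CC})
= 1 + \frac{N}{2^m}\Bigl((1 + \varepsilon)\tfrac{2}{\sqrt{5}} - 1\Bigr)\,.
\]
In both cases the expectation is an affine function of $N$ with a known, nonzero slope, so the integer $N$ is determined exactly: an oracle for the expected distance, combined with the known quantities $m$ and $\varepsilon$, recovers $N$ in polynomial time, which is precisely the output required by \textsc{\#CNF-SAT}. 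Since $\mathrm{VC}$ and $\mathrm{CC}$ are built in polynomial time (as in the proof of Theorem~\ref{thm:ub_ind_np}), this constitutes a polynomial-time Turing reduction and establishes \nump-hardness for both variants.

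The single point requiring genuine care—and the only real obstacle—is verifying that the slope does not vanish, since otherwise the expectation would not determine $N$. In the discrete case this is immediate because $\varepsilon > 0$. In the continuous case I must check that $(1 + \varepsilon)\tfrac{2}{\sqrt{5}} \neq 1$, i.e.\@ that $\varepsilon \neq \tfrac{\sqrt{5}}{2} - 1 \approx 0.118$; this is exactly the reason the continuous construction restricts $\varepsilon \in [0.12, 0.25)$, which forces $(1 + \varepsilon)\tfrac{2}{\sqrt{5}} > 1$ and hence a strictly positive slope. Once this is confirmed, both reductions proceed identically, giving \nump-hardness for \textsc{Expected Discrete Fr\'echet} and \textsc{Expected Continuous Fr\'echet} on indecisive curves.
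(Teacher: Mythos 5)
Your proposal is correct and follows essentially the same route as the paper's proof: the same uniform distribution over realisations of $\mathrm{VC}$, the same affine expressions $1 + N\varepsilon/2^m$ and $1 + \frac{N}{2^m}\bigl(\tfrac{2}{\sqrt{5}}(1+\varepsilon) - 1\bigr)$ derived from Lemmas~\ref{lemma:disttraj} and~\ref{lemma:cont_disttraj}, and the same inversion to recover $N$ from the expected value. Your explicit check that the continuous slope is nonzero (guaranteed by $\varepsilon \geq 0.12 > \tfrac{\sqrt{5}}{2} - 1$) is a point the paper leaves implicit, but it does not change the argument.
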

\begin{proof}
Suppose we are given an instance of the \textsc{\#CNF-SAT} problem, i.e.\@ a \cnfsat\ formula $C$
with $n$ clauses and $m$ variables.
Denote the (unknown) number of satisfying assignments of $C$ by $N$.
We can construct indecisive curves $\mathrm{VC}$ and $\mathrm{CC}$ in the same way as
previously.
We then get an instance of \textsc{Expected Discrete Fr\'echet} on indecisive curves under
uniform distribution.
Assuming we solve it and get $\dfrexpp[U](\mathrm{VC}, \mathrm{CC}) = \mu$, we can now compute $N$:
\[N = (\mu - 1) \cdot \frac{2^m}{\varepsilon}\,.\]
$N$ is then the output for the instance of \textsc{\#CNF-SAT} that we were given.
Clearly, construction of the curves can be done in polynomial time; so can
the computation of $N$; hence, the reduction takes polynomial time.

We still need to show that the result we obtain is correct.
For each assignment, there is exactly one realisation of the curve $\mathrm{VC}$.
Furthermore, as we choose the realisation of each indecisive point uniformly and independently, all
the realisations of $\mathrm{VC}$ have equal probability of $2^{-m}$.
There are $N$ satisfying assignments; and each of the corresponding realisations yields the discrete
Fr\'echet distance of $1 + \varepsilon$.
In the remaining $2^m - N$ cases, the distance is $1$.
Using the definition of expected value, we can derive
\[\mu = \dfrexpp[U](\mathrm{VC}, \mathrm{CC}) = N \cdot 2^{-m} \cdot (1 + \varepsilon) + (2^m - N)
\cdot 2^{-m} \cdot 1 = 1 + \frac{N \cdot \varepsilon}{2^m}\,.\]
Then it is easy to see that indeed $N = (\mu - 1) \cdot \frac{2^m}{\varepsilon}$.
So, we get the correct number of satisfying assignments, if we know the expected value under
uniform distribution.
Therefore, \textsc{Expected Discrete Fr\'echet} for indecisive curves is \nump-hard.

One can derive a very similar formula to show that \textsc{Expected Continuous Fr\'echet} is also
\nump-hard for indecisive curves.
We can use almost the same reduction as for the discrete case, so given an instance of
\textsc{\#CNF-SAT} (\cnfsat\ formula $C$ with $n$ clauses and $m$ variables), we construct the two
curves, solve \textsc{Expected Continuous Fr\'echet} to obtain the value of $\mu$, and compute
\[N = 2^m \cdot (\mu - 1) \cdot \frac{\sqrt{5}}{2(1 + \varepsilon) - \sqrt{5}}\]
as the output for \textsc{\#CNF-SAT}.

To show that the output is correct, note that
\begin{align*}
\mu &= 2^{-m} \cdot N \cdot \frac{2}{\sqrt{5}} \cdot (1 + \varepsilon) + 2^{-m} \cdot (2^m - N)
\cdot 1\\
&= 1 + 2^{-m} \cdot N \cdot \left(\frac{2}{\sqrt{5}} (1 + \varepsilon) - 1\right)\,,
\end{align*}
so we can express $N$ as
\[N = 2^m \cdot (\mu - 1) \cdot \frac{\sqrt{5}}{2(1 + \varepsilon) - \sqrt{5}}\,.\]
Again, the reduction is correct and can be done in polynomial time, so \textsc{Expected Continuous
Fr\'echet} for indecisive curves is \nump-hard.
\end{proof}

\subsubsection{Upper Bound Discrete Fr\'echet Distance on Imprecise Points}
Here we consider imprecise points modelled as disks and as line segments; the results and their
proofs turn out to be very similar.
We denote the disk with the centre at $p \in \mathbb{R}^d$ and radius $r \geq 0$ as $D(p, r)$.
We denote the line segment between points $p_1$ and $p_2$ by $S(p_1, p_2)$.

\paragraph*{Disks}
We use a construction very similar to that of the indecisive points case, except now we change the
gadget containing a non-degenerate indecisive point so that it contains a non-degenerate imprecise
point, for all $j \in [m]$:
\[\mathrm{VG}_j = D((0, 0), 0.5 + \varepsilon) \concat (2, 0)\,.\]
Essentially, the two original indecisive points are now located on the points realising the diameter
of the disk.

We can reuse the proof leading up to Theorem~\ref{thm:ub_ind_np}, if we can show the following:
\begin{lemma}\label{lemma:imp_reuse_discr}
Suppose $\dfrmax(\mathrm{VC}, \mathrm{CC}) = \nu$.
If one considers all realisations $\pi$ of $\mathrm{VC}$ that yield $\dfr(\pi, \mathrm{CC}) = \nu$,
then among them there will always be a realisation that only places the imprecise point realisations
at either $(0, 0.5 + \varepsilon)$ or $(0, -0.5 - \varepsilon)$.
\end{lemma}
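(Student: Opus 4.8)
The plan is to show that the maximum $\nu = \dfrmax(\mathrm{VC}, \mathrm{CC})$ is always attained by a \emph{pole realisation}, i.e.\ one that places every disk point at $(0, 0.5 + \varepsilon)$ or $(0, -0.5 - \varepsilon)$; since these are exactly the indecisive realisations governed by Lemma~\ref{lemma:disttraj}, this is what licenses reusing the earlier proof. First I would pin down the range of $\nu$. For the upper bound, for \emph{any} realisation $\pi$ one can exhibit a single coupling of bottleneck distance at most $1 + \varepsilon$: align one fixed $\mathrm{ACG}_i$ one-to-one with $\mathrm{VCG}$ and couple every other point of $\mathrm{CC}$ to one of the two endpoint copies of $(0,0)$ in $\mathrm{VC}$. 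Every pair in this coupling has distance at most $1$ except the disk-to-assignment pairs, and each of those is at most $\max_{p \in D((0,0), 0.5 + \varepsilon)} \lVert p - q\rVert = 1 + \varepsilon$ for the on-axis assignment points $q \in \{(0, 0.5), (0, -0.5)\}$; hence $\dfr(\pi, \mathrm{CC}) \le 1 + \varepsilon$. For the lower bound, every coupling must pair some copy of $(2,0)$ in $\mathrm{VC}$ with a point of $\mathrm{CC}$, whose nearest point is $(1,0)$ at distance $1$, so $\dfr(\pi, \mathrm{CC}) \ge 1$ for all $\pi$. Thus $\nu \in [1, 1 + \varepsilon]$.

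If $\nu = 1$ there is nothing to prove: every realisation, in particular every pole realisation, then has $\dfr(\cdot, \mathrm{CC}) = 1 = \nu$. So I would assume $\nu > 1$ and fix a maximiser $\pi^*$ with $\dfr(\pi^*, \mathrm{CC}) = \nu$. The key step is combinatorial: for each clause $C_i$, the coupling $c_i$ that aligns $\mathrm{ACG}_i$ one-to-one with $\mathrm{VCG}$ (dumping the remaining clause gadgets onto the endpoint copies of $(0,0)$) is valid, so its bottleneck distance is at least $\dfr(\pi^*, \mathrm{CC}) = \nu > 1$. Since every pair of $c_i$ other than the disk-to-assignment pairs has distance at most $1$, there must be a variable $x_j$ whose disk point $p_j^*$ lies at distance strictly greater than $1$ from its assignment point $q_{i,j}$; this forces $q_{i,j} \in \{(0, 0.5), (0, -0.5)\}$, because the ``absent variable'' point $(0,0)$ is within $0.5 + \varepsilon < 1$ of the whole disk. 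Importantly, this step uses only the one-sided inequality $\dfr \le \text{(bottleneck of } c_i)$ for the explicit couplings $c_i$, so it does \emph{not} require reproving the structural coupling lemmas (Lemmas~\ref{lemma:onetoonevcgacg}--\ref{lemma:distvcgacg}) for disk points.

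It then remains to round $\pi^*$ to a pole realisation. Here I would establish the geometric fact that on $D((0,0), 0.5 + \varepsilon)$ with $\varepsilon < 0.25$, any $p = (x,y)$ with $\lVert p - (0, -0.5)\rVert > 1$ satisfies $y > 0.5 - \varepsilon - \varepsilon^2 > 0$ (subtract $x^2 + y^2 \le (0.5 + \varepsilon)^2$ from $x^2 + (y + 0.5)^2 > 1$), and symmetrically $\lVert p - (0, 0.5)\rVert > 1$ forces $y < 0$. Define an assignment $a$ by $a(x_j) = \True$ when $p_j^*$ lies in the upper half-disk and $\False$ otherwise. By the previous paragraph each clause $C_i$ has a far variable, and the sign computation shows its literal is satisfied by $a$; hence $C[a] = \True$. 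By Lemma~\ref{lemma:disttraj} the pole realisation $\pi'$ induced by $a$ has $\dfr(\pi', \mathrm{CC}) = 1 + \varepsilon$, so $\nu \ge 1 + \varepsilon$, which together with $\nu \le 1 + \varepsilon$ gives $\nu = 1 + \varepsilon$ with $\pi'$ a pole realisation attaining it.

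I expect the main obstacle to be this geometric rounding: the entire argument hinges on the constant slack guaranteed by $\varepsilon < 0.25$ being exactly enough to push a disk point that is far from an on-axis assignment point strictly into the correct open half-disk, so that the induced Boolean assignment is well defined and satisfying. The decomposition of the optimal value into per-clause alignment widths is the secondary delicate point, but as noted it can be handled cleanly through the one-sided inequality $\dfr \le \text{(bottleneck of } c_i)$ rather than a full characterisation of optimal couplings for disk realisations.
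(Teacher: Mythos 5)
Your proof is correct, but it takes a genuinely different route from the paper's. The paper argues by local perturbation of an arbitrary optimal realisation: for each disk point it considers the assignment points that point faces, claims (informally, as ``geometrically it is obvious'') that pushing the point to the appropriate pole cannot decrease the distance, and concludes the perturbed realisation still attains $\nu$. You instead argue globally: you first pin down $\nu \in [1, 1+\varepsilon]$ by exhibiting explicit couplings (one-sided bounds only), dispose of the case $\nu = 1$ trivially, and in the case $\nu > 1$ you extract from the optimal realisation a Boolean assignment $a$ --- via the per-clause far-variable argument and the half-disk computation $y > 0.5 - \varepsilon - \varepsilon^2 > 0$ --- show $C[a] = \True$, and then invoke Lemma~\ref{lemma:disttraj} to conclude that the pole realisation induced by $a$ attains $1 + \varepsilon = \nu$. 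What each approach buys: the paper's argument is shorter and does not need to know the value of $\nu$ or to re-enter the formula-level machinery, but its key maximisation step is left at the level of geometric intuition, and making it rigorous would require controlling how the optimal coupling changes as a disk point moves (the synchronisation points $(\pm 1, 0)$ do \emph{not} uniformly get farther when a point moves to a pole, so a naive ``all distances weakly increase'' justification fails). Your version sidesteps that entirely, since you only ever compare $\dfr$ against the bottleneck of explicitly constructed couplings; as a by-product you obtain the sharper dichotomy $\nu \in \{1, 1+\varepsilon\}$ and an explicit reduction of the disk case to the indecisive case, which is precisely what Theorem~\ref{thm:ub_imp_np} needs downstream.
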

\begin{proof}
First of all, note that the points $(2, 0)$ and $(1, 0)$ are still in the curves in the same quality
as before, so they must be coupled, and hence the lowest discrete Fr\'echet distance achievable with
any realisation is $1$.

Now consider a realisation of an imprecise point.
Suppose that all the clause assignment points for that imprecise point are placed at $(0, -0.5)$.
Then geometrically it is obvious that the distance is maximised by placing the realisation at $(0,
0.5 + \varepsilon)$; if there is a realisation that achieves the best possible value $\nu$ without
doing this, then we can move this point and still get $\nu$.

Suppose that some clause assignment points are at $(0, -0.5)$ and some at $(0, 0.5)$.
As the realisation comes from the disk of radius $0.5 + \varepsilon$, there is no realisation that
is further than $1$ away from both assignment points; therefore, to maximise the distance we have to
choose one of the two locations, and then the previous case applies.

So, it is clear that, from an arbitrary optimal realisation, moving to the (correct) indecisive
point realisation will still yield an optimal realisation for the maximum discrete Fr\'echet
distance; thus, the statement of the lemma holds.
\end{proof}

\paragraph*{Line Segments}
We use a very similar construction, except now we change the gadget to be, for all $j \in [m]$:
\[\mathrm{VG}_j = S((0, -0.5 - \varepsilon), (0, 0.5 + \varepsilon)) \concat (2, 0)\,.\]
Again, the two original indecisive points are now located on the ends of the segment; moreover, the
segment is a strict subset of the disk.

We can state a similar lemma.
\begin{lemma}\label{lemma:imp_reuse_discr_s}
Suppose $\dfrmax(\mathrm{VC}, \mathrm{CC}) = \nu$.
If one considers all realisations $\pi$ of $\mathrm{VC}$ that yield $\dfr(\pi, \mathrm{CC}) = \nu$,
then among them there will always be a realisation that only places the imprecise point realisations
at either $(0, 0.5 + \varepsilon)$ or $(0, -0.5 - \varepsilon)$.
\end{lemma}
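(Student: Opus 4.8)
The plan is to follow the exchange argument of Lemma~\ref{lemma:imp_reuse_discr} almost verbatim, exploiting that the segment $S((0, -0.5 - \varepsilon), (0, 0.5 + \varepsilon))$ is a strict subset of the disk $D((0,0), 0.5 + \varepsilon)$ and, crucially, lies entirely on the $y$-axis, which makes the geometry even simpler than in the disk case. First I would observe that the synchronisation points $(2, 0)$ and $(1, 0)$ are untouched by the change of uncertainty model, so they are still forced to be coupled, the achievable minimum is again $1$, and the one-to-one structure on the gadgets established in Lemmas~\ref{lemma:onetoonevcgacg} and~\ref{lemma:distvcgacg} persists. Hence, in an optimal coupling, every realisation point of a $\mathrm{VG}_j$ is coupled only with assignment points, which sit at $(0, -0.5)$, $(0, 0.5)$, or $(0, 0)$\dsh all on the $y$-axis.

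Next I would fix an arbitrary realisation $\pi$ of $\mathrm{VC}$ with $\dfr(\pi, \mathrm{CC}) = \nu$ and consider a single imprecise realisation placed at $(0, t)$ with $t \in [-(0.5 + \varepsilon), 0.5 + \varepsilon]$. Because both $(0, t)$ and every assignment point $(0, s)$ lie on the $y$-axis, the distance between them is simply $\lvert t - s\rvert$, and maximising this over $t$ for any fixed $s \in \{-0.5, 0, 0.5\}$ always yields an endpoint of the segment. I would then split into the same two cases as the disk proof: if all assignment points coupled to $(0, t)$ share a common location, moving $(0, t)$ to the opposite endpoint can only increase the bottleneck distance on those pairs; if both $(0, -0.5)$ and $(0, 0.5)$ occur among the coupled points, I would use that no point of the segment is farther than $1$ from both of them at once, since on the $y$-axis $\lvert t + 0.5\rvert > 1$ forces $t > 0.5$ while $\lvert t - 0.5\rvert > 1$ forces $t < -0.5$, and these are incompatible, so one endpoint is at least as good and the first case applies.

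Finally, since $\nu = \dfrmax(\mathrm{VC}, \mathrm{CC})$ is the maximum over all realisations, moving $(0, t)$ to an endpoint cannot push $\dfr$ above $\nu$, while the distance-maximisation observation above guarantees the modified realisation does not drop below $\nu$ either; hence the new realisation still attains $\nu$. Iterating this exchange over the imprecise points of $\mathrm{VC}$ one at a time yields the desired realisation that places every imprecise point at $(0, 0.5 + \varepsilon)$ or $(0, -0.5 - \varepsilon)$, recovering exactly the indecisive construction and allowing the proof leading up to Theorem~\ref{thm:ub_ind_np} to be reused.

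The main obstacle I anticipate is the same delicacy as in the disk case: because $\dfr$ is itself a \emph{minimum} over couplings, changing the location of one realisation point may change which coupling is optimal, so the exchange step must be phrased as ``does not decrease the bottleneck distance'' rather than as a literal identity between couplings. I expect this to be cleaner here than for disks, since collinearity on the $y$-axis reduces every relevant distance to a one-dimensional comparison and removes the two-dimensional disk-radius containment argument altogether.
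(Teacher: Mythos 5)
Your proposal is correct, but it takes a much longer route than the paper, which proves this lemma in a single sentence: since the segment $S((0,-0.5-\varepsilon),(0,0.5+\varepsilon))$ contains the two points $(0,\pm(0.5+\varepsilon))$ and is a subset of the disk $D((0,0),0.5+\varepsilon)$, Lemma~\ref{lemma:imp_reuse_discr} immediately yields the result. Spelled out: every segment realisation is a disk realisation, so the segment maximum is at most the disk maximum; the disk lemma provides a disk-optimal realisation placing all uncertain points at $(0,\pm(0.5+\varepsilon))$, and this realisation is also a segment realisation, so the two maxima coincide and that very realisation witnesses the claim for segments. You actually note this containment in your first sentence, but you use it only as motivation and then re-run the entire exchange argument (persistence of the one-to-one coupling structure, per-point case analysis on the coupled assignment points, endpoint rounding, iteration over the points) in the one-dimensional setting. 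Nothing in that re-derivation is wrong---the collinearity reasoning is sound, and your Case B is precisely the segment analogue of the paper's disk argument, hand-waved to the same degree as the paper's own ``to maximise the distance we have to choose one of the two locations'' step (a fully rigorous version would note that a pair at distance exactly $1+\varepsilon$ forces the variable point to be at an endpoint already, so an interior point never witnesses the bottleneck when $\nu = 1+\varepsilon$, while for $\nu = 1$ every realisation attains $\nu$). What your route buys is a self-contained proof that never mentions the disk model; what the paper's route buys is brevity, and the guarantee that the delicate exchange reasoning is carried out (and needs to be trusted) only once, in Lemma~\ref{lemma:imp_reuse_discr}.
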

\begin{proof}
Since the line segments include these points and are subsets of the disks, the statement of
Lemma~\ref{lemma:imp_reuse_discr} immediately yields this result.
\end{proof}

So, now we can state the following for both models:
\begin{theorem}\label{thm:ub_imp_np}
The problem \textsc{Upper Bound Discrete Fr\'echet} for imprecise curves modelled as line
segments or as disks is \np-complete.
\end{theorem}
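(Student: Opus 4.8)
The plan is to establish membership in \np\ exactly as in the proof of Theorem~\ref{thm:ub_ind_np}, and then to obtain \np-hardness by reusing the very same reduction from \textsc{CNF-SAT}, now with the imprecise variable gadgets $\mathrm{VG}_j$ in place of the indecisive ones. For membership, a pair of realisations $(\pi, \sigma)$ serves as a polynomial-size certificate: one computes $\dfr(\pi, \sigma)$ in $\Theta(mn)$ time via the Eiter--Mannila recursion and checks whether it exceeds the threshold $\delta$. This argument is oblivious to whether the underlying regions are points, disks, or segments, so it carries over verbatim.

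For hardness, I would take a \cnfsat\ formula $C$ and build the curves $\mathrm{VC}$ and $\mathrm{CC}$ exactly as before, but with each $\mathrm{VG}_j$ now a disk $D((0,0), 0.5 + \varepsilon)$ (resp.\@ a segment $S((0, -0.5 - \varepsilon), (0, 0.5 + \varepsilon))$) followed by the synchronisation point $(2, 0)$, and set $\delta = 1$. The crucial observation is that this imprecise instance and the corresponding indecisive instance of Theorem~\ref{thm:ub_ind_np} have the \emph{same} maximum discrete Fr\'echet distance. Indeed, $\dfrmax$ maximises over all realisations, which in the imprecise case may place each variable vertex anywhere in its region; but Lemma~\ref{lemma:imp_reuse_discr} (disks) and Lemma~\ref{lemma:imp_reuse_discr_s} (segments) guarantee that some maximising realisation places every such vertex at one of the two extreme points $(0, 0.5 + \varepsilon)$ or $(0, -0.5 - \varepsilon)$. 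These extreme points are exactly the two allowed locations in the indecisive gadget, so the maximum over imprecise realisations coincides with the maximum over the indecisive realisations.

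Consequently, $\dfrmax(\mathrm{VC}, \mathrm{CC})$ is $1 + \varepsilon$ when $C$ is satisfiable and $1$ otherwise, just as established through Lemma~\ref{lemma:disttraj} and Theorem~\ref{thm:ub_ind_np}; hence $\dfrmax(\mathrm{VC}, \mathrm{CC}) > 1$ if and only if $C$ is satisfiable, and the reduction is correct and computable in polynomial time. I expect the only delicate point to be the reduction of the continuum of imprecise realisations to the two discrete extremes: one must check that enlarging each variable region from two points to a disk or segment cannot create a realisation whose discrete Fr\'echet distance \emph{exceeds} $1 + \varepsilon$, and that a maximising realisation may always be taken to be extremal. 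This is precisely what Lemmas~\ref{lemma:imp_reuse_discr} and~\ref{lemma:imp_reuse_discr_s} supply, so once they are invoked the remaining argument is a direct transcription of the indecisive proof, handling both region models at once.
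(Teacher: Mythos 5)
Your proposal is correct and follows essentially the same route as the paper: \np-membership is inherited verbatim from the proof of Theorem~\ref{thm:ub_ind_np}, and hardness follows because Lemmas~\ref{lemma:imp_reuse_discr} and~\ref{lemma:imp_reuse_discr_s} show that some maximising realisation is extremal, so the imprecise and indecisive instances have equal upper bound discrete Fr\'echet distance. Your write-up simply spells out in more detail the equality argument that the paper states tersely, including the (correct) two-sided observation that indecisive realisations are a subset of imprecise ones while the lemmas bound the maximum from the other side.
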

\begin{proof}
As shown in the proof of Theorem~\ref{thm:ub_ind_np}, the problem is in \np\ for any uncertain
curves.

Furthermore, as we have shown in Lemma~\ref{lemma:imp_reuse_discr} and
Lemma~\ref{lemma:imp_reuse_discr_s}, for the same \cnfsat\ formula the upper bound discrete
Fr\'echet distance on indecisive and imprecise points is equal for our construction.
So, trivially, \textsc{Upper Bound Discrete Fr\'echet} is \np-hard for imprecise curves.
Therefore, it is \np-complete.
\end{proof}

\subsubsection{Upper Bound Fr\'echet Distance on Imprecise Points}
We use exactly the same construction as in the previous section.
The argument here follows the previous ones very closely, so we can immediately state the following
theorem.

\begin{theorem}\label{thm:cont_ub_imp_np}
The problem \textsc{Upper Bound Continuous Fr\'echet} for imprecise curves modelled as line
segments or as disks is \np-complete.
\end{theorem}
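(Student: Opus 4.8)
The plan is to adapt the proof of Theorem~\ref{thm:ub_imp_np} to the continuous setting, substituting the continuous ingredients (Lemma~\ref{lemma:cont_disttraj}, Theorem~\ref{thm:cont_ub_ind_np}) for the discrete ones. Membership in \np\ is immediate and identical to Theorem~\ref{thm:cont_ub_ind_np}: a pair of realisations certifies a `Yes' instance, verifiable in $\Theta(mn)$ time with the Alt--Godau algorithm~\cite{alt:1995, godau:1991}. For hardness I would keep $\mathrm{VC}$ and $\mathrm{CC}$ unchanged but replace the indecisive variable point by the disk $D((0,0), 0.5+\varepsilon)$ (or the segment $S((0,-0.5-\varepsilon),(0,0.5+\varepsilon))$), retaining the threshold $\delta = 1$ and the range $\varepsilon \in [0.12, 0.25)$, and then show that $\frmax(\mathrm{VC}, \mathrm{CC}) > 1$ exactly when $C$ is satisfiable.

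For the ``satisfiable $\Rightarrow$ large'' direction I would place each imprecise point at the pole $(0, \pm(0.5+\varepsilon))$ dictated by a satisfying assignment; these poles are the indecisive locations, so Lemma~\ref{lemma:cont_disttraj} gives $\fr = (1+\varepsilon)\cdot\tfrac{2}{\sqrt5} > 1$ and hence $\frmax > 1$. The converse is the heart of the argument: from a realisation $\pi$ with $\fr(\pi, \mathrm{CC}) > 1$ I would recover a satisfying assignment by rounding each realised point $p_j$ to the pole in whose half of the disk it lies. The synchronisation points still force $\mathrm{VCG}$ to align with a single $\mathrm{ACG}_i$ and $(2,0)$ with $(1,0)$, so $\fr(\pi,\mathrm{CC}) > 1$ means that for every clause $C_i$ some gadget $j$ realises distance more than $1$ from $\mathrm{AG}_{i,j}$; the geometric claim is that this can happen only when $p_j$ lies in the half of the disk opposite $\mathrm{AG}_{i,j}$, so the rounded value satisfies the literal of $x_j$ in $C_i$, and therefore the rounded assignment satisfies $C$.

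The step I expect to be the main obstacle is precisely this geometric claim, which is more delicate than its discrete analogue. In the discrete case (Lemma~\ref{lemma:imp_reuse_discr}) the relevant quantity is the point-to-point distance to the assignment \emph{vertex}, plainly maximised at a pole; in the continuous case it is the point-to-\emph{segment} distance to $\mathrm{AG}_{i,j}$, and the farthest realisation in the disk is in fact an off-axis boundary point rather than a pole, so I cannot argue that ``the maximiser is a pole''. Instead I would verify the weaker but sufficient containment statement: the set of disk points at distance more than $1$ from the ``wrong'' segment $(0,-0.5)\concat(1,0)$ lies entirely in the half $\{y > 0\}$ (and symmetrically for $(0,0.5)\concat(1,0)$). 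Equivalently, the radius-$1$ stadium around each assignment segment already covers the opposite half of $D((0,0),0.5+\varepsilon)$, which I would check holds because the radius $0.5+\varepsilon$ is below $0.75$; this is exactly the configuration drawn in Figure~\ref{fig:constr}. Granting this, the rounding is well defined and consistent across clauses, the reduction is correct, and since all curves and the threshold are built in polynomial time the problem is \np-complete. For the line-segment model I would note, as in Lemma~\ref{lemma:imp_reuse_discr_s}, that the segment contains both poles and is a subset of the disk, so the same containment and hence the same reduction applies verbatim.
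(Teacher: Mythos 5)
Your skeleton matches the paper's: \np-membership via the Alt--Godau verifier, the unchanged $\mathrm{VC}$/$\mathrm{CC}$ construction with the disk $D((0,0),0.5+\varepsilon)$ or segment $S((0,-0.5-\varepsilon),(0,0.5+\varepsilon))$, threshold $\delta=1$, and the satisfiable-implies-large direction via pole placements and Lemma~\ref{lemma:cont_disttraj}. Moreover, your observation that Lemma~\ref{lemma:imp_reuse_discr} does not transfer verbatim to the continuous setting (the disk point farthest from a \emph{segment} is an off-axis boundary point, not a pole) is a real subtlety that the paper's own two-line proof glosses over, and your containment fact is true: for radius $0.5+\varepsilon<0.75$, every disk point with $y\le 0$ lies within distance $1$ of the segment $(0,-0.5)\concat(1,0)$.

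The gap is in the step that feeds this containment. You assert that $\fr(\pi,\mathrm{CC})>1$ forces, for every clause $C_i$, some $p_j$ at distance more than $1$ from $\mathrm{AG}_{i,j}$; the implicit justification is that if every $p_j$ is within distance $1$ of its assignment segment, the forced alignment has width at most $1$. That implication is false, because the \Frechet distance is symmetric: in the aligned traversal the gadget vertices $(0,\pm 0.5)$ must themselves be matched to nearby points of the \emph{realised} curve. Concretely, take $\varepsilon=0.2$ (disk radius $0.7$) and $p_j=(0.6,0.36)$: this point is at distance about $0.5$ from the segment $(0,-0.5)\concat(1,0)$, yet the realised spike $(2,0)\concat p_j\concat(2,0)$ stays more than $1$ away from $(0,-0.5)$ (its nearest point to $(0,-0.5)$ is $p_j$ itself, at distance $\approx 1.05$), so the aligned matching has width $>1$ even though your one-sided criterion is satisfied. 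Hence distance-to-segment is the wrong control quantity: you would need the two-sided condition (realised point close to the gadget spike \emph{and} gadget vertex close to the realised spike), together with a second containment fact for the reverse direction ($\lVert p_j-(0,-0.5)\rVert>1$ also forces $y_j>0$, because $x^2+y^2+y+0.25>1$ and $x^2+y^2\le(0.5+\varepsilon)^2<3/4$ give $y>0$). Alternatively --- and this is essentially the paper's route of reducing to the indecisive case --- argue the contrapositive with vertex-to-vertex distances, exactly as in the discrete case: if the rounded assignment falsifies $C_i$, then each $p_j$ lies in the closed half-disk on the same side as $A_{i,j}$ (or the variable is absent), whence $\lVert p_j-A_{i,j}\rVert^2\le(0.5+\varepsilon)^2+1/4<1$, and the vertex-aligned linear interpolation, whose width is attained at vertex pairs by convexity of the distance along linearly matched segments, certifies $\fr\le 1$; so $\frmax\le 1$ whenever $C$ is unsatisfiable. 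With that repair your rounding argument goes through; as stated, the chain breaks at the quoted step.
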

\begin{proof}
Note that we can apply exactly the same argument as the one in Lemma~\ref{lemma:imp_reuse_discr} and
Lemma~\ref{lemma:imp_reuse_discr_s} to reduce this problem to the one on indecisive points.
Then, we can apply the same argument as in the proof of Theorem~\ref{thm:ub_imp_np} to conclude that
the problem is \np-hard.

We have shown in Theorem~\ref{thm:cont_ub_ind_np} that the problem is in \np\ for all uncertain
curves; thus, we conclude that it is \np-complete.
\end{proof}

\subsubsection{Expected Discrete Fr\'echet Distance on Imprecise Points}
We can also consider the value of expected Fr\'echet distance on imprecise points.
We show the result only for points modelled as line segments; in principle, we believe that for
disks a similar result holds, but the specifics of our reduction do not allow for clean
computations.

We cannot immediately use our construction: we treat subsegments at the ends of the imprecision
segments as \True\ and \False, but we have no interpretation for points in the centre part of a
segment.
So, we want to separate the realisations that pick any such invalid points.
To that aim, we introduce extra gadgets to the clause curve that act as clauses, but catch these
invalid realisations, so each of them yields the distance of $1$.
Now we have three distinct cases: realisation is satisfying, non-satisfying, or invalid.

We use the same construction as for the indecisive case, but we add a new gadget, which makes the
resulting distance predictable.
For every $j \in [m]$, define
\[\mathrm{FG}_j = (-1, 0) \concat\quad\Concat_{\mathclap{k \in [j - 1]}}
\Bigl((0, 0) \concat (1, 0)\Bigr) \concat (0, 0.5) \concat (0, -0.5) \concat (1, 0)
\concat\quad\Concat_{\mathclap{k \in [m] \setminus [j]}} \Bigl((0, 0) \concat (1, 0)\Bigr)\,.\]
So, we define a clause gadget that ignores all the variables except for $x_j$ and then features both
`true' and `false' for $x_j$.
We then define the clause curve as
\[\mathrm{CC} = \Concat_{i \in [n]} \mathrm{ACG}_i \concat \Concat_{j \in [m]} \mathrm{FG}_j\,.\]
We can now choose to align one of $\mathrm{FG}$ clauses with the variable curve.
As before, due to the synchronisation points we can never get the Fr\'echet distance below $1$.
If one of the realisations $x_j$ of the segments falls into the interval $[(0, -0.5), (0, 0.5)]$,
then it will be not further away than $1$ from both the corresponding points on $\mathrm{FG}_j$; all
the other points, being in the middle at $(0, 0)$, are guaranteed to be at most $0.5 + \varepsilon <
1$ away from their coupled point; so, the one-to-one coupling\footnote{Technically, it is one-to-one
on all points except the realisation corresponding to $x_j$; that one has to be coupled to both
$(0, 0.5)$ and $(0, -0.5)$ in $\mathrm{FG}_j$.} will yield the discrete Fr\'echet distance of~$1$;
thus, the optimal discrete Fr\'echet distance in this case is~$1$.
Therefore, we only need to consider the situations when all the realisations happen to fall in
either the interval $((0, 0.5), (0, 0.5 + \varepsilon)]$ or $[(0, -0.5 - \varepsilon), (0, -0.5))$.
We will treat the first interval as \True\ and the second interval as \False.
Denote the number of satisfying assignments by $N$.
To find the expression for the expected discrete Fr\'echet distance, we need to consider three
cases:
\begin{itemize}
    \item At least one realisation of $m$ variables falls within the $y$-interval $[-0.5, 0.5]$.
    Note that the realisation on each segment is uniform and independent of other segments.
    We get
    \[\Pr[\text{at least one realisation from $[-0.5, 0.5]$}]
    = 1 - \prod_{j \in [m]} \frac{2\varepsilon}{1 + 2\varepsilon}
    = 1 - \left(\frac{2\varepsilon}{1 + 2\varepsilon}\right)^m\,.\]
    Note that in each such case we get the discrete Fr\'echet distance of $1$, as discussed before.
    \item All realisations fall outside the $y$-interval $[-0.5, 0.5]$, and they correspond to a
    non-satisfying assignment.
    Each specific non-satisfying assignment corresponds to picking values on the specific interval,
    either $((0, 0.5), (0, 0.5 + \varepsilon)]$ or $[(0, -0.5 - \varepsilon), (0, -0.5))$, so:
    \[\Pr[\text{specific assignment}] = \prod_{j \in [m]} \frac{\varepsilon}{1 + 2\varepsilon} =
    \left(\frac{\varepsilon}{1 + 2\varepsilon}\right)^m\,.\]
    There are $2^m - N$ such assignments, and each of them contributes the value of $1$.
    \item All realisations fall outside the $y$-interval $[-0.5, 0.5]$, and they correspond to a
    satisfying assignment.
    Again, the probability of getting a particular assignment is
    $\left(\frac{\varepsilon}{1 + 2\varepsilon}\right)^m$, and there are $N$ such assignments.
    Now they contribute values distinct from $1$; still, the optimum is contributed by one of the
    new clauses, and then it will be defined by the realisation closest to $(0, 0)$.
    This is shown in the following lemma.\medskip

    \begin{lemma}
    Consider some realisation $\pi \Subset \mathrm{VC}$ where each value can be interpreted either
    as \True\ or \False\ and the corresponding assignment satisfies the formula.
    Pick $j$ such that the subcurve of $\pi$ realising $\mathrm{VG}_j$ contains the point closest
    to $(0, 0)$, at location $(0, 0.5 + \varepsilon')$ or $(0, -0.5 - \varepsilon')$ for
    some $\varepsilon' > 0$.
    Then the optimal coupling establishes a matching between $\pi$ and $\mathrm{FG}_j$, and the
    discrete Fr\'echet distance is $\dfr(\pi, \mathrm{CC}) = 1 + \varepsilon'$.
    \end{lemma}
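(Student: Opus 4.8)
The plan is to reuse the global rigidity of the reduction established for Lemma~\ref{lemma:disttraj}. Every point of $\mathrm{CC}$ lies within distance $1$ of $(0, 0)$, and the only points within distance $1 + \varepsilon$ of $(-2, 0)$ are the copies of $(-1, 0)$ that open each gadget. Since the target value $1 + \varepsilon' < 1 + \varepsilon$ is achievable (shown below), any optimal coupling must synchronise the $\mathrm{VCG}$-part of $\pi$ with exactly one gadget of $\mathrm{CC}$ --- a clause gadget $\mathrm{ACG}_i$ or a false gadget $\mathrm{FG}_{j'}$ --- and couple every remaining gadget vertex to one of the two $(0, 0)$ endpoints of $\pi$, each at distance at most $1$. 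Hence $\dfr(\pi, \mathrm{CC})$ equals the minimum, over the $n + m$ choices of gadget, of the bottleneck of aligning $\mathrm{VCG}$ with that gadget, and it suffices to evaluate each alignment separately.

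First I would compute the cost of aligning with a false gadget $\mathrm{FG}_{j'}$. Write each realisation of $\mathrm{VG}_k$ as $(0, 0.5 + \varepsilon_k)$ or $(0, -0.5 - \varepsilon_k)$ with offset $\varepsilon_k \in (0, \varepsilon]$; by the choice of $j$ the quantity $\varepsilon' = \varepsilon_j$ is the minimal offset. The natural coupling pairs the synchronisation points $(2, 0)$ and $(1, 0)$ (distance $1$) and couples every realisation of $\mathrm{VG}_k$ with $k \neq j'$ to the central point $(0, 0)$ of $\mathrm{FG}_{j'}$ (distance $0.5 + \varepsilon_k < 1$). The single realisation of $\mathrm{VG}_{j'}$ is coupled to \emph{both} $(0, 0.5)$ and $(0, -0.5)$, which absorbs the one extra vertex of $\mathrm{FG}_{j'}$ while keeping the coupling monotone; its contribution is $\max(\varepsilon_{j'}, 1 + \varepsilon_{j'}) = 1 + \varepsilon_{j'}$. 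Thus the alignment with $\mathrm{FG}_{j'}$ has bottleneck exactly $1 + \varepsilon_{j'}$, which equals $1 + \varepsilon'$ when $j' = j$ and is at least $1 + \varepsilon'$ for every other $j'$.

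Next I would bound the cost of aligning with a clause gadget $\mathrm{ACG}_i$ from below. The presence of the synchronising pair at the start of the gadget lets me invoke Lemma~\ref{lemma:onetoonevcgacg} to take the cheapest such coupling to be one-to-one on $\mathrm{VCG}$ and $\mathrm{ACG}_i$. Because $\pi$ encodes a satisfying assignment, it satisfies $C_i$, so some literal on a variable $k$ is satisfied, and exactly as in Lemma~\ref{lemma:distgadgetsvgag} that variable's realisation is coupled to the opposite assignment point at distance $1 + \varepsilon_k \geq 1 + \varepsilon'$. Since this is the minimum-cost coupling for this alignment, every coupling aligning with $\mathrm{ACG}_i$ costs at least $1 + \varepsilon'$. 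Taking the minimum over all gadgets, the unique optimum $1 + \varepsilon'$ is realised by $\mathrm{FG}_j$, giving $\dfr(\pi, \mathrm{CC}) = 1 + \varepsilon'$ together with the asserted matching.

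The main obstacle is the lower bound against the clause gadgets: I must rule out that some \emph{non}-one-to-one coupling aligning $\mathrm{VCG}$ with an $\mathrm{ACG}_i$ dips below $1 + \varepsilon'$ by shifting the satisfied literal off its opposite assignment point. This is exactly where the one-to-one optimality of Lemma~\ref{lemma:onetoonevcgacg}, forced by the interior synchronisation points $(2,0)$--$(1,0)$, does the work, so the proof hinges on verifying its hypotheses here. The false-gadget side is comparatively routine; its only delicate point is confirming that doubling the realisation of $x_j$ across $(0, 0.5)$ and $(0, -0.5)$ yields a valid monotone coupling, which follows directly from the discrete coupling rules.
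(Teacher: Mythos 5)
Your proof is correct and follows essentially the same route as the paper's (much terser) proof: restrict optimal couplings to alignments of the $\mathrm{VCG}$ part of $\pi$ with a single gadget of $\mathrm{CC}$, show that the alignment with $\mathrm{FG}_j$ costs exactly $1 + \varepsilon'$, and lower-bound every $\mathrm{ACG}_i$ and $\mathrm{FG}_{j'}$ alignment by $1 + \varepsilon'$ using the satisfied literal and the minimality of $\varepsilon'$, respectively. The only point worth flagging is that you invoke Lemma~\ref{lemma:onetoonevcgacg} for realisations whose offsets $\varepsilon_k$ may be strictly smaller than $\varepsilon$, whereas that lemma is stated for realisations with offset exactly $\varepsilon$; its exchange argument carries over verbatim, since every inequality used there only needs the offset to lie in $(0, 0.25)$.
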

    \begin{proof}
    First of all, note that we still have to couple the synchronisation points and we cannot have
    discrete Fr\'echet distance below $1$.
    So, we need to consider only the couplings of $\pi$ with the gadgets of $\mathrm{CC}$.
    Note that if we align $\mathrm{FG}_j$ with $\pi$, we get discrete Fr\'echet distance of $1 + 
    \varepsilon'$.
    Recall that we consider only satisfying assignments, so, if we consider an arbitrary
    subcurve $\mathrm{ACG}_i$, then there is some variable $x_j$ that satisfies the
    corresponding clause, and so the realisation of that variable is $1 + \varepsilon''$ away from
    the corresponding assignment point.
    Therefore, such a coupling will yield the discrete Fr\'echet distance of $1 + \varepsilon''
    \geq 1 + \varepsilon'$.
    Finally, it is easy to see that choosing some $\mathrm{FG}_k$ with $k \neq j$ will also yield
    some distance $1 + \varepsilon'' \geq 1 + \varepsilon'$.
    So, the statement of the lemma holds.
    \end{proof}
    So, here we need to find $\mathbb{E}[\min_{j \in [m]} (1 + \varepsilon'_j)]$ with
    $\varepsilon'_j$ sampled uniformly from $(0, \varepsilon]$; we can rephrase this to
    $1 + \varepsilon \cdot \mathbb{E}[\min_{j \in [m]} u_j]$ with $u_j$ sampled uniformly from
    $(0, 1]$.
    It is a standard result that the minimum now is geometrically distributed, so we get
    $\mathbb{E}[\min_{j \in [m]} u_j] = \frac{1}{1 + m}$, and hence the expected contribution is
    $1 + \frac{\varepsilon}{1 + m}$.
\end{itemize}
We can bring the three cases together to find
\begin{align*}
&\dfrexp(\mathrm{VC}, \mathrm{CC})\\
&= 1 \cdot \left(1 - \left(\frac{2\varepsilon}{1 + 2\varepsilon}\right)^m\right)
+ 1 \cdot (2^m - N) \cdot \left(\frac{\varepsilon}{1 + 2\varepsilon}\right)^m
+ \left(1 + \frac{\varepsilon}{1 + m}\right) \cdot N \cdot \left(\frac{\varepsilon}{1 +
2\varepsilon}\right)^m\\
&= 1 + N \cdot \frac{\varepsilon^{m + 1}}{(1 + m) \cdot (1 + 2\varepsilon)^m}\,.
\end{align*}
So, if we were to compute $\dfrexp(\mathrm{VC}, \mathrm{CC}) = \mu$, then the number of
satisfying assignments is
\[N = (\mu - 1) \cdot \frac{(1 + m) \cdot (1 + 2\varepsilon)^m}{\varepsilon^{m + 1}}\,.\]
This is easy to compute in polynomial time, and our construction can still be done in polynomial
time; hence, the result follows.

\begin{theorem}\label{thm:exp_imp_nump}
The problem \textsc{Expected Discrete Fr\'echet} for imprecise curves modelled as line
segments is \nump-hard.
\end{theorem}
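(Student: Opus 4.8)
The plan is to exhibit a polynomial-time counting (Turing) reduction from \textsc{\#CNF-SAT}, which is \nump-hard, to \textsc{Expected Discrete Fr\'echet} on line-segment imprecise curves, assembling the construction and case analysis developed immediately above. Given a \cnfsat\ formula $C$ with $n$ clauses and $m$ variables, I would first build the imprecise curves $\mathrm{VC}$ and $\mathrm{CC}$ exactly as described: the variable curve uses the segment gadgets $\mathrm{VG}_j = S((0, -0.5 - \varepsilon), (0, 0.5 + \varepsilon)) \concat (2, 0)$, and the clause curve is $\mathrm{CC} = \Concat_{i \in [n]} \mathrm{ACG}_i \concat \Concat_{j \in [m]} \mathrm{FG}_j$, where the filter gadgets $\mathrm{FG}_j$ are included precisely to force every realisation that places some segment point in the central $y$-interval $[-0.5, 0.5]$ to contribute distance exactly $1$. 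Both curves have length polynomial in $n$ and $m$, so the construction is carried out in polynomial time, and I would fix $\varepsilon$ to be any rational value in the admissible range, ensuring all coordinates, and hence the resulting expectation, remain rational.

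Next I would invoke the three-case probability analysis already established (at least one realisation in the central interval; all realisations outside but corresponding to a non-satisfying assignment; all outside and satisfying) together with the closed form it yields,
\[\dfrexp(\mathrm{VC}, \mathrm{CC}) = 1 + N \cdot \frac{\varepsilon^{m + 1}}{(1 + m) \cdot (1 + 2\varepsilon)^m}\,,\]
where $N$ is the (unknown) number of satisfying assignments of $C$. This is where the genuine content sits: one must confirm that the three cases are exhaustive and disjoint over the product of uniform distributions on the $m$ segments, that the filter gadgets indeed cap every ``invalid'' realisation at distance $1$, and that for a satisfying realisation the optimal coupling is attained by the filter gadget $\mathrm{FG}_j$ whose segment carries the point closest to $(0, 0)$, giving $1 + \min_j \varepsilon'_j$, where the minimum of the $m$ uniform offsets has expectation $\varepsilon / (1 + m)$. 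Each of these facts is exactly a statement proved in the preceding discussion, so at this step I would simply cite and combine them rather than reprove them.

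Finally, since the coefficient multiplying $N$ is a fixed nonzero rational determined by $m$ and $\varepsilon$, a single oracle call returning $\mu = \dfrexp(\mathrm{VC}, \mathrm{CC})$ lets me recover
\[N = (\mu - 1) \cdot \frac{(1 + m) \cdot (1 + 2\varepsilon)^m}{\varepsilon^{m + 1}}\]
by exact rational arithmetic in polynomial time. Hence any polynomial-time algorithm for \textsc{Expected Discrete Fr\'echet} would yield one for \textsc{\#CNF-SAT}, establishing \nump-hardness. The only point requiring real care is that the reduction needs the \emph{exact} value of $\mu$ to invert the linear relation, which is legitimate for a Turing reduction to an exact-value problem; the main conceptual obstacle, verifying exhaustiveness of the case analysis and the expectation of the minimal offset, has already been discharged above, so the theorem follows by collecting these pieces.
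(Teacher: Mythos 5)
Your proposal is correct and takes essentially the same route as the paper: the paper's proof of this theorem is exactly the preceding construction with the filter gadgets $\mathrm{FG}_j$ appended to $\mathrm{CC}$, the three-case expectation analysis giving $\dfrexp(\mathrm{VC}, \mathrm{CC}) = 1 + N \cdot \varepsilon^{m+1}/\bigl((1+m)(1+2\varepsilon)^m\bigr)$, and the polynomial-time inversion of this linear relation to recover $N$ from a single oracle call. Your assembly of those pieces, including the observation that the coefficient of $N$ is a fixed nonzero rational, matches the paper's argument step for step.
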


\subsection{Lower Bound Fr\'echet Distance}\seclab{hard}
In this section, we prove that computing the lower bound continuous Fr\'echet distance is \np-hard
for uncertainty modelled with line segments.
This contrasts with the algorithm for indecisive curves, given in \secref{exact_lowerbound}, and
with the algorithm previously suggested by Ahn el al.~\cite{ahn:2012} for the discrete Fr\'echet
distance.
Unlike the upper bound proofs, this reduction uses the \np-hard problem \textsc{Subset-Sum}.
We consider the following problems.

\begin{problem}\problab{decision}
\textsc{Lower Bound Continuous Fr\'echet:} Given a polygonal curve $\curveA$ with $n$ vertices, an
uncertain curve $\U$ with $m$ vertices, and a threshold $\delta > 0$, decide if
$\frmin(\curveA, \U) \leq \delta$.
\end{problem}

\begin{problem}
\textsc{Subset-Sum:} Given a set $S = \{s_1, \dots, s_n\}$ of $n$ positive integers and a target
integer $\tau$, decide if there exists an index set $I$ such that $\sum_{i \in I} s_i = \tau$.
\end{problem}
As a polygonal curve is an uncertain curve, proving \probref{decision} is \np-hard implies the
corresponding problem with two uncertain curves is also \np-hard.

\subsubsection{An Intermediate Problem}
We start by reducing \textsc{Subset-Sum} to a more geometric intermediate curve-based problem.
\begin{definition}\deflab{respect}
Let $\alpha > 0$ be some value, and let $\curveB = \langle\curveB_1, \dots, \curveB_{2n + 1}\rangle$
be a polygonal curve.
Call $\curveB$ an \emph{$\alpha$-regular curve} if for all $1\leq i\leq 2n + 1$, the $x$-coordinate
of $\curveB_i$ is $i \cdot \alpha$.
Let $Y = \{y_1, \dots, y_n\}$ be a set of $n$ positive integers.
Call $\curveB$ a \emph{$Y$-respecting curve} if:
\begin{enumerate}
\item For all $1\leq i\leq n$, $\curveB$ passes through the point $((2i + 1/2)\alpha, 0)$.
\item For all $1\leq i\leq n$, $\curveB$ either passes through the point $((2i - 1/2)\alpha, 0)$ or
$((2i - 1/2)\alpha, -y_i)$.
\end{enumerate}
\end{definition}
Intuitively, the above definition requires $\curveB$ to pass through $((2i + 1/2)\alpha, 0)$ as it
reflects the $y$-coordinate about the line $y = 0$ (see \figref{double}).
Thus, if the curve also passes through $((2i - 1/2)\alpha, 0)$, the two reflections cancel each
other.
If it passes through $((2i - 1/2)\alpha, -y_i)$, the lemma below argues that $y_i$ shows up in the
final vertex height.

\begin{figure}[tpb]
\centering
\includegraphics[width=0.6\textwidth]{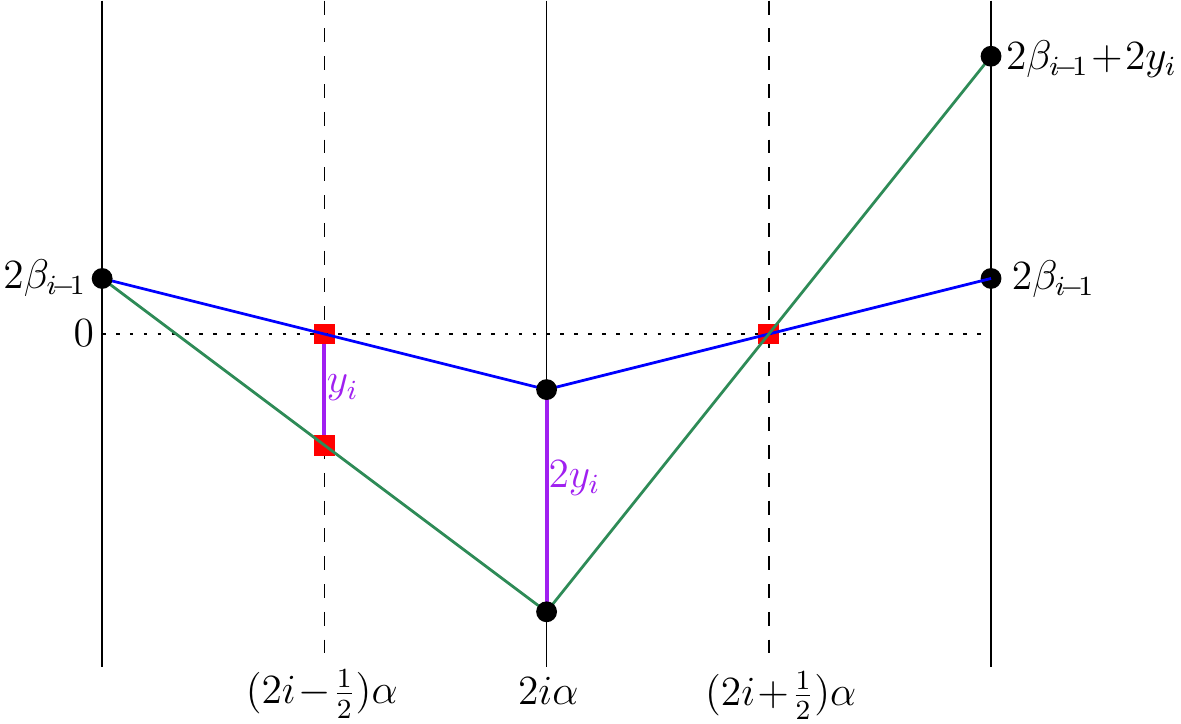}
\caption{Passing through $((2i - 1/2)\alpha, 0)$ does not change the height, and passing through
$((2i - 1/2)\alpha, -y_i)$ adds $2y_i$.}
\figlab{double}
\end{figure}

\begin{lemma}\lemlab{geometry}
Let $\curveB$ be a $Y$-respecting $\alpha$-regular curve, and let $I$ be the subset of indices $i$
such that $\curveB$ passes through $((2i - 1/2)\alpha, -y_i)$.
If $\curveB_1 = (\alpha, 0)$, then $\curveB_{2n + 1} = ((2n + 1)\alpha, 2\sum_{i\in I} y_i)$.
\end{lemma}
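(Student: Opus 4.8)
The plan is to exploit the fact that the distinguished points in the definition of a $Y$-respecting $\alpha$-regular curve all lie at \emph{edge midpoints} of $\curveB$. Since $\curveB$ is $\alpha$-regular, its vertices $\curveB_1, \dots, \curveB_{2n+1}$ have strictly increasing $x$-coordinates $\alpha, 2\alpha, \dots, (2n+1)\alpha$, so $\curveB$ is $x$-monotone and hence a graph of a function of $x$. The point $((2i - 1/2)\alpha, \cdot)$ has $x$-coordinate exactly halfway between $\curveB_{2i-1}$ and $\curveB_{2i}$, and $((2i + 1/2)\alpha, \cdot)$ lies halfway between $\curveB_{2i}$ and $\curveB_{2i+1}$; both fall strictly inside a single edge, so the curve reaches them only in the interior of that edge. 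Consequently, ``$\curveB$ passes through'' such a point is equivalent to the statement that the \emph{average} of the two bounding vertex heights equals the prescribed $y$-coordinate.

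Writing $h_j$ for the $y$-coordinate of $\curveB_j$, I would translate conditions (1) and (2) into recurrences on the heights. Condition (1) forces the midpoint of edge $\curveB_{2i}\curveB_{2i+1}$ to have height $0$, i.e.\ $(h_{2i} + h_{2i+1})/2 = 0$, so $h_{2i+1} = -h_{2i}$; this is precisely the ``reflection about $y = 0$'' of \figref{double}. Condition (2) forces the midpoint of $\curveB_{2i-1}\curveB_{2i}$ to have height $0$ or $-y_i$, giving $h_{2i} = -h_{2i-1}$ when $i \notin I$ and $h_{2i} = -h_{2i-1} - 2y_i$ when $i \in I$. Setting $H_i := h_{2i+1}$ and combining the two relations yields $H_i = -h_{2i} = H_{i-1}$ when $i \notin I$, and $H_i = -h_{2i} = H_{i-1} + 2y_i$ when $i \in I$; in one line, $H_i = H_{i-1} + 2y_i\,[\,i \in I\,]$ (using $H_{i-1} = h_{2i-1}$).

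Finally I would telescope this recurrence from the base case $H_0 = h_1 = 0$, which is exactly the hypothesis $\curveB_1 = (\alpha, 0)$, to obtain $h_{2n+1} = H_n = 2\sum_{i \in I} y_i$. Combined with $\alpha$-regularity, which fixes the $x$-coordinate of $\curveB_{2n+1}$ at $(2n+1)\alpha$, this gives $\curveB_{2n+1} = ((2n+1)\alpha, 2\sum_{i\in I} y_i)$, as claimed. The computation is routine bookkeeping; the only genuinely load-bearing step is the midpoint observation, which converts the informal ``reflect and cancel'' picture into exact linear constraints on consecutive heights. I would therefore be careful to justify that $x$-monotonicity makes $\curveB$ a graph, so that the height at each midpoint $x$-coordinate is well defined and equals the average of the adjacent vertex heights; without this, the ``passes through'' condition would no longer pin the average, and the telescoping would break.
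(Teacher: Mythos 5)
Your proof is correct and follows essentially the same route as the paper: both arguments turn the ``passes through'' conditions into reflection relations between consecutive vertex heights (using that each distinguished point is the midpoint of an edge of the $x$-monotone curve $\curveB$) and then induct/telescope from $\curveB_1 = (\alpha, 0)$ to get $h_{2n+1} = 2\sum_{i \in I} y_i$. Your explicit justification of the midpoint observation is a point the paper leaves implicit, but the underlying argument is identical.
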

\begin{proof}
For $1\leq j \leq n$, let $I_j = \{i\in I \mid i\leq j\}$, and let $\beta_j = \sum_{i\in I_j} y_i$
(where $\beta_0 = 0$).
We argue by induction that $\curveB_{2j + 1} = ((2j + 1)\alpha, 2\beta_j)$, thus yielding the lemma
statement when $j = n$.
For the base case, $j = 0$, the statement becomes $\curveB_1 = (\alpha, 0)$ which is true by
assumption of the lemma statement.

So assume that $\curveB_{2j - 1} = ((2j - 1)\alpha, 2\beta_{j - 1})$.
First suppose that $j\notin I$.
In this case, since $\curveB$ is $Y$-respecting, it passes through points $((2j - 1/2)\alpha, 0)$
and $((2j + 1/2)\alpha,0)$.
This implies $\curveB_{2j} = (2j\alpha, -2\beta_{j - 1})$ and $\curveB_{2j + 1} = ((2j + 1)\alpha,
2\beta_{j - 1}) = ((2j + 1)\alpha, 2\beta_j)$.
Now suppose that $j\in I$.
In this case, it must pass through points $((2j - 1/2)\alpha, -y_j)$ and $((2j + 1/2)\alpha,0)$. 
This implies $\curveB_{2j} = (2j\alpha, 2\beta_{j - 1} - 2(2\beta_{j - 1} + y_j)) = (2j\alpha,
-2(\beta_{j - 1} + y_j))$ and $\curveB_{2j + 1} = ((2j + 1)\alpha, 2(\beta_{i - 1} + y_j)) =
((2j + 1)\alpha, 2\beta_j)$.
See \figref{double}.
\end{proof}
The following is needed in the next section, and follows from the proof of the above.
\begin{corollary}\corlab{bounded}
For a set $Y = \{y_1, \dots, y_n\}$, let $M = \sum_{i = 1}^n y_i$.
For any vertex $\curveB_i$ of a $Y$-respecting $\alpha$-regular curve, its $y$-coordinate is at most
$2M$ and at least $-2M$.
\end{corollary}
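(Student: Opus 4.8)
The plan is to read the corollary directly off the vertex-coordinate expressions already derived in the proof of \lemref{geometry}, rather than re-deriving the geometry. Under the standing assumption $\curveB_1 = (\alpha, 0)$, that proof in fact pins down the $y$-coordinate of \emph{every} vertex: the odd-indexed vertices are produced by the induction as $\curveB_{2j + 1} = ((2j + 1)\alpha, 2\beta_j)$, while the even-indexed vertices arise as the intermediate values $\curveB_{2j} = (2j\alpha, -2\beta_{j - 1})$ when $j \notin I$ and $\curveB_{2j} = (2j\alpha, -2\beta_j)$ when $j \in I$, where $\beta_j = \sum_{i \in I_j} y_i$ and $\beta_0 = 0$, exactly as in the lemma. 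Since the curve has precisely $2n + 1$ vertices, these two families together account for all of them.

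The only quantitative ingredient is a bound on the partial sums $\beta_j$. Because each $y_i$ is positive and $I_j \subseteq I \subseteq [n]$, every $\beta_j$ is the sum of a subset of the $y_i$, so $0 \le \beta_j \le \sum_{i = 1}^n y_i = M$ for all $j$. This is immediate and is really the only arithmetic needed.

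Combining the two observations finishes the argument: each odd-indexed vertex has $y$-coordinate $2\beta_j \in [0, 2M]$, and each even-indexed vertex has $y$-coordinate $-2\beta_{j - 1}$ or $-2\beta_j$, both of which lie in $[-2M, 0]$. In either case the $y$-coordinate lies in $[-2M, 2M]$, which is the claim.

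As this is a direct corollary, there is no serious obstacle. The one point I would check carefully is that the even-indexed vertices, which appear in the lemma's proof only as stepping stones toward the next odd vertex, are genuinely covered, so that the bound applies to all $2n + 1$ vertices and not merely the odd-indexed ones. I would also make explicit that the bound relies on the normalisation $\curveB_1 = (\alpha, 0)$ inherited from the lemma; without it the entire curve could be translated vertically and the stated bound would fail.
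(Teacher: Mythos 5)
Your proposal is correct and matches the paper's intent exactly: the paper gives no separate argument for this corollary, stating only that it ``follows from the proof of the above'' lemma, and your proof simply makes that explicit by reading off the vertex coordinates $2\beta_j$, $-2\beta_{j-1}$, $-2\beta_j$ from the induction and bounding $0 \le \beta_j \le M$. Your closing caveats are also well placed, since both the even-indexed vertices and the normalisation $\curveB_1 = (\alpha, 0)$ are indeed implicitly inherited from the lemma's proof and hypothesis.
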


\begin{problem}\problab{intermediate}
\textsc{RR-Curve:} Given a set $Y = \{y_1, \dots, y_n\}$ of $n$ positive integers, a value
$\alpha = \alpha(Y) > 0$, and an integer $\tau$, decide if there is a $Y$-respecting
$\alpha$-regular curve $\curveB = \langle\curveB_1, \dots, \curveB_{2n + 1}\rangle$ such that
$\curveB_1 = (\alpha, 0)$ and $\curveB_{2n + 1}=((2n + 1)\alpha, 2\tau)$.
\end{problem}
By \lemref{geometry}, \textsc{Subset-Sum} immediately reduces to the above problem by setting
$Y = S$.
Note that for this reduction it suffices to use any positive constant for $\alpha$; however, we
allow $\alpha$ to depend on $Y$, as this will ultimately be needed in our reduction to
\probref{decision}.
\begin{theorem}\thmlab{curvehard}
For any $\alpha(Y) > 0$, \textsc{RR-Curve} is \np-hard.
\end{theorem}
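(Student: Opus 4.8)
The plan is to give a direct polynomial-time many-one reduction from \textsc{Subset-Sum} to \textsc{RR-Curve}, letting \lemref{geometry} carry the entire geometric argument. Given a \textsc{Subset-Sum} instance with set $S = \{s_1, \dots, s_n\}$ and target $\tau$, I would set $Y = S$, keep the same target $\tau$, evaluate the prescribed value $\alpha = \alpha(Y) > 0$, and output the \textsc{RR-Curve} instance $(Y, \alpha, \tau)$ of \probref{intermediate}. Since this merely copies the $n$ integers and computes $\alpha(Y)$, it runs in polynomial time no matter how $\alpha$ is chosen, which is precisely what the phrase ``for any $\alpha(Y) > 0$'' in the theorem requires; note that $\alpha$ only scales the $x$-axis and is irrelevant to the vertical structure that encodes the subset.

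For correctness I would argue the two directions separately. The forward direction is constructive: given an index set $I$ with $\sum_{i \in I} s_i = \tau$, I would build a $Y$-respecting $\alpha$-regular curve by starting at $\curveB_1 = (\alpha, 0)$ and following the vertex recurrence established inside the proof of \lemref{geometry}, routing the curve through $((2j - 1/2)\alpha, -y_j)$ when $j \in I$ and through $((2j - 1/2)\alpha, 0)$ otherwise, while always passing through $((2j + 1/2)\alpha, 0)$. Because each of these passing-through points is exactly the midpoint of an edge (its $x$-coordinate $(2j \mp 1/2)\alpha$ is the midpoint of two consecutive vertex abscissae), each pair of consecutive constraints pins down one even vertex $\curveB_{2j}$ and the following odd vertex $\curveB_{2j+1}$. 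The resulting curve has its $i$-th vertex at $x$-coordinate $i\alpha$, so it is $\alpha$-regular, and it is $Y$-respecting by construction; \lemref{geometry} then gives $\curveB_{2n+1} = ((2n+1)\alpha, 2\sum_{i \in I} y_i) = ((2n+1)\alpha, 2\tau)$, a ``Yes''-instance.

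The reverse direction is where the lemma does essentially all the work: any $Y$-respecting $\alpha$-regular curve $\curveB$ with $\curveB_1 = (\alpha, 0)$ and $\curveB_{2n+1} = ((2n+1)\alpha, 2\tau)$ determines, through its passage (or not) through each $((2i - 1/2)\alpha, -y_i)$, an index set $I$, and \lemref{geometry} forces $2\sum_{i \in I} y_i = 2\tau$, hence $\sum_{i \in I} s_i = \tau$. Combining the two directions yields the reduction, so \textsc{RR-Curve} is \np-hard for any $\alpha(Y) > 0$. I do not expect a real obstacle here, since \lemref{geometry} already encapsulates the geometric heart of the construction; the only point needing a sentence of care is verifying in the forward direction that the prescribed midpoint conditions determine a \emph{single} valid $\alpha$-regular curve rather than over- or under-constraining it, which is immediate from the recurrence in the lemma's proof. \corref{bounded} is not needed for this theorem, but I would flag that it becomes relevant later when bounding the curve's vertical extent in the reduction to \probref{decision}.
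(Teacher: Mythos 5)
Your proposal is correct and is exactly the paper's reduction: the paper proves \thmref{curvehard} in one line by setting $Y = S$ and invoking \lemref{geometry}, which is precisely the argument you spell out (with the forward construction via the midpoint recurrence and the reverse direction via the lemma's conclusion made explicit). Your added care about the polynomial-time computability of $\alpha(Y)$ and the well-definedness of the constructed curve fills in details the paper leaves implicit, but does not change the approach.
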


\subsubsection{Reduction to Lower Bound \Frechet Distance}\seclab{rrtolb}
Let $\alpha$, $\tau$, $Y = \{y_1, \dots, y_n\}$ be an instance of \textsc{RR-Curve}.
In this section, we show how to reduce it to an instance $\delta$, $\curveA$, $\U$ of
\probref{decision}, where the uncertain regions in $\U$ are vertical line segments.
The main idea is to use $\U$ to define an $\alpha$-regular curve, and use $\curveA$ to enforce that
it is $Y$-respecting.
Specifically, let $M = \sum_{i=1}^n y_i$.
Then $\U = \langle v_1, \dots, v_{2n + 1}\rangle$, where $v_i$ is a vertical segment, whose
horizontal coordinate is $i\alpha$ and whose vertical extent is given by the interval $[-2M, 2M]$. 
By \corref{bounded}, we have the following simple observation.
\begin{observation}
The set of all $Y$-respecting $\alpha$-regular curves is a subset of $\Real{U}$.
\end{observation}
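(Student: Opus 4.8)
The plan is to unfold the two definitions and match the curves up vertex by vertex. First I would recall what membership in $\Real{U}$ requires: a realisation of $\U = \langle v_1, \dots, v_{2n+1}\rangle$ is a polygonal curve $\langle p_1, \dots, p_{2n+1}\rangle$ in which each vertex $p_i$ is selected from the region $v_i$. Since each $v_i$ is, by construction, the vertical segment at horizontal coordinate $i\alpha$ with vertical extent $[-2M, 2M]$, the single condition $p_i \in v_i$ is equivalent to the two scalar conditions that the $x$-coordinate of $p_i$ equals $i\alpha$ and that the $y$-coordinate of $p_i$ lies in $[-2M, 2M]$.

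Next I would fix an arbitrary $Y$-respecting $\alpha$-regular curve $\curveB = \langle \curveB_1, \dots, \curveB_{2n+1}\rangle$ and verify these two conditions for each vertex. The horizontal condition is immediate from $\alpha$-regularity: the $x$-coordinate of $\curveB_i$ is exactly $i\alpha$, matching the horizontal position of $v_i$; note in particular that $\curveB$ and $\U$ both have $2n + 1$ vertices, so the vertex-by-vertex pairing is well defined. For the vertical condition I would simply invoke \corref{bounded}, which guarantees that the $y$-coordinate of every vertex $\curveB_i$ lies in $[-2M, 2M]$. Combining the two gives $\curveB_i \in v_i$ for all $i$, so $\curveB$ is a valid realisation of $\U$, that is, $\curveB \in \Real{U}$; since $\curveB$ was arbitrary, the claimed containment follows.

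I do not anticipate a genuine obstacle here: the statement is a direct consequence of \corref{bounded} together with the definition of the regions $v_i$. The only point requiring a little care is to keep track of the division of labour between the two defining properties --- $\alpha$-regularity already pins down the number of vertices and their horizontal coordinates, so the corollary is needed solely to control the vertical range. Indeed, this is exactly why the segments were endowed with the vertical extent $[-2M, 2M]$: it is the symmetric interval that \corref{bounded} certifies to contain every vertex of every $Y$-respecting $\alpha$-regular curve, which is what makes every such curve realisable.
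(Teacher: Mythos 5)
Your proposal is correct and matches the paper's argument exactly: the paper derives this observation directly from \corref{bounded}, since $\alpha$-regularity pins each vertex to the vertical line $x = i\alpha$ and the corollary confines its $y$-coordinate to $[-2M, 2M]$, which is precisely the extent of the segment $v_i$. Your write-up merely makes explicit the vertex-by-vertex matching that the paper leaves implicit.
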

Thus, the main challenge is to define $\curveA$ to enforce that the realisation is $Y$-respecting.
To that end, we first describe a gadget forcing the realisation to pass through a specified point.

\begin{definition}
For any point $p = (x,y)\in \R^2$ and value $\delta > 0$, let the \emph{$\delta$ gadget} at $p$,
denoted by $\g_{\delta}(p)$, be the curve:
$(x, y) \concat (x, y + \delta) \concat (x, y - \delta) \concat (x, y + \delta) \concat (x, y)$.
See \figref{g}.
\end{definition}

\begin{figure}[tpb]
\centering
\hspace{.1in}
\begin{subfigure}{0.1035\linewidth}
\centering
\includegraphics[width=\linewidth]{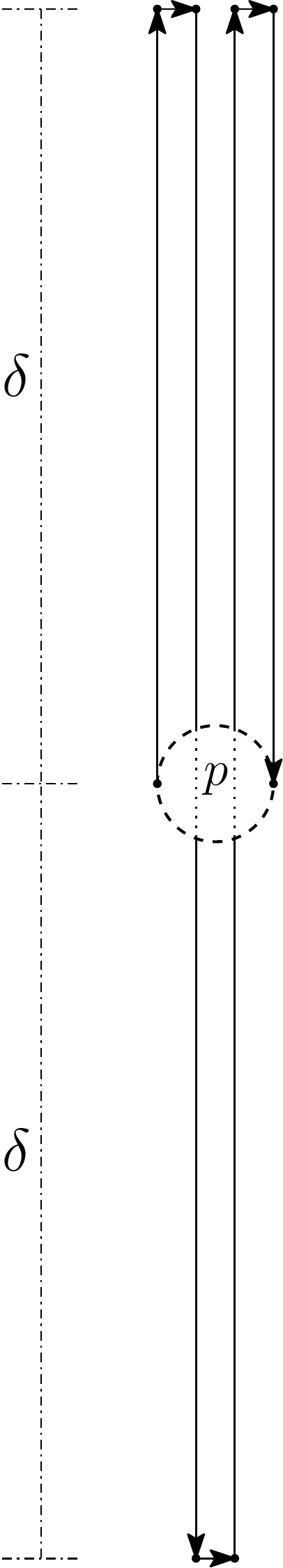}
\caption{$\g_\delta(p)$}
\figlab{g}
\end{subfigure}
\hspace{.5in}
\begin{subfigure}{0.4\linewidth}
\centering
\includegraphics[width=.88\linewidth]{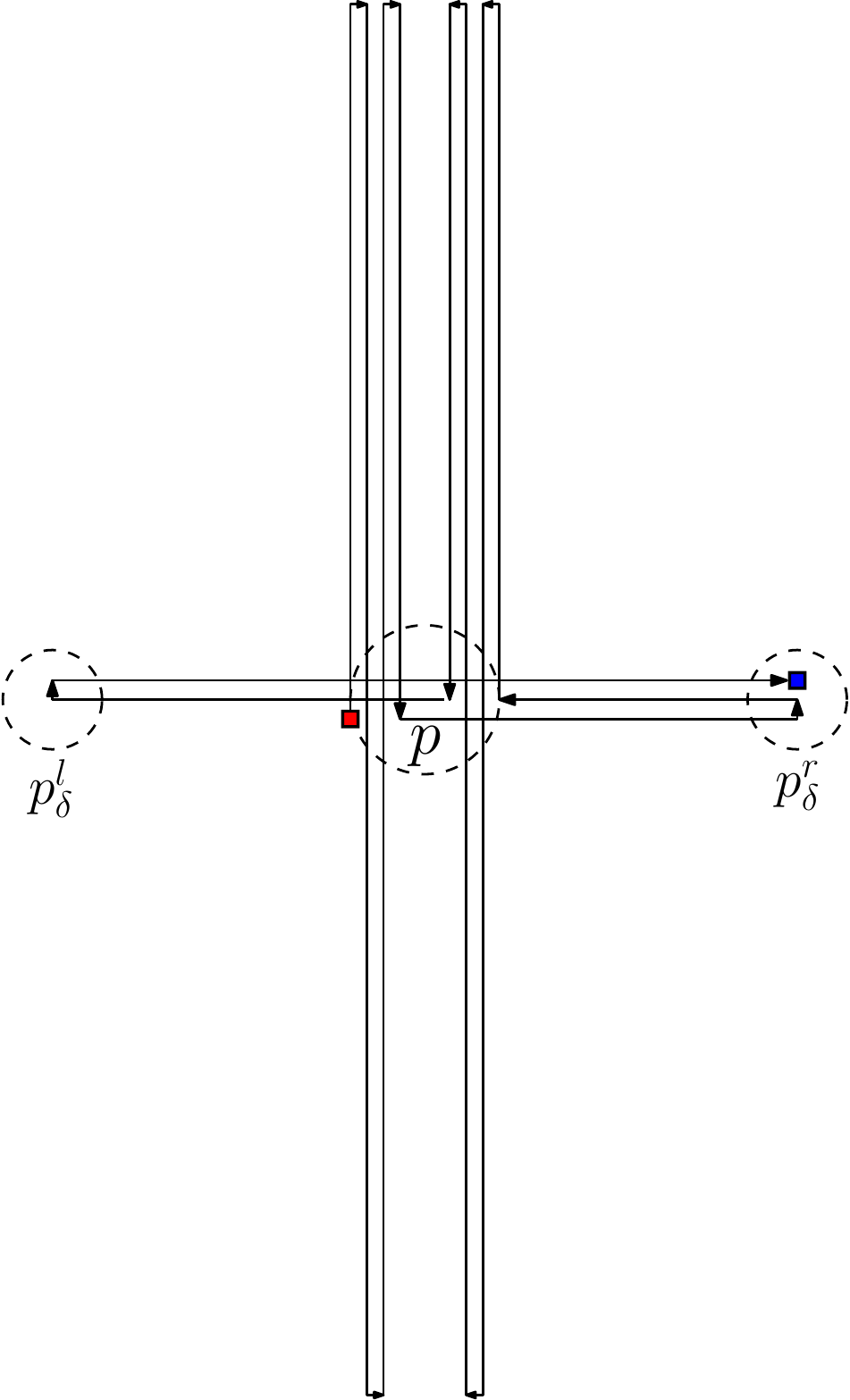}
\caption{$\lcg_\delta(p)$}
\figlab{lcg}
\end{subfigure}
\hspace{.21in}
\begin{subfigure}{0.3\linewidth}
\centering
\includegraphics[width=.83\linewidth]{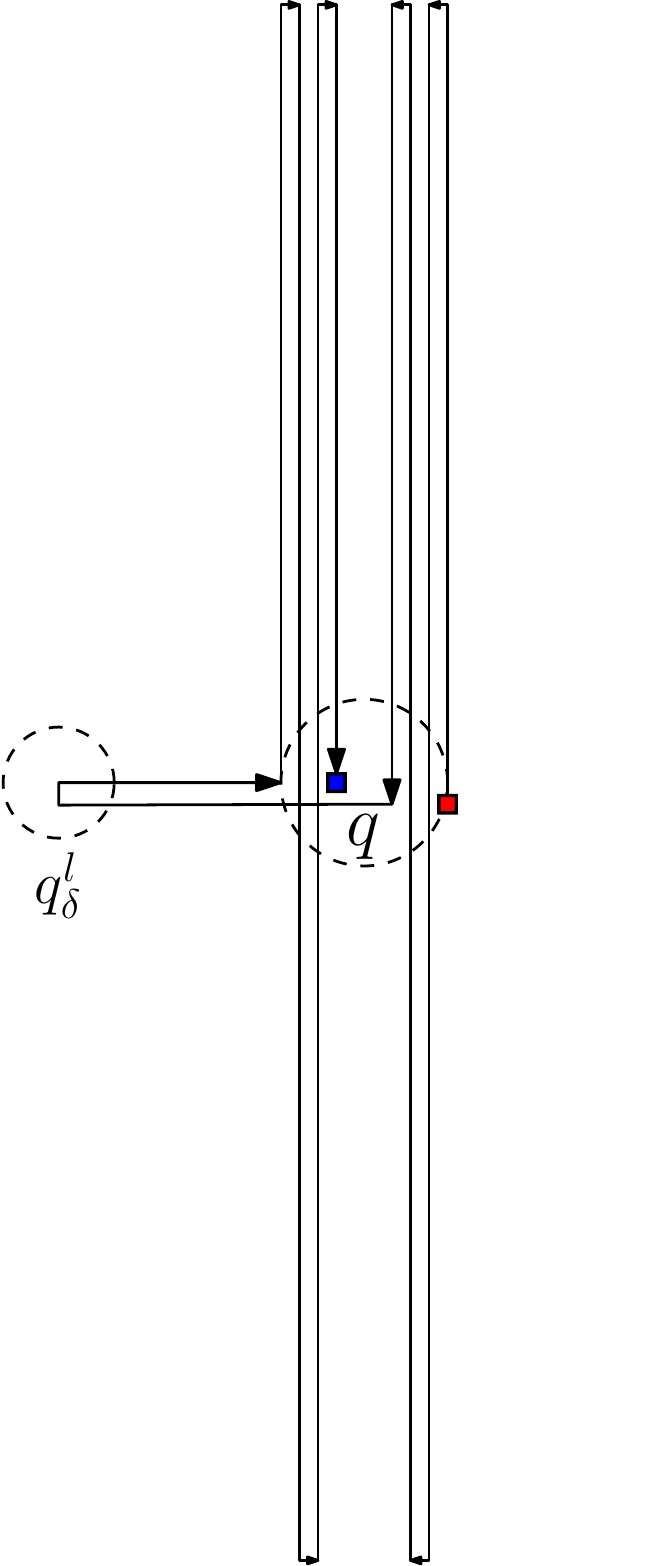}
\caption{$\ucg_\delta(q)$}
\figlab{ucg}
\end{subfigure}
\caption{Depiction of gadgets $\g_\delta(p)$, $\lcg_\delta(p)$, and $\ucg_\delta(p)$.
Circles represent zero-area points.
For the right two figures, the red / blue square represents the starting / ending point.}
\end{figure}

\begin{lemma}\lemlab{passpoint}
Let $p = (x,y)\in \R^2$ be a point, and let $\ell$ be any line segment.
Then if $\fr(\ell, \g_{\delta}(p)) \leq \delta$, then $\ell$ must pass through $p$.
\end{lemma}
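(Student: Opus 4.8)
The plan is to extract, from a width-$\delta$ \Frechet matching between $\ell$ and $\g_\delta(p)$, a single point of $\ell$ that is forced to coincide with $p$. Write $p = (x, y)$ and recall the five vertices of the gadget in order: $(x, y)$, the top vertex $(x, y + \delta)$, the bottom vertex $(x, y - \delta)$, the top vertex $(x, y + \delta)$ again, and $(x, y)$. The structural feature we exploit is that the bottom vertex is visited \emph{between} the two visits to the top vertex; the up--down--up shape of the gadget is precisely what makes the trapping argument below go through.

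First I would fix reparametrisations $\phi_1, \phi_2$ realising $\fr(\ell, \g_\delta(p)) \leq \delta$. Since $\phi_2$ is continuous and non-decreasing and attains the vertex values $2, 3, 4$ in that order, there are common parameters $\bar t_2 \leq \bar t_3 \leq \bar t_4 \in [0, 1]$ at which the gadget sits at the first top vertex, the bottom vertex, and the second top vertex, respectively. Setting $t_1 = \phi_1(\bar t_2)$, $t^* = \phi_1(\bar t_3)$, and $t_2 = \phi_1(\bar t_4)$, monotonicity of $\phi_1$ gives $t_1 \leq t^* \leq t_2$, while the width bound gives $\lVert \ell(t_1) - (x, y + \delta)\rVert \leq \delta$, $\lVert \ell(t^*) - (x, y - \delta)\rVert \leq \delta$, and $\lVert \ell(t_2) - (x, y + \delta)\rVert \leq \delta$.

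Next I would read off $y$-coordinates. A point within distance $\delta$ of the top vertex $(x, y + \delta)$ has $y$-coordinate at least $y$, so $\ell(t_1)$ and $\ell(t_2)$ both lie on or above the horizontal line $\{y' = y\}$; a point within distance $\delta$ of the bottom vertex $(x, y - \delta)$ has $y$-coordinate at most $y$, so $\ell(t^*)$ lies on or below that line. Here I crucially use that $\ell$ is a single segment: because $t_1 \leq t^* \leq t_2$, the point $\ell(t^*)$ is a convex combination of $\ell(t_1)$ and $\ell(t_2)$, so its $y$-coordinate is a convex combination of two values that are each at least $y$, hence is itself at least $y$. Combined with the upper bound of $y$, this pins the $y$-coordinate of $\ell(t^*)$ to be exactly $y$.

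Finally, writing $\ell(t^*) = (x', y)$ and substituting into $\lVert \ell(t^*) - (x, y - \delta)\rVert \leq \delta$ yields $(x' - x)^2 + \delta^2 \leq \delta^2$, forcing $x' = x$, so $\ell(t^*) = p$ and therefore $\ell$ passes through $p$. I expect the only genuine subtlety to be the convex-combination step: it is exactly where the hypothesis that $\ell$ is a segment (rather than an arbitrary curve) is indispensable, since it is what lets the two top constraints trap the bottom-matched point on the line $\{y' = y\}$, after which the bottom constraint can pin its $x$-coordinate.
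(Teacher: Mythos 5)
Your proof is correct and follows essentially the same route as the paper's: match points of $\ell$ in order to the top, bottom, and top vertices of $\g_\delta(p)$, use the width bound to trap the bottom-matched point on the line $\{y' = y\}$, and then force that point to equal $p$. The only difference is cosmetic\dsh your convex-combination step handles horizontal and non-horizontal segments uniformly, whereas the paper splits into those two cases (and in fact your version is slightly tighter, since the paper's claim that all three matched points coincide in the non-horizontal case is more than monotonicity actually yields, though only the coincidence of two of them is needed).
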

\begin{proof}
In order, $\g_\delta(p)$ visits the points $(x, y + \delta)$, $(x,y - \delta)$, and $(x,y + \delta)$.
Let $a$, $b$, $c$ be the points from $\ell$ which get mapped to these respective points under an
optimal \Frechet mapping.
If the \Frechet distance is at most $\delta$, then the $y$-coordinate of $a$ and $c$ must be at
least $y$ and the $y$-coordinate of $b$ must be at most $y$.
This implies that if $\ell$ is non-horizontal then $a = b = c$.
However, if $a = b = c$, then this point must be $p$ itself, as $p$ is the only point with distance
at most $\delta$ from both $(x, y + \delta)$ and $(x, y - \delta)$.
If $\ell$ is horizontal, then one again concludes $a = b = c = p$, as this is the only point on a
horizontal segment matching $(x, y + \delta)$ and $(x, y - \delta)$.  
\end{proof}

For our uncertain curve to be $Y$-respecting, it must pass through all points of the form
$((2i + 1/2)\alpha, 0)$.
This condition is satisfied by the lemma above by placing a $\delta$ gadget at each such point.
The second condition of a $Y$-respecting curve is that it passes through $((2i - 1/2)\alpha, 0)$ or
$((2i - 1/2)\alpha, -y_i)$.
This condition is much harder to encode, and requires putting several $\delta$ gadgets together to
create a composite gadget, which we now describe. 

\begin{definition}\deflab{ulgadgets}
For any point $p = (x,y)\in \R^2$ and value $\delta > 0$, let $p^l_\delta = (x - \delta/2, y)$ and
$p^r_\delta = (x + \delta/2, y)$.
Define the \emph{$\delta$ lower composite gadget} at $p$, denoted $\lcg_\delta(p)$, to be the curve
$\g_\delta(p) \concat p^r_\delta \concat \g_\delta(p) \concat p^l_\delta \concat p^r_\delta$.
See \figref{lcg}.
Define the \emph{$\delta$ upper composite gadget} at $q$, denoted $\ucg_\delta(q)$, to be the curve
$\g_\delta(q) \concat q^l_\delta \concat \g_\delta(q)$.
See \figref{ucg}.
Define the \emph{$\delta$ composite gadget} of $p$ and $q$, denoted $\cg_\delta(p,q)$, to be the
curve $\lcg_\delta(p)$ followed by $\ucg_\delta(q)$:
$\lcg_{\delta}(p) \concat \ucg_{\delta}(q)$. 
\end{definition}
To use this composite gadget we centre the lower gadget at height $-y_i$, and the upper gadget
directly above it at height zero.
As the two gadgets are on top of each other, ultimately we require our uncertain curve to go back
and forth once between consecutive vertical line segments, for which we have the following key
property.

\begin{lemma}\lemlab{three}
Let $p = (x_p, -y_p)$ and $q = (x_p, 0)$ be points in $\R^2$.
Let $\curveB = \langle a, b, c, d\rangle$ be a three-segment curve such that $b_x > x_p + \delta$
and $c_x < x_p-\delta$.
If $\fr(\curveB, \cg_\delta(p, q)) \leq \delta$, then:
\begin{enumerate}[(i)]
    \item the segment $ab$ must pass through $p$,
    \item the segment $cd$ must pass through $q$, and 
    \item the segment $bc$ must either pass through $p$ or through $q$.
\end{enumerate}
\end{lemma}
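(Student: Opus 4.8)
The plan is to fix an optimal reparametrisation realising $\fr(\curveB, \cg_\delta(p,q)) \le \delta$ and to track how the four embedded point gadgets of $\cg_\delta(p,q)$ are covered by the three segments $ab$, $bc$, $cd$. Unfolding the definition of the composite gadget, $\cg_\delta(p,q)$ contains, in this order, four copies of the basic gadget: $G_1 = \g_\delta(p)$ and $G_2 = \g_\delta(p)$ (inside $\lcg_\delta(p)$), then $G_3 = \g_\delta(q)$ and $G_4 = \g_\delta(q)$ (inside $\ucg_\delta(q)$). Along the way it attains its extreme horizontal coordinate $x_p + \delta/2$ exactly twice (at the two copies of $p^r_\delta$, at parameters $r_1 < r_2$) and the coordinate $x_p - \delta/2$ exactly twice (at $p^l_\delta$ and $q^l_\delta$, at parameters $\ell_1 < \ell_2$), with overall order $r_1 < \ell_1 < r_2 < \ell_2$; moreover $G_1$ precedes $r_1$, $G_2$ lies between $r_1$ and $\ell_1$, $G_3$ lies between $r_2$ and $\ell_2$, and $G_4$ follows $\ell_2$. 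The quantitative facts driving everything are that every gadget point has $x$-coordinate in $[x_p - \delta/2,\, x_p + \delta/2]$, whereas $b_x > x_p + \delta$ and $c_x < x_p - \delta$.

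The first step is a sidedness observation. Since matched points differ by at most $\delta$ in each coordinate, the vertex $b$ can only be matched to gadget points with $x$-coordinate exceeding $x_p$, and these form the two disjoint parameter intervals $R_1 \ni r_1$ and $R_2 \ni r_2$ on which the gadget lies strictly right of $x_p$. Because the matching is monotone and continuous, the sub-arc of the gadget matched to $b$ is connected and hence contained in a single one of $R_1, R_2$; symmetrically, the sub-arc matched to $c$ lies in a single one of the two intervals $L_1 \ni \ell_1$, $L_2 \ni \ell_2$ on which the gadget lies strictly left of $x_p$. I will write $b \mapsto R_i$ and $c \mapsto L_j$ for these choices.

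For (i) and (ii) I use that $G_1$ precedes $r_1$ and $G_4$ follows $\ell_2$. The start $a$ of $\curveB$ is matched to the start of the gadget, which lies in $G_1$, while whatever is matched to $b$ lies in $R_1 \cup R_2$, strictly after $G_1$; hence the preimage of $G_1$ is a connected sub-arc of $\curveB$ beginning at $a$ and ending before $b$, i.e.\ a sub-segment of $ab$. Restricting the optimal matching to this sub-arc and to $G_1 = \g_\delta(p)$ certifies \Frechet distance at most $\delta$ between a sub-segment of $ab$ and $\g_\delta(p)$, so \lemref{passpoint} forces that sub-segment, and thus the line segment $ab$, to pass through $p$; this is (i). The mirror argument on $G_4$ and the endpoint $d$ yields (ii).

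For (iii) I combine monotonicity with the ordering $r_1 < \ell_1 < r_2 < \ell_2$. As $b$ precedes $c$ on $\curveB$, its gadget image precedes that of $c$, leaving only the cases $(b,c)\mapsto(R_1,L_1)$, $(R_1,L_2)$, $(R_2,L_2)$; the combination $(R_2,L_1)$ is excluded since $\ell_1 < r_2$. In the case $(R_1,L_1)$ the gadget arc $G_2 \in (r_1,\ell_1)$ lies strictly between the images of $b$ and $c$, so its preimage is a sub-segment of $bc$ and \lemref{passpoint} makes $bc$ pass through $p$; in the case $(R_2,L_2)$ the same reasoning applied to $G_3 \in (r_2,\ell_2)$ makes $bc$ pass through $q$. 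The step I expect to be the main obstacle is the remaining case $(R_1,L_2)$: there both $G_2$ and $G_3$ lie between the images of $b$ and $c$, so the argument would force $bc$ to pass through both $p = (x_p,-y_p)$ and $q = (x_p,0)$. These points share the $x$-coordinate $x_p$, so any segment through both must be vertical, yet $b_x > x_p + \delta > x_p - \delta > c_x$ makes $bc$ non-vertical, a contradiction; hence this case does not occur. Thus in every admissible case $bc$ passes through $p$ or $q$, establishing (iii). The delicate points to nail down are the connectedness/monotonicity claims confining each point-gadget's preimage to a single segment of $\curveB$, and this final non-verticality contradiction.
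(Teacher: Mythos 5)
Your proof is correct, and its skeleton is the same as the paper's: both arguments rest on \lemref{passpoint}, on the fact that every copy of $\g_\delta(p)$ and $\g_\delta(q)$ lies on the vertical line $x = x_p$ and therefore cannot be matched to $b$ or $c$ (whose $x$-coordinates are more than $\delta$ away), and on monotonicity forcing each gadget copy entirely into one of the segments $ab$, $bc$, $cd$; for parts (i) and (ii) the two proofs coincide. The difference is in part (iii). The paper argues that the portion of $\cg_\delta(p,q)$ matched to $bc$ starts right of the line and ends left of it, hence contains a point on the line, and asserts this point lies on one of the two middle gadgets, which then maps inside $bc$ so that \lemref{passpoint} applies. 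You instead classify which right excursion ($R_1$ or $R_2$) receives $b$ and which left excursion ($L_1$ or $L_2$) receives $c$, rule out $(R_2, L_1)$ by monotonicity, extract $G_2$ (for $(R_1, L_1)$) or $G_3$ (for $(R_2, L_2)$) inside $bc$, and dispose of $(R_1, L_2)$ by a non-verticality contradiction. Your version is longer but tighter: the paper's assertion glosses over the possibility that the crossing point is the interior point of the connector $p^l_\delta p^r_\delta$, which is spatially at $p$ but parametrically on no gadget; in that scenario the matched portion in fact spans both $G_2$ and $G_3$, which is precisely your case $(R_1, L_2)$, handled explicitly. Two minor remarks: in that last case the contradiction is not actually needed, since passing through both $p$ and $q$ already satisfies the disjunction in (iii); and the non-verticality argument tacitly assumes $p \neq q$ (i.e.\ $y_p \neq 0$), which the lemma statement does not impose, though it always holds where the lemma is applied because the $y_i$ are positive integers.
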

\begin{proof}
Recall from \defref{ulgadgets} that $\cg_\delta(p, q) = \g_\delta(p) \concat p^r_\delta \concat
\g_\delta(p) \concat p^l_\delta \concat p^r_\delta \concat \g_\delta(q) \concat q^l_\delta \concat
\g_\delta(q)$, and that the gadgets $\g_\delta(p)$ and $\g_\delta(q)$ lie entirely on the vertical
line at $x_p = x_q$.
Thus, as $b_x > x_p + \delta$ and $c_x < x_p - \delta$, each occurrence of $\g_\delta(p)$ or
$\g_\delta(q)$ in $\cg_\delta(p, q)$ must map either entirely before or after $b$, and similarly
entirely before or after $c$.

Moreover, as $\cg_\delta(p, q)$ starts with $\g_{\delta}(p)$ and $b_x > x_p + \delta$, this implies
that $\g_\delta(p)$ maps to the segment $ab$, which by \lemref{passpoint} implies that $ab$ passes
through $p$.
Similarly, as $\cg_\delta(p, q)$ ends with $\g_\delta(q)$ and $c_x < x_q - \delta$, $cd$ passes
through $q$.

Finally, the portion of $\cg_{\delta}(p, q)$ that maps to the segment $bc$ must contain a point on
the vertical line at $x_p = x_q$ (since $b_x > x_p + \delta$ and $c_x < x_p - \delta$).
By the construction of $\cg_{\delta}(p, q)$, this point must lie on one of the (middle)
$\g_\delta(p)$ or $\g_\delta(q)$ gadgets.
As we already argued, such gadgets must map entirely to one side of $b$ or $c$, so
\lemref{passpoint} implies that $bc$ must pass through $p$ or $q$.
\end{proof}
As $bc$ shares an endpoint with $ab$ and $cd$, the following corollary is immediate.
It will be used to argue that while our uncertain curve goes back and forth between consecutive
vertical lines, it defines an $\alpha$-regular curve.
(See \figref{solutions} used for \thmref{mainhard}.)
\begin{corollary}\corlab{same}
If $\fr(\curveB, \cg_\delta(p,q)) \leq \delta$, then either $ab$ and $bc$ are on the same line,
or $cd$ and $bc$ are on the same line.
\end{corollary}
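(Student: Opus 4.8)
The plan is to derive the corollary directly from the three conclusions of \lemref{three}, doing essentially no new geometry. Recall that under the hypothesis $\fr(\curveB, \cg_\delta(p,q)) \leq \delta$ that lemma gives us (i) $ab$ passes through $p$, (ii) $cd$ passes through $q$, and (iii) $bc$ passes through $p$ or through $q$. Since $\curveB = \langle a,b,c,d\rangle$, the segment $bc$ shares the endpoint $b$ with $ab$ and the endpoint $c$ with $cd$, so I would simply split on which of $p,q$ the segment $bc$ passes through and combine with (i) or (ii) accordingly.

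Before the case split I would record the one fact that actually powers the argument: the shared endpoints are strictly separated from $p$ and $q$. The hypotheses of \lemref{three} require $b_x > x_p + \delta$ and $c_x < x_p - \delta$, and since $p$ and $q$ both have $x$-coordinate $x_p$, this yields $b \neq p$ and $c \neq q$. This is the crux: a line segment that passes through some point $r$ and has an endpoint different from $r$ must lie on the unique line determined by that endpoint and $r$.

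Now suppose $bc$ passes through $p$. Then $ab$ and $bc$ both pass through $p$ and share the endpoint $b$, and because $b \neq p$ the line through $b$ and $p$ is unique; hence $a,b,p$ are collinear and $b,c,p$ are collinear on that same line, so $ab$ and $bc$ lie on one line. Symmetrically, if $bc$ passes through $q$, then $bc$ and $cd$ both pass through $q$ and share the endpoint $c \neq q$, so both lie on the line through $c$ and $q$. In either case one of the two asserted collinearities holds, which is exactly the statement of \corref{same}.

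I do not expect any genuine obstacle here, as the corollary is immediate once \lemref{three} is available. The only point meriting care\,---\,and the sole place where a careless argument could quietly fail\,---\,is verifying that the shared endpoint is distinct from the point it passes through, so that ``common point plus shared endpoint'' really pins down a single line rather than leaving a degree of freedom. That distinctness is precisely what the strict horizontal separations $b_x > x_p + \delta$ and $c_x < x_p - \delta$ guarantee, which is why those strict inequalities were built into the hypotheses of \lemref{three}.
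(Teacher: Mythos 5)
Your proposal is correct and follows exactly the route the paper takes: the paper derives \corref{same} as ``immediate'' from the three conclusions of \lemref{three} together with the observation that $bc$ shares an endpoint with each of $ab$ and $cd$, which is precisely your case split. Your additional check that $b \neq p$ and $c \neq q$ (forced by $b_x > x_p + \delta$ and $c_x < x_p - \delta$) is the detail the paper leaves implicit, and you are right that it is what makes the shared endpoint plus common point pin down a unique line.
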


The following lemma acts as a rough converse of \lemref{three}.
\begin{lemma}\lemlab{otherdirection}
Let $p = (x_p, -y_p)$ and $q=(x_p, 0)$ be points in $\R^2$, with $y_p \leq \delta/4$.
Let $\curveB = \langle p, b, c, q\rangle$ be a curve such that
$x_p + \delta < b_x \leq x_p + 1.1\delta$, $x_p - 1.1\delta \leq c_x < x_p - \delta$, and
$-\delta/2 \leq b_y, c_y \leq \delta/2$.
If $bc$ passes through either $p$ or $q$, then $\fr(\curveB, \cg_\delta(p,q)) \leq \delta$.
\end{lemma}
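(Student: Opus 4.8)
The plan is to prove the bound directly by exhibiting an explicit \Frechet matching (a monotone pair of reparametrisations) between $\curveB = \langle p, b, c, q\rangle$ and $\cg_\delta(p,q)$ of width at most $\delta$; this is the natural converse to \lemref{three}, where instead of deducing structure from a small distance we build a small distance from the assumed structure. I would treat the two cases — $bc$ passing through $p$, and $bc$ passing through $q$ — separately, as they are mirror images of one another.

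First I would record three elementary facts. (i) Every vertex of the inner gadgets $\g_\delta(p)$ and $\g_\delta(q)$ lies on the vertical line $x = x_p$ and within distance $\delta$ of $p$, respectively $q$; the only horizontal excursions of $\cg_\delta(p,q)$ are the three points $p^r_\delta, p^l_\delta, q^l_\delta$, each at horizontal offset $\delta/2$. (ii) Because $y_p \leq \delta/4$, the points $p$ and $q$ lie within $\delta/4$ of each other, so a reparametrisation that holds $\curveB$ fixed at $p$ keeps every vertex of a $\g_\delta(p)$ block and every $\pm\delta/2$ excursion point within $\delta$, and holding $\curveB$ at $q$ likewise absorbs an entire $\g_\delta(q)$ block — here the extreme vertices $(x_p, \pm\delta)$ sit at distance exactly $\delta$ from $q$, which is still admissible. (iii) From $x_p + \delta < b_x \leq x_p + 1.1\delta$, $x_p - 1.1\delta \leq c_x < x_p - \delta$, and $|b_y|, |c_y| \leq \delta/2$, a short computation shows $b$ lies within $\delta$ of $p^r_\delta$ and $c$ lies within $\delta$ of both $p^l_\delta$ and $q^l_\delta$ (the largest such distance stays under $\delta$).

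With these facts, I would assemble the matching phase by phase for the case $bc$ through $p$. Hold $\curveB$ at its start vertex $p$ while $\cg$ runs the first $\g_\delta(p)$; couple the edge $pb$ with the edge of $\cg$ out to $p^r_\delta$, so that $b$ meets $p^r_\delta$; couple the first half of $bc$ (from $b$ back to the crossing point $p$) with the return edge of $\cg$ to $p$; hold $\curveB$ at the crossing point $p$ while $\cg$ runs the second $\g_\delta(p)$ — this is exactly where the hypothesis that $bc$ passes through $p$ is used, as it supplies the second visit to $p$; couple the rest of $bc$ (from $p$ out to $c$) with the edge of $\cg$ to $p^l_\delta$, so that $c$ meets $p^l_\delta$; hold $\cg$ fixed at $p^l_\delta$ while $\curveB$ traverses $cq$; and finally hold $\curveB$ at its terminal vertex $q$ while $\cg$ performs everything that remains — the remaining excursion $p^r_\delta$, both $\g_\delta(q)$ blocks, and the $q^l_\delta$ between them. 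The case $bc$ through $q$ is the reflection of this: the opening pause at $p$ now absorbs both $\g_\delta(p)$ blocks and both of $\cg$'s $p^r_\delta, p^l_\delta$ excursions, $b$ meets the second $p^r_\delta$, $\curveB$ pauses at the crossing point $q$ to run the first $\g_\delta(q)$, $c$ meets $q^l_\delta$, and the terminal pause at $q$ runs the second $\g_\delta(q)$.

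It then remains to bound the width of each phase by $\delta$. For a phase where $\curveB$ is held fixed, the distance is from one point to a segment of $\cg$; since that distance is a convex function along the segment, checking only the endpoints suffices, and all endpoints are covered by facts (i)--(iii). For a phase where both curves move, I would couple the two edges affinely, so that the difference of coupled points is $t\,v$ or $(1-t)\,v$ for a fixed vector $v$ with $\lVert v\rVert < \delta$, giving width below $\delta$. The main obstacle, and the only genuinely delicate point, is monotonicity: the lower composite gadget sweeps its horizontal excursions in the order right, centre, left, right, while $\curveB$ visits the right region and the left region only once each. The resolution — and the reason the hypothesis is needed — is to route all the centre and wrong-side excursions into the pause that $\curveB$ makes at the twice-visited point $p$ (or $q$), whose $\pm\delta/2$ offsets keep them within $\delta$, rather than trying to follow them with the genuine left and right swings of $\curveB$.
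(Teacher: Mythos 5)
Your proof is correct and takes essentially the same route as the paper's: both exhibit an explicit monotone matching built from the same alignment points (pause at the start and end of $\curveB$ to absorb the outer $\g_\delta(p)$ and $\g_\delta(q)$ blocks, align $b$ with $p^r_\delta$ and $c$ with $p^l_\delta$ resp.\@ $q^l_\delta$, and route the middle gadget's excursions into the pause at the twice-visited crossing point), with the same case split on whether $bc$ passes through $p$ or $q$. The only cosmetic differences are that you couple some edge pairs affinely where the paper uses purely alternating stand-still moves (your ``difference equals $t\,v$ or $(1-t)\,v$'' justification needs the slightly more general convexity-of-the-norm argument in the one phase of the second case where the coupled edges share neither endpoint, e.g.\@ $pb$ against $p^l_\delta \concat p^r_\delta$, but that is immediate), and that you verify the endpoint distances explicitly, which the paper leaves implicit.
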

\begin{proof}
Recall that $\cg_\delta(p, q) = \g_\delta(p) \concat p^r_\delta \concat \g_\delta(p) \concat
p^l_\delta \concat p^r_\delta \concat \g_\delta(q) \concat q^l_\delta \concat \g_\delta(q)$.
First, observe that all points on the prefix $\g_\delta(p) \concat p_\delta^r$ of $\cg_\delta(p, q)$
are at most $\delta$ away from $p$, and thus can all be mapped to the starting point of $\curveB$.
Similarly, all points on the suffix $q_\delta^l \concat \g_\delta(q)$ of $\cg_\delta(p, q)$ are at
most $\delta$ away from $q$, and thus can all be mapped to the ending point of $\curveB$.
Thus, it suffices to argue that $\fr(\curveB, \curveA) \leq \delta$, where
$\curveA = p^r_\delta \concat \g_\delta(p) \concat p^l_\delta \concat p^r_\delta \concat
\g_\delta(q) \concat q^l_\delta$.

It is easiest to describe the rest of the mapping in a similar manner, that is, as an alternating
sequence of moves, where we stand still at a single point on one curve while moving along a
contiguous subcurve from the other curve, and then switching curves.
We now describe this sequence, which differs based on whether $bc$ passes through $p$ or $q$.
Ultimately, the mappings will be valid, since for each move, all points on the subcurve will have
distance at most $\delta$ to the fixed point on the other curve.
Thus, we now simply describe the moves without reiterating this property (distance at most $\delta$)
which is validating each each move.

First suppose that $bc$ passes through $p$, in which case $\curveB = \langle p, b, p, c, q\rangle$.
In this case, we first map the prefix $\langle p, b, p\rangle$ of $\curveB$ to $p^r_\delta$.
Next, we map the prefix $p_\delta^r \concat \g_\delta(p) \concat p_\delta^l$ of $\curveA$ to $p$.
Then we map the suffix $\langle p, c, q\rangle$ of $\curveB$ to $p_\delta^l$.
Finally, we map the suffix $p^l_\delta \concat p^r_\delta \concat \g_\delta(q) \concat q^l_\delta$
of $\curveA$ to $q$.

Now suppose that $bc$ passes through $q$, in which case $\curveB = \langle p, b, q, c, q\rangle$.
In this case, we first map the prefix $p^r_\delta \concat \g_\delta(p) \concat p^l_\delta \concat
p^r_\delta$ of $\curveA$ to $p$.
Next, we map the prefix $\langle p, b,  q\rangle$ of $\curveB$ to $p^r_\delta$.
Then we map the suffix $p^r_\delta \concat \g_\delta(q) \concat q_{\delta}^l$ of $\curveA$ to $q$.
Finally, map the suffix $\langle q, c, q\rangle$ of $\curveB$ to $q_\delta^l$.
\end{proof}

\begin{figure}[tpb]
\centering
\begin{subfigure}[b]{0.52\linewidth}
    \centering
    \vspace{.21in}
    \includegraphics[width=\linewidth]{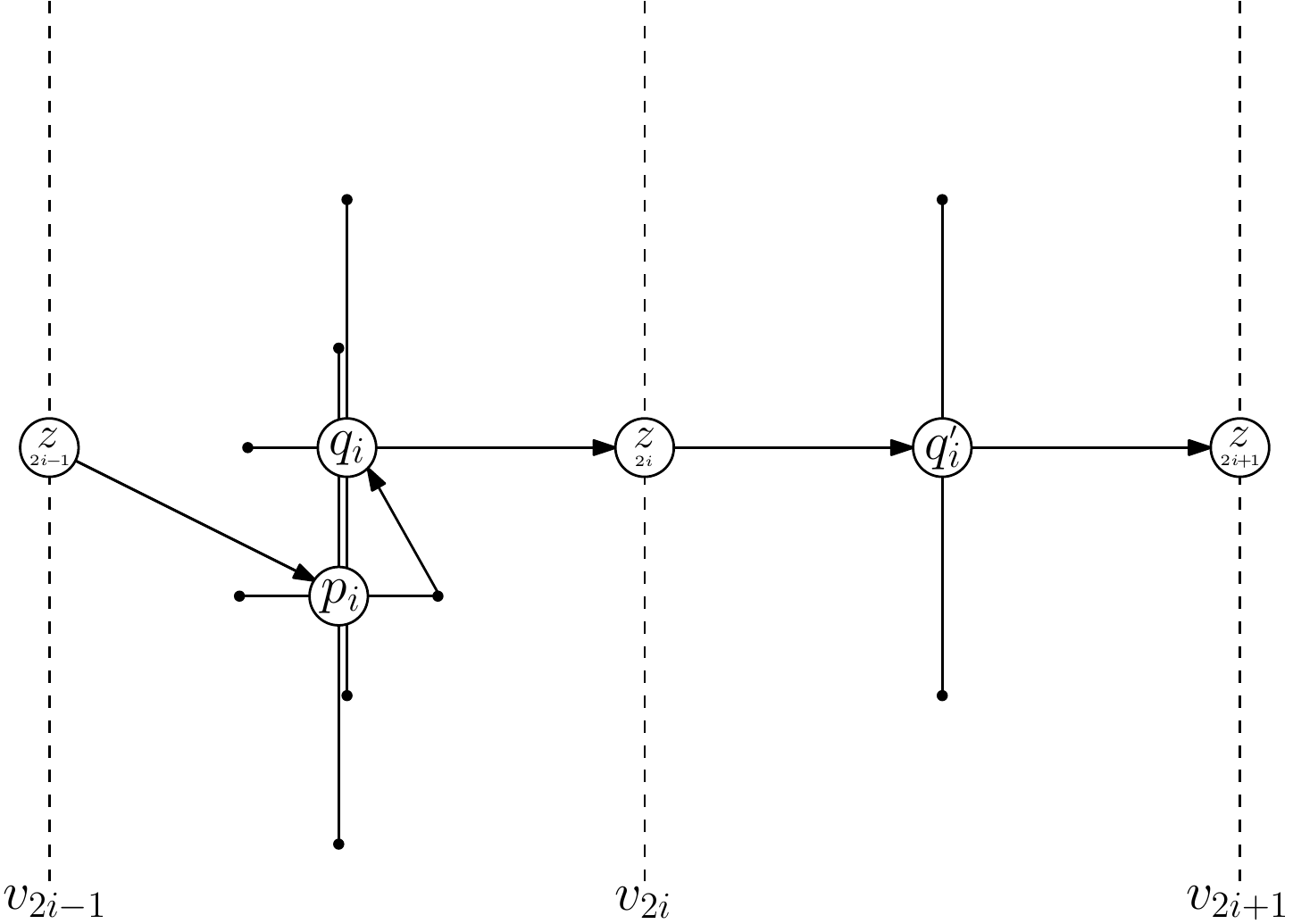}
    \vspace{.12in}
    \caption{Pictorial representation of $\lambda_i$.}
    \figlab{pair}
\end{subfigure}
\hspace{.07in}
\begin{subfigure}[b]{0.43\linewidth}
    \centering
    \includegraphics[width=.932\linewidth]{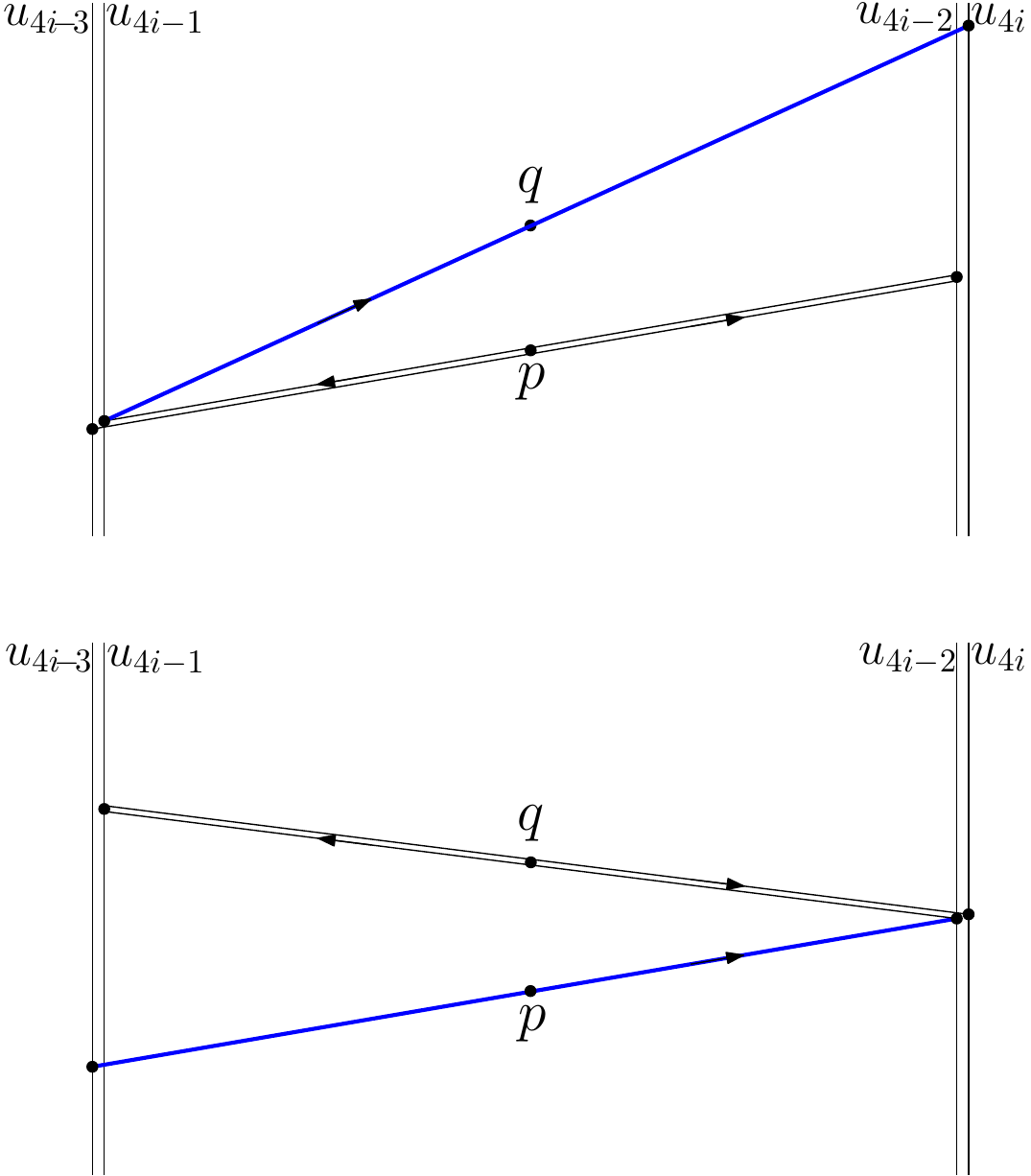}
    \caption{The two solutions.}
    \figlab{twosol}
\end{subfigure}
\caption{On the left, $\lambda_i$.
On the right, the two possible solutions with \Frechet distance at most $\delta$.
The top (resp.\@ bottom) corresponds to an $\alpha$-regular curve passing through $q$ (resp.\@ $p$).}
\figlab{solutions}
\end{figure}

\begin{theorem}\thmlab{mainhard}
\textsc{Lower Bound Continuous \Frechet} (\probref{decision}) is \np-hard, even when the uncertain
regions are all equal-length vertical segments with the same height and the same horizontal distance
(to the left or right) between adjacent uncertain regions.
\end{theorem}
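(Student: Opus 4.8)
The plan is to reduce from \textsc{RR-Curve} (\probref{intermediate}), which is \np-hard by \thmref{curvehard}. Given an instance $Y=\{y_1,\dots,y_n\}$, $\alpha$, $\tau$, I will produce an instance $\delta$, $\curveA$, $\U$ of \probref{decision} in which $\U$ is exactly the sequence of vertical segments described above, so that by the preceding observation every $Y$-respecting $\alpha$-regular curve is a realisation of $\U$, while $\curveA$ is designed to force every realisation achieving \Frechet distance at most $\delta$ to be such a curve. Set $M=\sum_i y_i$ and fix $\delta$ large enough that $y_i\le\delta/4$ for all $i$ (so that \lemref{otherdirection} applies), e.g.\ $\delta=4\max_i y_i$; then choose the spacing $\alpha=\alpha(Y)$ as a suitable multiple of $\delta$ so that the back-and-forth excursions permitted by \lemref{three} and \lemref{otherdirection} (of horizontal size at most $1.1\delta$) of neighbouring choice gadgets do not overlap. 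This is precisely why \probref{intermediate} was stated for an arbitrary $\alpha(Y)$: the reduction needs the freedom to pick $\alpha$ as a function of $\delta$, hence of $Y$.

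The curve $\curveA$ is assembled from the gadgets of \defref{ulgadgets}. It begins by pinning $\curveB_1=(\alpha,0)$: a $\g_\delta$ gadget whose passage point lies at $x=\alpha$ forces, via \lemref{passpoint}, the first realisation vertex to that point. For each $i\in[n]$ it then contains a composite gadget $\cg_\delta(p_i,q_i)$ with $p_i=((2i-1/2)\alpha,-y_i)$ and $q_i=((2i-1/2)\alpha,0)$, encoding the binary choice in the second $Y$-respecting condition, followed by a plain $\g_\delta$ gadget at $((2i+1/2)\alpha,0)$ enforcing the first condition. It ends with a $\g_\delta$ gadget at $((2n+1)\alpha,2\tau)$, which by \lemref{passpoint} forces the last segment of $\curveB$ through that point and therefore pins $\curveB_{2n+1}=((2n+1)\alpha,2\tau)$ exactly. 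The $\U$-segments around each $\cg_\delta(p_i,q_i)$ are placed at the common spacing $\alpha$, alternately to the left and right, so that the relevant four consecutive realisation vertices form precisely the three-segment curve $\langle a,b,c,d\rangle$ to which \lemref{three} and \lemref{otherdirection} refer; the two admissible configurations are the ones drawn in \figref{solutions}, corresponding to the back-and-forth passing through $q_i$ (height $0$) or through $p_i$ (height $-y_i$).

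For correctness I argue both directions. If the \textsc{RR-Curve} instance is a yes-instance, take a $Y$-respecting $\alpha$-regular curve $\curveB$ with $\curveB_1=(\alpha,0)$; by \lemref{geometry} its last vertex is $((2n+1)\alpha,2\sum_{i\in I}y_i)=((2n+1)\alpha,2\tau)$, so $\curveB$ is a realisation of $\U$, and matching each $\g_\delta$ gadget to its segment and each $\cg_\delta$ gadget via \lemref{otherdirection} yields $\fr(\curveA,\curveB)\le\delta$, hence $\frmin(\curveA,\U)\le\delta$. Conversely, suppose some realisation $\curveB$ has $\fr(\curveA,\curveB)\le\delta$. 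The endpoint gadgets pin $\curveB_1$ and $\curveB_{2n+1}$; \lemref{passpoint} applied to each first-condition gadget yields passage through $((2i+1/2)\alpha,0)$; and \lemref{three} applied to each composite gadget yields passage through $p_i$ or $q_i$, while \corref{same} forces two of the three outgoing segments to be collinear, so the forced back-and-forth ``unfolds'' into a genuine $\alpha$-regular curve meeting the second $Y$-respecting condition. Thus $\curveB$ is a $Y$-respecting $\alpha$-regular curve with $\curveB_1=(\alpha,0)$, and \lemref{geometry} gives $2\sum_{i\in I}y_i=2\tau$, a yes-instance of \textsc{RR-Curve}. All coordinates are polynomial in the input and bounded by $O(M)$, so the reduction is polynomial, and $\U$ uses only equal-length, equally spaced vertical segments, giving the claimed restricted form of hardness.

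The main obstacle is the global \Frechet argument in the converse direction: \lemref{passpoint}, \lemref{three}, and \corref{same} are local statements about a single gadget matched to a single (three-)segment, whereas $\fr(\curveA,\curveB)\le\delta$ supplies only one monotone matching of the two whole curves. I must show that the horizontal separation guaranteed by the choice of $\alpha$ forces any such matching to align each gadget of $\curveA$ with exactly its intended handful of consecutive $\U$-vertices, so that the local lemmas can be invoked independently, and, crucially, that \corref{same}'s collinearity is exactly what certifies that the back-and-forth realisation represents a well-defined $\alpha$-regular (and $Y$-respecting) curve rather than some spurious path. Making the segment spacing compatible simultaneously with the $(\delta,1.1\delta]$ excursion window of \lemref{otherdirection} and with non-interference of adjacent gadgets is the delicate quantitative point.
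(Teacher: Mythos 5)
Your proposal follows the paper's route (reduction from \textsc{RR-Curve} via the gadgets of \defref{ulgadgets}), but beyond the converse-direction alignment argument that you yourself leave open, two steps that you present as settled contain genuine errors, both in the \emph{forward} direction of the reduction. First, the threshold $\delta = 4\max_i y_i$ is too small. By \corref{bounded}, a $Y$-respecting $\alpha$-regular curve has vertices at heights up to $\pm 2M$ with $M = \sum_i y_i$, whereas $\curveA$ keeps returning to the $x$-axis: it contains the vertices $z_j = (j\alpha, 0)$ and the gadgets $\g_\delta(q_i')$ at height $0$. Take $y_1 = \dots = y_n = K$ and $\tau = nK$: this is a yes-instance of \textsc{RR-Curve}, but its unique witness has a vertex at height $2(n-1)K$ on the vertical line through $z_{2n-1}$, so the intended realisation passes at distance more than $4K$ from the point $z_{2n-1}$ of $\curveA$, and since your own gadget arguments force any realisation within distance $\delta$ to exhibit exactly this drift, the constructed \probref{decision} instance is a no-instance. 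Yes-instances thus map to no-instances. The threshold must scale with $M$, not $\max_i y_i$; the paper takes $\delta = 4M$ (with $\alpha = 2.1\delta$), which still satisfies the hypothesis $y_i \leq \delta/4$ of \lemref{otherdirection} and also absorbs the vertical drift in the equal-$x$-coordinate part of the matching.

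Second, your two descriptions of $\U$ are mutually inconsistent, and the forward direction fails under either one. If $\U$ is literally $\langle v_1, \dots, v_{2n+1}\rangle$ (so that, as you claim, every $Y$-respecting $\alpha$-regular curve is a realisation), then every realisation is $x$-monotone, and no $x$-monotone curve has \Frechet distance at most $\delta$ to $\cg_\delta(p_i, q_i)$: the portions matched to the initial $\g_\delta(p_i)$ and to the final $\g_\delta(q_i)$ must lie on the single subsegment of the realisation crossing the line $x = (2i - 1/2)\alpha$ (since $\alpha/2 > \delta$), and \lemref{passpoint} would force that one subsegment through both $p_i$ and $q_i$, which is impossible when $y_i > 0$. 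If instead consecutive regions of $\U$ alternate left and right (the paper's $\ur_{4i-3} = v_{2i-1}$, $\ur_{4i-2} = v_{2i}$, $\ur_{4i-1} = v_{2i-1}$, $\ur_{4i} = v_{2i}$, giving $4n+1$ regions), which is what \lemref{three} and \corref{same} require in the converse direction, then a $Y$-respecting curve with $2n+1$ vertices is \emph{not} a realisation of $\U$, so the sentence ``so $\curveB$ is a realisation of $\U$'' is false, and \lemref{otherdirection} cannot be applied to $\curveB$ at all: its relevant subcurves are single segments, not three-segment curves $\langle p, b, c, q\rangle$. What is missing is the lifting step the paper makes explicit: insert for each $i$ the auxiliary vertices $\intrv(p_i)$ and $\rho$ (with $\rho = \curveB_{2i-1}$ or $\intrv(q_i)$ according to whether $\curveB$ passes through $q_i$ or $p_i$) to obtain a $(4n+1)$-vertex realisation $\curveB'$, then apply \lemref{otherdirection} to its subcurves $\langle p_i, \curveB'_{4i-2}, \curveB'_{4i-1}, q_i\rangle$ and match the remaining $x$-monotone portions point-by-point in $x$. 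Without this construction, and with the too-small $\delta$, the forward direction of your reduction does not go through; together with the unproved global-to-local alignment in the converse direction, these are precisely where the real work of the proof lies.
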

\begin{proof}
To prove \np-hardness, we give a reduction from \textsc{RR-Curve}, which is \np-hard by
\thmref{curvehard}.
Let $\alpha(Y)$, $\tau$, $Y = \{y_1, \dots, y_n\}$ be an instance of \textsc{RR-Curve}.
For the reduction we set $\delta = 4M$, where $M = \sum_{i = 1}^n y_i$.
Note that \thmref{curvehard} allows us to choose how to set $\alpha(Y)$, and in particular we set
$\alpha = 2.1 \delta = 8.4 M$.
(More specifically, the properties we need are that $\alpha > 2\delta$ and $\delta \geq 4M$.)
We now describe how to construct $\U$ and $\curveA$.

Let $V = \{v_1, \dots, v_{2n + 1}\}$ be a set of vertical line segments where all upper
(resp.\@ lower) endpoints of the segments have height $2M$ (resp.\@ $-2M$), and for all $i$, the
$x$-coordinate of $v_i$ is $i \alpha$.
Let $\U = \langle \ur_1, \dots, \ur_{4n + 1}\rangle$ be the uncertain curve such that
$\ur_{4n + 1} = v_{2n+1}$, and for all $1 \leq i \leq n$, $\ur_{4i - 3} = v_{2i - 1}$,
$\ur_{4i - 2} = v_{2i}$, $\ur_{4i - 1} = v_{2i - 1}$, and $\ur_{4i} = v_{2i}$.

For $1 \leq i \leq 2n + 1$, define the points $z_i = (i\alpha, 0)$, and for $1 \leq i \leq n$,
define $q_i = ((2i - 1/2)\alpha, 0)$, $q_i'=((2i + 1/2)\alpha, 0)$, and
$p_i=((2i - 1/2)\alpha, -y_i)$.
For a given value $1 \leq i \leq n$, consider the curve $\lambda_i = z_{2i - 1} \concat
\cg_\delta(p_i, q_i) \concat z_{2i} \concat \g_\delta(q_i')$ (see \figref{pair}).
Let $s = (\alpha, 0)$ and $t = ((2n + 1)\alpha, 2\tau)$.
Then the curve $\curveA$ is defined as
\[\curveA = \g_\delta(s) \concat \lambda_1 \concat \lambda_2 \concat \dots \concat \lambda_{n-1}
\concat \lambda_n \concat \g_{\delta}(t)\,.\]

First, suppose there is a curve $\curveB' = \langle \curveB_1', \dots, \curveB_{4n+1}'\rangle
\Subset \U$ such that $\fr(\curveA, \curveB') \leq \delta$.
Let $\curveB = \langle \curveB_1, \dots, \curveB_{2n + 1}\rangle$ be the curve such that
$\curveB_{2n + 1} = \curveB_{4n + 1}'$, and for all $1 \leq i \leq n$,
$\curveB_{2i - 1} = \curveB_{4i - 3}$ and $\curveB_{2i} = \curveB_{4i}$.
We argue that $\curveB$ is an $\alpha$-regular $Y$-respecting curve with $\curveB_1 = s$ and
$\curveB_{2n + 1} = t$.

Observe that $\curveB$ is $\alpha$-regular, as by the definition of $\U$, $\curveB_i$ is a point on
the vertical segment $v_i$.
Also, as $\curveA$ begins (resp.\@ ends) with $\g_\delta(s)$ (resp.\@ $\g_\delta(t)$), by
\lemref{passpoint}, $\curveB_1 = \curveB_1' = s$ (resp.\@ $\curveB_{2n + 1} = \curveB_{4n + 1}' = t$).
Thus, it remains to argue that $\curveB$ is $Y$-respecting.
To that end, consider the portion $\lambda_i$ of $\curveA$ for some value $i$.

First consider the gadget $\g_\delta(q_i')$ from $\lambda_i$ lying between $z_{2i}$ and $z_{2i+1}$.
By our choice of $\alpha$, this gadget is strictly more than $\delta$ away from both $v_{2i}$ and
$v_{2i + 1}$, and so the portion of $\curveB'$ matched to $\g_\delta(q_i')$ must lie between
$\curveB'_{4i} = \curveB_{2i}$ and $\curveB'_{4i + 1} = \curveB_{2i + 1}$.
Thus, by \lemref{passpoint}, $\curveB$ must pass through $q_i'$.

Now consider the gadget $\cg_\delta(p_i, q_i) = \lcg(p_i) \concat \ucg(q_i)$ from $\lambda_i$ lying
between $z_{2i - 1}$ and $z_{2i}$.
This gadget is strictly more than $\delta$ away from both $v_{2i - 1}$ and $v_{2i}$, implying both
that the portion of $\curveB'$ matched to $\cg_\delta(p_i, q_i)$ lies between $\curveB'_{4i - 3}$
and $\curveB'_{4i}$, and that all three segments in the subcurve from $\curveB'_{4i - 3}$ to
$\curveB'_{4i}$ must in part map to $\cg_\delta(p_i, q_i)$.
Thus, by \lemref{three}, $\curveB'_{4i - 3} \curveB'_{4i - 2}$ passes through $p_i$, and
$\curveB'_{4i - 1}\curveB'_{4i}$ passes through $q_i$.
By \corref{same}, either $\curveB'_{4i - 2} = \curveB'_{4i}$ or $\curveB'_{4i - 3} =
\curveB'_{4i - 1}$, and thus $\curveB'_{4i - 3} \curveB'_{4i} = \curveB_{2i - 1} \curveB_{2i}$
passes through either $p_i$ or $q_i$ (see \figref{twosol}).
Thus, $\curveB$ is $Y$-respecting.

Now suppose that there is an $\alpha$-regular $Y$-respecting curve $\curveB = \langle \curveB_1,
\dots, \curveB_{2n + 1}\rangle$ such that $\curveB_1 = s$ and $\curveB_{2n + 1} = t$.
Let $\intrv(p_i)$ be the intersection with $v_{2i}$ of the line passing through $\curveB_{2i - 1}$
and $p_i$, and let $\intrv(q_i)$ be the intersection with $v_{2i - 1}$ of the line passing through
$\curveB_{2i}$ and $q_i$.
Let $\curveB' = \langle \curveB_1', \dots, \curveB_{4n + 1}'\rangle$ be the curve such that
$\curveB_{4n + 1}' = \curveB_{2n + 1}$, and for all $1 \leq i \leq n$, $\curveB_{4i - 3}' =
\curveB_{2i - 1}$, $\curveB_{4i - 2}' = \intrv(p_i)$, $\curveB_{4i - 1}' = \rho$, and
$\curveB_{4i}' = \curveB_{2i}$, where $\rho = \curveB_{2i - 1}$ if $\curveB$ passes through $q_i$
and $\rho = \intrv(q_i)$ if $\curveB$ passes through $p_i$.
(See \figref{twosol}.)

Let $\midpt(\ell)$ denote the midpoint of a line segment $\ell$.
Observe that by construction $\midpt(\curveB'_{4i - 3} \curveB'_{4i - 2}) = p_i$,
$\midpt(\curveB'_{4i - 1} \curveB'_{4i}) = q_i$, and $\midpt(\curveB'_{4i - 2} \curveB'_{4i - 1}) =
p_i$ (resp.\@ $q_i$) if $\curveB$ passed through $q_i$ (resp.\@ $p_i$).
Let $\gamma_i= \langle p_i, \curveB'_{4i - 2}, \curveB'_{4i - 1}, q_i\rangle$, which by the previous
argument is a subcurve of $\curveB'$.

To argue that $\fr(\curveB', \curveA) \leq \delta$, we now describe how to walk along the curves
$\curveB'$ and $\curveA$ such that at all times the distance between the positions on the respective
curves is at most $\delta$.
Note that $\gamma_i$ satisfies the conditions of \lemref{otherdirection}, implying that
$\fr(\cg_\delta(p_i, q_i), \gamma_i) \leq \delta$, and thus for all $i$, we can map
$\cg_\delta(p_i, q_i)$ to $\gamma_i$.
For the other parts of the curves, first observe that with the exception of the
$\cg_\delta(p_i, q_i)$ gadgets, $\curveA$ is $x$-monotone, i.e.\@ as we walk along it, the
$x$-coordinate never decreases.
Moreover, with the exception of the $\gamma_i$ portions, $\curveB'$ is $x$-monotone.
Finally, observe that $\cg_{\delta}(p_i, q_i)$ and $\gamma_i$ have the same starting and ending
points, and $\curveB'$ and $\curveA$ both start at $s$ and end at $t$.
Thus, with the exception of the $\cg_{\delta}(p_i, q_i)$ and $\gamma_i$ portions, we can map all
points from $\curveA$ with a given $x$-coordinate to the point on $\curveB'$ with the same
$x$-coordinate.
It is easy to verify that this maps points between the curves that are at most $\delta$ apart.
First, as $\curveB'$ is identical to $\curveB$ outside of the $\gamma_i$, and since $\curveB$ is
$Y$-respecting, $\curveB'$ passes through $s$, $t$, and $q_i'$ for all $i$.
Thus, the mapping stands still on $\curveB'$ at these respective points as $\curveA$ executes the
$\g_\delta(s)$, $\g_\delta(t)$, and $\g_\delta(q_i')$ gadgets.
Outside of these points, it is easy to verify that the vertical distance between the curves is at
most $4M$ by \corref{bounded}, and by construction $4M \leq \delta$.
\end{proof}

\section{Algorithms for Lower Bound Fr\'echet Distance}\label{sec:lowerbound}
In the previous section, we have shown that the decision problem for $\frmin$ is hard, given a
polygonal curve and an uncertain curve with line-segment-based imprecision model.
Interestingly, the same problem is solvable in polynomial time for indecisive curves.
This result highlights a distinction between $\frmin$ and $\frmax$ and between different
uncertainty models.
To tackle $\frmin$ with general uncertain curves, we develop approximation
algorithms.

\subsection{Exact Solution for Indecisive Curves}\label{sec:exact_lowerbound}
The key idea is that we can use a dynamic programming approach similar to that for computing 
Fr\'echet distance~\cite{alt:1995} and only keep track of realisations of the last indecisive
point considered so far.
(Note that one can also reduce the problem to Fr\'echet distance between paths in
\emph{DAG complexes,} studied by Har-Peled and Raichel~\cite{harpeled:2014}, but this yields a
slower running time.)
We first present the approach for an indecisive and a precise curve, and then generalise it to two
indecisive curves.

\subsubsection{Indecisive and Precise}
Consider the setting with an indecisive curve $\mathcal{V} = \langle V_1, \dots, V_n\rangle$ of $n$
points and a precise curve $\pi = \langle p_1, \dots, p_m\rangle$ with $m$ points; each
indecisive point has $k$ possible realisations, $V_i = \{q_i^1, \dots, q_i^k\}$.
We want to solve the decision problem \emph{`Is the lower bound Fr\'echet distance between the
curves below some threshold $\delta$?',} so $\frmin(\pi, \mathcal{V}) \leq \delta?$

Consider the free-space diagram for this problem; suppose $\mathcal{V}$ is positioned along the
horizontal axis, and $\pi$ along the vertical axis.
Just as for precise curve Fr\'echet distance, we are interested in the reachable intervals on the
cell boundary, since the free space in the cell interior is convex; however, now we care about the
different realisations of the points, so we get a set of reachable boundaries instead of a single
cell boundary.
We can adapt the standard dynamic program to deal with this problem.
We propagate reachability column by column.
An important aspect is that we only need to make sure that a reachable point is reachable by a
monotone path in the free-space diagram induced by \emph{some} valid realisation; we do not need to
remember which one, since we never return to the previous points on the indecisive curve, and we
also do not care about the realisations that yield a distance that is higher than $\delta$\dsh a
significant distinction from the upper bound Fr\'echet distance.

First of all, define $\Feas(i, \ell)$ to be the \emph{feasibility column} for realisation
$q_i^\ell$ of $U_i$.
This is a set of intervals on the vertical cell boundary line in the free-space diagram,
corresponding to the subintervals of one curve within distance $\delta$ from a point on the other
curve.
It is computed exactly the same way as for the precise Fr\'echet distance\dsh it depends on the
distance between a point and a line segment and gives a single interval on each vertical cell
boundary.
We can compute feasibility for the right boundary of all cells in a column for a given realisation,
thus obtaining $\Feas(i, \ell)$.

Consider the standard dynamic program for computing Fr\'echet distance on precise curves.
Represent it so that it operates column by column, grouping propagation of reachable intervals
between vertically aligned cells.
Call that procedure $\Prop(R)$, where $R$ is the \emph{reachability column} for point $i$ and the
result is the reachability column for point $i + 1$ on one of the curves.
Again, the reachability column is a set of intervals on a vertical line, indicating the points in
the free-space diagram that are reachable from the lower left corner with a monotone path.

Define $\Reach(i, s)$ to be the reachability column induced by $q_i^s$, where a point is in
a reachability interval if it can be reached by a monotone path for some realisation of the previous
points.
Then we compute
\[\Reach(i + 1, \ell) = \Feas(i + 1, \ell) \cap \bigcup_{\ell' \in [k]} \Prop(\Reach(i, \ell'))\,.\]
So, we iterate over all the realisations of the  previous column, thus getting precise cells, and
simply propagate the reachable intervals as in the precise Fr\'echet distance algorithm.
For the column corresponding to $U_1$, we set one reachable interval of a single point at the bottom
for all realisations $p_1^s$ for which $\lVert q_1^s - p_1\rVert \leq \delta$.

We now show correctness of this approach.
\begin{lemma}
For all $i > 1$,
\[\Reach(i, \ell) = \Bigl\{y \Bigm| \exists_{q_1^{\ell_1}, \dots, q_{i - 1}^{\ell_{i -1}}}
\Bigl[\fr\Bigl(\bigl(\pi[1:\lfloor y\rfloor] \concat\pi(y),
\Concat_{j \in [i - 1]} q_j^{\ell_j}\bigr) \concat q_i^\ell\Bigr) \leq\delta\Bigr]\Bigr\}\,.\]
So, for any point inside a reachability interval there is a realisation that defines a free-space
diagram and a monotone path through that diagram to this point.
\end{lemma}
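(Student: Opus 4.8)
The plan is to prove the identity by induction on $i$, using as the engine the correctness of the single-column propagation operator $\Prop$ for precise curves, which follows from the free-space diagram analysis of Alt and Godau~\cite{alt:1995}, and using the union over realisations together with the induction hypothesis to account for the freedom in choosing the realisations of the earlier points.

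For the base case $i = 2$, I would unfold the recursion $\Reach(2, \ell) = \Feas(2, \ell) \cap \bigcup_{\ell' \in [k]} \Prop(\Reach(1, \ell'))$ together with the initialisation, which sets $\Reach(1, \ell')$ to the single bottom point exactly when $\lVert q_1^{\ell'} - p_1\rVert \leq \delta$. A height $y$ lies in the right-hand side precisely when there is a realisation $q_1^{\ell_1}$ for which the free-space cell of the segment $q_1^{\ell_1} q_2^{\ell}$ against the relevant edges of $\pi$ admits a monotone path from the origin to height $y$ on the column-$2$ boundary, which is the defining condition $\fr(\pi[1:\lfloor y\rfloor] \concat \pi(y), q_1^{\ell_1} \concat q_2^\ell) \leq \delta$.

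For the inductive step, assume the identity holds for $\Reach(i, \ell')$ for every $\ell'$, and consider $\Reach(i + 1, \ell)$. For the inclusion from the dynamic program into the claimed set, suppose $y \in \Reach(i + 1, \ell)$; then $y \in \Prop(\Reach(i, \ell'))$ for some $\ell'$ and $y \in \Feas(i + 1, \ell)$. Correctness of $\Prop$ yields a height $y^\ast \in \Reach(i, \ell')$ on the column-$i$ boundary together with a monotone free-space path from $(i, y^\ast)$ to $(i + 1, y)$ in the cell realised by $q_i^{\ell'} q_{i + 1}^{\ell}$. By the induction hypothesis there are realisations $q_1^{\ell_1}, \dots, q_{i - 1}^{\ell_{i - 1}}$ and a monotone path certifying $\fr(\pi[1:\lfloor y^\ast\rfloor] \concat \pi(y^\ast), (\Concat_{j \in [i - 1]} q_j^{\ell_j}) \concat q_i^{\ell'}) \leq \delta$. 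Concatenating the two paths at the shared point $(i, y^\ast)$ produces a monotone path through the free-space diagram of the full realisation $(\Concat_{j \in [i - 1]} q_j^{\ell_j}) \concat q_i^{\ell'} \concat q_{i + 1}^\ell$, placing $y$ in the claimed set with $\ell_i = \ell'$. Conversely, given $y$ in the claimed set, I would fix an optimal reparametrisation for the witnessing realisation sequence and let $y^\ast$ be the height at which it crosses the column-$i$ boundary; restricting the reparametrisation to the prefixes shows $y^\ast \in \Reach(i, \ell_i)$ by the induction hypothesis, and the remaining monotone portion lies in the cell for $q_i^{\ell_i} q_{i + 1}^\ell$, so $y \in \Prop(\Reach(i, \ell_i))$, while $\pi(y)$ being within $\delta$ of $q_{i + 1}^\ell$ gives $y \in \Feas(i + 1, \ell)$, whence $y \in \Reach(i + 1, \ell)$.

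The main obstacle I anticipate is justifying cleanly that remembering only the realisation index $\ell$ of the current point suffices, i.e.\@ the decomposition of a monotone path at its (unique, by monotonicity) crossing of the column-$i$ boundary: the realisations of points $1, \dots, i - 1$ influence only the free space strictly to the left of that boundary, so they may be chosen independently of everything afterwards, which is exactly what the union $\bigcup_{\ell'}$ and the induction hypothesis encode. Some care is also needed where the reparametrisation lingers on the column-$i$ line, or where $\lfloor y\rfloor$ lands on an integer, but these degenerate cases are resolved by taking the appropriate infimum of the crossing parameter and do not affect the argument.
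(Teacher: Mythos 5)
Your proof is correct and follows essentially the same route as the paper: induction on the column index, using the correctness of the precise-curve propagation and the decomposition of a monotone free-space path at its crossing of the column-$i$ boundary, with the union over realisations $\ell'$ capturing the independence of earlier choices. If anything, your write-up is more complete than the paper's, which in the inductive step only argues explicitly that points in $\Reach(i+1,\ell)$ are realisable (soundness) and leaves the converse inclusion implicit, whereas you prove both directions.
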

\begin{proof}
We show this by induction on $i$.
To compute $\Reach(2, \ell)$ for any fixed $\ell \in [k]$, we start from a single point in the
bottom left corner of the free space for the realisations of $U_1$ that are close enough to $q_1$
and we propagate the reachability through the resulting precise free-space column.
Clearly, the statement holds in this case; if some realisation of $U_1$ is too far from $q_1$, then
the reachability column is correctly empty.

Now assume the statement holds for $\Reach(i, \ell')$ for all $\ell' \in [k]$.
Note that all the values that we add to $\Reach(i + 1, \ell)$ for some fixed $\ell$ are feasible,
since we explicitly take the feasibility column and intersect it with the propagated reachability.
Furthermore, any point $y$ in $\Reach(i + 1, \ell)$ comes as a result of propagation from some
$\Reach(i, \ell')$ for some $\ell'$.
So, there is at least one point $y'$ in the reachability column $i$ for realisation $q_i^{\ell'}$
from which there is a monotone path to $y$.
Since we know there was a realisation up to that point of the two curves that enables a monotone
path from the start of the free space diagram to $y'$; and since point $V_{i + 1}$ is independent of
the previous points; and since we have a fixed valid realisation for points $V_i$ and $V_{i + 1}$
that enables the continuation of the monotone path from $y'$ to $y$, we conclude that the statement
of the lemma holds for the column $i + 1$.
\end{proof}
Therefore, querying the upper right corner of all reachability intervals for $V_n$ will correctly
give us the answer to the decision problem.

Now we need to analyse the complexity of the reachability column.
Note that a particular right cell boundary is entirely reachable if the bottom of the cell is
reachable; combined with the feasibility interval, we get one reachability interval per cell.
Furthermore, if a cell is only reachable from the left, since we consider monotone paths, each
realisation of the previous points induces a reachable interval of $[y', 1]$ for some $0 \leq y'
\leq 1$ if you assume the boundary coordinate range to be $[0, 1]$; therefore, taking a union of
such intervals still gives us at most one reachability interval per cell.
So, in the worst case we have $\Theta(mk)$ intervals that we need to store.
To propagate, we consider all combinations of the two successive indecisive points for all cells,
yielding the total running time of $\Theta(mnk^2)$.

Furthermore, observe that we can also store a realisation of the previous point on the indecisive
curve with the interval that corresponds to the lowest reachable point on the current interval.
If we then store all the reachability columns, we can later backtrack and find a specific curve
that realises Fr\'echet distance below the threshold $\delta$.
This increases the storage requirements to $\Theta(mnk)$; the running time stays the same.

We summarise the results:
\begin{theorem}\label{thm:indecisivedecider}
Given an indecisive curve $\mathcal{V} = \langle V_1, \dots, V_n\rangle$, where each indecisive
point has $k$ options, $V_i = \{q_i^1, \dots, q_i^k\}$, a precise curve $\pi = \langle p_1,
\dots, p_m\rangle$, and a threshold $\delta > 0$, we can decide if $\frmin(\pi, \mathcal{V}) \leq
\delta$ in time $\Theta(mnk^2)$ in the worst case, using $\Theta(mk)$ space.
We can also report the realisation of $\mathcal{V}$ realising Fr\'echet distance at most $\delta$,
using $\Theta(mnk)$ space instead.
Call the algorithm that solves the problem and reports a fitting realisation
$\decider(\delta, \pi, \mathcal{V})$.
\end{theorem}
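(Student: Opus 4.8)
The plan is to assemble the three ingredients already developed above: the recurrence defining the reachability columns, the correctness lemma, and a combinatorial bound on the size of each column. First I would observe that the algorithm is fully specified by the recurrence
\[\Reach(i + 1, \ell) = \Feas(i + 1, \ell) \cap \bigcup_{\ell' \in [k]} \Prop(\Reach(i, \ell'))\,,\]
together with the base case for $V_1$ (a single reachable point at the bottom for each realisation $q_1^s$ with $\lVert q_1^s - p_1\rVert \le \delta$), and that the output is read off by testing whether the top-right corner of any reachability interval for $V_n$ is reached. Correctness of this test is exactly the content of the preceding lemma applied at $i = n$: a point lies in $\Reach(n, \ell)$ iff some realisation of $V_1, \dots, V_{n-1}$ together with the choice $q_n^\ell$ admits a monotone free-space path to it, so $\frmin(\pi, \mathcal{V}) \le \delta$ iff the top-right corner is reachable for some $\ell$.

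The second step is the complexity analysis. Here I would first establish that each $\Reach(i, \ell)$ consists of at most one interval per cell of the column, hence $\Theta(m)$ intervals in total. This rests on monotonicity: for a fixed predecessor realisation the reachable set on a cell's right boundary, once entered, is a single interval of the form $[y', 1]$; the union over the $k$ predecessor realisations is then $[\min_{\ell'} y'_{\ell'}, 1]$, still a single interval, and intersecting with the single feasibility interval $\Feas(i, \ell)$ cannot increase the count. Thus storing the two current columns uses $\Theta(mk)$ space. For the running time, computing all of $\Reach(i + 1, \cdot)$ requires, for each of the $k$ values of $\ell$, propagating and merging the $k$ columns $\Reach(i, \ell')$; each propagation touches $\Theta(m)$ cells, giving $\Theta(mk^2)$ work per index and $\Theta(mnk^2)$ overall.

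The third step handles reporting. I would augment each reachability interval with a back-pointer recording the predecessor realisation $q_i^{\ell'}$ that yielded the lowest reachable point on that interval. Retaining all $n$ columns rather than only the current pair raises the space to $\Theta(mnk)$ while leaving the running time unchanged; a satisfying realisation is recovered by backtracking from a reached top-right corner of $V_n$ through these pointers, reading off one realisation per index.

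The main obstacle I anticipate is the combinatorial claim that each reachability column carries only one interval per cell after the union over the $k$ predecessor realisations, since this is precisely what keeps the per-cell work constant and the whole bound linear in $m$. The monotonicity argument must be phrased carefully so that the union of the half-open reachable intervals $[y', 1]$ does not fragment and so that intersection with the feasibility interval preserves the one-interval-per-cell invariant; everything else (correctness and the arithmetic of the $\Theta(mnk^2)$ and $\Theta(mnk)$ bounds) then follows directly from the material already established.
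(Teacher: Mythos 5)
Your proposal is correct and follows essentially the same route as the paper's proof: the same recurrence and base case with correctness delegated to the preceding reachability lemma at $i = n$, the same one-interval-per-cell argument (a union of intervals of the form $[y', 1]$ intersected with the single feasibility interval) yielding $\Theta(mk)$ space and, via propagation over all $k^2$ pairs of successive realisations, $\Theta(mnk^2)$ time, and the same back-pointer scheme over all stored columns for reporting a realisation in $\Theta(mnk)$ space. The only difference is presentational: you make explicit the closed form $[\min_{\ell'} y'_{\ell'}, 1]$ for the union, which the paper states only implicitly.
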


\subsubsection{Indecisive and Indecisive}
Now consider the setting where instead of $\pi$ we are given curve $\mathcal{U} = \langle U_1,
\dots, U_m\rangle$ with $k$ options per indecisive point, $U_i = \{p_i^1, \dots, p_i^k\}$.
We can adapt the algorithm of the previous section in a straightforward way by propagating in
column-major order, but cell by cell.

A cell boundary now depends on three indecisive points, so there are $k^3$ options per boundary to
consider.
We now store the possibilities for $m - 1$ right cell boundaries, $k^3$ realisations per boundary,
and a single horizontal boundary, with also $k^3$ options.
So, we use $\Theta(mk^3)$ storage.

Whenever we propagate to one further cell, we need to find the reachability for the top and the
right boundary of the cell based on the left and the lower boundary of the cell.
We again go over all the combinations of the realisations of the points that define the cell,
yielding $k^4$ possible precise cells to consider.
We aggregate the values as before, as for both the top and the right boundary only three points
matter.

Since we solve the same problem as in the previous section and never have to revisit a previously
considered point, it should be clear that this approach is correct.
However, now we take $\Theta(k^4)$ time per cell, so in the worst case we need $\Theta(mnk^4)$ time
to complete the propagation.
\begin{theorem}\label{thm:indind}
Given two indecisive curves $\mathcal{U} = \langle U_1, \dots, U_n\rangle$ and $\mathcal{V} =
\langle V_1, \dots, V_m\rangle$, where each indecisive point has $k$ options, $U_i = \{p_i^1, \dots,
p_i^k\}$ and $V_i = \{q_i^1, \dots, q_i^k\}$, and a threshold $\delta > 0$, we can decide if
$\frmin(\mathcal{U}, \mathcal{V}) \leq \delta$ in time $\Theta(mnk^4)$ in the worst case, using
$\Theta(mk^3)$ space.
\end{theorem}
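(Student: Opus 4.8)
The plan is to lift the dynamic program behind Theorem~\ref{thm:indecisivedecider} from an indecisive/precise pair to a pair of indecisive curves, so that \emph{both} axes of the free-space diagram carry uncertainty. I would place $\mathcal{V}$ (the $m$-vertex curve) on the vertical axis and $\mathcal{U}$ (the $n$-vertex curve) on the horizontal axis, and note the one structural change that drives everything: in a single cell spanning the edge $U_jU_{j+1}$ and the edge $V_iV_{i+1}$, the free space is governed by all \emph{four} bounding vertices $U_j, U_{j+1}, V_i, V_{i+1}$, whereas each cell boundary is governed by only \emph{three} of them\dsh the fixed vertex together with the two endpoints of the moving edge. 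Concretely, the right boundary depends on $(U_{j+1}, V_i, V_{i+1})$, the top boundary on $(U_j, U_{j+1}, V_{i+1})$, and symmetrically for the left and bottom boundaries. I would therefore index feasibility and reachability by a \emph{triple} of realisations rather than a single one, giving $k^3$ versions of each boundary and $k^4$ precise cells to examine during propagation.

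First I would make the propagation step precise. Fixing a realisation $(a, b, c, d)$ of $(U_j, U_{j+1}, V_i, V_{i+1})$ yields an ordinary precise cell; running the standard single-cell interval propagation of Alt and Godau~\cite{alt:1995} from the left boundary indexed by $(a, c, d)$ and the bottom boundary indexed by $(a, b, c)$ produces contributions to the right boundary indexed by $(b, c, d)$ and to the top boundary indexed by $(a, b, d)$. The crucial aggregation is \emph{existential}: the reachability stored on a boundary is the intersection of its feasibility interval with the union, taken over all realisations of the single defining-free vertex, of these contributions\dsh over the realisation $a$ of $U_j$ for the right boundary, and over the realisation $c$ of $V_i$ for the top boundary. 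This exactly mirrors the union $\bigcup_{\ell'}\Prop(\Reach(i,\ell'))$ that appears in Theorem~\ref{thm:indecisivedecider}.

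For correctness I would prove, by induction over the cells in a column-major sweep, the natural analogue of the reachability lemma preceding Theorem~\ref{thm:indecisivedecider}: a point lies in a boundary's reachability interval for a given triple precisely when there exists a realisation of all already-swept vertices inducing a monotone free-space path to it that is consistent with that triple on the three defining vertices. The main obstacle\dsh and where I would concentrate the care\dsh is justifying that remembering only these three realisations, and unioning over the rest, loses nothing. This rests on the no-backtracking property of monotone paths: once the sweep crosses a vertical line no later cell depends on the vertex to its left, so the realisation of a departing vertex can be ``forgotten'' by unioning over it, while consistency of the shared vertices \emph{within} a single cell is automatic because the full four-vertex realisation $(a,b,c,d)$ is pinned down when that cell is processed. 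I would also verify explicitly that the triple defining each output boundary coincides with the triple defining the matching input boundary of the adjacent cell, so that the frontier remains coherent across the sweep.

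Finally I would tally the resources. There are $(n-1)(m-1) = \Theta(mn)$ cells; each is handled by iterating over its $k^4$ precise realisations, and single-cell interval propagation costs $O(1)$ per precise realisation, for $\Theta(mnk^4)$ time overall. Sweeping column by column over $\mathcal{U}$, the maintained frontier consists of the $m-1$ right boundaries of the current column together with one in-progress horizontal boundary, each storing $k^3$ triples, giving $\Theta(mk^3)$ space; inspecting the top-right reachability interval at the end then decides whether $\frmin(\mathcal{U}, \mathcal{V}) \le \delta$.
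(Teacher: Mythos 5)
Your proposal is correct and follows essentially the same approach as the paper: a column-major, cell-by-cell sweep in which each cell boundary is indexed by the $k^3$ realisations of the three indecisive points defining it, each cell is propagated over its $k^4$ realisations, and reachability is aggregated existentially (by union) over the departing vertex, yielding the same $\Theta(mnk^4)$ time and $\Theta(mk^3)$ space bounds. In fact, your explicit bookkeeping of which triple indexes each boundary and your induction argument for why forgetting the departing vertex is sound are more detailed than the paper's own proof, which asserts correctness by analogy with the indecisive/precise case.
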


\subsection{Approximation by Grids}\seclab{grids}
Given a polygonal curve $\curveA$ and a general uncertain curve $\U$, in this section we show how to
find a curve $\curveB\Subset\U$ such that $\fr(\curveA,\curveB) \leq (1 + \eps) \frmin(\curveA,\U)$. 
This is accomplished by carefully discretising the regions, in effect approximately reducing the
problem to the indecisive case, for which we then can use \thmref{indecisivedecider}. 

For simplicity we assume the uncertain regions have constant complexity.
Throughout, we assume $\frmin(\curveA, \U) > 0$, as justified by the following lemma.
\begin{lemma}\lemlab{equal}
Let $\curveA$ be a polygonal curve with $n$ vertices, and $\U$ an uncertain curve with $m$ vertices.
Then one can determine whether $\frmin(\curveA,\U) = 0$ in $O(mn)$ time.
\end{lemma}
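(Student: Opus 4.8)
The plan is to reduce the decision to checking whether some realisation equals $\curveA$ as a curve. Since the \Frechet distance between two polygonal curves is $0$ exactly when they are identical up to a monotone reparametrisation, $\frmin(\curveA, \U) = 0$ holds iff there is a realisation $\curveB = \langle b_1, \dots, b_m\rangle \Subset \U$ that traces $\curveA$: the endpoints satisfy $b_1 = \curveA_1$ and $b_m = \curveA_n$, the vertices $b_i = \curveA(\phi_i)$ sit on $\curveA$ at non-decreasing parameters $1 = \phi_1 \le \dots \le \phi_m = n$, and each segment $\overline{b_i b_{i+1}}$ coincides with the piece $\curveA[\phi_i, \phi_{i+1}]$. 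As a straight segment can only coincide with a straight piece of $\curveA$, the last condition forces $\curveA[\phi_i, \phi_{i+1}]$ to be a straight subcurve for every $i$.

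I would then decide the existence of such a $\curveB$ by a left-to-right dynamic program over $U_1, \dots, U_m$. For each $i$, let $R_i \subseteq [1, n]$ be the set of parameters $\phi$ for which realisations $b_1, \dots, b_i$ exist with $b_i = \curveA(\phi)$ and $\fr(\langle b_1, \dots, b_i\rangle, \curveA[1, \phi]) = 0$, where $\curveA[1,\phi]$ is the prefix of $\curveA$ up to parameter $\phi$. The base case is $R_1 = \{1\}$ if $\curveA_1 \in U_1$, and $R_1 = \emptyset$ otherwise, since the start points must coincide. The recurrence, justified by the observation above, is that $\phi' \in R_i$ iff there is some $\phi \in R_{i-1}$ with $\phi \le \phi'$, with $\curveA[\phi, \phi']$ a straight subcurve, and with $\curveA(\phi') \in U_i$; the output is \emph{yes} iff $n \in R_m$. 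Checking this recurrence against the definition of distance-zero matchings (including the degenerate matchings in which $\curveA$ pauses at a repeated vertex) is routine once the straightness characterisation is in hand.

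The efficiency rests on bounding the complexity of $R_i$: I claim $R_i$ meets every edge of $\curveA$ in a single interval, so it is stored in $O(n)$ space. The transition is the composition of a \emph{forward-reach} step, $\phi \mapsto \{\phi' \ge \phi : \curveA[\phi, \phi'] \text{ straight}\}$, with intersection against the preimage $\{\phi : \curveA(\phi) \in U_i\}$. Because a straight subcurve cannot span a vertex at which $\curveA$ turns, within each maximal straight run the forward reach of $R_{i-1}$ is exactly the suffix of the run starting at its leftmost reachable parameter; precomputing the turning vertices once in $O(n)$ time, all these suffixes are found in a single $O(n)$ sweep. Intersecting with the preimage of $U_i$ --- which, as each region has constant complexity and each edge is a segment, contributes $O(1)$ breakpoints per edge --- leaves one interval per edge and costs $O(n)$. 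Hence each of the $m$ columns costs $O(n)$, for $O(mn)$ total.

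The main obstacle is getting the forward-reach step exactly right. One must see both that reachability fills each straight run up to its far endpoint, and that in a single transition it never chains from one run into the next: a distance-zero matching advances $\curveA$ by a straight piece, which stays inside one run, so the only way run $a$ can feed run $a+1$ is for the turning vertex between them to have itself entered $R_{i-1}$ at the previous step (matching the intuition that $\curveB$ must place a vertex exactly at each corner of $\curveA$). This is precisely what preserves the one-interval-per-edge invariant; were reachability allowed to propagate past a turning vertex within a single step, the invariant, and with it the $O(mn)$ bound, would fail.
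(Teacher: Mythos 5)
Your proposal is correct, but it runs the dynamic program in the opposite direction from the paper, and the two are worth contrasting. The paper first deletes collinear vertices of $\curveA$ so that every remaining vertex is a genuine turn, observes that each such vertex must then coincide with a vertex of any zero-distance realisation, and iterates over the $n$ vertices of the precise curve: its state $s(i) \subseteq [m]$ is the discrete set of indices $j$ such that some realisation of $\ur_1, \dots, \ur_j$ has Fr\'echet distance zero to $\curveA[1:i]$, and its transition is an incremental ordered-stabbing computation\dsh which consecutive regions the segment $\curveA_{i-1}\curveA_i$ can pass through in order. You instead iterate over the $m$ uncertain regions, keeping as state the continuous set $R_i \subseteq [1, n]$ of parameters at which the $i$-th realisation vertex can sit, with transitions by forward reach inside maximal straight runs followed by intersection with the preimage of $U_i$. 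The trade-off: the paper's state is a set of at most $m$ indices, so its size needs no argument and all the work sits in maintaining the stabbing structure; your state is a subset of a continuum, so you must (and do) prove an interval-complexity invariant, in exchange for a per-stage computation that is a plain linear sweep and a cleaner geometric picture (every corner of $\curveA$ must be hit by a realisation vertex, and reachability cannot jump a corner within one step). Both give $O(n)$ work per stage, hence $O(mn)$ overall, so this is a valid alternative proof.

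Two small repairs are needed. First, your invariant that $R_i$ meets every edge of $\curveA$ in a \emph{single} interval is only guaranteed when the regions are convex; for general constant-complexity regions (e.g.\@ indecisive point sets, which the lemma must cover) the preimage of $U_i$ on one edge consists of $O(1)$ intervals or points rather than one, so the invariant should be weakened to $O(1)$ intervals per edge\dsh the sweep and the $O(mn)$ bound are unaffected. Second, ``straight'' must mean ``lies on a line \emph{and} is traversed monotonically'': a subcurve of $\curveA$ that doubles back along its own support line has positive Fr\'echet distance to the segment joining its endpoints, so reversal vertices must count as turning vertices in your run decomposition; with that reading, your recurrence and the one-step-per-corner argument are sound.
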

\begin{proof}
If for some $i$, $\curveA_{i}$ lies on the segment $\curveA_{i - 1}\curveA_{i + 1}$, then
$\fr(\curveA, \curveA') = 0$, where $\curveA' = \langle\curveA_1, \dots, \curveA_{i - 1},
\curveA_{i + 1}, \dots, \curveA_n\rangle$.
So we can assume that no vertex of $\curveA$ lies on the segment between its neighbours, as
otherwise we can remove that vertex and get the same result in terms of Fr\'echet distance.  

Thus, at every vertex $\curveA$ turns, implying that if there exists $\curveB \Subset \U$ such that
$\fr(\curveA, \curveB) = 0$, then for all $i$, $\curveA_i$ must match to some $\curveB_j$.

This observation leads to a simple decision procedure.
Define
\[s(i) = \{1 \leq j \leq m \mid \fr(\curveA[1: i], \curveB[1: j])=0\}\,,\]
so a set of indices on $\curveB$ that yield the zero Fr\'echet distance between the correspondent
prefix curves.
Then we can go through $\curveA$ one vertex at a time, maintaining $s(i)$, and ultimately
$\frmin(\curveA, \U) = 0$ if and only if $m \in s(n)$.

Initially, $s(1)=\{1 \leq j \leq m \mid \forall 1 \leq  k \leq j: \curveA_1 \in \ur_k\}$, which is
easy to test and compute.
For $i > 1$, $s(i)$ can be computed from $s(i - 1)$ as follows.
Let $\Stab_i(k)$ be the set of indices $j > k$ such that there exist points $p_{k + 1}, \dots,
p_{j - 1}$, appearing in order along $\curveA_{i - 1}\curveA_i$, where $p_\ell \in \ur_\ell$ for
all $k < \ell < j$.
(Note that we always have $k + 1\in \Stab_i(k)$.)
So, $\Stab_i(k)$ is the set of indices $j$ of uncertainty regions, starting from $k + 1$, such that
all the regions between $k$ and $j$ are stabbed by the segment $\curveA_{i - 1}\curveA_i$ in the
correct order.
Then we have
\[s(i) = \{j \mid \curveA_i \in \ur_j \land j \in \Stab_i(k) \text{ where }
k = \max_{\ell < j} \{\ell \in s(i - 1)\}\}\,.\]
From this definition of $s(i)$ it is easy to see that it can be computed in $O(m)$ time given
$s(i - 1)$, and thus the total time required is $O(mn)$.
In particular, if $s(i - 1)$ is non-empty, then let $z$ be the minimum value in $s(i - 1)$.
We now incrementally loop over values of $j$, where initially $j = z + 1$, and add $j$ to $s(i)$ if
$\curveA_i \in \ur_j$ and $j \in \Stab_i(z)$.
Note that in constant time per iteration we can maintain sufficient information to determine if
$j\in \Stab_i(z)$, as we describe further.
If at any iteration $j = z' + 1$ for $z' \in s(i - 1)$, we forget $\Stab_i(z)$ (as we no longer need
to stab those regions) and start maintaining and checking $\Stab_i(z')$.

Note that the intersection of any $\ur_\ell$ with $\curveA_{i - 1}\curveA_i$ is a constant number of
intervals along $\curveA_{i - 1}\curveA_i$.
Then $\Stab_i(k)$ can be computed incrementally as follows.
First, let $p_{k+1}$ be the earliest point of $\curveA_{i - 1}\curveA_i \cap \ur_{k + 1}$.
For some $j > k + 1$, let $p_j$ be the earliest point of $\curveA_{i - 1}\curveA_i \cap \ur_{j}$,
which is at least as far as along $\curveA_{i - 1}\curveA_i$ as $p_{j - 1}$ (if it exists).
If such $p_j$ exists, then we know that $j \in \Stab_i(k)$. 
Maintaining this information indeed takes constant time per iteration.
\end{proof}

\subsubsection{Decision Procedure}
\seclab{decproc}
We call an algorithm a \emph{$(1 + \eps)$-decider} for \probref{decision}, if when
$\frmin(\curveA, \U) \leq \delta$, it returns a curve $\curveB \Subset \U$ such that
$\fr(\curveA, \curveB) \leq (1 + \eps) \delta$, and when $\frmin(\curveA, \U) > (1 + \eps) \delta$,
it returns \False\ (in between either answer is allowed).
In this section, we present a \emph{$(1 + \eps)$-decider} for \probref{decision}.
We make use of the following standard observation.
\begin{observation}\obslab{perturbation}
Given a curve $\curveA = \langle\curveA_1, \dots, \curveA_n\rangle$, call a curve
$\curveB = \langle\curveB_1, \dots, \curveB_n\rangle$ an $r$-perturbation of $\curveA$ if
$\lVert\curveA_i - \curveB_i\rVert \leq r$ for all $i$.
Since $\lVert\curveA_i - \curveB_i\rVert, \lVert\curveA_{i + 1} - \curveB_{i + 1}\rVert\leq r$, all
points of the segment $\curveB_i \curveB_{i + 1}$ are within distance $r$ of
$\curveA_i \curveA_{i + 1}$.
For segments this implies that $\fr(\curveA_i \curveA_{i + 1}, \curveB_i \curveB_{i + 1}) \leq r$,
which implies that $\fr(\curveA, \curveB) \leq r$ by composing the mappings for all $i$.
\end{observation}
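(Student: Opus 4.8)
The plan is to exhibit explicit reparametrisations witnessing the bound $\fr(\curveA,\curveB) \leq r$. Since $\curveA$ and $\curveB$ have the same number $n$ of vertices, I would take $\phi_1 = \phi_2 = \phi$ to be a single common reparametrisation in $\Phi_n$ (for instance $\phi(t) = 1 + (n - 1)t$), so that for every $t$ the two traversals sit at the same parameter on corresponding segments. It then suffices to bound $\lVert \curveA(\phi(t)) - \curveB(\phi(t))\rVert$ uniformly by $r$, since the Fr\'echet distance is the infimum of the width over all reparametrisation pairs.

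Next I would fix a segment index $i$ and write $\phi(t) = i + \alpha$ with $\alpha \in [0, 1]$. By the definition of a polygonal curve, $\curveA(\phi(t)) = (1 - \alpha)\curveA_i + \alpha\curveA_{i + 1}$ and $\curveB(\phi(t)) = (1 - \alpha)\curveB_i + \alpha\curveB_{i + 1}$, so their difference equals $(1 - \alpha)(\curveA_i - \curveB_i) + \alpha(\curveA_{i + 1} - \curveB_{i + 1})$. The key inequality is then immediate from the triangle inequality together with the $r$-perturbation hypothesis $\lVert\curveA_i - \curveB_i\rVert, \lVert\curveA_{i + 1} - \curveB_{i + 1}\rVert \leq r$: the norm of this convex combination is at most $(1 - \alpha)r + \alpha r = r$. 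This is precisely the per-segment statement $\fr(\curveA_i\curveA_{i + 1}, \curveB_i\curveB_{i + 1}) \leq r$ expressed through the shared reparametrisation.

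Taking the maximum over all $t \in [0, 1]$ yields $\width_{\phi,\phi}(\curveA,\curveB) \leq r$, and hence $\fr(\curveA,\curveB) \leq r$. The argument is entirely routine and presents no genuine obstacle; the only point worth noting is that employing a common reparametrisation on both curves is legitimate exactly because the two curves have equal length, so corresponding vertices are reached simultaneously and no monotonicity, continuity, or endpoint issue arises when composing the per-segment alignments.
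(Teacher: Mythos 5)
Your proposal is correct and follows essentially the same route as the paper: the paper matches each segment $\curveA_i\curveA_{i+1}$ to $\curveB_i\curveB_{i+1}$ by linear interpolation (giving $\fr(\curveA_i\curveA_{i+1}, \curveB_i\curveB_{i+1}) \leq r$ via convexity of the norm) and then composes these per-segment mappings, which is exactly what your single common reparametrisation $\phi$ accomplishes in one step. Your write-up merely makes the convex-combination computation and the gluing explicit.
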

The high-level idea is to replace $\U$ with the set of grid points it intersects, however, as our
uncertain regions may avoid the grid points, we need to include a slightly larger set of points.

\begin{definition}\deflab{expand}
Let $\ur$ be a compact subset of $\R^d$.
We now define the set of points $\EG_r(\ur)$ which we call the \emph{expanded $r$-grid points} of
$\ur$.

Let $B(\sqrt{d} r)$ denote the ball of radius $\sqrt{d} r$, centred at the origin.
Let $\Thick(\ur, r) = \ur \oplus B(\sqrt{d} r)$, where $\oplus$ denotes Minkowski sum.  
Let $G_r$ denote the regular grid of side length $r$, and let $GT_r(\ur)$ denote the subset of grid
vertices from $G_r$ that fall in $\Thick(\ur, r)$.
Finally, we define
\[\EG_r(\ur) = \{p \mid p = \argmin_{q \in \ur} \lVert q - x\rVert \text{ for } x \in GT_r(\ur)\}\,.\]
\end{definition}

\noindent In the following observation and proof we make use of the terms defined above.
\begin{observation}\obslab{distance}
For any $x \in \ur$, there is a point $p \in \EG_r(\ur)$ such that
$\lVert p - x\rVert \leq 2\sqrt{d} r$.
\end{observation}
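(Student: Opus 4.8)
The plan is to fix an arbitrary $x \in \ur$ and exhibit an explicit witness $p \in \EG_r(\ur)$ achieving the bound. To avoid a clash with the index variable $x$ appearing in the definition of $\EG_r(\ur)$, I will reserve $x$ for the given point of $\ur$ and use a separate name for the grid vertex.

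First I would locate the vertex of $G_r$ nearest to $x$. Since $G_r$ has side length $r$, the point $x$ lies in some axis-aligned cubical cell whose main diagonal has length $\sqrt{d}\,r$; the farthest any point of such a cell can be from its nearest corner is half this diagonal, so there is a grid vertex $g \in G_r$ with $\lVert g - x\rVert \leq \tfrac{1}{2}\sqrt{d}\,r$. Next I would check that $g$ is actually retained in $GT_r(\ur)$: because $x \in \ur$ and $\lVert g - x\rVert \leq \tfrac{1}{2}\sqrt{d}\,r \leq \sqrt{d}\,r$, we may write $g = x + (g - x)$ with $g - x \in B(\sqrt{d}\,r)$, hence $g \in \ur \oplus B(\sqrt{d}\,r) = \Thick(\ur, r)$, so $g$ is one of the grid vertices selected into $GT_r(\ur)$.

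Then, by the definition of $\EG_r(\ur)$, the point $p = \argmin_{q \in \ur} \lVert q - g\rVert$ lies in $\EG_r(\ur)$. Since $x \in \ur$ is itself a candidate in this minimisation, we get $\lVert p - g\rVert \leq \lVert x - g\rVert \leq \tfrac{1}{2}\sqrt{d}\,r$. A single application of the triangle inequality then yields $\lVert p - x\rVert \leq \lVert p - g\rVert + \lVert g - x\rVert \leq \sqrt{d}\,r \leq 2\sqrt{d}\,r$, which is the claimed bound.

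There is no genuine obstacle here; the only care required is to keep the two distance budgets separate \dsh the distance from $x$ to the chosen grid vertex $g$, and the distance from $g$ to its closest point $p$ in $\ur$ \dsh and to confirm that $g$ genuinely falls in the thickened region $\Thick(\ur, r)$ so that it survives into $GT_r(\ur)$. In fact the argument produces the sharper bound $\sqrt{d}\,r$, comfortably inside the stated $2\sqrt{d}\,r$, which leaves slack that is presumably convenient for the perturbation bookkeeping in \obsref{perturbation} and the decision procedure that follows.
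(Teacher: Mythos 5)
Your proof is correct and follows essentially the same route as the paper's: nearest grid vertex $g$, membership of $g$ in $\Thick(\ur, r)$, closest-point projection back into $\ur$, and a triangle inequality. The only difference is that you use the half-diagonal bound $\lVert g - x\rVert \leq \tfrac{1}{2}\sqrt{d}\,r$ where the paper settles for $\sqrt{d}\,r$, so you obtain the sharper constant $\sqrt{d}\,r$; the paper's stated $2\sqrt{d}\,r$ is all that is needed downstream.
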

\begin{proof}
For any point $x \in \ur$, let $g$ be its nearest grid point in $G_r$.
Since $\lVert x - g\rVert \leq \sqrt{d} r$, we know that
$g \in \Thick(\ur, r) = \ur \oplus B(\sqrt{d} r)$.
So let $p$ be the point in $\ur$ which is closest to $g$; thus, $p \in \EG_r(\ur)$.
Therefore, $\lVert x - p\rVert \leq \lVert x - g\rVert + \lVert g - p\rVert
\leq \sqrt{d} r + \sqrt{d} r = 2\sqrt{d} r$.
\end{proof}

\begin{lemma}\lemlab{decider}
There is a $(1 + \eps)$-decider for \probref{decision} with running time
$O(mn \cdot (1 + (\Delta / (\eps\delta))^{2d}))$, for $1 \geq \eps >0$, where
$\Delta = \max_{i} \diam{\ur_i}$ is the maximum diameter of an uncertain region.
\end{lemma}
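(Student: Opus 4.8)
The plan is to discretise each uncertain region to a small finite point set, reducing \probref{decision} to an instance of the indecisive lower-bound problem solved by \thmref{indecisivedecider}. Concretely, I would fix the grid resolution $r = \eps\delta / (2\sqrt{d})$, replace each region $\ur_i$ by its expanded grid points $\EG_r(\ur_i)$ from \defref{expand}, and form the indecisive curve $\mathcal{V} = \langle \EG_r(\ur_1), \dots, \EG_r(\ur_m)\rangle$, where each indecisive point has $k = \max_i \lvert \EG_r(\ur_i)\rvert$ options. The decider then simply runs $\decider((1 + \eps)\delta, \curveA, \mathcal{V})$ and returns its output.

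For correctness I would check the two directions required of a $(1 + \eps)$-decider. Since $\EG_r(\ur_i) \subseteq \ur_i$, every realisation of $\mathcal{V}$ is a realisation of $\U$, so $\frmin(\curveA, \mathcal{V}) \geq \frmin(\curveA, \U)$; hence if $\frmin(\curveA, \U) > (1 + \eps)\delta$, the call correctly returns \False. For the other direction, suppose $\frmin(\curveA, \U) \leq \delta$ and take a realisation $\curveB^\ast \Subset \U$ with $\fr(\curveA, \curveB^\ast) \leq \delta$. By \obsref{distance}, each vertex $\curveB^\ast_i \in \ur_i$ can be snapped to some $\curveB_i \in \EG_r(\ur_i)$ with $\lVert \curveB_i - \curveB^\ast_i\rVert \leq 2\sqrt{d}\,r = \eps\delta$, so $\curveB \Subset \mathcal{V}$ is an $\eps\delta$-perturbation of $\curveB^\ast$. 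By \obsref{perturbation} and the triangle inequality for the \Frechet distance, $\fr(\curveA, \curveB) \leq \fr(\curveA, \curveB^\ast) + \fr(\curveB^\ast, \curveB) \leq \delta + \eps\delta = (1 + \eps)\delta$. Thus $\frmin(\curveA, \mathcal{V}) \leq (1 + \eps)\delta$, and \thmref{indecisivedecider} guarantees the call returns a realisation $\curveB \Subset \mathcal{V}$, which is in particular a realisation of $\U$, with $\fr(\curveA, \curveB) \leq (1 + \eps)\delta$, matching the decider specification. The intermediate case $\delta < \frmin(\curveA, \U) \leq (1+\eps)\delta$ needs no separate argument, since either output is permitted there.

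For the running time I would bound $k$. Each point of $\EG_r(\ur_i)$ comes from a distinct grid vertex of $G_r$ lying in $\Thick(\ur_i, r) = \ur_i \oplus B(\sqrt{d}\,r)$. Since $\diam{\ur_i} \leq \Delta$, the extent of $\ur_i$ along each axis is at most $\Delta$, so $\ur_i$ fits in an axis-parallel box of side $\Delta$, and thickening enlarges each side by $2\sqrt{d}\,r$; the number of spacing-$r$ grid vertices in a box of side $\Delta + 2\sqrt{d}\,r$ is $O\bigl((\Delta/r + 1)^d\bigr) = O\bigl((\Delta/(\eps\delta) + 1)^d\bigr)$ for constant $d$. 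Hence $k = O\bigl((\Delta/(\eps\delta) + 1)^d\bigr)$, and since for constant $d$ we have $(a + 1)^{2d} = O(1 + a^{2d})$, the bound $\Theta(mnk^2)$ from \thmref{indecisivedecider} becomes $O\bigl(mn \cdot (1 + (\Delta/(\eps\delta))^{2d})\bigr)$. Building all the sets $\EG_r(\ur_i)$ costs $O(mk)$ for constant-complexity regions (enumerate the grid vertices in each thickened region and take nearest points in constant time each), which is dominated by this bound.

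The main obstacle is the running-time accounting rather than the correctness, which follows cleanly from \obsref{distance}, \obsref{perturbation}, and the triangle inequality. Specifically, one must argue carefully that the expanded grid really contributes only $O\bigl((\Delta/(\eps\delta) + 1)^d\bigr)$ points\dsh that thickening a diameter-$\Delta$ region and intersecting with an $r$-grid cannot produce more than a box's worth of vertices\dsh and then confirm the algebraic simplification $(\Delta/(\eps\delta) + 1)^{2d} = O(1 + (\Delta/(\eps\delta))^{2d})$ so that the stated form of the bound is reached. Both are routine once the discretisation and the reduction to \thmref{indecisivedecider} are in place.
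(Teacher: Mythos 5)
Your proof is correct, and its core is the same as the paper's: discretise each region via the expanded $r$-grid points with $r = \eps\delta/(2\sqrt{d})$, run $\decider((1+\eps)\delta, \curveA, \cdot)$ on the resulting indecisive curve, and establish correctness from $\EG_r(\ur_i) \subseteq \ur_i$ in one direction and from \obsref{distance}, \obsref{perturbation}, and the triangle inequality in the other. The one genuine difference is how the case $\eps\delta \geq \Delta$ is handled. The paper adds an algorithmic preprocessing step: it picks an arbitrary realisation $x \Subset \U$ and makes two exact \Frechet decider calls (against thresholds $(1+\eps)\delta$ and $\Delta + \delta$), either terminating immediately or concluding $\eps\delta < \Delta$, after which it can simplify the grid-point count to $O((\Delta/(\eps\delta))^d)$ and charge the preprocessing to the additive $O(mn)$ term. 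You instead keep the count as $O((\Delta/(\eps\delta) + 1)^d)$ and absorb the edge case purely algebraically via $(a+1)^{2d} = O(1 + a^{2d})$ for constant $d$, which is valid and lands exactly on the stated bound $O(mn(1 + (\Delta/(\eps\delta))^{2d}))$. Your route is slightly cleaner (one fewer algorithmic stage, no case analysis in the algorithm itself); the paper's preprocessing buys an early exit when an arbitrary realisation already suffices, but that is not needed for the lemma. One cosmetic point: padding all regions to a common number of options $k = \max_i \lvert\EG_r(\ur_i)\rvert$ is harmless but worth a word, since \thmref{indecisivedecider} is stated for a uniform number of options per indecisive point.
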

\begin{proof}
It will help with the analysis if $\eps\delta < \Delta$.
To ensure this we first do the following.
Select an arbitrary curve $x \Subset \U$.
Now using the standard $O(mn)$ time exact decider for \Frechet distance~\cite{alt:1995}, query
whether $\fr(\curveA, x) \leq (1 + \eps)\delta$.
If the decider returns $\fr(\curveA, x) \leq (1 + \eps)\delta$, then we can return $x$ as our
solution.
Otherwise, $\fr(\curveA, x) > (1 + \eps)\delta$, and we next query whether 
$\fr(\curveA, x) \leq \Delta + \delta$.
By \obsref{perturbation} and the triangle inequality,
$\fr(\curveA, x) \leq \Delta + \frmin(\curveA, \U)$.
Thus, if the decider returns $\Delta + \delta < \fr(\curveA, x)$, then $\delta < \frmin(\curveA, \U)$,
and so we return \False.
Otherwise, the two decider calls tell us that
$(1 + \eps) \delta < \fr(\curveA, x) \leq \Delta + \delta$, implying $\eps\delta < \Delta$.

Let $r = (\eps\delta) / (2\sqrt{d})$, and for $\ur_i \in \U$, let $E_i = \EG_r(\ur_i)$ denote the
expanded $r$-grid points of $\ur_i$, as defined in \defref{expand}.
Consider the indecisive curve $\U' = \langle E_1, \dots, E_m\rangle$.
We call the algorithm $\decider((1 + \eps)\delta, \curveA, \U')$ of \thmref{indecisivedecider} and
return whatever it returns, i.e.\@ if it returns a curve, then we return that curve, and if it
returns that $\frmin(\curveA, \U') > (1 + \eps)\delta$, then we return that
$\frmin(\curveA, \U) > (1 + \eps)\delta$.

First observe that $E_i \subseteq \ur_i$, and thus $\frmin(\curveA, \U) \leq \frmin(\curveA, \U')$.
So if $\frmin(\curveA, \U) > (1 + \eps)\delta$, then the decider must return
$\frmin(\curveA, \U') > (1 + \eps)\delta$, as desired.
Now suppose that $\frmin(\curveA, \U)\leq \delta$.
In this case, we argue that our algorithm outputs a curve $\curveB' \Subset \U$ such that
$\fr(\curveA, \curveB') \leq (1 + \eps)\delta$.
To do so it suffices to argue that there exists some curve $\sigma' \in \U'$ such that
$\fr(\curveA, \curveB') \leq (1 + \eps)\delta$, as then \thmref{indecisivedecider} guarantees the
decider outputs a curve (which is in $\Real{U}$, as it is a superset of $\Real{U'}$).
So let $\curveB = \langle\curveB_1, \dots, \curveB_m\rangle$ be the curve in $\Real{U}$ realising
the \Frechet distance to $\curveA$, that is, $\fr(\curveA, \curveB) = \frmin(\curveA, \U)$.
Let $\curveB' = \langle\curveB'_1, \dots, \curveB'_m\rangle$ be the curve such that
$\curveB'_i = \min_{x \in E_i} \lVert x - \curveB_i\rVert$.
Note that by \obsref{distance}, we have $\lVert\curveB_i - \curveB'_i\rVert \leq 2\sqrt{d} r$ for
all $i$.
Thus, $\curveB'$ is a $2\sqrt{d} r$-perturbation of $\curveB$ as described in \obsref{perturbation},
and so $\fr(\curveB, \curveB') \leq 2\sqrt{d} r = \eps\delta$.
As the \Frechet distance satisfies the triangle inequality, we therefore have 
$\fr(\curveA, \curveB') \leq \fr(\curveA, \curveB) + \fr(\curveB, \curveB') \leq \delta + \eps\delta
= (1 + \eps)\delta$.
Thus, as $\curveB' \Subset \U'$, when our algorithm calls $\decider((1 + \eps)\delta, \curveA, \U')$,
it returns a curve.

For the running time, recall we first spent $O(mn)$ time to ensure $\eps\delta<\Delta$, in which
case we must bound the number of points in each $E_i$.
By \defref{expand}, for all $i$, the number of points in $E_i$ is bounded by the number of grid
points in the region $\Thick(\ur_i, r)$. 
This region is the Minkowski sum of a compact set of diameter at most $\Delta$ with a radius
$\sqrt{d} r$ ball, so its diameter is at most $\Delta + 2\sqrt{d} r$.
Thus, the number of grid points and hence $\lvert E_i\rvert$ is 
$O(((\Delta + 2\sqrt{d} r) / r)^d) = O((2\sqrt{d} \Delta / (\eps \delta) + 2\sqrt{d})^d) =
O((\Delta / (\eps \delta) + 1)^d) = O((\Delta / (\eps\delta))^d)$.
Thus, by \thmref{indecisivedecider}, the call to $\decider$ takes time
$O(mn (\Delta / (\eps\delta))^{2d})$, which bounds the total time of our algorithm.
\end{proof}

\subsubsection{Optimisation}
\begin{theorem}\thmlab{optimize}
Let $\curveA$ be a polygonal curve with $n$ vertices, $\U$ an uncertain curve with $m$ vertices, and
$\delta = \frmin(\curveA, \U)$.
Then for any $1 \geq \eps > 0$, there is an algorithm which returns a curve $\curveB \Subset \U$
such that $\fr(\curveA, \curveB) \leq (1 + \eps)\delta$, whose running time is
$O(mn(\log(mn) + (\Delta / (\eps\delta))^{2d}))$, where $\Delta = \max_{i} \diam{\ur_i}$ is the
maximum diameter of an uncertain region.
\end{theorem}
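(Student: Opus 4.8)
The plan is to bootstrap the $(1+\eps)$-decider of \lemref{decider} into an optimiser, the only real subtlety being to control how often the expensive, resolution-dependent decision work is performed. First I would dispose of the degenerate case $\delta = 0$ via \lemref{equal} in $O(mn)$ time, so that henceforth $\delta = \frmin(\curveA,\U) > 0$. To seed the search I would fix an arbitrary realisation $x \Subset \U$ and compute $\fr(\curveA,x)$ exactly with the Alt--Godau optimisation algorithm in $O(mn\log(mn))$ time; by \obsref{perturbation} and the triangle inequality this yields an upper bound $D$ with $\delta \le D \le \delta + \Delta$. This single computation is where the additive $O(mn\log(mn))$ term is spent.

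Next I would localise $\delta$ to within a constant factor by an exponential (halving) search calling the decider at thresholds $D, D/2, D/4, \dots$ until it first returns \False. A decision at threshold $t$ costs $O(mn(1 + (\Delta/(\eps t))^{2d}))$, and since this grows geometrically as $t$ halves, the whole search is dominated by its final (smallest-$t$, hence costliest) call at $t = \Theta(\delta)$. Summing the geometric series bounds the resolution-dependent work by $O(mn(\Delta/(\eps\delta))^{2d})$, and the $O(mn)$ base cost times the $O(\log(\Delta/(\eps\delta)))$ number of calls is of lower order and is absorbed into the same term. The phase terminates with a value $\delta_1 \le \delta$ such that $\delta \in [\delta_1, c\delta_1]$ for a constant $c$, so $\delta_1 = \Theta(\delta)$.

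Having this constant-factor estimate, I would fix the grid resolution $r = \eps\delta_1/(2\sqrt d)$ once and build the indecisive curve $\U' = \langle \EG_r(\ur_1), \dots, \EG_r(\ur_m)\rangle$, with $k = \lvert\EG_r(\ur_i)\rvert = O((\Delta/(\eps\delta))^{d})$ options per vertex, exactly as in the proof of \lemref{decider}. The key point is that this resolution already encodes the $(1+\eps)$ guarantee: since $\EG_r(\ur_i) \subseteq \ur_i$ we have $\frmin(\curveA,\U') \ge \delta$, while the $2\sqrt d\,r = \eps\delta_1 \le \eps\delta$ perturbation bound of \obsref{distance} gives $\frmin(\curveA,\U') \le (1+\eps)\delta$. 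Hence it suffices to compute $\frmin(\curveA,\U')$ \emph{exactly} together with a minimising realisation $\curveB$ and return $\curveB$, which then satisfies $\fr(\curveA,\curveB) = \frmin(\curveA,\U') \in [\delta, (1+\eps)\delta]$. I would obtain this by optimising the indecisive decision procedure of \thmref{indecisivedecider} over its critical values (vertex-to-grid-point distances together with the usual monotonicity events of the free-space diagram), arranging that the $O(mnk^2) = O(mn(\Delta/(\eps\delta))^{2d})$ decision work is incurred only a bounded number of times.

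The main obstacle is precisely this last accounting. A naive binary search over the critical values would multiply the $O(mn(\Delta/(\eps\delta))^{2d})$ cost by a $\log$ factor, and simply halving $\eps$ and searching for the threshold to relative precision $\eps$ would cost an extra $1/\eps$ factor; the stated bound permits neither. The resolution is to decouple the two roles that $\delta$ plays: the search only ever needs a \emph{constant-factor} estimate of $\delta$, which the geometric exponential search supplies within budget, whereas the final $(1+\eps)$ precision is furnished \emph{for free} by the discretisation once the resolution is fixed at $r = \Theta(\eps\delta)$. Thus a single exact optimisation on the fixed instance $\U'$ -- rather than a repeated sequence of approximate deciders -- delivers the witness curve, and combining the $O(mn\log(mn))$ seeding step with the $O(mn(\Delta/(\eps\delta))^{2d})$ bracketing and grid work yields the claimed running time.
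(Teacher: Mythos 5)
Your bracketing phase is sound (and in fact cleaner than the paper's: with halving, successive decider calls grow in cost by a factor of $2^{2d} \geq 4$, so the geometric-sum argument genuinely reduces the total to the cost of the last call), and your observation that $\frmin(\curveA, \U') \in [\delta, (1+\eps)\delta]$ for the grid at resolution $r = \eps\delta_1/(2\sqrt{d})$ is correct. The gap is in the final step: you must \emph{extract a witness realisation} of $\U'$ of cost at most $(1+\eps)\delta$, and for this you invoke ``a single exact optimisation on the fixed instance $\U'$'' arranged ``so that the $O(mnk^2)$ decision work is incurred only a bounded number of times.'' No such algorithm exists in the paper, and you do not supply one: \thmref{indecisivedecider} is only a \emph{decider}, and converting a decider into an exact optimiser requires a search over the critical values of the indecisive free-space diagram (binary search after sorting or selection, or parametric search), each step of which costs a full $O(mnk^2) = O(mn(\Delta/(\eps\delta))^{2d})$ decider run; this multiplies the dominant term by $\Omega(\log(mnk))$, exceeding the claimed bound. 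A single decider call on $\U'$ does not work either, precisely because you know $\delta$ only up to a constant factor $c$: calling $\decider((1+\eps)c\delta_1, \curveA, \U')$ returns a curve of cost up to $(1+\eps)c\delta_1 \approx c\delta$, a constant-factor approximation rather than a $(1+\eps)$-approximation. Your last paragraph names exactly this obstacle, but the proposed ``resolution'' asserts the conclusion without giving a mechanism \dsh it is circular.

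The paper's proof never stops at a constant-factor bracket, and never needs an exact optimiser. It continues the decreasing exponential search with ratio $(1+\eps/4)$, each iteration calling the $(1+\eps/4)$-decider \gridDecider of \lemref{decider} (which re-grids at a resolution tied to its own current threshold), and stops at the first \False. The output is the curve returned by the \emph{penultimate} call: since at termination $\gamma/(1+\eps/4)^{j} < \frmin(\curveA, \U)$ while that curve has cost at most $\gamma/(1+\eps/4)^{j-2}$, its error relative to $\frmin(\curveA, \U)$ is at most $(1+\eps/4)^2 < 1+\eps$. In other words, the $(1+\eps)$-accuracy comes not from a fixed grid plus exact optimisation, but from the decider's guarantee relative to a threshold that is itself driven to within $(1+\eps/4)^2$ of $\delta$; the per-call costs grow geometrically, so the total stays within the stated budget. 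If you repair your final phase by continuing the threshold search at ratio $(1+\Theta(\eps))$ below $c\delta_1$ instead of switching to exact optimisation, your argument collapses to the paper's.
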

\begin{proof}
Fix an arbitrary curve $x \Subset \U$. 
First, we compute the \Frechet distance between $\curveA$ and $x$.
If $\fr(x, \curveA) \geq \Delta + \Delta / \eps$, then we return $x$ as our solution.
To see why this is valid, let $\hat{\curveB} \Subset \U$ be an optimal solution, that is,
$\fr(\curveA, \hat{\curveB}) = \frmin(\curveA, \U)$.
Note that $x$ is a $\Delta$-perturbation of $\hat{\curveB}$, and thus by the triangle inequality and
\obsref{perturbation},
\[\fr(x, \curveA) \leq \fr(x, \hat{\curveB}) + \fr(\hat{\curveB}, \curveA) \leq
\Delta + \fr(\hat{\curveB}, \curveA)\,.\]
If $\Delta + \Delta / \eps \leq \fr(x, \curveA)$, then plugging in the inequality above implies that
$\Delta \leq \eps \cdot \fr(\hat{\curveB}, \curveA)$, which in turn implies that
\[\fr(x, \curveA) \leq \Delta + \fr(\hat{\curveB}, \curveA) \leq (1 + \eps) \cdot
\fr(\hat{\curveB}, \curveA)\,.\]

So suppose that $\fr(x, \curveA) < (1 + 1/\eps)\Delta$, in which case
\[\frmin(\curveA, \U) = \fr(\curveA, \hat{\curveB}) \leq \fr(\curveA, x) + \fr(x, \hat{\curveB}) <
(1 + 1/\eps)\Delta + \Delta = (2 + 1/\eps)\Delta = \gamma\,.\]
Let $\gridDecider(\curveA, \U, \eps', \delta)$ denote the $(1 + \eps')$-decider of \lemref{decider},
which correctly returns either \False\ (i.e.\@ $\frmin(\curveA, \U) > \delta$) or a curve in
$\Real{U}$ with \Frechet distance at most $(1 + \eps')\delta$ to $\curveA$.
We perform a decreasing exponential search using \gridDecider.
Specifically, starting at $i = 0$, we call
$\gridDecider(\curveA, \U, \eps/4, \gamma / (1 + \eps/4)^i)$.
If $\gridDecider$ returns a curve (i.e.\@ \True), we increment $i$ by $1$ and repeat, otherwise if
$\gridDecider$ outputs \False, we return the curve from iteration $i - 1$.
(Note that \gridDecider cannot return \False\ when $i = 0$, as this would imply that
$\frmin(\curveA, \U) > \gamma$.)

Let $j$ denote the index when the algorithm stops.
So we know that we got \False\ from $\gridDecider(\curveA, \U, \eps/4, \gamma / (1 + \eps/4)^j)$,
and $\gridDecider(\curveA, \U, \eps/4, \gamma / (1 + \eps/4)^{j - 1})$ returned a curve
$\curveB \Subset \U$ such that $\fr(\curveA, \curveB) \leq (1 + \eps/4) \cdot
(\gamma / (1 + \eps/4)^{j - 1})$.
Therefore,
\[\gamma / (1 + \eps/4)^{j} < \frmin(\curveA, \U) \leq \fr(\curveA, \curveB) \leq
(1 + \eps/4) \cdot (\gamma/(1+\eps/4)^{j - 1}) = \gamma / (1 + \eps/4)^{j - 2}\,,\]
which implies that
\[\fr(\curveA, \curveB) \leq (1 + \eps/4)^2 \frmin(\curveA, \U) =
(1 + \eps/2 + \eps^2/16) \cdot \frmin(\curveA, \U) < (1 + \eps) \cdot \frmin(\curveA, \U)\,.\]

As for the running time, by \lemref{decider}, the time for the $i$th call to \gridDecider is
\[O\pth{mn \pth{\frac{(1 + \eps/4)^i \Delta}{\eps \gamma}}^{2d}}
= O\pth{mn \pth{\frac{(1 + \eps/4)^i\Delta}{\eps (2 + 1/\eps)\Delta}}^{2d}}
= O\pth{mn \pth{1 + \eps/4}^{2d\cdot i}}.\]

Let $\delta = \frmin(\curveA, \U)$, and let $j$ be the index the last time $\gridDecider$ is called.
By the argument above, $\delta \leq \gamma / (1 + \eps/4)^{j - 2}$, which implies that
$j - 2 \leq \log_{1 + \eps/4} (\gamma / \delta)$.
As \gridDecider is called $j + 1$ times, and the running times for the calls to \gridDecider form an
increasing geometric series, the total time for all calls to \gridDecider is
\begin{align*}
&\phantom{{}={}} O\pth{mn(1 + \eps/4)^{2d \cdot (3 + \log_{1 + \eps/4} (\gamma / \delta))}}
= O\pth{mn(1 + \eps/4)^{2d \cdot \log_{1 + \eps/4} (\gamma / \delta)}}\\
&= O\pth{mn(\gamma / \delta)^{2d \cdot \log_{1 + \eps/4} (1 + \eps/4)}}
= O\pth{mn\pth{\frac{\gamma}{\delta}}^{2d}}\\
&= O\pth{mn\pth{\frac{(2 + 1/\eps)\Delta}{\delta}}^{2d}}
= O\pth{mn\pth{\frac{\Delta}{\eps\delta}}^{2d}}.
\end{align*}
As it takes $O(mn\log(mn))$ time to initially compute $\fr(x, \curveA)$ using the algorithm of
Alt and Godau~\cite{alt:1995}, the total running time is
$O(mn(\log(mn) + (\Delta / (\eps\delta))^{2d}))$.
\end{proof}

If the polygonal curve $\curveA$ is replaced with an uncertain curve $\W$, it is easy to see that by
discretising both $\W$ and $\U$, the same analysis gives an algorithm to compute $\frmin(\W, \U)$.
The only difference now is that we must cite \thmref{indind} instead of \thmref{indecisivedecider},
yielding the following.
\begin{corollary}
Let $\W$ and $\U$ be uncertain curves with $n$ and $m$ vertices, respectively, and
$\delta = \frmin(\W, \U)$.
Then for any $0 < \eps \leq 1$, there is an algorithm returning curves $\curveA \Subset \W$ and
$\curveB \Subset \U$ such that $\fr(\curveA, \curveB) \leq (1 + \eps)\delta$, whose running time is
$O(mn(\log(mn) + (\Delta / (\eps\delta))^{4d}))$, where $\Delta$ is the maximum diameter of an
uncertain region.
\end{corollary}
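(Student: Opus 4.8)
The plan is to replay the proof of \thmref{optimize} almost verbatim, the only structural change being that \emph{both} curves are discretised and the indecisive--indecisive solver \thmref{indind} replaces \thmref{indecisivedecider}. First I would build, from \defref{expand}, the expanded $r$-grid point sets of every region of $\W$ and of $\U$, obtaining two indecisive curves $\W'$ and $\U'$ whose vertices are these finite point sets. Because we now perturb both realisations rather than one, by \obsref{distance} each optimal realisation can be moved onto a vertex of its discretised curve within distance $2\sqrt{d}\,r$, so the two movements together contribute up to $4\sqrt{d}\,r$; I would therefore set $r = (\eps\delta)/(4\sqrt{d})$ instead of $(\eps\delta)/(2\sqrt{d})$, so that this combined perturbation stays bounded by $\eps\delta$.

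The correctness argument is the triangle-inequality composition of \lemref{decider}, now applied on both sides simultaneously. Let $\curveA \Subset \W$ and $\curveB \Subset \U$ be realisations with $\fr(\curveA, \curveB) = \frmin(\W, \U) = \delta$, and let $\curveA' \Subset \W'$ and $\curveB' \Subset \U'$ be their nearest discretised realisations; then \obsref{perturbation} and \obsref{distance} give $\fr(\curveA, \curveA') \le 2\sqrt{d}\,r$ and $\fr(\curveB, \curveB') \le 2\sqrt{d}\,r$, whence $\fr(\curveA', \curveB') \le \delta + 4\sqrt{d}\,r \le (1 + \eps)\delta$. Since $\Real{W'} \subseteq \Real{W}$ and $\Real{U'} \subseteq \Real{U}$, calling the decider of \thmref{indind} on $\W'$, $\U'$ with threshold $(1 + \eps)\delta$ returns a realisation pair of \Frechet distance at most $(1 + \eps)\delta$ whenever $\frmin(\W, \U) \le \delta$, and returns \False\ whenever $\frmin(\W, \U) > (1 + \eps)\delta$; this is exactly a $(1 + \eps)$-decider for the two-uncertain-curve problem.

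For the running time I would bound, as before, the size of each discretised vertex by the number of grid points in $\Thick(\cdot, r)$, namely $k = O((\Delta/(\eps\delta))^d)$. The decisive change is that \thmref{indind} runs in $\Theta(mnk^4)$ rather than the $\Theta(mnk^2)$ of \thmref{indecisivedecider}, so a single decider call costs $O(mn\,(\Delta/(\eps\delta))^{4d})$\dsh this is precisely where the exponent $4d$ replaces $2d$. I would then wrap this decider in the identical decreasing exponential search used in \thmref{optimize}: fix arbitrary realisations $x \Subset \W$ and $y \Subset \U$, compute $\fr(x, y)$ exactly in $O(mn \log(mn))$ time, use it (noting that $x$ and $y$ are $\Delta$-perturbations of the optimal realisations, so $\fr(x, y) \le \frmin(\W, \U) + 2\Delta$) to obtain an initial upper bound $\gamma = O(\Delta)$ on $\frmin(\W, \U)$, and search downward with parameter $\eps/4$. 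The decider costs form an increasing geometric series summing to $O(mn\,(\Delta/(\eps\delta))^{4d})$, so adding the initial exact computation yields the claimed $O(mn(\log(mn) + (\Delta/(\eps\delta))^{4d}))$.

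The only genuinely new bookkeeping, and therefore the one point to get right, is the doubling of the perturbation error when both curves are discretised; once $r$ is halved to absorb it, every inequality in \lemref{decider} and the whole exponential-search analysis of \thmref{optimize} transfer with only constant-factor changes. There is thus no new conceptual obstacle beyond substituting the $\Theta(mnk^4)$ indecisive--indecisive solver for its precise--indecisive counterpart and tracking the resulting exponent.
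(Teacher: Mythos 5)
Your proposal is correct and follows exactly the route the paper itself takes (the paper compresses it into two sentences: discretise both curves and cite \thmref{indind} in place of \thmref{indecisivedecider}); your halving of $r$ to absorb the doubled perturbation and the resulting $k^4$ cost per decider call are precisely the details the paper leaves implicit. The only nitpick is that your initial upper bound $\gamma$ is $O(\Delta/\eps)$ rather than $O(\Delta)$ (as in the paper's \thmref{optimize}, where $\gamma = (2 + 1/\eps)\Delta$), but this is absorbed identically in the exponential-search analysis and does not affect the stated running time.
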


\subsection{Greedy Algorithm}\seclab{greedy}
Here we argue that there is a simple $3$-decider for \probref{decision}, running in near-linear time
in the plane.
Roughly speaking, the idea is to greedily and iteratively pick $\curveB_i \in \ur_i$ so as to allow
us to get as far as possible along $\curveA$.
Without any assumptions on $\U$, this greedy procedure may walk too far ahead and get stuck.
Thus, in this section we assume that consecutive $\ur_i$ are separated, so as to ensure optimal
solutions do not lag too far behind.
Here we also assume that $\ur_i$ are convex, i.e.\@ imprecise, and have constant complexity, as it
simplifies certain definitions.
Throughout this section let $\curveA = \langle\curveA_1, \dots, \curveA_n\rangle$ be a polygonal
curve and let $\U = \langle\ur_1, \dots, \ur_m\rangle$ be an uncertain curve.

\begin{definition} 
Call $\U$ \emph{$\gamma$-separated} if for all $1 \leq i < m$, $\lVert \ur_i - \ur_{i + 1}\rVert >
\gamma$ and each $\ur_i$ is convex.
Define an \emph{$r$-visit} of $\ur_i$ to be any maximal-length contiguous portion of
$\curveA \cap (\ur_i \oplus B(2r))$ which intersects $\ur_i \oplus B(r)$, where $\oplus$ denotes
Minkowski sum.
If $\U$ is $\gamma$-separated for $\gamma \geq 4r$, then any $r$-visit of $\ur_i$ is disjoint from
any $r$-visit of $\ur_j$ for $i \neq j$, in which case define the \emph{true $r$-visit} of $\ur_i$
to be the first visit of $\ur_i$ which occurs after the true $r$-visit of $\ur_{i - 1}$.
(For $\ur_1$ it is the first $r$-visit.)
\end{definition}

\begin{lemma}\lemlab{separated}
If $\U$ is $\gamma$-separated for $\gamma \geq 4r$, then for any curve $\curveB \Subset \U$ and any
reparametrisations $f$ and $g$ such that $\width_{f, g}(\curveA, \curveB) \leq r$, $\curveB_i$ must
map to a point on the true $r$-visit of $\ur_i$ for all $i$.
\end{lemma}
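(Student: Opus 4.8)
The plan is to fix an arbitrary pair of reparametrisations $f,g$ with $\width_{f,g}(\curveA,\curveB)\le r$ and track, for each vertex index $i$, the portion of $\curveA$ that the vertex $\curveB_i$ is matched to. Concretely, I would choose for each $i$ a parameter $t_i$ with $g(t_i)=i$, taken non-decreasing in $i$; since $f$ is non-decreasing, the matched points $a_i=\curveA(f(t_i))$ occur in (weakly) increasing order along $\curveA$. Because $\curveB_i\in\ur_i$ and the width is at most $r$, each $a_i$ satisfies $\lVert a_i-\curveB_i\rVert\le r$, so $a_i\in\ur_i\oplus B(r)$ and hence $a_i$ lies on some $r$-visit of $\ur_i$; call it $V_i$. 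Using the stated fact that $r$-visits of distinct regions are disjoint together with monotonicity of the $a_i$, the visits $V_1,V_2,\dots,V_m$ must themselves appear in increasing order along $\curveA$. The goal then reduces to proving, by induction on $i$, that $V_i$ is exactly the true $r$-visit $T_i$ of $\ur_i$.

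For the base case I would use the boundary conditions of the reparametrisations: $f(0)=1$ and $g(0)=1$, so the start $\curveA_1=\curveA(f(0))$ is matched to $\curveB_1$. Thus $\curveA_1\in\ur_1\oplus B(r)$ lies on an $r$-visit of $\ur_1$, and since $\curveA_1$ is the very first point of $\curveA$, the visit containing it is the first $r$-visit of $\ur_1$, which is by definition $T_1$. Hence $V_1=T_1$.

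The inductive step is where the real content lies. Assuming $V_{i-1}=T_{i-1}$, monotonicity already forces $V_i$ to occur after $T_{i-1}$, and therefore at or after $T_i$ (the first $r$-visit of $\ur_i$ after $T_{i-1}$); the task is to exclude the possibility that $V_i$ lies \emph{strictly} after $T_i$, i.e.\ that the matching skips the true visit. The plan here is to observe that if $V_i$ were strictly after $T_i$, then the entire visit $T_i$ would be traversed by $\curveA$ while $\curveB$ moves along the single edge $\curveB_{i-1}\curveB_i$ (the part of $\curveB$ strictly between the images of $\curveB_{i-1}$, lying in $T_{i-1}$, and of $\curveB_i$, lying in $V_i$). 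I would then exploit the separation: since $\lVert\ur_{i-1}-\ur_i\rVert>\gamma\ge 4r$, the endpoint $\curveB_{i-1}$ is more than $4r$ from $\ur_i$, so by convexity of $\ur_i\oplus B(2r)$ the part of the edge inside $\ur_i\oplus B(2r)$ is a sub-segment ending at $\curveB_i$, along which the distance to $\ur_i$ decreases monotonically. Because $V_i\ne T_i$, the curve $\curveA$ must leave $\ur_i\oplus B(2r)$ between $T_i$ and $V_i$, producing a point at distance more than $2r$ from $\ur_i$ that is nonetheless matched (within $r$) to a point of this edge; combining the monotone-distance structure of the edge with the width bound and the monotonicity of the matching is intended to yield the contradiction, giving $V_i=T_i$.

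The main obstacle is precisely this no-skip argument. It is delicate because the visit is defined with two radii, $r$ and $2r$, and a purely local distance estimate is too weak: a point at distance just over $2r$ from $\ur_i$ can still sit within $r$ of a point of the edge lying in the annulus between radii $r$ and $2r$, so no contradiction follows from distances alone. The argument must instead couple the \emph{monotonicity} of the reparametrisation with the fact that the connecting edge approaches $\ur_i$ monotonically, so that progress already committed at $T_i$ cannot be reconciled with later re-entry at $V_i$. I would also be careful to invoke the stated disjointness of $r$-visits of distinct regions only where it is actually available, and to handle the degenerate cases where $\curveB_{i-1}$ or $\curveB_i$ is matched to a sub-curve of $\curveA$ rather than a single point.
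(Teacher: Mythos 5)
Your proposal follows the same route as the paper's proof: reduce to showing the matched visit is the \emph{true} one, induct on $i$, settle the base case by the endpoint condition of the reparametrisations, and in the inductive step analyse what the single edge $\curveB_{i-1}\curveB_i$ can be matched to, using the separation of consecutive regions. The difference is that where the paper asserts the decisive containment outright, you stop and flag it as an obstacle: you never actually prove that the matching cannot skip the true $r$-visit $T_i$ and land $\curveB_i$ on a later visit. As written, then, your proposal is a plan rather than a proof --- the inductive step is missing its core.

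That said, your diagnosis of the difficulty is exactly right, and you should not expect to close this gap, because the paper's own argument has the same hole and the statement is in fact false as written. The paper claims that since $x$ and $\curveB_i$ map, within distance $r$, to points of $\ur_i \oplus B(r)$, the portion of $\curveA$ matched to the segment $x\curveB_i$ must lie in $\ur_i \oplus B(2r)$; but that hypothesis only places $x$ itself in $\ur_i \oplus B(2r)$, so the matched portion of $\curveA$ is only guaranteed to lie in $\ur_i \oplus B(3r)$ --- precisely the annulus slack you identified, and monotonicity cannot rescue it. Concretely, take $r = 1$, $\ur_1 = \{(0,0)\}$, $\ur_2 = \{(5,0)\}$ (so $\U$ is $\gamma$-separated for a $\gamma \geq 4r$ and $\curveB = \langle (0,0),(5,0)\rangle$ is the only realisation), and let $\curveA = \langle (0,0),\, (4.1,0),\, (2.9,0.5),\, (4.2,0)\rangle$. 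Then $\curveA$ has two $1$-visits of $\ur_2$: the true one, containing $(4.1,0)$, and a second one containing the endpoint $(4.2,0)$, created after $\curveA$ dips outside $\ur_2 \oplus B(2)$ near its vertex $(2.9,0.5)$ and returns. The matching that moves $\curveB$'s position from $(0,0)$ to $(3.5,0)$ while $\curveA$ walks its first edge, holds it at $(3.5,0)$ while $\curveA$ walks its second edge, and moves it to $(5,0)$ while $\curveA$ walks its third edge has width $0.8 \leq r$; yet $\curveB_2$ is matched only to $(4.2,0)$, which lies on the second visit --- and since endpoints must be matched, \emph{every} width-$r$ matching of these curves does this. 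So neither your intended ``committed progress'' contradiction nor the paper's containment claim can be made to work; a correct version of the lemma needs stronger hypotheses (something forbidding $\curveA$ from making two near-approaches to $\ur_i$ that are both within distance $r$ of the same edge of $\curveB$), and the lemma's later applications inherit this problem.
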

\begin{proof}
First, note that since $\width_{f, g}(\curveA, \curveB) \leq r$, $\curveB_i$ must map to a point in
an $r$-visit of $\ur_i$, and thus we only need to prove it is the true $r$-visit.

We prove the claim by induction on $i$.
For $i = 1$, the claim holds, as $\curveB_1$ must map to $\curveA_1$, and $\curveA_1$ is in the
first $r$-visit of $\ur_1$, which is its true $r$-visit.

Now suppose the claim holds for $i-1$.
$\curveB_i$ must map to a point on an $r$-visit of $\ur_i$, and by the induction hypothesis, this
visit must happen after the true $r$-visit of $\ur_{i-1}$ on $\curveA$.
Moreover, as $\U$ is $4r$-separated, the first point in $\ur_i \oplus B(r)$ of the first $r$-visit
of $\ur_i$ that occurs after the true $r$-visit of $\ur_{i - 1}$ (i.e.\@ true $r$-visit of $\ur_i$)
must map to a point $x$ on $\curveB_{i - 1} \curveB_i$.
Note, however, that as both $x$ and $\curveB_i$ map to points in $\ur_i \oplus B(r)$, the portion of
$\curveA$ that the segment $x \curveB_i$ maps to must lie within $\ur_i \oplus B(2r)$, i.e.\@ the
same $r$-visit.
Therefore, all of $x \curveB_i$ is mapped to the true $r$-visit of $\ur_i$, completing the proof.
\end{proof}

\noindent For two points $\alpha$ and $\beta$ on $\curveA$, let $\alpha \leq \beta$ denote that
$\alpha$ occurs before $\beta$, and for any points $\alpha \leq \beta$ let $\curveA(\alpha, \beta)$
denote the subcurve between $\alpha$ and $\beta$.
\begin{definition}\deflab{gsgr}
The \emph{$\delta$-greedy sequence} of $\curveA$ with respect to $\U$, denoted
$\gs(\curveA, \U, \delta)$, is the longest possible sequence $\alpha = \langle \alpha_1, \dots,
\alpha_k\rangle$ of points on $\curveA$, where $\alpha_1 = \curveA_1$, and for any $i > 1$,
$\alpha_i$ is the point furthest along $\curveA$ such that $\lVert \alpha_i - \ur_i\rVert \leq
\delta$ and $\fr(\alpha_{i - 1} \alpha_i, \curveA(\alpha_{i - 1}, \alpha_i)) \leq 2\delta$.
\end{definition}

\begin{observation}\obslab{contained}
For any $i \leq k$, let $\alpha^i = \langle\alpha_1, \dots, \alpha_i\rangle$ be the $i$th prefix of
$\gs(\curveA, \U, \delta)$.
Then $\fr(\alpha^i, \curveA(\alpha_1, \alpha_i)) \leq 2\delta$, and
$\alpha^i \Subset \U_i \oplus B(\delta)$, where $\U_i \oplus B(\delta) =
\langle \ur_1 \oplus B(\delta), \dots, \ur_i \oplus B(\delta)\rangle$.
\end{observation}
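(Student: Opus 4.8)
The plan is to prove both claims simultaneously by induction on $i$, treating the containment as bookkeeping that reads off directly from \defref{gsgr} and treating the \Frechet bound as a gluing of the per-segment mappings that the greedy sequence is defined to guarantee. For the base case $i = 1$ we have $\alpha^1 = \langle\alpha_1\rangle = \langle\curveA_1\rangle$, and $\curveA(\alpha_1, \alpha_1)$ degenerates to the single point $\curveA_1$, so $\fr(\alpha^1, \curveA(\alpha_1, \alpha_1)) = 0 \leq 2\delta$; the containment $\alpha_1 \in \ur_1 \oplus B(\delta)$ holds precisely when $\curveA_1$ lies within distance $\delta$ of $\ur_1$, which is the condition under which the greedy sequence meaningfully begins (otherwise $\frmin(\curveA, \U) > \delta$ and there is nothing to match).

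For the inductive step, assume both claims for $i - 1$. The containment is immediate: by the inductive hypothesis $\alpha_j \in \ur_j \oplus B(\delta)$ for all $j \leq i - 1$, and by \defref{gsgr} the new point satisfies $\lVert\alpha_i - \ur_i\rVert \leq \delta$, i.e.\@ $\alpha_i \in \ur_i \oplus B(\delta)$; hence $\alpha^i \Subset \U_i \oplus B(\delta)$. For the \Frechet bound I would write $\alpha^i = \alpha^{i - 1} \concat \alpha_i$, so that the final segment of $\alpha^i$ is exactly $\alpha_{i - 1}\alpha_i$, and correspondingly split $\curveA(\alpha_1, \alpha_i) = \curveA(\alpha_1, \alpha_{i - 1}) \concat \curveA(\alpha_{i - 1}, \alpha_i)$ at the point $\alpha_{i - 1}$. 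The inductive hypothesis gives $\fr(\alpha^{i - 1}, \curveA(\alpha_1, \alpha_{i - 1})) \leq 2\delta$, and \defref{gsgr} gives $\fr(\alpha_{i - 1}\alpha_i, \curveA(\alpha_{i - 1}, \alpha_i)) \leq 2\delta$. Since both pieces share the common vertex $\alpha_{i - 1}$ (the endpoint of the first pair of subcurves and the start of the second), and since any \Frechet reparametrisation maps start-to-start and end-to-end, I would concatenate the two pairs of reparametrisations realising these bounds into a single pair for $\alpha^i$ and $\curveA(\alpha_1, \alpha_i)$. The width of the glued mapping is the maximum of the two pieces' widths, hence at most $2\delta$, establishing $\fr(\alpha^i, \curveA(\alpha_1, \alpha_i)) \leq 2\delta$.

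The one step requiring care\dsh and the only real obstacle\dsh is this concatenation of reparametrisations at the shared point $\alpha_{i - 1}$: one must check that the two mappings agree there (both fix $\alpha_{i - 1}$), so that gluing yields a continuous, monotone joint reparametrisation, and that splitting $\curveA$ at the interior point $\alpha_{i - 1}$ (which need not be a vertex of $\curveA$) is legitimate, which it is since subcurves may start and end at arbitrary points. This is exactly the standard composition property of the \Frechet distance already invoked in \obsref{perturbation}, so beyond verifying these compatibility conditions the argument is routine.
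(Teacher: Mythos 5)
Your proof is correct and matches the paper's intent: the paper states this as an observation without an explicit proof, and your induction\dsh reading the containment $\lVert\alpha_i - \ur_i\rVert \leq \delta$ directly off \defref{gsgr} and gluing the per-segment mappings realising $\fr(\alpha_{i-1}\alpha_i, \curveA(\alpha_{i-1}, \alpha_i)) \leq 2\delta$ at the shared point $\alpha_{i-1}$, exactly the composition property invoked in \obsref{perturbation}\dsh is precisely the immediate argument the authors rely on. Your base-case remark (that $\alpha_1 \in \ur_1 \oplus B(\delta)$ requires $\curveA_1$ to lie within $\delta$ of $\ur_1$, which holds in every context where the observation is applied) correctly flags the only point the paper glosses over.
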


\begin{lemma}\lemlab{greedyinduction}
If $\U$ is $10\delta$-separated and $\frmin(\curveA, \U) \leq \delta$, then
$\gs(\curveA, \U, \delta)$ has length $m$ and $\alpha_m = \curveA_n$.
\end{lemma}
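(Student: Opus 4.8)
The plan is to pin down an optimal realisation and prove that the greedy sequence neither lags behind it nor runs ahead of it. Let $\curveB \Subset \U$ realise $\fr(\curveA, \curveB) = \frmin(\curveA, \U) \le \delta$, fix reparametrisations $f, g$ with $\width_{f, g}(\curveA, \curveB) \le \delta$, and for each $i$ let $b_i$ be the point of $\curveA$ matched to the vertex $\curveB_i$. Then $\lVert b_i - \ur_i\rVert \le \lVert b_i - \curveB_i\rVert \le \delta$, the points $b_1 \le b_2 \le \dots \le b_m$ occur in order along $\curveA$ by monotonicity of the reparametrisations (and $\lemref{separated}$ additionally places each $b_i$ on the true $\delta$-visit of $\ur_i$), with $b_1 = \curveA_1$ and $b_m = \curveA_n$. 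Writing $\alpha = \gs(\curveA, \U, \delta)$, I would prove by induction on $i$ the invariant that greedy reaches $\alpha_i$ and $b_i \le \alpha_i \le b_{i + 1}$ (dropping the upper bound when $i = m$). The base case is immediate, since $\alpha_1 = \curveA_1 = b_1 \le b_2$; feasibility of $b_i$ at every step (shown next) guarantees greedy never terminates early, so it reaches length exactly $m$; and the case $i = m$ gives $\curveA_n = b_m \le \alpha_m$, forcing $\alpha_m = \curveA_n$.

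For the lower bound in the inductive step, assume $b_{i - 1} \le \alpha_{i - 1} \le b_i$ and show that $b_i$ is a feasible greedy choice, so the furthest-point rule yields $\alpha_i \ge b_i$. The condition $\lVert b_i - \ur_i\rVert \le \delta$ holds and $b_i \ge \alpha_{i - 1}$ by hypothesis, so it remains to bound $\fr(\alpha_{i - 1} b_i, \curveA(\alpha_{i - 1}, b_i))$. Since $b_{i - 1} \le \alpha_{i - 1} \le b_i$, the point $\alpha_{i - 1}$ lies on $\curveA(b_{i - 1}, b_i)$, which the optimal matching sends to the segment $\curveB_{i - 1} \curveB_i$ at width at most $\delta$; let $c \in \curveB_{i - 1}\curveB_i$ be matched to $\alpha_{i - 1}$, so $\lVert \alpha_{i - 1} - c\rVert \le \delta$. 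Restricting the matching gives $\fr(\curveA(\alpha_{i - 1}, b_i), c\,\curveB_i) \le \delta$, while $\obsref{perturbation}$ gives $\fr(c\,\curveB_i, \alpha_{i - 1} b_i) \le \delta$ since $\lVert c - \alpha_{i - 1}\rVert \le \delta$ and $\lVert \curveB_i - b_i\rVert \le \delta$; the triangle inequality for the \Frechet distance then yields $\fr(\alpha_{i - 1} b_i, \curveA(\alpha_{i - 1}, b_i)) \le 2\delta$, as required.

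The \emph{main obstacle} is the upper bound $\alpha_i \le b_{i + 1}$, which is precisely the ``walking too far ahead'' failure that separation must rule out, and I would settle it with a convexity argument. Suppose instead $\alpha_i > b_{i + 1}$. Then $b_i$ and $b_{i + 1}$ both lie on $\curveA(\alpha_{i - 1}, \alpha_i)$, and greedy feasibility gives a monotone matching of this subcurve to the segment $\alpha_{i - 1}\alpha_i$ of width at most $2\delta$; let $\bar b_i, \bar b_{i + 1}$ be the matched points on the segment, occurring in order $\bar b_i \le \bar b_{i + 1} \le \alpha_i$. Using $\lVert b_i - \ur_i\rVert \le \delta$, $\lVert b_{i + 1} - \ur_{i + 1}\rVert \le \delta$, and $\lVert \ur_i - \ur_{i + 1}\rVert > 10\delta$, the width bound forces $\lVert \bar b_i - \ur_i\rVert \le 3\delta$ and $\lVert \alpha_i - \ur_i\rVert \le \delta$, yet $\lVert \bar b_{i + 1} - \ur_i\rVert > 7\delta$. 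But the map $t \mapsto \lVert \ell(t) - \ur_i\rVert$, where $\ell$ parametrises the segment $\alpha_{i - 1}\alpha_i$, is convex, being the convex distance-to-$\ur_i$ function (here convexity of $\ur_i$ is used) composed with an affine map; convexity forces the intermediate value $\lVert \bar b_{i + 1} - \ur_i\rVert \le \max(3\delta, \delta) = 3\delta$, contradicting $> 7\delta$. Hence $\alpha_i \le b_{i + 1}$, closing the induction. The separation hypothesis enters exactly through the $> 10\delta$ gap driving this contradiction, and the conclusion $\alpha_m = \curveA_n$ with length $m$ follows.
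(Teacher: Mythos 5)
Your proof is correct, and it diverges from the paper's at exactly the step that matters. The feasibility half of your induction (showing $b_i$ is a legal greedy choice, hence $\alpha_i \geq b_i$) is essentially the paper's argument: the paper likewise locates a point $x$ on $\opt_{i-1}\opt_i$ matched to $\alpha_{i-1}$ and combines the restricted matching with \obsref{perturbation} and the triangle inequality; in fact your bookkeeping is tighter, since you substitute both endpoints ($c$ for $\alpha_{i-1}$ and $\curveB_i$ for $b_i$) in a single application of \obsref{perturbation} and land exactly on the $2\delta$ budget, whereas the paper bounds $\fr(\alpha_{i-1}\opt_i, \curveA(\alpha_{i-1},\beta_i)) \leq 2\delta$ with $\opt_i$ not on $\curveA$ and then appeals to $\lVert \beta_i - \opt_i\rVert \leq \delta$, which read literally costs an extra $\delta$. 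The genuine difference is how overshooting is excluded. The paper carries only the one-sided invariant $\beta_i \leq \alpha_i$ and gets the ordering $\beta_{i-1} \leq \alpha_{i-1} \leq \beta_i$ from its true-visit machinery: \obsref{contained} plus \lemref{separated} applied to $\U \oplus B(\delta)$ (which is $8\delta$-separated) confine $\alpha_{i-1}$ and $\beta_i$ to the disjoint, ordered true $2\delta$-visits of consecutive inflated regions. You instead carry the two-sided sandwich $b_i \leq \alpha_i \leq b_{i+1}$ and rule out overshoot by contradiction: if $\alpha_i > b_{i+1}$, the greedy width bound and the separation would place along the single segment $\alpha_{i-1}\alpha_i$ three points whose distances to $\ur_i$ are $\leq 3\delta$, $> 7\delta$, and $\leq \delta$ in that order, which is impossible because the distance to the convex region $\ur_i$ is a convex function along a line. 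Your route is more elementary and self-contained: it uses no visit structure, it makes explicit where convexity of the regions and the separation constant enter, and it only needs separation greater than $6\delta$ rather than the full $10\delta$. What the paper's heavier route buys is reuse: \lemref{separated} and the true visits are needed anyway in \lemref{stabbing}, where they both justify the incremental stabbing procedure and localize each greedy step for the $O(m + n \log n)$ running-time bound, so the paper obtains the present lemma almost for free from machinery it must build regardless.
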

\begin{proof}
Let $\gs(\curveA, \U, \delta) = \alpha = \langle \alpha_1, \dots, \alpha_k\rangle$.
Let $\opt = \langle \opt_1, \dots, \opt_n\rangle$ be any curve in $\Real{U}$ such that
$\fr(\curveA, \opt) = \frmin(\curveA, \U)$.
Throughout we fix a mapping realising $\fr(\curveA, \opt)$ and let $\beta_i$ be the point on
$\curveA$ which $\opt_i$ maps to under this mapping.
For the curve $\alpha$ we fix the mapping which is the composition of the maps realising
$\fr(\alpha_{i - 1} \alpha_i, \curveA(\alpha_{i - 1}, \alpha_i)) \leq 2\delta$, and in particular
$\alpha_i$ on $\alpha$ maps to $\alpha_i$ on $\curveA$.

We prove by induction that for $i \leq m$, $\alpha_i$ exists and $\beta_i \leq \alpha_i$.
So assume that $\alpha_{i - 1}$ exists.
By \obsref{contained}, $\alpha^{i - 1} \Subset \U_{i - 1} \oplus B(\delta)$, and, moreover,
$\fr(\curveA(\alpha_1, \alpha_{i - 1}), \alpha^{i - 1}) \leq 2\delta$.
Since $\U$ is $10\delta$-separated, $\U_{i - 1} \oplus B(\delta)$ is $8\delta$-separated, and thus
by \lemref{separated}, $\alpha_{i - 1}$ is on the true $2\delta$-visit of
$\ur_{i - 1} \oplus B(\delta)$ by the prefix curve $\curveA(\alpha_1, \alpha_{i - 1})$.
Observe that the true $2\delta$-visit of $\ur_{i - 1} \oplus B(\delta)$ by the prefix curve
$\curveA(\alpha_1, \alpha_{i - 1})$ is a subset of the true 2$\delta$-visit of
$\ur_{i - 1} \oplus B(\delta)$ by $\curveA$, and thus $\alpha_{i - 1}$ is on the true
$2\delta$-visit of $\ur_{i - 1} \oplus B(\delta)$ by $\curveA$.
We also have that $\opt \Subset \U \oplus B(\delta)$, as $\ur_j \subset \ur_j \oplus B(\delta)$ for
all $j$, so by \lemref{separated}, $\beta_{i - 1}$ and $\beta_i$ are on the true $2\delta$-visit of
$\ur_{i - 1} \oplus B(\delta)$ and $\ur_i \oplus B(\delta)$.
In particular, this implies that $\beta_{i - 1} \leq \alpha_{i - 1} \leq \beta_i$, as the true
$2\delta$-visits of $\ur_{i - 1} \oplus B(\delta)$ and $\ur_i \oplus B(\delta)$ are disjoint.
Thus, some point $x$ on the segment $\opt_{i - 1} \opt_i$ must map to $\alpha_{i - 1}$.
Note that $\fr(x \opt_i, \curveA(\alpha_{i - 1}, \beta_i)) \leq \delta$.
As $\lVert x - \alpha_{i - 1}\rVert \leq \delta$, $\fr(x \opt_i, \alpha_{i - 1} \opt_i) \leq \delta$,
and so by the triangle inequality for \Frechet distance,
$\fr(\alpha_{i - 1} \opt_i, \curveA(\alpha_{i - 1}, \beta_i)) \leq 2\delta$.
Since $\lVert \beta_i - \opt_i\rVert \leq \delta$, $\beta_i$ is a possible choice for $\alpha_i$,
and thus $\alpha_i$ exists and $\beta_i \leq \alpha_i$.
Finally, since $\alpha_i$ exists for all $i \leq m$, $\alpha = \gs(\curveA, \U, \delta)$ has length
$m$, and moreover, since $\beta_m \leq \alpha_m$ and $\beta_m = \curveA_n$, we conclude that
$\alpha_m = \curveA_n$.
\end{proof}

The following lemma is the only place where we require the points to be in $\R^2$.
The proof uses a result from Guibas et al.~\cite{guibas:1993}.
\begin{lemma}\lemlab{stabbing}
For $\curveA$ and $\U$ in $\R^2$, where $\U$ is $10\delta$-separated, $\gs(\curveA, \U, \delta)$ is
computable in $O(m + n\log n)$ time.
\end{lemma}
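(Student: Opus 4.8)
The plan is to compute $\gs(\curveA, \U, \delta)$ by a single forward sweep along $\curveA$, processing the regions $\ur_1, \dots, \ur_m$ in order and maintaining the current greedy point. Recall from \defref{gsgr} that $\alpha_i$ is the point furthest along $\curveA$ satisfying two conditions: a \emph{proximity} condition $\lVert \alpha_i - \ur_i\rVert \le \delta$, and a \emph{shortcut} condition $\fr(\alpha_{i-1}\alpha_i, \curveA(\alpha_{i-1}, \alpha_i)) \le 2\delta$. The structural fact I would exploit is that, because $\U$ is $10\delta$-separated, the portions of $\curveA$ lying within $\delta$ of successive regions are disjoint and appear in order along $\curveA$, which is precisely the mechanism behind \lemref{separated}. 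Consequently the greedy point only ever moves forward, and the subcurves $\curveA(\alpha_{i-1}, \alpha_i)$ charged to successive steps are essentially disjoint and together cover $\curveA$ once; this is what lets the per-vertex costs telescope to $O(n\log n)$, with $O(1)$ bookkeeping per region contributing the $+m$ term. Overall correctness then follows from \lemref{greedyinduction}, which guarantees the computed sequence has length $m$ and ends at $\curveA_n$ whenever $\frmin(\curveA,\U)\le \delta$.

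First I would handle the proximity condition. The distance from a point or a segment of $\curveA$ to a convex constant-complexity region is computable in $O(1)$ time, so as the sweep advances I can detect exactly when $\curveA$ enters and leaves $\ur_i \oplus B(\delta)$. By $10\delta$-separation these neighbourhoods are pairwise separated by more than $8\delta$ and so are met in a single contiguous interval of the relevant visit; hence, within the portion of $\curveA$ following $\alpha_{i-1}$, the set of points satisfying proximity is a prefix terminating at the exit point of $\ur_i \oplus B(\delta)$. Folding this test into the same left-to-right sweep costs $O(1)$ per vertex, or $O(n)$ in total, and $\alpha_i$ will be the nearer of this exit point and the point where the shortcut condition first fails.

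The crux is the shortcut condition. Fixing the start $\alpha_{i-1}$, I need the point $q$ furthest along $\curveA$ (but no further than the proximity cutoff) with $\fr(\alpha_{i-1} q, \curveA(\alpha_{i-1}, q)) \le 2\delta$. To find it I would use the incremental segment-shortcut machinery of Guibas et al.~\cite{guibas:1993}: as the candidate endpoint $q$ advances across each edge of $\curveA$, maintain a compact representation of the constraints forced by a single covering segment (an intersection of per-vertex feasibility wedges on the vertical line at the moving endpoint), updated in amortised $O(\log n)$ time per traversed vertex and reporting the moment feasibility is lost. Restarting this structure at each $\alpha_{i-1}$ but running it only over the subcurve up to $\ur_i$'s exit point, the scanned portions for consecutive steps meet only at the points $\alpha_i$ and are otherwise disjoint, so the total number of vertex updates across all $m$ steps is $O(n)$. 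This yields $O(n\log n)$ for all shortcut computations and $O(m + n\log n)$ overall.

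The main obstacle I anticipate is the shortcut-feasibility analysis itself. Unlike pure ordered segment-stabbing, the covering segment's far endpoint $q$ is pinned to $\curveA$ and moves together with the very subcurve it must cover, so matched-endpoint \Frechet feasibility need not be monotone in $q$ for a general curve. I must therefore verify that in the regime forced by $10\delta$-separation and the proximity gate the feasible endpoints form a prefix of the scanned portion — so that `furthest feasible' is well-defined and detected by a monotone forward sweep — and that the wedge-intersection structure of Guibas et al.\ genuinely captures the $2\delta$-\Frechet condition with its pinned start $\alpha_{i-1}$ (an arbitrary point, not a vertex). The secondary subtlety is the charging argument: I must confirm that separation bounds any overlap between the curve portions scanned for consecutive regions, so the amortised $O(\log n)$ per-vertex cost is incurred $O(n)$ rather than $O(mn)$ times. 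These two points — monotonicity of the shortcut test and disjointness of the scanned portions — are where the real work of the proof lies.
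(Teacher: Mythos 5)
Your overall architecture matches the paper's: compute $\gs(\curveA,\U,\delta)$ one step at a time, reduce the shortcut condition to ordered stabbing of the $2\delta$-disks centred at the vertices of $\curveA$ between $\alpha_{i-1}$ and the candidate endpoint by the segment $\alpha_{i-1}\alpha_i$, maintain this with the incremental wedge structure of Guibas et al., and use separation to charge the work so each vertex of $\curveA$ is processed $O(1)$ times. However, both points you defer as ``where the real work lies'' are genuine gaps, and the first cannot be closed in the form you state it, because the property you hope to verify is false. Feasibility of candidate endpoints is \emph{not} a prefix of the scanned portion: the wedge of stabbers through $\alpha_{i-1}$ may fail to meet the admissible part $S_i(j)$ of edge $j$ (the subsegment of $\curveA_j\curveA_{j+1}$ within distance $\delta$ of $\ur_i$) and yet meet $S_i(j')$ for some later $j'>j$; the paper explicitly warns that the rounds with non-empty intersection need not be contiguous. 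Likewise your proximity gate is not a prefix ending at an ``exit point'': within the relevant visit, $\curveA$ may leave and re-enter $\ur_i\oplus B(\delta)$. Hence a sweep that stops at the first failure of either test can report a point that is not furthest, or stop before a feasible point on a later edge. The paper's resolution needs no monotonicity of feasibility at all: what \emph{is} monotone is emptiness of the stabbing wedge---once no stabber through $\alpha_{i-1}$ pierces the disks seen so far in order, adding further disks cannot revive it. One therefore scans forward until the wedge dies or until the end of the true $2\delta$-visit of $\ur_i\oplus B(\delta)$ (this visit, obtained from \obsref{contained} and \lemref{separated}, is the correct a priori bound on the scan, not the exit from the $\delta$-neighbourhood), checks in every round whether the wedge meets $S_i(j)$, and returns the furthest point of the \emph{last} round in which it does.

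Two smaller deferred items also need arguments, though these are easy and the paper supplies them. The equivalence between $\fr(\alpha_{i-1}\alpha_i, \curveA(\alpha_{i-1},\alpha_i))\le 2\delta$ and ordered stabbing of the $2\delta$-disks is one line: necessity is immediate, and sufficiency holds because $\fr(p_l p_{l+1}, \curveA_l\curveA_{l+1})\le 2\delta$ whenever $p_l\in D_l$ and $p_{l+1}\in D_{l+1}$; the pinned, non-vertex start $\alpha_{i-1}$ is handled by forcing the stabber through $\alpha_{i-1}$ (equivalently, a zero-radius disk there), which only collapses the support hull. Finally, your charging claim that scanned portions ``meet only at the points $\alpha_i$'' is not quite right: the scan for $\alpha_i$ starts at $\alpha_{i-1}$, which lies inside the true visit of $\ur_{i-1}\oplus B(\delta)$, and ends at the end of the true visit of $\ur_i\oplus B(\delta)$, so consecutive scans overlap; but since true visits are pairwise disjoint, each vertex is scanned at most twice, which is all the $O(m+n\log n)$ bound requires.
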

\begin{proof}
Given $\alpha_i$ from $\gs(\curveA, \U, \delta)$, we describe how to compute $\alpha_{i + 1}$, if it
exists.
Let $\curveA_j$ be the smallest-index vertex such that $\alpha_i < \curveA_j$.
Let $D_j, \dots, D_n$ be the sequence of $2\delta$-radius disks, where $D_l$ is centred at
$\curveA_l$.
Observe that for $\alpha_{i + 1}$ to be able to lie on $\curveA_z \curveA_{z + 1}$, for any
$z \geq j$, we first require that $\fr(\alpha_i \alpha_{i + 1}, \curveA(\alpha_i, \alpha_{i + 1}))
\leq 2\delta$, which occurs if and only if there exist points $p_j, \dots, p_z$ that appear in order
along $\alpha_i \alpha_{i + 1}$ such that $p_l \in D_l$.
Clearly, such points are necessary, but they are also sufficient, as $\fr(p_l p_{l + 1},
\curveA_l \curveA_{l + 1}) \leq 2\delta$.
(As $\alpha_i$ and $\alpha_{i + 1}$ lie on $\curveA$, the same holds for $\alpha_i \curveA_j$ and
$\curveA_z \alpha_{i + 1}$.)
$\gs(\curveA, \U, \delta)$ also requires that $\alpha_{i + 1}$ lie within distance $\delta$ of
$\ur_{i + 1}$.
This is equivalent to requiring that $\curveA_z \curveA_{z + 1}$ intersects
$\ur_{i + 1} \oplus B(\delta)$.
As both $\curveA_z \curveA_{z + 1}$ and $\ur_{i + 1} \oplus B(\delta)$ are convex regions, their
intersection is convex, i.e.\@ a single subsegment of $\curveA_z \curveA_{z + 1}$.
Let $S_{i + 1}(z)$ denote this segment, which we can compute in constant time, as $\ur_{i + 1}$ is a
constant complexity convex region.
Note that $\alpha_{i + 1}$ may lie on the same segment of $\curveA$ as $\alpha_i$,
i.e.\@ $z = j - 1$, which is an easier case, as no disks need to be intersected and
$\fr(\alpha_i \alpha_{i + 1}, \curveA(\alpha_i, \alpha_{i + 1})) \leq 2\delta$ holds.

Given an order sequence of $k$ equal radius disks $D_1, \dots, D_k$, say that a line $\ell$ stabs
the disks if for all $j \leq k$, there exists a point $p_j \in \ell \cap D_j$ such that the $p_j$
appear in order along $\ell$.
Guibas et al.~\cite{guibas:1993} gave an $O(k \log k)$-time incremental algorithm that determines
the set of all stabbing lines.
As follows from the description of our problem, their algorithm can be used to determine
$\alpha_{i + 1}$ given $\alpha_i$ by restricting the stabbing line to first pass through $\alpha_i$
and requiring it to intersect $S_{i + 1}(k)$ at the end.

We now sketch the necessary changes.
Their algorithm inserts the disks in order, maintaining three objects\dsh the support hull, limiting
lines, and line stabbing wedge.
The support hull consists of a pair of upper and lower concave chains that all stabbers must pass
between, and the limiting lines represent the largest and smallest slope stabbers.
The wedge is the set of all points $p$ such that there is a stabber that passes through $p$ after
passing through the required points from the disks.

To modify their approach for our setting, we require the stabber to initially pass through
$\alpha_i$.
This actually simplifies the problem by joining and collapsing the chains of the support hull,
\footnote{Alternatively, one can enforce the condition by defining an initial zero-radius disk
$D_0$ at $\alpha_i$, and indeed the referenced work~\cite{guibas:1993} considers stabbers for more
general collections of convex objects.}
and thus we can focus on the wedge.
After $j$ insertions, the wedge boundary consists of $O(j)$ pieces from the disks, flanked by the
limiting lines.
These ordered boundary pieces are stored in a binary tree to facilitate logarithmic time updates
when a new disk is inserted, and we can simply reuse this structure to determine the intersection of
the wedge with $S_{i + 1}(j)$.

By \defref{gsgr}, the line segment $\curveA_z \curveA_{z + 1}$ that $\alpha_{i + 1}$ lies on must
have $z$ be as large as possible.
Thus, we run the above incremental procedure, where in the $j$th round we check for intersection
with $S_{i + 1}(j)$.
If no such intersection is found before we reach the end of the $\curveA$ or the wedge becomes
empty, then $\alpha_{i + 1}$ does not exist.
Otherwise, $\alpha_{i + 1}$ is defined.
However, the rounds which have intersection with $S_{i + 1}(j)$ need not be contiguous; thus, care
is needed to determine the last such intersection efficiently.

Let $k$ be the largest index such that $\alpha_k$ is defined.
By \obsref{contained}, for any $i \leq k$, $\fr(\alpha^i, \curveA(\alpha_1, \alpha_i)) \leq 2\delta$
and $\alpha^i \Subset \U_i \oplus B(\delta)$.
Since $\U$ is $10\delta$-separated, $\U_i \oplus B(\delta)$ is $8\delta$-separated, and so by
\lemref{separated}, $\alpha_i$ must be in the true $2\delta$-visit of $\ur_i \oplus B(\delta)$ by
$\curveA(\alpha_1, \alpha_k)$.
Thus, when computing $\alpha_i$, we only need to consider vertices from $\curveA$ which occur after
$\alpha_{i - 1}$ and before the end of the true $2\delta$-visit of $\ur_i \oplus B(\delta)$.
If $n_i$ is the number of such vertices, it therefore takes $O(1 + n_i \log n_i)$ time to compute
$\alpha_i$ with the algorithm above.
Moreover, as the true $2\delta$-visits for $\ur_i \oplus B(\delta)$ and $\ur_j \oplus B(\delta)$ for
$i \neq j \leq k$ are disjoint, any vertex of $\curveA$ can be counted by at most two of the $n_i$,
and so $\sum_i n_i = n$.
Thus, the total running time is $O(m + n \log n) + \sum_{i = 1}^k O(1 + n_i \log n_i) =
O(m + n \log n)$, where the leading $O(m + n \log n)$ term accounts for the time to determine if
$\alpha_{k + 1}$ does not exist for $k < m$.
\end{proof}

\begin{theorem}
Let $\U$ be $10r$-separated for some $r > 0$.
There is a $3$-decider for \probref{decision} with running time $O(m + n \log n)$ in the plane that
works for any query value $0 < \delta \leq r$.
\end{theorem}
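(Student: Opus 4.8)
The plan is to run the $\delta$-greedy-sequence machinery developed above and read off the answer from whether the sequence completes. First I would observe that since $\U$ is $10r$-separated and the query satisfies $\delta \leq r$, the curve $\U$ is also $10\delta$-separated, so \lemref{greedyinduction} and \lemref{stabbing} both apply at this value of $\delta$. The decider then computes $\gs(\curveA, \U, \delta)$ via \lemref{stabbing} in $O(m + n\log n)$ time. If the resulting sequence has length $m$ with $\alpha_m = \curveA_n$, I would convert it into a realisation and return that realisation; otherwise I would return \False.

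To turn the greedy sequence $\alpha = \langle\alpha_1, \dots, \alpha_m\rangle$, whose vertices lie on $\curveA$, into a curve in $\Real{U}$, I would set $\curveB_i$ to be the point of $\ur_i$ nearest $\alpha_i$. Since \defref{gsgr} ensures $\lVert\alpha_i - \ur_i\rVert \leq \delta$, this gives $\lVert\alpha_i - \curveB_i\rVert \leq \delta$, and each $\curveB_i$ is found in constant time as the regions have constant complexity. Hence $\curveB = \langle\curveB_1, \dots, \curveB_m\rangle \Subset \U$ is a $\delta$-perturbation of $\alpha$ in the sense of \obsref{perturbation}.

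Correctness would then follow by chaining three facts. When the sequence completes, \obsref{contained} gives $\fr(\alpha, \curveA) \leq 2\delta$ (here $\alpha_1 = \curveA_1$ and $\alpha_m = \curveA_n$, so $\curveA(\alpha_1, \alpha_m)$ is all of $\curveA$); \obsref{perturbation} gives $\fr(\alpha, \curveB) \leq \delta$; and the triangle inequality for the \Frechet distance then yields $\fr(\curveA, \curveB) \leq 3\delta$, so returning $\curveB$ is always a valid output. To match the specification of a $3$-decider, note that if $\frmin(\curveA, \U) \leq \delta$ then \lemref{greedyinduction} forces the sequence to complete, so a curve is always returned in that regime. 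Conversely, whenever \False\ is returned the sequence failed to complete, and this is correct: were $\frmin(\curveA, \U) > 3\delta$, a completed sequence would have exhibited a $\curveB \Subset \U$ with $\fr(\curveA, \curveB) \leq 3\delta$, contradicting $\frmin(\curveA, \U) > 3\delta$, so the algorithm must output \False\ there, while in the slack regime $\delta < \frmin(\curveA, \U) \leq 3\delta$ either answer is permitted.

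Essentially all the heavy lifting is already carried out by the earlier lemmas, so I expect no serious obstacle. The one step requiring care is the bookkeeping that links the behaviour of the greedy sequence to the decider's specification: completing always produces a certified $3\delta$ solution, while failing to complete is a legitimate \False\ only because of the built-in slack between $\delta$ and $3\delta$ in the definition of a decider. The running time is immediate from \lemref{stabbing}, with the $O(m)$ perturbation step absorbed into the $O(m + n\log n)$ bound.
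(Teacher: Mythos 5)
Your proposal is correct and follows essentially the same route as the paper's proof: compute the greedy sequence via \lemref{stabbing}, perturb it into a realisation $\curveB \Subset \U$ within distance $\delta$, bound $\fr(\curveA,\curveB)\leq 2\delta+\delta=3\delta$ by the triangle inequality, and invoke \lemref{greedyinduction} for completeness when $\frmin(\curveA,\U)\leq\delta$. The only differences are presentational\dsh you make explicit the reduction from $10r$-separation to $10\delta$-separation and the verification of the $\frmin(\curveA,\U)>3\delta$ case, both of which the paper leaves implicit.
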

\begin{proof}
Compute $\gs(\curveA, \U, \delta)$.
If it has length $m$, then let $\curveB = \langle \curveB_1, \dots, \curveB_m\rangle$ be any curve
in $\Real{U}$ such that $\lVert \curveB_i - \alpha_i\rVert \leq \delta$ for all $i$.
If this occurs and if $\alpha_m = \curveA_n$, we output $\curveB$ as our solution, and otherwise we
output \False.
Thus, the running time follows from \lemref{stabbing}.

Observe that if we output a curve $\curveB$, then $\fr(\curveB, \curveA) \leq 3\delta$, using the
triangle inequality:
\[\fr(\curveB, \curveA) \leq \fr(\curveB, \alpha) + \fr(\alpha, \curveA) \leq \delta + 2\delta =
3\delta\,.\]
Thus, we only need to argue that when $\frmin(\curveA, \U)\leq \delta$, a curve is produced, which
is immediate from \lemref{greedyinduction}.
\end{proof}

\section{Algorithms for Upper Bound and Expected \Frechet Distance}\label{sec:sakoe-chiba}
As shown in Section~\ref{sec:proofs}, finding the upper bound and expected discrete and continuous
Fr\'echet distance is hard even for simple uncertainty models.
However, restricting the possible couplings between the curves makes the problem solvable in
polynomial time.
In this section, we use \emph{indecisive} curves.
Define a Sakoe--Chiba time band~\cite{sakoe:1978} in terms of reparametrisations of the curves: for
a band of width $w$ and all $t \in [0, 1]$, if $\phi_1(t) = x$, then $\phi_2(t) \in [x - w, x + w]$.
In the discrete case we can only couple point $i$ on one curve to points $i \pm w$ on the other
curve.

\subsection{Upper Bound Discrete Fr\'echet Distance: Precise and Indecisive}\label{sec:alg_pr_ind}
First of all, let us discuss a simple setting.
Suppose we are given a curve $\sigma = \langle q_1, \dots, q_n\rangle$ of $n$ precise points
and $\mathcal{U} = \langle U_1, \dots, U_n\rangle$ of $n$ indecisive points, each of them having
$\ell$ options, so for all $i \in [n]$ we have $U_i = \{p_i^1, \dots, p_i^\ell\}$.
We would like to answer the following decision problem:
\emph{`If we restrict the couplings to a Sakoe--Chiba band of width $w$, is it true that
$\dfrmax(\mathcal{U}, \sigma) \leq \delta$ for some given threshold $\delta > 0$?'}
So, we want to solve the decision problem for the upper bound discrete Fr\'echet distance between
a precise and an indecisive curve.

In a fully precise setting the discrete Fr\'echet distance can be computed using dynamic
programming~\cite{eiter:1994}.
We create a table where the rows correspond to vertices of one curve, say $\sigma$, and columns
correspond to vertices of the other curve, say $\pi$.
Each table entry $(i, j)$ then contains a \True\ or \False\ value indicating if there is a coupling
between $\sigma[1: j]$ and $\pi[1: i]$ with maximum distance at most $\delta$.
We use a similar approach.

Suppose we position $\mathcal{U}$ to go horizontally along the table, and $\sigma$ to go vertically.
Consider an arbitrary column in the table and suppose that we fix the realisation of $\mathcal{U}$
up to the previous column.
Then we can simply consider the new column $\ell$ times, each time picking a different realisation
for the new point on $\mathcal{U}$, and compute the resulting reachability.
As we do this for the entire column at once, we can ensure consistency of our choice of realisation.
This procedure will give us a set of binary reachability vectors for the new column, each vector
corresponding to a realisation.
The \emph{reachability vector} is a boolean vector that, for the cell $(i, j)$ of the table, states
whether for a particular realisation $\pi$ of $\mathcal{U}[1: i]$ the discrete Fr\'echet distance
between $\pi$ and $\sigma[1: j]$ is below some threshold $\delta$.

An important observation is that we do not need to distinguish between the realisations that give
the same reachability vector: once we start filling out the next column, all we care about is the
existence of some realisation leading to that particular reachability vector.
So, we can keep a \emph{set} of binary vectors corresponding to reachability in the column.

This procedure was suggested for a specific realisation.
However, we can also repeat this for each previous reachability vector, only keeping the unique
results.
As all the realisation choices happen along $\mathcal{U}$, by treating the table column-by-column we
ensure that we do not have issues with inconsistent choices.
Therefore, repeating this procedure $n$ times, we fill out the last column of the table.
At that point, if any vector has \False\ in the top right cell, then there is some realisation $\pi
\Subset \mathcal{U}$ such that $\dfr(\pi, \sigma) > \delta$, and hence
$\dfrmax(\mathcal{U}, \sigma) > \delta$.

{\def\rds{2pt}
\begin{figure}
\begin{subfigure}{.3\textwidth}
\centering
\begin{tikzpicture}
\fill[gray,opacity=.2]  (0, 0) circle[radius=.4]
                        (2, 1) circle[radius=.4];
\draw[red,thin] (0, 0) -- (2, 1);
\fill[red]  (0, 0) node[below] {$1$} circle[radius=\rds]
            (2, 1) node[below] {$2$} circle[radius=\rds];
\fill[blue] (0, .2) node[left] {$1^a$} circle[radius=\rds]
            (0.2, -.1) node[below] {$1^b$} circle[radius=\rds]
            (.2, .2) node[above] {$2^a$} circle[radius=\rds]
            (2.2, 1) node[right] {$2^b$} circle[radius=\rds]
            (2.3, .8) node[below] {$3^a$} circle[radius=\rds]
            (1.8, 1) node[above] {$3^b$} circle[radius=\rds];
\end{tikzpicture}
\end{subfigure}
\begin{subfigure}{.3\textwidth}
\centering
\begin{tabular}{| c | c | c |}
\hline
F F & F T & T T\\
\hline
T T & T F & F F\\
\hline
\end{tabular}
\end{subfigure}
\begin{subfigure}{.3\textwidth}
\centering
\begin{tabular}{| c |}
\hline
F\\
\hline
T\\
\hline
\end{tabular}
$\to$
\begin{tabular}{| c | c |}
\hline
F & T\\
\hline
T & F\\
\hline
\end{tabular}
$\to$
\begin{tabular}{| c |}
\hline
\color{red}{T}\\
\hline
F\\
\hline
\end{tabular}
\end{subfigure}
\caption{Left: An indecisive and a precise curve.
Middle: Distance matrix.
`T T' in the bottom left cell means $\lVert 1 - 1^a\rVert \leq \delta$ and
$\lVert 1 - 1^b\rVert \leq \delta$.
Right: Computing reachability matrix, column by column.
Note two reachability vectors for the second column.}
\label{fig:alg}
\end{figure}
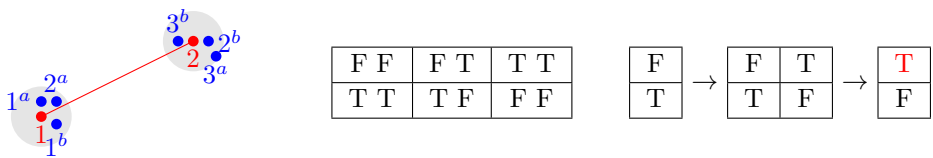}

In more detail, we use two tables, distance matrix $D$ and reachability matrix $R$.
First of all, we initialise the distance matrix $D$ and the reachability of the first column for all
possible locations of $U_1$.
Then we fill out $R$ column-by-column.
We take the reachability of the previous column and note that any cell can be reached either with
the horizontal step or with the diagonal step.
We need to consider various extensions of the curve $\mathcal{U}$ with one of the $\ell$
realisations of the current point; the distance matrix should allow the specific coupling.
Assume we find that a certain cell is reachable; if allowed by the distance matrix, we
can then go upwards, marking cells above the current cell reachable, even if they are not directly
reachable with a horizontal or diagonal step.
Then we just remember the newly computed vector; we make sure to only add distinct vectors.
The computation is illustrated in Figure~\ref{fig:alg}; the pseudocode is given in
Algorithm~\ref{alg:pr_ind}.

\begin{algorithm}[p]
\caption{Finding time-banded upper bound discrete Fr\'echet distance on an indecisive and a precise
curve.}
\label{alg:pr_ind}
\begin{algorithmic}[1]
\Function{TBDFDIndPr}{$\mathcal{U}, \sigma, w, \delta$}
    \LineComment{Input constraint: $\lvert\mathcal{U}\rvert = \lvert\sigma\rvert = n$ and $0 \leq w < n$}
    \State Initialise matrix $D$ of size $n \times \ell \times (2w + 1)$
    \ForAll{$i \in [n]$}
        \ForAll{$k \in [\ell]$}
            \ForAll{$j \in \{\max(1, i - w), \dots, \min(n, i + w)\}$}
                \State $D_{i, k, j} \gets [d(p_i^k, q_j) \leq \delta ?]$
            \EndFor
        \EndFor
    \EndFor
    \State Initialise matrix $R$ of size $n \times 2^{2w + 1} \times 2w + 1$
    \State $R0 \gets \langle r_1 = \True, r_2 = \False, r_3 = \False, \dots, r_{w + 1} =
    \False\rangle$
    \ForAll{$k \in [\ell]$}
        \State $R_{1, k} \gets \Call{Propagate}{R0, D_{1, k}, 1, w, n}$
    \EndFor
    \ForAll{$i \in [n] \setminus \{1\}$}
        \ForAll{$A \in R_{i - 1}$} \Comment{For each reachability vector}
            \State $B \gets A \lor (A << 1)$ \Comment{Horizontal or diagonal step}
            \ForAll{$k \in [\ell]$}
                \State $C \gets \Call{Propagate}{B, D_{i, k}, i, w, n}$
                \State Add $C$ to set $R_{i}$
            \EndFor
        \EndFor
    \EndFor
    \State $r \gets \True$
    \ForAll{$A \in R_n$}
        \State $r \gets r \land A_n$
    \EndFor
    \State\Return $r$
\EndFunction
\Function{Propagate}{$A, B, i, w, n$}
    \LineComment{Propagate the reachability upwards in a column}
    \State $C \gets A \land B$ \Comment{Step and distance matrix}
    \State $r \gets \False$
    \ForAll{$j \in \{\max(1, i - w), \dots, \min(n, i + w)$}
        \If{$B_j \land C_j$}
            \State $r \gets \True$ \Comment{Current cell already reachable}
        \ElsIf{$B_j \land \neg C_j \land r$}
            \State $C_j \gets \True$ \Comment{Vertical step}
        \ElsIf{$\neg B_j$}
            \State $r \gets \False$
        \EndIf
    \EndFor
    \State\Return $C$
\EndFunction
\end{algorithmic}
\end{algorithm}

\paragraph*{Correctness}
We use the following loop invariant to show correctness.
\begin{lemma}\label{lem:alg_basic_correct}
Consider column $i$.
Every reachability vector of this column corresponds to at least one realisation of
$\mathcal{U}[1: i]$ and the discrete Fr\'echet distance between that realisation and
$\sigma[1: \min(n, i + w)]$; and every realisation corresponds to some reachability vector.
\end{lemma}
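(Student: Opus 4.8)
The plan is to prove the invariant by induction on the column index $i$, establishing the two implications separately: \emph{soundness}, that every reachability vector stored for column $i$ arises from at least one realisation of $\mathcal{U}[1:i]$ and faithfully records the banded discrete Fréchet reachability of that realisation against $\sigma[1:\min(n,i+w)]$; and \emph{completeness}, that every realisation of $\mathcal{U}[1:i]$ gives rise to one of the stored vectors. I will lean throughout on the standard Eiter--Mannila recursion for $\dfr$ recalled in the preliminaries, restricted to the cells permitted by the band.

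For the base case $i=1$, the relevant rows are those of the band $[\,1,\min(n,1+w)\,]$. For each of the $\ell$ choices $k$ realising $U_1$ as $p_1^k$, the call \textsc{Propagate}$(R0, D_{1,k}, 1, w, n)$ starts from the seed vector $R0$, in which only the bottom cell is a candidate start, intersects it with the feasibility column $D_{1,k}$, and then sweeps upward to add vertical steps. This yields exactly the reachable set of the single-column free space of the one-vertex curve $\langle p_1^k\rangle$ against $\sigma[1:\min(n,1+w)]$, which is the base case of the recursion. Since the loop enumerates all $\ell$ realisations, both directions hold at $i=1$.

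For the inductive step, assume the invariant at column $i-1$. The structural fact on which the whole algorithm rests is that the banded reachability column of a realisation $\pi \concat p_i^k$ of $\mathcal{U}[1:i]$ depends \emph{only} on the reachability column of its prefix $\pi$ and on the choice $p_i^k$, and not on $\pi$ itself. I would establish this by reading off the recursion: a band cell $(i,j)$ is reachable iff it is feasible (recorded by $D_{i,k}$) and reachable by a horizontal step from $(i-1,j)$, a diagonal step from $(i-1,j-1)$, or a vertical step from $(i,j-1)$. The first two predecessors lie in column $i-1$ and are seen only through its reachability vector, which is precisely what the line $B \gets A \lor (A << 1)$ computes (the horizontal predecessor together with the band-shifted diagonal one); the vertical step is the upward sweep inside \textsc{Propagate}; and feasibility is the conjunction with $D_{i,k}$. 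Hence \textsc{Propagate}$(B, D_{i,k}, i, w, n)$ outputs exactly the reachability column of $\pi \concat p_i^k$.

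Soundness now follows: each vector added to $R_i$ comes from some $A \in R_{i-1}$, which by the inductive hypothesis is the reachability vector of some realisation $\pi$ of $\mathcal{U}[1:i-1]$, together with a choice $k$, so by the observation it equals the reachability column of $\pi \concat p_i^k$. For completeness, take any realisation $\pi' = \pi \concat p_i^k$ of $\mathcal{U}[1:i]$; its prefix $\pi$ has a representative $A \in R_{i-1}$ by the inductive hypothesis, $A$ is exactly $\pi$'s reachability column by soundness, and processing $A$ with choice $k$ reproduces $\pi'$'s column, which is inserted into $R_i$. As the loop ranges over all $A \in R_{i-1}$ and all $k \in [\ell]$, no realisation is missed. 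The main obstacle I expect is justifying the key observation, i.e.\ that collapsing realisations sharing a reachability vector loses nothing; this is exactly where the fact that all realisation choices occur along $\mathcal{U}$, processed column by column, is indispensable. A secondary technicality is the band boundary: near the start and end the band is clipped to $[\max(1,i-w),\min(n,i+w)]$, so the shift $A << 1$ and the index range in \textsc{Propagate} must be read relative to the band that moves up by one as $i$ increases, and one must check that cells outside the band are never spuriously marked reachable.
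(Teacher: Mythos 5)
Your proposal is correct and follows essentially the same route as the paper's proof: induction over columns, with the key observation that the reachability of column $i$ depends only on the reachability vector of column $i-1$ and the choice of realisation $p_i^k$ for the new point, so enumerating all stored vectors and all $\ell$ choices is both sound and complete. The paper states this quite tersely; your version merely makes the soundness/completeness split and the band-boundary bookkeeping explicit.
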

\begin{proof}
The statement is trivial for the first column: we consider all $\ell$ possible realisations of
$U_1$ and compute reachability of cells $(1, 1)$ to $(1, 1 + w)$ in a straightforward
way.

Now suppose the statement holds for column $i$.
As follows from the recurrence establishing the discrete Fr\'echet distance, the reachability of
column $i + 1$ only depends on the distance matrix for column $i + 1$ and the reachability of column
$i$.
We consider every possible extension of $\mathcal{U}[1: i]$ to $\mathcal{U}[1: i + 1]$, as for every
reachability vector of column $i$ we consider all $\ell$ options for the distance matrix for column
$i + 1$.
Thus, we only consider valid realisations for column $i + 1$, and we consider all of them from the
point of view of reachability.
\end{proof}

\paragraph*{Running time}
First of all, populating the distance matrix takes time $\Theta(\ell n w)$.
A call to \textsc{Propagate} takes $\Theta(w)$ time, so initialisation of first column of
reachability matrix takes $\Theta(\ell w)$ time.
Note that, at any further point, we may have at most $2^{2w + 1}$ distinct reachability vectors; for
each of them, we get $\ell$ calls to \textsc{Propagate}, taking $\Theta(4^w \ell w)$ time per
column, so over all columns we need $\Theta(4^w \ell w n)$ time.
If we assume that adding an element to the set takes amortised constant time, then the previous
value dominates.
Finally, the check at the end takes $\Theta(4^w)$ time.
So, overall the algorithm runs in time $\Theta(4^w \ell n w)$.
This agrees with our hardness result: for a small fixed-width time band, we get the running time of
$\Theta(\ell n)$, whereas if we set $w = n - 1$ to compute the unrestricted distance, we get
$\Theta(4^n \ell n^2)$\dsh clearly, exponential time.
We can also only store vectors that dominate in terms of \False\ values, as we are interested in the
worst case.
This improvement reduces the running time by a factor of $\sqrt{w}$.

\begin{theorem}
Problem \textsc{Upper Bound Discrete Fr\'echet} restricted to a Sakoe--Chiba time band of width $w$
on a precise curve and an uncertain curve on indecisive points with $\ell$ options, both of length
$n$, can be solved in time $\Theta(4^w \ell n \sqrt{w})$ in the worst case.
\end{theorem}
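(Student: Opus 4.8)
The plan is to assemble the theorem from the three pieces already in place: the column-by-column dynamic program of Algorithm~\ref{alg:pr_ind}, its loop invariant from Lemma~\ref{lem:alg_basic_correct}, and a combinatorial bound on how many reachability vectors survive the domination pruning. First I would argue correctness. By Lemma~\ref{lem:alg_basic_correct}, once the algorithm finishes the final column, the set $R_n$ contains exactly the reachability vectors realised by the choices made along $\mathcal{U}$; hence $\dfrmax(\mathcal{U}, \sigma) \leq \delta$ holds precisely when every realisation keeps the top cell $(n, n)$ reachable, i.e.\ when every $A \in R_n$ has $A_n = \True$. The final conjunction $r \gets \bigwedge_{A \in R_n} A_n$ returns exactly this boolean, so the output is correct.

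Next I would establish the running time, starting from the unpruned bound. Filling the distance matrix $D$ costs $\Theta(\ell n w)$, since for each point and realisation only the $2w + 1$ band entries are touched. Each call to \textsc{Propagate} runs in $\Theta(w)$ time, and for each of the at most $2^{2w + 1}$ reachability vectors in a column we issue $\ell$ such calls, giving $\Theta(4^w \ell w)$ per column and $\Theta(4^w \ell n w)$ overall, under the assumption that deduplicating insertions into the vector set take amortised constant time (realisable by hashing each boolean vector).

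The substantive step, and the main obstacle, is the $\sqrt{w}$ improvement. Here I would exploit that for the worst case we only ever need the vectors that are maximal in their set of \False\ entries: if a vector $A'$ has \False\ wherever another vector $A$ does (equivalently, the reachable set of $A'$ is contained in that of $A$), then $A$ is redundant, because the step and \textsc{Propagate} operations are monotone in the incoming reachability, so any realisation extending $A$ to a top-cell \False\ is matched by one extending $A'$. Thus it suffices to store, per column, an antichain under reachable-set inclusion. By Sperner's theorem, the largest antichain in the Boolean lattice on the $2w + 1$ band cells has size $\binom{2w + 1}{w} = \Theta(4^w / \sqrt{w})$ by Stirling's approximation, so each column holds $O(4^w / \sqrt{w})$ vectors. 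Multiplying the per-vector cost $\Theta(\ell w)$ over these vectors and over all $n$ columns yields the claimed $O(4^w \ell n \sqrt{w})$ bound; a matching $\Omega$ worst-case instance comes from forcing the surviving vectors to populate the entire middle layer. The delicate points to verify are the monotonicity that legitimises the pruning and that the antichain can be maintained within the stated time budget rather than incurring extra overhead from repeated domination tests.
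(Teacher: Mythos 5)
Your proposal is correct and follows essentially the same route as the paper's: the same column-by-column dynamic program (Algorithm~\ref{alg:pr_ind}), the same loop invariant (Lemma~\ref{lem:alg_basic_correct}) for correctness via the final conjunction over $R_n$, and the same $\Theta(\ell n w)$ distance-matrix cost plus $\Theta(4^w \ell n w)$ unpruned propagation count. The only place you go beyond the paper is the $\sqrt{w}$ factor: the paper merely asserts that storing only vectors that dominate in \False\ entries saves a factor of $\sqrt{w}$, whereas you justify it via monotonicity of \textsc{Propagate} plus Sperner's theorem (the pruned vectors form an antichain of size at most $\binom{2w+1}{w} = \Theta(4^w/\sqrt{w})$), which is exactly the intended argument; your closing caveat that the antichain must also be \emph{maintainable} within the stated budget is a genuine subtlety, but it is one the paper's own proof leaves unaddressed as well.
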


\subsection{Upper Bound Discrete Fr\'echet Distance: Indecisive}\label{app:alg_ind}
Now we extend our previous result to the setting where both curves are indecisive, so instead of
$\sigma$ we have $\mathcal{V} = \langle V_1, \dots, V_n\rangle$, with, for each $j \in [n]$,
$V_j = \{q_j^1, \dots, q_j^\ell\}$.
Suppose we pick a realisation for curve $\mathcal{V}$.
Then we can apply the algorithm we just described.
We cannot run it separately for every realisation; instead, note that the part of the realisation
that matters for column $i$ is the points from $i - w$ to $i + w$, since any previous or further
points are outside the time band.
So, we can fix these $2w + 1$ points and compute the column.
We do so for each possible combination of these $2w + 1$ points.

\begin{lemma}\label{lem:indind}
Any reachability vector we store in column $i$ corresponds to some realisation of the subcurves
$\mathcal{U}[1: i]$ and $\mathcal{V}[1: \min(i + w, n)]$, and every such realisation has the
resulting reachability vector stored in column $i$.
\end{lemma}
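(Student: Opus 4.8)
The plan is to prove the statement as a loop invariant by induction on the column index $i$, closely mirroring the argument of Lemma~\ref{lem:alg_basic_correct} but accounting for the fact that the second curve is now indecisive as well. The crucial extra ingredient is that the state maintained per column must record not merely the boolean reachability vector but also the realisation chosen for the points of $\mathcal{V}$ currently inside the band window, i.e.\ the indices $\{\max(1, i - w), \dots, \min(n, i + w)\}$. This is what lets us enforce that the realisation of $\mathcal{V}$ stays consistent as the window slides from one column to the next; points of $\mathcal{V}$ with index below $i - w$ lie outside every band for columns $\geq i$ and so are irrelevant going forward, which is why tracking only the window (rather than the whole prefix) suffices to justify the existential `corresponds to some realisation of $\mathcal{V}[1:\min(i + w, n)]$'.

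For the base case $i = 1$, the window on $\mathcal{V}$ covers $V_1, \dots, V_{\min(1 + w, n)}$ and the only relevant point of $\mathcal{U}$ is $U_1$. Since the algorithm enumerates every combination of a realisation of $U_1$ with a realisation of each $V_j$ in this window, and for each fully fixed local configuration computes the reachability of cells $(1, 1), \dots, (1, 1 + w)$ exactly as in the precise case, both directions of the claim hold immediately.

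For the inductive step I would assume the invariant for column $i$ and argue both inclusions. For completeness, take any realisation of $\mathcal{U}[1: i + 1]$ and $\mathcal{V}[1: \min(i + 1 + w, n)]$; restricting it to $\mathcal{U}[1: i]$ and $\mathcal{V}[1: \min(i + w, n)]$ yields a realisation whose vector, together with its window labelling, is stored in column $i$ by the induction hypothesis. When processing column $i + 1$ the algorithm tries every realisation of $U_{i + 1}$ and every realisation of the single point $V_{i + 1 + w}$ newly entering the window, while reusing the realisations of the overlapping points $V_{i + 1 - w}, \dots, V_{i + w}$ from the stored state; as our chosen realisation is one such extension, its vector is produced and stored. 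For soundness, conversely, every vector placed in column $i + 1$ arises from some stored column-$i$ state, which by the hypothesis comes from a genuine realisation, extended by a definite choice of $U_{i + 1}$ and of the entering $\mathcal{V}$ point with the overlap held fixed, and hence corresponds to an actual realisation of the larger prefixes. (Distinct realisations may collapse to the same vector, but this is harmless, since the invariant only asserts a correspondence, not a bijection.)

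It remains to note that the boolean entries themselves are correct: once the realisations of both curves inside the band are fixed, the \emph{Propagate} step (horizontal or diagonal, then vertical within the column) computes exactly the banded discrete Fréchet reachability of the fixed pair of polylines, by the recurrence of Eiter and Mannila recalled earlier. The main obstacle I anticipate is precisely the bookkeeping that guarantees $\mathcal{V}$-consistency across the sliding window\dsh showing that dropping the exiting point $V_{i - w}$ and freely choosing the entering point $V_{i + 1 + w}$, while forcing equality on the overlap, neither fabricates a spurious global realisation (soundness) nor omits a legitimate one (completeness). This is the one place where the indecisive--indecisive setting genuinely departs from the precise--indecisive case of Lemma~\ref{lem:alg_basic_correct}, in which $\sigma$ was fixed and no such consistency question arose.
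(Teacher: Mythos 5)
Your proof is correct and takes essentially the same approach as the paper's: induction on the column index, with the central observation that only the sliding window of $2w+1$ points of $\mathcal{V}$ needs to be tracked, because reachability of a column depends only on the previous column's reachability and the current column's distance matrix, so points of $\mathcal{V}$ before index $i - w + 1$ can be safely forgotten. Your explicit two-directional treatment (completeness and soundness) of dropping the exiting point and enumerating the entering point is just a more detailed rendering of the paper's terser argument, which fixes the new window realisation and aggregates over the possible values of the exiting point $V_{i-w}$.
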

\begin{proof}
First of all, consider the statement for column~$1$.
Clearly, we consider all possible realisations of both subcurves, so the statement holds.

Now, as we move from column $i$ to column $i + 1$, we fix the realisation of points $i - w + 1$ to
$i + w + 1$ on curve $\mathcal{V}$ and consider all the vectors stemming from the possible values of
point $i - w$; as in Lemma~\ref{lem:alg_basic_correct}, we cover all realisations of curve
$\mathcal{U}$.

As for curve $\mathcal{V}$, note that we, again, only need the reachability from the previous column
and the distance matrix from the current column, so the points before $i - w + 1$ do not play a role
for the consistency between the two, and thus they can be ignored.

So, we only get reachability vectors corresponding to valid realisations, and we do not miss any,
as required.
\end{proof}

The running time is now $\Theta(4^w \ell^{2w + 1} n w)$, as we consider all combinations of the
$2w + 1$ relevant points on $\mathcal{V}$ with $\ell$ options per point.
For small constant $w$ and $\ell$, we get $\Theta(n)$; for $w = n - 1$, we get
$\Theta(4^n n^2 \ell^{2n})$\dsh exponential time in $n$.
As in the previous algorithm, we can store the boolean vectors more efficiently, reducing the
running time by a factor of $\sqrt{w}$.

\begin{algorithm}[p]
\caption{Finding time-banded upper bound discrete Fr\'echet distance on two indecisive curves.}
\label{alg:ind}
\begin{algorithmic}[1]
\Function{TBDFDIndInd}{$\mathcal{U}, \mathcal{V}, w, \delta$}
    \LineComment{Input constraint: $\lvert\mathcal{U}\rvert = \lvert \mathcal{V}\rvert = n$ and $0
    \leq w < n$}
    \State Initialise matrix $D$ of size $n \times \ell \times (2w + 1) \times \ell$
    \ForAll{$i \in [n]$}
        \ForAll{$k \in [\ell]$}
            \ForAll{$j \in \{\max(1, i - w), \dots, \min(n, i + w)\}$}
                \ForAll{$s \in [\ell]$}
                    \State $D_{i, k, j, s} \gets [d(p_i^k, q_j^s) \leq \delta ?]$
                \EndFor
            \EndFor
        \EndFor
    \EndFor
    \State Initialise matrix $R$ of size $n \times \ell^{2w + 1} \times 2^{2w + 1} \times 2w + 1$
    \State $R0 \gets \langle r_1 = \True, r_2 = \False, r_3 = \False, \dots, r_{w + 1} =
    \False\rangle$
    \ForAll{$s \in [\ell^{w + 1}]$}
        \ForAll{$k \in [\ell]$}
            \State $R_{1, s, k} \gets \Call{Propagate}{R0, D_{1, k}[s], 1, w, n}$
        \EndFor
    \EndFor
    \ForAll{$i \in [n] \setminus \{1\}$}
        \ForAll{$s \in [\ell^{2w + 1}]$} \Comment{Or fewer in edge cases}
            \ForAll{$A \in R_{i - 1}[s]$} \Comment{For each reachability vector with fixed
            realisation}
                \State $B \gets A \lor (A << 1)$
                \ForAll{$k \in [\ell]$}
                    \State $C \gets \Call{Propagate}{B, D_{i, k}[s], i, w, n}$
                    \State Add $C$ to set $R_{i}[s]$
                \EndFor
            \EndFor
        \EndFor
    \EndFor
    \State $r \gets \True$
    \ForAll{$A \in R_n$}
        \ForAll{$s \in [\ell^{2w + 1}]$}
            \State $r \gets r \land A_n[s]$
        \EndFor
    \EndFor
    \State\Return $r$
\EndFunction
\Function{Propagate}{$A, B, i, w, n$}
    \LineComment{Propagate the reachability upwards in a column}
    \State $C \gets A \land B$
    \State $r \gets \False$
    \ForAll{$j \in \{\max(1, i - w), \dots, \min(n, i + w)$}
        \If{$B_j \land C_j$}
            \State $r \gets \True$
        \ElsIf{$B_j \land \neg C_j \land r$}
            \State $C_j \gets \True$
        \ElsIf{$\neg B_j$}
            \State $r \gets \False$
        \EndIf
    \EndFor
    \State\Return $C$
\EndFunction
\end{algorithmic}
\end{algorithm}

\begin{theorem}
Suppose we are given two indecisive curves of length $n$ with $\ell$ options per indecisive
point.
Then we can compute the upper bound discrete Fr\'echet distance restricted to a Sakoe--Chiba band
of width $w$ in time $\Theta(4^w \ell^{2w + 1} n \sqrt{w})$.
\end{theorem}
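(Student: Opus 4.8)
The plan is to establish the two halves of the statement separately: first that Algorithm~\ref{alg:ind} correctly decides, for a given threshold $\delta$, whether $\dfrmax(\mathcal{U},\mathcal{V}) > \delta$, and second that it runs within the claimed time bound. For correctness I would lean entirely on the loop invariant of Lemma~\ref{lem:indind} (the two-curve analogue of Lemma~\ref{lem:alg_basic_correct}): after processing column $i$, the reachability vectors stored over all fixed realisations of the active $2w+1$ window on $\mathcal{V}$ are in exact correspondence with the realisations of $\mathcal{U}[1:i]$ and $\mathcal{V}[1:\min(i+w,n)]$. Consequently, after the final column every realisation of the two curves is represented by some stored vector, so scanning the top-right entry $A_n$ across all such vectors detects whether any realisation fails to reach the corner, i.e.\@ yields $\dfr > \delta$. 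The conjunction computed by the algorithm is therefore \True\ exactly when $\dfrmax(\mathcal{U},\mathcal{V}) \le \delta$; to turn the decider into the value itself, I would search over the $O(\ell^2 n^2)$ candidate critical distances $\lVert p_i^k - q_j^s\rVert$ within the band, just as in the precise--indecisive case.

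For the running time, initialising the distance matrix $D$ touches $\Theta(n \ell^2 w)$ entries, which is dominated by the main loop. For each of the $n$ columns I would account for the work by combining the $\ell^{2w+1}$ realisations of the active $2w+1$ window of $\mathcal{V}$ with the reachability vectors carried over from the previous column; as a reachability vector is a boolean string of length $2w+1$, there are at most $2^{2w+1} = \Theta(4^w)$ distinct ones, and each is extended by the new vertex and pushed through one $\Theta(w)$-time \textsc{Propagate} call. Summing over columns gives the unrefined bound $\Theta(4^w \ell^{2w+1} n w)$, exactly as in the analysis preceding the theorem.

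The step I expect to be the crux is shaving the factor $\sqrt{w}$. Here I would first observe that \textsc{Propagate} is monotone: if one reachability vector dominates another in its pattern of \True\ entries, the same domination is preserved after propagation. Because the upper bound problem only needs to certify the existence of a realisation that fails to reach the corner, it suffices to retain, per column and per window realisation, only the vectors that are minimal with respect to reachability (equivalently, maximal in their \False\ pattern); any dominated vector can be discarded without changing the final conjunction. The retained vectors form an antichain in the Boolean lattice $\{\False,\True\}^{2w+1}$, whose cardinality is bounded by the middle binomial coefficient $\binom{2w+1}{w}$, and by Stirling's approximation $\binom{2w+1}{w} = \Theta(4^w/\sqrt{w})$. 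Replacing the crude count $\Theta(4^w)$ by this antichain bound reduces the per-column cost by a factor of $\sqrt{w}$, yielding the stated running time $\Theta(4^w \ell^{2w+1} n \sqrt{w})$. The delicate points are verifying that the domination pruning never removes a vector whose later propagation could become the unique witness of a large distance, and confirming that maintaining the antichain incrementally does not itself reintroduce the saved factor.
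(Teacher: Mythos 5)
Your proposal is correct and takes essentially the same route as the paper: correctness rests on the same loop invariant (Lemma~\ref{lem:indind}), the unrefined $\Theta(4^w \ell^{2w+1} n w)$ per-column accounting is the paper's own, and the $\sqrt{w}$ saving is exactly the paper's trick of discarding dominated reachability vectors. If anything, your Sperner-type bound $\binom{2w+1}{w} = \Theta\left(4^w/\sqrt{w}\right)$ on the antichain of undominated vectors, together with the monotonicity of \textsc{Propagate}, makes explicit what the paper merely asserts when it says one can ``only store vectors that dominate in terms of \False\ values'' to gain the factor $\sqrt{w}$.
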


\subsection{Expected Discrete Fr\'echet Distance}\label{sec:alg_exp}
To compute the expected discrete Fr\'echet distance with time bands, we need two observations:
\begin{enumerate}
    \item For any two precise curves, there is a single threshold $\delta$ where the answer to the
    decision problem changes from \True\ to \False\dsh a \emph{critical value}.
    That threshold corresponds to the distance between some two points on the curves.
    \item We can modify our algorithm to store associated counts with each reachability vector,
    obtaining the fraction of realisations that yield the answer \True\ for a given threshold
    $\delta$.
\end{enumerate}
We can execute our algorithm for each critical value and get the cumulative
distribution function $\mathbb{P}(\dfr(\pi, \sigma) > \delta)$ for $\pi, \sigma \reals[U]
\mathcal{U}, \mathcal{V}$.
As explained in the rest of this section, using the fact that the cumulative distribution function
is a step function, we compute $\dfrexp$.

Consider first the setting of one precise and one indecisive curve.
Note that we store the reachability vectors in a set; instead, we could store a counter with each
reachability vector, so that every time we get an element that is already stored, we increment the
counter.
Notice that we cannot use the improvement that would allow us to discard some vectors, as that would
eschew the count, and we are not interested in the worst possible result now.
We can implement a similar mechanism in the setting of two indecisive curves.
Moreover, we can clearly propagate the count through the algorithm and in the end find the counts
associated with answers \True\ and \False\ to the decision problem.

So, if we store the count of realisations that give us a certain reachability vector, we essentially
obtain, for some value of $\delta$,
\[\mathbb{P}(\dfr(\pi, \sigma) > \delta) \quad \text{when $\pi, \sigma \reals[U] \mathcal{U}, 
\mathcal{V}$.}\]
For any realisation, there is a specific value of $\delta$\dsh a \emph{critical value}\dsh that
acts as a threshold between the answers \True\ and \False\ for that realisation, since if we fix the
realisation we just compute the regular discrete Fr\'echet distance.
Note that that threshold must be a distance between some two points on different curves.
In the case of a precise and an indecisive curve, there are $\ell n (2w + 1)$ such distances
with the time band of width $w$; in the case of two indecisive curves, there are $\ell^2 n 
(2w + 1)$ such distances.
Therefore, if we run our algorithm for each of these critical values and record the counts of \True\
and \False\ for each threshold, we will obtain the complete cumulative distribution function
$\mathbb{P}(\dfr(\pi, \sigma) > \delta)$ for $\pi, \sigma \reals[U] \mathcal{U}, \mathcal{V}$.

Then we can simply find, under the time band restriction,
\[\dfrexpp[U](\mathcal{U}, \mathcal{V}) = \int_0^\infty \mathbb{P}_{\pi, \sigma \reals[U]
\mathcal{U}, \mathcal{V}}(\dfr(\pi, \sigma) > \delta)\,\mathrm{d}\delta\,.\]
For any realisation the answer may change from \True\ to \False\ only at one of the critical values.
So, the distribution of \True\ and \False\ only changes at a finite set of critical values and is
constant between them; therefore, $\mathbb{P}(\dfr(\pi, \sigma) > \delta)$ is a step function.
Hence, finding the integral of interest amounts to multiplying the value of
$\mathbb{P}(\dfr(\pi, \sigma) > \delta)$ by the distance between two successive values of
$\delta$ that match, and summing all the results, i.e.\@ to finding the area under the step
function by summing up areas of rectangles that make it up.

\begin{algorithm}[p]
\caption{Finding time-banded expected discrete Fr\'echet distance on an indecisive and a precise
curve and two indecisive curves.}
\label{alg:exp}
\begin{algorithmic}[1]
\Function{ExpTBDFDIndPr}{$\mathcal{U}, \sigma, w$}
    \LineComment{Input constraint: $\lvert\mathcal{U}\rvert = \lvert\sigma\rvert = n$ and $0 \leq w < n$}
    \State Initialise sorted set $E$
    \ForAll{$i \in [n]$}
        \ForAll{$k \in [\ell]$}
            \ForAll{$j \in \{\max(1, i - w), \dots, \min(n, i + w)\}$}
                \State Add $d(p_i^k, q_j)$ to sorted set $E$
            \EndFor
        \EndFor
    \EndFor
    \State $s \gets E[1]$
    \For{$i \gets 1 \To\ l(E) - 1$}
        \State $\delta \gets E[i], \delta' \gets E[i + 1]$
        \State $p \gets \Call{CntTBDFDIndPr}{\mathcal{U}, \sigma, w, \delta}$
        \State $s \gets s + (1 - p) \cdot (\delta' - \delta)$
    \EndFor
    \State\Return $s$
\EndFunction
\Function{ExpTBDFDIndInd}{$\mathcal{U}, \mathcal{V}, w$}
    \LineComment{Input constraint: $\lvert\mathcal{U}\rvert = \lvert \mathcal{V}\rvert = n$ and $0
    \leq w < n$}
    \State Initialise sorted set $E$
    \ForAll{$i \in [n]$}
        \ForAll{$k \in [\ell]$}
            \ForAll{$j \in \{\max(1, i - w), \dots, \min(n, i + w)\}$}
                \ForAll{$s \in [\ell]$}
                    \State Add $d(p_i^k, q_j^s)$ to sorted set $E$
                \EndFor
            \EndFor
        \EndFor
    \EndFor
    \State $s \gets E[1]$
    \For{$i \gets 1 \To\ l(E) - 1$}
        \State $\delta \gets E[i], \delta' \gets E[i + 1]$
        \State $p \gets \Call{CntTBDFDIndInd}{\mathcal{U}, \mathcal{V}, w, \delta}$
        \State $s \gets s + (1 - p) \cdot (\delta' - \delta)$
    \EndFor
    \State\Return $s$
\EndFunction
\Function{CntTBDFDIndPr}{$\mathcal{U}, \sigma, w, \delta$}
    \LineComment{Like \textsc{TBDFDIndPr}, but returns the fraction of count of \True\ over \False\
    for the final cell.}
\EndFunction
\Function{CntTBDFDIndInd}{$\mathcal{U}, \mathcal{V}, w, \delta$}
    \LineComment{Like \textsc{TBDFDIndInd}, but returns the fraction of count of \True\ over \False\
    for the final cell.}
\EndFunction
\end{algorithmic}
\end{algorithm}

So, clearly, under the time band restriction, we can run one of our algorithms either $\ell n (2w +
1)$ or $\ell^2 n (2w + 1)$ times to obtain the expected discrete Fr\'echet distance.
We show the details in Algorithm~\ref{alg:exp} for the two settings.
We summarise this result as follows.
\begin{theorem}
Suppose we are given an indecisive curve $\mathcal{U}$ and a precise curve $\sigma$ of length $n$
with $\ell$ options per indecisive point and want to find the expected discrete Fr\'echet distance
when constrained to a Sakoe--Chiba band of width $w$.
Then we can run $\textsc{ExpTBDFDIndPr}(\mathcal{U}, \sigma, w)$ to obtain the result in time
$\Theta(4^w \ell^2 n^2 w^2)$ in the worst case.
\end{theorem}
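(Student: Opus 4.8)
The plan is to assemble three ingredients: the correctness of the decision procedure \textsc{TBDFDIndPr} (Lemma~\ref{lem:alg_basic_correct}, Algorithm~\ref{alg:pr_ind}), a counting refinement of it, and the observation that the expected distance is the integral of a tail distribution that is a step function. First I would note that for a fixed threshold $\delta$ the reachability vectors stored for column $i$ partition the $\ell^i$ realisations of $\mathcal{U}[1:i]$: by Lemma~\ref{lem:alg_basic_correct} every prefix realisation maps to exactly one stored vector, and every stored vector arises from at least one realisation. Hence I can attach to each vector a counter equal to the number of prefix realisations inducing it, and propagate these counters additively: when extending from column $i$ to column $i+1$, the counter of a newly produced vector $C$ accumulates the counters of all pairs $(A,k)$ of a previous vector $A$ and a realisation index $k$ that \textsc{Propagate} maps to $C$. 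This is exactly what \textsc{CntTBDFDIndPr} does, and summing the counters of all final vectors whose top-right cell is \True, divided by $\ell^n$, yields $\mathbb{P}(\dfr(\pi,\sigma)\le\delta)$ for $\pi\reals[U]\mathcal{U}$; its complement is the tail $\mathbb{P}(\dfr(\pi,\sigma)>\delta)$.

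Second, I would argue that the tail is a step function whose breakpoints are the \emph{critical values}, namely the pairwise distances $d(p_i^k,q_j)$ over all $i\in[n]$, $k\in[\ell]$, and $j$ within the band of $i$; there are $\ell n(2w+1)=\Theta(\ell n w)$ of these. For any fixed realisation the discrete Fr\'echet distance equals the largest coupled vertex distance, so it is one of these values; thus $\dfr(\pi,\sigma)$ takes values only in the critical set, $\mathbb{P}(\dfr>\delta)$ is constant on each open interval between consecutive critical values, and it equals $1$ below the smallest value and $0$ above the largest. Using
\[\dfrexpp[U](\mathcal{U},\sigma)=\int_0^\infty \mathbb{P}_{\pi\reals[U]\mathcal{U}}(\dfr(\pi,\sigma)>\delta)\,\mathrm{d}\delta\,,\]
the integral collapses to the finite sum computed by \textsc{ExpTBDFDIndPr} (Algorithm~\ref{alg:exp}): the initial term $E[1]$ accounts for the region below the smallest critical value, and each subsequent term $(1-p)(\delta'-\delta)$ adds the area of the rectangle over $(\delta,\delta')$, where $p$ is the fraction returned by \textsc{CntTBDFDIndPr} at the left endpoint $\delta$.

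Third, the running time is a product. Sorting the $\Theta(\ell n w)$ critical values into $E$ costs $\Theta(\ell n w\log(\ell n w))$, which is dominated. For each critical value we invoke \textsc{CntTBDFDIndPr}, which runs in $\Theta(4^w\ell n w)$ time: the same bound as \textsc{TBDFDIndPr} but \emph{without} the $\sqrt{w}$ speedup, since discarding dominated reachability vectors would corrupt the counters. Multiplying, $\Theta(\ell n w)\cdot\Theta(4^w\ell n w)=\Theta(4^w\ell^2 n^2 w^2)$, as claimed. The main subtlety I expect is the counting step: I must verify that the counters genuinely partition the realisation space at every column, so that no realisation is double-counted or dropped, and that this is precisely what precludes the domination-based pruning available for the pure decision version. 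It is this loss of pruning, rather than the count of critical values, that accounts for the extra factor of $\sqrt{w}$ relative to the bound one would obtain by naively reusing the pruned decider.
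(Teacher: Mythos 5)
Your proposal is correct and takes essentially the same approach as the paper: a counting refinement of \textsc{TBDFDIndPr}, the critical values $d(p_i^k, q_j)$ within the band as breakpoints of the step-function tail $\mathbb{P}(\dfr > \delta)$, and the product bound $\Theta(\ell n w) \cdot \Theta(4^w \ell n w) = \Theta(4^w \ell^2 n^2 w^2)$. The paper's own proof is terse and defers the counting and step-function arguments to the preceding discussion, which you spell out explicitly; your closing observation that the loss of dominance-based pruning (not the number of critical values) is what forces the per-call cost up from $\Theta(4^w \ell n \sqrt{w})$ to $\Theta(4^w \ell n w)$ matches the paper's reasoning exactly.
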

\begin{proof}
First of all, note that from the discussion above it immediately follows that the algorithm is
correct.
In the worst case, every $\delta$ that we have to add to $E$ will be distinct, so we have
$\ell n (2w + 1)$ insertions, taking in total $\Theta(\ell n w \log \ell n w)$ time.
Then, we run \textsc{CntTBDFDIndPr} once per value in $E$, and its running time is the same as that
of \textsc{TBDFDIndPr}, so here we take  time $\Theta(\ell n w \cdot 4^w \ell n w)$ in the worst
case, as claimed.
\end{proof}
We can formalise the result similarly for the other setting.
\begin{theorem}
Suppose we are given two indecisive curves $\mathcal{U}$ and $\mathcal{V}$ of length $n$ with
$\ell$ options per indecisive point and want to find the expected discrete Fr\'echet distance when
constrained to a Sakoe--Chiba band of width $w$.
Then we can run $\textsc{ExpTBDFDIndInd}(\mathcal{U}, \mathcal{V}, w)$ to obtain the result in time
$\Theta(4^w \ell^{2w + 3} n^2 w^2)$ in the worst case.
\end{theorem}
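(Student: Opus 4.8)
The plan is to follow the template of the immediately preceding precise--indecisive theorem, adapting the counting bookkeeping and the tally of critical values to the fully indecisive setting. First I would establish correctness. As explained in the surrounding discussion, once we fix a realisation of both curves the time-banded discrete \Frechet distance is just the ordinary discrete \Frechet distance, so for that realisation the decision answer flips from \True\ to \False\ at a single critical value, which must equal the distance between some pair of coupled points. Consequently the function $\delta \mapsto \mathbb{P}_{\pi, \sigma \reals[U] \mathcal{U}, \mathcal{V}}(\dfr(\pi, \sigma) > \delta)$ is a step function whose breakpoints all lie among the pairwise point distances inside the band. The expected value is the integral $\dfrexpp[U](\mathcal{U}, \mathcal{V}) = \int_0^\infty \mathbb{P}(\dfr(\pi, \sigma) > \delta)\,\mathrm{d}\delta$, which, being the area under a step function, equals the sum over consecutive critical values of the probability times the gap between those values\dsh precisely what \textsc{ExpTBDFDIndInd} accumulates via the counting variant \textsc{CntTBDFDIndInd}.

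Next I would bound the number of critical values. With two indecisive curves of length $n$, each point having $\ell$ options, and a band of width $w$, every cell $(i, j)$ inside the band contributes $\ell^2$ candidate distances (one per pair of realisations of $U_i$ and $V_j$), and there are $n(2w + 1)$ such cells; hence there are $\ell^2 n (2w + 1) = \Theta(\ell^2 n w)$ critical values. Inserting all of them into the sorted set $E$ costs $\Theta(\ell^2 n w \log(\ell^2 n w))$ time, which I would then argue is dominated by the main cost and thus absorbed into the final bound.

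The dominant cost comes from invoking the counting decider \textsc{CntTBDFDIndInd} once per critical value. Here I would cite the running-time analysis of \textsc{TBDFDIndInd}: the counting variant performs the same column-by-column propagation, looping over all $\ell^{2w + 1}$ fixings of the band-relevant points of $\mathcal{V}$ and all $2^{2w + 1}$ distinct reachability vectors, with a $\Theta(w)$ \textsc{Propagate} call each, so a single invocation costs $\Theta(4^w \ell^{2w + 1} n w)$. Multiplying this by the $\Theta(\ell^2 n w)$ critical values yields $\Theta(4^w \ell^{2w + 3} n^2 w^2)$, matching the claimed bound, and confirming that the sorting term is lower order.

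The one point requiring care\dsh and the main obstacle I would flag\dsh is that the counting variant may \emph{not} use the domination-based speedup that saved a $\sqrt{w}$ factor in the decision algorithm. That optimisation discards reachability vectors dominated in their \False\ pattern, but discarding a vector would discard the realisation count attached to it and so corrupt the computed probabilities; this is exactly the caveat noted at the start of the section. Hence the per-call bound must be taken as $\Theta(4^w \ell^{2w + 1} n w)$ rather than the faster $\Theta(4^w \ell^{2w + 1} n \sqrt{w})$, which is precisely why the final exponent on $w$ is $2$ rather than $3/2$. Everything else is routine accounting that parallels the precise--indecisive proof.
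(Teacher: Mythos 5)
Your proposal is correct and follows essentially the same route as the paper: the same correctness argument (per-realisation critical values giving a step-function distribution whose integral is a sum of rectangle areas), the same $\Theta(\ell^2 n w)$ count of critical values with the sorting cost absorbed, and the same per-call cost of $\Theta(4^w \ell^{2w+1} n w)$ for the counting decider, multiplied out to $\Theta(4^w \ell^{2w+3} n^2 w^2)$. Your flagged caveat\dsh that the domination-based $\sqrt{w}$ speedup must be abandoned because discarding dominated vectors would corrupt the realisation counts\dsh is exactly the point the paper makes in the discussion preceding the theorem.
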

\begin{proof}
Again, note that from the discussion above it immediately follows that the algorithm is correct.
In the worst case, we have $\ell^2 n w$ insertions, taking in total
$\Theta(\ell^2 n w \log \ell n w)$ time.
Then, we run \textsc{CntTBDFDIndInd} once per value in $E$, and its running time is the same as that
of \textsc{TBDFDIndInd}, so here we take time $\Theta(\ell^2 n w \cdot 4^w \ell^{2w + 1} n w)$ in
the worst case, as claimed.
\end{proof}

\subsection{Upper Bound Continuous Fr\'echet Distance}\label{sec:alg_cont}
We can adapt our time band algorithms to handle continuous Fr\'echet distance.
Instead of the boolean reachability vectors, we use vectors of \emph{free space} cells, introduced
by Alt and Godau~\cite{alt:1995, godau:1991}.
We need to now store reachability intervals on cell borders.
The number of these intervals is limited: for any cell, the upper value of the
interval is defined by the distance matrix, so yielding at most $\ell^2$ values; the lower value of
the interval is defined by the distance matrix or by one of the cells from the same row, yielding
exponential dependency on $w$.
However, the algorithm is still polynomial-time in $n$.

In more detail, one could adapt the algorithms for the upper bound discrete Fr\'echet
distance to the case when either both curves are indecisive or one is precise and one is
indecisive, and we are interested in the decision problem for Fr\'echet distance and not discrete
Fr\'echet distance.
Since we are going column-by-column, we would need to store the reachability intervals on the
vertical border of each cell.

It is simpler to see how this would work in the setting of a precise and an indecisive curve: each
column now is a column of a free-space diagram, and we only need to store the intervals on the right
side of the column.
As we progress to the next column, we need to consider all the options from the previous column, so
we need to run the same algorithm, except we store and process vectors of free-space intervals
instead of \True\ and \False.
One other distinction is that we do not consider diagonal steps\dsh for Fr\'echet distance doing so
would not make any sense, as the path is continuous, and the diagonal step is not distinguishable
from a horizontal step followed by a vertical step, if such situation occurs.

In particular, we now take the intervals stored in the distance matrix and compute reachability
based on the previous column: if a cell can be reached horizontally from the previous cell, then
the lower bound of the interval in this cell may need to go up, since we can only use monotone
paths.
\textsc{Propagate} will now take the intervals that correspond to the distance matrix and the
precomputed reachability and make the following adjustment:
if a cell is reachable from below, then the entire interval on the right is actually reachable.
See Figure~\ref{fig:free_adjust} for an example of both cases.

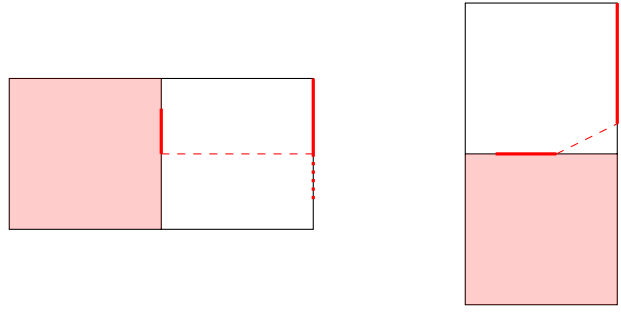
\begin{figure}
\centering
\begin{tikzpicture}[scale=2]
\begin{scope}
    \draw (0, 0) rectangle (2, 1) (1, 0) -- (1, 1);
    \fill[red,opacity=.2] (0, 0) rectangle (1, 1);
    \draw[red,very thick] (1, 0.5) -- (1, 0.8) (2, 0.5)-- (2, 1);
    \draw[red,dotted,very thick] (2, 0.2) -- (2, 0.5);
    \draw[red,dashed] (1, 0.5) -- (2, 0.5);
\end{scope}
\begin{scope}[xshift=3cm]
    \draw (0, -0.5) rectangle (1, 1.5) (0, 0.5) -- (1, 0.5);
    \fill[red,opacity=.2] (0, -0.5) rectangle (1, 0.5);
    \draw[red,very thick] (0.2, 0.5) -- (0.6, 0.5) (1, 0.7)-- (1, 1.5);
    \draw[red,dashed] (0.6, 0.5) -- (1, 0.7);
\end{scope}
\end{tikzpicture}
\caption{Reachability adjustments.
Left: Although the dotted interval is free according to the distance matrix, only the solid interval
is reachable from the cell on the left with a monotone path, assuming entire cell on the left is
free.
Right: The entire interval that is marked as free according to the distance matrix is reachable
with a monotone path from the cell below, assuming the cell below is free.}
\label{fig:free_adjust}
\end{figure}

Other than that, the algorithm is exactly the same; clearly, we can make the same adjustments to the
algorithm handling two indecisive curves.

Notice that we now do not have at most $2^{2w + 1}$ vectors per column, since we store intervals
instead of boolean values, and they can be more varied.
However, the number of values is still limited: for any cell, the upper value of the interval is
defined by the distance matrix, so there can be at most $\ell$ or $\ell^2$ values for the two
settings.
The lower value of the interval is defined by the distance matrix or by one of the cells from the
same row; these may have at most $\ell$ or $\ell^2$ values each, and there are at most $2w$ of them,
so per cell we can have at most $\Theta(\ell w)$ or $\Theta(\ell^2 w)$ lower interval values and
$\Theta(\ell)$ or $\Theta(\ell^2)$ upper interval values, instead of just two possible values in the
discrete case.
The running time changes accordingly, replacing $4^w$ with $(\ell w)^{2w}$, but, importantly, we
still have linear dependency on $n$, so the running time is polynomial for fixed $w$ and $\ell$.

\subsection{Expected Continuous Fr\'echet Distance}\label{app:alg_exp_cont}
We can, of course, again store the associated counts with the vectors of intervals in the algorithm.
As we look at the final cell, we can sum up the counts associated with the cases where the upper
right corner of this cell is reachable, and so we can find the proportion of \True\ to \False\ for a
particular threshold $\delta$.

We can find critical values; now they follow in line with those discussed by Alt and
Godau~\cite{alt:1995, godau:1991}.
The number of the critical values is different: case~1, where we look at the start and end points,
now yields $\Theta(\ell^2)$ events; case~2, where we look at two neighbouring cells, so at the
distance between a segment and a point, yields $\Theta(\ell^3 nw)$ events; and case~3, where we look
at the distance between a segment and two points, yields $\Theta(\ell^4 nw^2)$ events.

Otherwise, we can run Algorithm~\ref{alg:exp} on the new critical values, calling instead the
counting version for the continuous Fr\'echet distance.
This way we can compute the expected Fr\'echet distance restricted to a Sakoe--Chiba band in time
polynomial in $n$ for fixed $w$ and $\ell$.

\begin{theorem}
Suppose we are given two indecisive curves of length $n$ with $\ell$ options per indecisive
point.
Then we can compute the upper bound Fr\'echet distance and expected Fr\'echet distance restricted to
a Sakoe--Chiba band of fixed width $w$ in time polynomial in $n$.
\end{theorem}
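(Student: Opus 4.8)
The plan is to treat this theorem as the continuous-case analogue of the discrete results in Sections~\ref{sec:alg_pr_ind} and~\ref{app:alg_ind}, reusing the column-by-column dynamic programs of Algorithm~\ref{alg:pr_ind} and Algorithm~\ref{alg:ind} almost verbatim and changing only the object stored per cell. First I would replace each boolean reachability vector with a vector of \emph{reachability intervals} on the vertical cell boundaries of the free-space diagram, the representation of Alt and Godau~\cite{alt:1995}. As in the discrete version, I would process the table one column at a time, and for each realisation of the newly added indecisive point\dsh and, in the two-indecisive setting, for each of the $\ell^{2w+1}$ relevant in-band realisations of the other curve\dsh I would invoke a continuous variant of \textsc{Propagate}. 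The only genuine change to \textsc{Propagate} is that a cell reachable from below now makes its \emph{entire} right-boundary interval reachable, whereas a cell reachable only from the left may have its interval truncated from below to preserve monotonicity; I would also drop the diagonal step, since for the continuous distance it is subsumed by a horizontal step followed by a vertical one.

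The main step\dsh and the one on which polynomiality hinges\dsh is bounding the number of distinct interval vectors that can occur in a single column. Here I would argue that every reachability interval has an upper endpoint determined solely by the distance matrix, which admits only $\ell$ values in the precise/indecisive setting and $\ell^2$ values in the indecisive/indecisive setting, and a lower endpoint determined either by the distance matrix or by one of the at most $2w$ cells in the same row, giving $\Theta(\ell w)$ (resp.\@ $\Theta(\ell^2 w)$) candidate lower endpoints per cell. Consequently each of the $2w+1$ band cells in a column has only polynomially many admissible intervals, and the number of distinct interval vectors per column is $(\ell w)^{2w}$ rather than the $2^{2w+1}$ of the discrete case. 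Substituting this count for the $4^w$ factor in the analyses of Algorithm~\ref{alg:pr_ind} and Algorithm~\ref{alg:ind} yields a running time polynomial in $n$ for fixed $w$ and $\ell$, establishing the upper-bound claim; correctness follows from the loop invariants of Lemma~\ref{lem:alg_basic_correct} and Lemma~\ref{lem:indind}, now read for interval-valued rather than boolean reachability.

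For the expected continuous \Frechet distance I would follow Section~\ref{app:alg_exp_cont}: augment each interval vector with a count of the realisations producing it, propagate these counts through the dynamic program, and read off $\mathbb{P}(\fr(\pi,\sigma) > \delta)$ from the final cell. I would then enumerate the critical values at which this probability can change, which by the analysis of Alt and Godau~\cite{alt:1995, godau:1991} come in three types and number $\Theta(\ell^2)$, $\Theta(\ell^3 n w)$, and $\Theta(\ell^4 n w^2)$ respectively\dsh again polynomial in $n$. Running the counting dynamic program once per critical value produces the full step-function cumulative distribution, and integrating it exactly as in Algorithm~\ref{alg:exp} (summing the rectangle areas between consecutive critical values) yields $\frexp$.

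The hard part, I expect, is not the integration or the critical-value enumeration but verifying the interval bound above: one must check that the continuous \textsc{Propagate} never creates interval endpoints outside the claimed finite candidate sets, and in particular that truncating a lower endpoint only ever reuses a value already accounted for. Once this invariant is established, the polynomial-in-$n$ running time follows by direct substitution into the existing complexity analyses, and the \np-hardness of the unrestricted problems (Theorems~\ref{thm:cont_ub_ind_np} and~\ref{thm:exp_ind_nump_full}) confirms that the exponential dependence on $w$ is the price of lifting the band restriction.
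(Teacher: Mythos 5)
Your proposal follows essentially the same route as the paper: replacing boolean reachability vectors with vectors of free-space intervals, modifying \textsc{Propagate} for monotone continuous paths (dropping the diagonal step), bounding the distinct interval vectors per column via the $\Theta(\ell w)$ (resp.\@ $\Theta(\ell^2 w)$) candidate lower endpoints and $\Theta(\ell)$ (resp.\@ $\Theta(\ell^2)$) upper endpoints to replace $4^w$ by $(\ell w)^{2w}$, and then handling the expected case by propagating counts and integrating the step-function CDF over the Alt--Godau critical values of the three types you list. The argument and the resulting polynomial-in-$n$ bound for fixed $w$ and $\ell$ match the paper's proof.
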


\bibliographystyle{plainurl}
\bibliography{references_used}
\end{document}